\newtheorem{lemma}{Lemma}[section]
\newtheorem{theorem}[lemma]{Theorem}
\newtheorem{corollary}[lemma]{Corollary}
\newtheorem{proposition}[lemma]{Proposition}
\newtheorem{definition}[lemma]{Definition}
\newtheorem{remark}[lemma]{Remark}
\let\lutzremark=\remark
\def\remark{\lutzremark\normalfont}
\def\be{\begin{equation}}
\def\ee{\end{equation}}
\def\bea{\begin{eqnarray}}
\def\eea{\end{eqnarray}}
\def\bes{\begin{eqnarray*}}
\def\ees{\end{eqnarray*}}
\def\ba{\begin{array}}
\def\ea{\end{array}}
\def\bpm{\begin{pmatrix}}
\def\epm{\end{pmatrix}}
\def\nn{\nonumber}
\def\lb{\label}
\def\bs{\setminus}
\def\pt{\partial}
\def\R{{\bf R}}
\def\C{{\bf C}}
\def\Z{{\bf Z}}
\def\N{{\bf N}}
\def\U{{\bf U}}
\def\aa{{\alpha}}
\def\bb{{\beta}}
\def\ga{{\gamma}}
\def\Ga{{\Gamma}}
\def\th{{\theta}}
\def\om{{\omega}}
\def\ep{{\epsilon}}
\def\lm{{\lambda}}
\def\sg{{\sigma}}
\def\dm{{\diamond}}
\def\Sg{{\Sigma}}
\def\<{{\langle}}
\def\>{{\rangle}}
\def\d{{\mathrm{d}}}
\def\pt{\partial}
\def\cR{{\cal R}}
\def\cA{{\cal A}}
\def\B{{\cal B}}
\def\P{{\cal P}}
\def\diag{{\rm diag}}
\def\span{{\rm span}}
\def\Sp{{\rm Sp}}
\def\r{\right}
\def\l{\left}
\def\ii{\sqrt{-1}}
\def\td{\tilde}
\def\ol{\overline}
\def\td#1{\tilde{#1}}
\title{Linear Stability of Elliptic Relative Equilibria of
	Restricted Four-body Problem}
\author{Bowen Liu$^{1}$, \thanks{Partially supported by NSFC (No. 11131004, No. 11671215) and Sino-German (CSC-DAAD) Postdoc Scholarship Program (CSC No. 201800260010 and DAAD No. 91696544) funded by China Scholarship Council and Deutscher Akademischer Austausch Dienst.
		E-mail: bowen.liu@math.uni-augsburg.de}\quad
	Qinglong Zhou$^{2} $\thanks{Partially supported by the Natural Science Foundation of Zhejiang Province (No. Y19A010072) and the Fundamental Research Funds for the Central Universities (No. 2019QNA3002).
		E-mail: zhouqinglong@zju.edu.cn}, \\
	$^{1}$ Institut f\"ur Mathematik\\ Universität Augsburg, D-86135 Augsburg, Germany\\
	$^{2}$ Department of Mathematics\\Zhejiang University, Hangzhou 310027, Zhejiang, China\\
}
\date{}
\begin{document}

\maketitle

\begin{abstract}
In this paper, we consider the linear stability of the elliptic relative equilibria of the restricted 4-body problems where the three primaries form a Lagrangian triangle.
By reduction, the linearized Poincar\'e map is decomposed to the essential part, the Keplerian part and the elliptic Lagrangian part where the last two parts have been studied in literature.
The linear stability of the essential part depends on the masses parameters
$\alpha$, $\beta$ with $\alpha\geq \beta >0$ and the eccentricity $e\in[0,1)$.
Via $\om$-Maslov index theory and linear differential operator theory, we obtain the full bifurcation diagram of linearly stable and unstable regions with respect to $\alpha$, $\beta$ and $e$. Especially, two linearly stable sub-regions are found.
\end{abstract}

{\bf Keywords:} restricted planar four-body problem, Lagrangian solutions, linear stability, $\om$-index theory,
perturbations of linear operators.

{\bf AMS Subject Classification}: 58E05, 37J45, 34C25

\renewcommand{\theequation}{\thesection.\arabic{equation}}
\renewcommand{\thefigure}{\thesection.\arabic{figure}}

\setcounter{equation}{0}
\setcounter{figure}{0}
\section{Introduction and main results}
\label{sec:1}
Let $q_1,q_2,\ldots,q_n\in \R^2$ be the position vectors of $n$ particles with masses $m_1$, $m_2$, $\ldots$, $m_n>0$ respectively.
By the law of universal
gravitation and Newton’s second law,
the system of equations is
\be  m_i\ddot{q}_i=\frac{\partial U}{\partial q_i}, \qquad {\rm for}\quad i=1, 2, \ldots, n, \lb{1.1}\ee
where $U(q)=U(q_1,q_2,\ldots,q_n)=\sum_{1\leq i<j\leq n}\frac{m_im_j}{|q_i-q_j|}$ is the
potential or force function by using the standard norm $|\cdot|$ of vector in $\R^2$.

Letting $p_i=m_i\dot{q}_i\in\R^2$ for $1\le i\le n$, then \eqref{1.1} is transformed to a Hamiltonian system
\be \dot{p}_i=-\frac{\partial H}{\partial q_i},\,\,\dot{q}_i
= \frac{\partial H}{\partial p_i},\qquad {\rm for}\quad i=1,2,\ldots, n,  \lb{1.2}\ee
with Hamiltonian function
\be H(p,q)=H(p_1,p_2,\ldots,p_n, q_1,q_2,\ldots,q_n)=\sum_{i=1}^n\frac{|p_i|^2}{2m_i}-U(q_1,q_2,\ldots,q_n).  \lb{1.3}\ee

A {\it central configuration} $(q_1,q_2,\ldots,q_n)=(a_1,a_2,\ldots,a_n)$ is a solution  of
\begin{equation}
-\lambda m_iq_i=\frac{\partial U}{\pt q_i}(q_1,q_2,\ldots,q_n),
\end{equation}
for some constant $\lambda$.
A direct computation shows that $\lambda=\frac{U(a)}{2I(a)}>0$,
where $I(a)=\frac{1}{2}\sum m_i|a_i|^2$ is the moment of inertia.
Readers may refer \cite{Moe1} and \cite{Win1} for the properties of central configuration.

It is well known that a planar central configuration of the $n$-body problem gives rise to solutions
where each particle moves on a specific Keplerian orbit while each particle follows a homographic motion.
Following Meyer and Schmidt \cite{MS}, we call these solutions as {\it elliptic relative equilibria}
(ERE for short).
Specially when $e=0$, the Keplerian elliptic
motion becomes circular and then all the bodies move around the center of masses along circular
orbits with the same frequency, which are called {\it relative equilibria} in literature.

In the three-body case, the linearly stability of any ERE was clearly studied recently (c.f.\cite{HLS} and \cite{ZL15ARMA}).
In fact, the stability of ERE depends on the eccentricity $e$ and a mass parameter $\beta$.
For the elliptic Lagrangian solution, the mass parameter is given by $\beta_L \in [0,9]$ by
\begin{equation}\lb{L:bb}
\beta_L=\frac{m_1m_2+m_2m_3+m_3m_1}{(m_1+m_2+m_3)^2}.
\end{equation}
For the elliptic Euler solution, the mass parameter is given by $\beta_E\in [0,7)$ in \cite{ZL15ARMA}.

In \cite{HLS} and \cite{ZL15ARMA},
the authors used Maslov-type index and operator theory
to study the linear stability,
and obtained a full description of the bifurcation diagram.
For the near-collision Euler solutions of 3-body problem,
the linear stability  was studied by Hu and Ou in \cite{HO1}.

To our knowledge,
for the general masses of $n$ bodies,
the elliptic Euler-Moulton solutions is the only case
which has been well studied in \cite{ZL2015CMDA}.
The linear stability
of the elliptic Euler-Moulton solutions depends on $(n-1)$ parameters,
namely the eccentricity $e\in [0,1)$ and the $n-2$ mass parameters
$\bb_1,\bb_2,\ldots,\bb_{n-2}$ which defined by (1.14) in \cite{ZL2015CMDA}.
For some special cases of $n$-body problems,
the linear stability of ERE, which is raised from
an $n$-gon or $(1+n)$-gon central configurations with $n$ equal masses,
was studied by Hu, Long and Ou in \cite{HLO} recently.

For the elliptic relative equilibria of
a general non-collinear central configurations,
even for $n=4$, the stability problem is quite open.
In \cite{Liu}, the first author studied the linear stability of the planar four body problem when the central configuration is a rhombus. It turns out that it is linearly unstable for all possible masses and all eccentricity.
In \cite{Zhou}, the second author studied the linear stability
of ERE of planar 4-body problem with two zero masses.
The most interesting case is when the two small masses tend to the same Lagrangian point $L_4$ (or $L_5$).
There are two cases: the ERE raised from the non-convex central configurations are always linearly unstable;
while for the ERE raised from the convex central configurations,
the linear stability are depends on the parameters.

In this paper, we consider the linear stability of ERE
of the planar restricted 4-body problems.
There are four point masses on the plane, one of which possesses zero mass.
This problems are also referred as the $(3+1)$-body problems.
The zero mass body is supposed to have no gravitational effect
on three primaries.
As a consequence, the central configuration equations of the restricted $4$-body problem can be decomposed to two parts,
one part is the equations of the three-body,
and the other is the action of the three primaries on the zero mass body.
The solution of the first part corresponds to the well-known Lagrange equilateral triangular configurations
and the Euler collinear configurations.
Both configurations exist for all values of the masses.
We will consider the central configurations with the three primaries forming an equilateral
triangle (see Figure \ref{RFBP}).

\begin{figure}[ht]
	\centering
	\includegraphics[height=9.5cm]{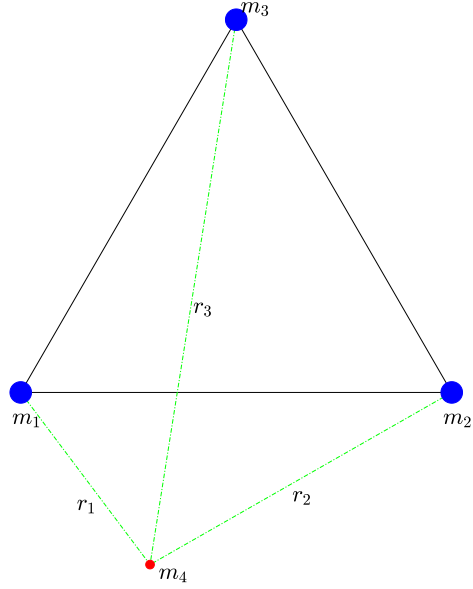}
	\vspace{-10mm}
	\caption{The planar restricted four-body problem.}
	\label{RFBP}
\end{figure}
\vspace{2mm}

In \cite{Leandro}, Leandro studied the central configurations of the restricted $4$-body problems.
Following his notations,
three primaries $m_1$, $m_2$, and $m_3$ form a standard equilateral triangle and
$r_i$s are the distance from $m_i$ to $m_4$ respectively.
Three types of possible regions for $m_4$ were found:
the exterior of the triangle and convex configurations region, the exterior of the triangle and non-convex configurations region, and the interior of the triangle region $\Pi_I$
(see Figure \ref{admissible.region}).
Note that the exterior regions of the triangle are symmetrical by rotations of $\frac{2\pi}{3}$ about the center of the triangle. Therefore, we restrict our study in following regions (shown in the Figure \ref{admissible.region}).
\bea
\Pi_C&=&\left\{(r_1,r_2,r_3)\in(\R^+)^3\;|\;r_1<1,r_2<1,r_3>1\right\}\cap\{F=0\},\\
\Pi_N&=&\left\{(r_1,r_2,r_3)\in(\R^+)^3\;|\;1+2\sqrt{3}x_1<0,1+2\sqrt{3}x_2<0,r_3<1\right\}\cap\{F=0\},\\
\Pi_I&=&\left\{(r_1,r_2,r_3)\in(\R^+)^3\;|\;1+2\sqrt{3}x_k>0,k=1,2,3\right\}\cap\{F=0\},
\eea
where $x_k$s are given by (3.7) of \cite{Leandro} and $F$ is given by (3.2) of \cite{Leandro}.

\begin{figure}[ht]
	\centering
	\includegraphics[height=11.5cm]{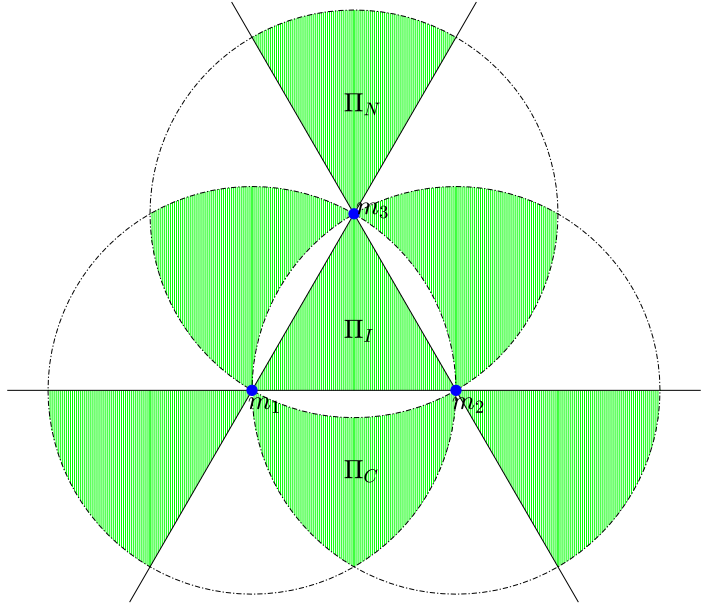}
	\vspace{-15mm}
	\caption{The regions of positive masses.}
	\label{admissible.region}
\end{figure}
\vspace{2mm}

By Proposition 4.2 and Corollary 4.3 of \cite{Leandro},
the author proved that there exists only one convex central configuration
which $m_4$ locates in $\Pi_C$ and only one non-convex central configuration
if $m_4$ locates in $\Pi_N$ respectively for all positive $m_1,m_2$ and $m_3$.

For the linearized Hamiltonian system at the ERE,
we have the following reduction:

\begin{theorem}\label{main.theorem.decomposition}
	For the planar $4$-body problem with given masses $m=(m_1,m_2,m_3,m_4)\in (\R^+)^4$, denote the
	ERE with eccentricity $e\in [0,1)$ for $m$ by $q_{m,e}(t)=(q_1(t), q_2(t), q_3(t),q_4(t))$. Then
	in the limiting case when $m_4$ tends to $0$,
	the linearized Hamiltonian system at $q_{m,e}$ is reduced to the summation of $3$ independent
	Hamiltonian systems, the first one is the linearized system of the Kepler $2$-body problems at the corresponding Kepler orbit,
  the second one is the linearized Hamiltonian system of  $3$-body problems at the Lagrangian ERE with eccentricity $e$ and the mass parameter $\bb_L$ of (\ref{L:bb}),
	and the third one is the essential part $\xi$
  of the linearized Hamiltonian system
	given by
	\begin{equation}\label{LHS}
	\xi'=J\left(\begin{matrix}1&  0&  0& 1\\
		0&  1& -1& 0\\
		0& -1& 1-\frac{\lambda_3}{1+e\cos t}& 0\\
		1&  0&   0& 1-\frac{\lambda_4}{1+e\cos t}
	\end{matrix}\right)\xi,
	\end{equation}
	where $t\in [0,2\pi]$, $\lambda_3 = 1 +\aa+3\bb$ and $\lambda_4 = 1+\aa-3\bb$ with
  \be\label{aa.bb}
  \aa \equiv \frac{1}{2}\sum_{i=1}^3\frac{m_i}{|q_{i,0}-z^*|^3}, \quad
  \bb \equiv \frac{1}{2}\l|\sum_{i=1}^3\frac{m_i(q_{i,0}-z^*)^2}{|q_{i,0}-z^*|^5}\r|,
  \ee
and for $1\leq i\leq 3$, $q_{i,0}$ and $z^*$ are the limit positions of $m_i$ and $m_4$ respectively when $m_4$ tends to zero.
\end{theorem}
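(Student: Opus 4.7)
The strategy is a Meyer--Schmidt-type symplectic reduction applied to the full $4$-body Hamiltonian (\ref{1.3}). The first step is to pass from the inertial frame to a rotating, pulsating frame aligned with the Kepler ellipse of the moment of inertia, parametrized by the true anomaly $t\in[0,2\pi]$, so that the ERE $q_{m,e}(t)$ becomes a $t$-independent equilibrium of the reduced system, and all time dependence in the linearized Hamiltonian enters only through the conformal factor $(1+e\cos t)^{-1}$. In this frame the linearized equations take Hamiltonian form $\xi'=JB(t)\xi$ with $B(t)$ the Hessian of the (reduced) quadratic Hamiltonian at the equilibrium.

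Next I would observe that in the restricted limit $m_4\to 0$, after the standard renormalization of the $m_4$ equations by $m_4$, the mixed Hessian entries $\partial^2 H/\partial q_i\partial q_4$ for $i=1,2,3$ all carry a factor $m_4$ coming from $-m_im_4/|q_i-q_4|$ and hence vanish in the limit. Consequently $B(t)$ splits into a $12$-dimensional block in the primary coordinates $(p_i,q_i)_{i=1,2,3}$ and a $4$-dimensional block in $(p_4,q_4)$. The $12$-dimensional block is exactly the linearized $3$-body Lagrangian ERE with mass parameter $\bb_L$ of (\ref{L:bb}) (since the primaries evolve by the unperturbed $3$-body dynamics), and the Meyer--Schmidt reduction cited in \cite{MS} together with \cite{HLS} further decomposes it into the $2$-dimensional Kepler factor and the $4$-dimensional Lagrangian essential factor, producing the first two summands of the claim.

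The main computation is the $(p_4,q_4)$ block. The kinetic term plus the Coriolis contribution from the rotating frame yield the $I_2$ momentum-momentum block and the anti-diagonal $\pm 1$ cross terms of (\ref{LHS}), while the centrifugal contribution together with the pulsating rescaling yields $I_2$ in the position-position block minus $(1+e\cos t)^{-1}\tilde U''(z^*)$, where $\tilde U(q)=\sum_{i=1}^3 m_i/|q-q_{i,0}|$. Writing $v_i=q_{i,0}-z^*$, the standard formula gives
\begin{equation*}
\tilde U''(z^*)=-\sum_{i=1}^3\frac{m_i I_2}{|v_i|^3}+3\sum_{i=1}^3\frac{m_i v_iv_i^T}{|v_i|^5}.
\end{equation*}
Identifying $\R^2\cong\C$, one can express $v_iv_i^T=\tfrac{1}{2}|v_i|^2 I_2 + R(v_i^2)$ where $R(v_i^2)$ is a traceless symmetric matrix whose components are $\text{Re}(v_i^2)$ and $\text{Im}(v_i^2)$. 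A direct computation then shows that $\tilde U''(z^*)$ has trace $2\alpha$ and its traceless part is the symmetric matrix determined by the complex number $\sum_i m_iv_i^2/|v_i|^5$, which by definition of $\bb$ in (\ref{aa.bb}) has modulus $2\bb$ and hence eigenvalues $\pm 3\bb$ after multiplication by $3$.

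The final step is to rotate the axes of $q_4$ so that the traceless part of $\tilde U''(z^*)$ becomes diagonal, i.e.\ so that $\sum_i m_iv_i^2/|v_i|^5$ is real and positive. In these aligned coordinates the position-position block is diagonal with entries $1-\lambda_3/(1+e\cos t)$ and $1-\lambda_4/(1+e\cos t)$, $\lambda_3=1+\aa+3\bb$, $\lambda_4=1+\aa-3\bb$, which is exactly (\ref{LHS}). The main technical obstacle is twofold: first, bookkeeping the normalization factors of the pulsating scaling so that the constant "$+1$" in $\lambda_3,\lambda_4$ (arising from the central configuration equation for $m_4$ and the centrifugal term) appears with the correct sign; and second, verifying that the rotation diagonalizing $\tilde U''(z^*)$ is compatible with the rotating frame already fixed by the Lagrangian triangle, so that the preferred axes do not precess with $t$. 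Both are consequences of the rotational symmetry of the pulsating frame together with the fact that the central configuration of the restricted body is stationary in it.
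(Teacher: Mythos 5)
Your proposal is correct in substance and arrives at the same system (\ref{LHS}), but it organizes the reduction in the opposite order from the paper. The paper first performs the Meyer--Schmidt symplectic reduction for the \emph{full} $4$-body problem with $m_4=\ep>0$: it builds the change of variables $Q=AX$, $P=A^{-T}Y$ from an explicit unitary eigenbasis $v_1,\dots,v_4$ of $D=\mu I_4+\tilde M^{-1}B$, obtains in Proposition \ref{linearized.Hamiltonian} a splitting into a Kepler block, a $w_1$ block and a $w_2$ block coupled only through the coefficient $\bb_{12}$ of (\ref{bb_12}), and then takes the limit inside the coefficients; the decoupling is the computation $\bb_{12,0}=0$ in (\ref{bb_120}), which rests on $c_i\to0$ for $i\le3$ and $\sqrt{m_4}\,c_4\to-1$ in (\ref{OneSmallSec.lim.c1})--(\ref{OneSmallSec.lim.c4}). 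You instead take the limit first: the cross Hessian blocks vanish after rescaling, the primary block is the $3$-body Lagrangian ERE handled by the known $3$-body reduction, and the $q_4$ block is computed directly from $\tilde U''(z^*)$, whose trace $2\aa$ and traceless part of norm proportional to $\bb$ reproduce (\ref{lambda_3})--(\ref{lambda_4}). Your route is more transparent for the restricted case and avoids constructing $v_3,v_4$ explicitly; the paper's route has the advantage that the decomposition is exact for every $m_4>0$, so the limit is taken only at the level of scalar coefficients. One point you must make explicit for the argument to close: the ``renormalization of the $m_4$ equations'' has to be the symplectic rescaling $(p_4,q_4)\mapsto(m_4^{-1/2}p_4,\,m_4^{1/2}q_4)$. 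If one merely divides the Newtonian $\ddot q_4$-equation by $m_4$, the coefficient of $\delta q_i$ in the $\delta q_4$-equation is of order $m_i$ and does \emph{not} vanish (the system is then only block-triangular); it is precisely the $\sqrt{m_4}$-rescaling --- encoded in the paper by $\sqrt{m_4}\,c_4\to-1$ while $c_i\to0$ --- that makes the cross terms $O(\sqrt{m_4})$ and the two blocks genuinely independent. Your second flagged issue, the extra ``$+1$'' in $\lambda_3,\lambda_4$, is indeed resolved as you suspect: it comes from the pulsating-frame term $\frac{p-r}{p}I_2$ in $H_{w_2w_2}$ of (\ref{Hessian}), not from $\tilde U''(z^*)$ itself, and the constant rotation diagonalizing the traceless part commutes with the remaining rotation-invariant blocks, so the normal form (\ref{LHS}) follows.
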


\begin{remark}
	According to the discussion in Section 2, the parameters $\lambda_3$ and $\lambda_4$ only depend on
	$m_1$, $m_2$
	since $\sum_{i=1}^3 m_i = 1$.
	But the physical meaning of $\lambda_3$ and $\lambda_4$
	are not straightforward.
\end{remark}

Noting that the linear stability of the Keplerian and Lagrangian parts have been studied in \cite{HLS} and \cite{HS2},
we only need to study the linear stability of the essential part in this paper.
When $m_4$ locates in $\Pi_N$, we have following result.
\begin{theorem}\label{Th:stability.of.Pi_N}
  \begin{enumerate}[label=(\roman*)]
    \item If limit position $z^*$ of the zero mass $m_4$ locates in $\Pi_N$
    such that $q_{1,0}$, $q_{2,0}$, $q_{3,0}$ and $z^*$ form a central configuration,
  	the essential part of the system is linearly unstable.

    \item If $m_1= m_2 = m_4 = 0$, then $z^* = \frac{1}{2} + \ii (1+\frac{\sqrt{3}}{2})\in \pt \Pi_N$,
    the essential part of the system is spectrally stable but linearly unstable.
  \end{enumerate}
\end{theorem}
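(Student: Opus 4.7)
For part (i), I plan to combine a structural sign condition on the mass parameters with the $\om$-Maslov index framework from \cite{HLS,ZL15ARMA}. Substituting Leandro's parametrisation of $\Pi_N$ into (\ref{aa.bb}) yields explicit expressions for $\aa$ and $\bb$, and the decisive step is to prove
\[
\lm_4 \;=\; 1+\aa-3\bb \;<\; 0 \qquad\text{uniformly on } \Pi_N,
\]
which isolates a hyperbolic principal direction of the second-variation operator. A symplectic change of basis separating the $(1,4)$-block from the $(2,3)$-block of $B(t)$ then reduces the question on this direction to a scalar Hill-type equation with negative mean coefficient; an $\om$-index comparison in the spirit of \cite{HLS,HO1} forces at least one pair of Floquet multipliers of $\gamma(2\pi)$ off the unit circle for every $e\in[0,1)$, giving linear instability.

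For part (ii) the data are explicit. Substituting $m_1=m_2=0$ and $m_3=1$ into (\ref{aa.bb}) and using $|q_{3,0}-z^*|=1$ gives $\aa=\bb=\tfrac12$, so $\lm_3=3$ and $\lm_4=0$. The $(1,4)$ block of $B(t)$ then becomes $\bpm 1 & 1\\1 & 1\epm$, so the constant vector $\eta:=(1,0,0,-1)^\top$ lies in $\ker B(t)$ for every $t$. Hence $\xi(t)\equiv\eta$ solves (\ref{LHS}) and $\eta$ is a $1$-eigenvector of the monodromy $M$. The Noether invariant $\om(\xi,\eta)$ permits a Marsden--Weinstein reduction at level $0$ modulo $\eta$-translation, and a direct computation of $H=\tfrac12\xi^\top B(t)\xi$ on the level set collapses the system to the scalar Hill equation
\[
\ddot u + \Bigl(4-\tfrac{3}{1+e\cos t}\Bigr) u \;=\; 0.
\]

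The next step is to verify that this reduced Hill equation is elliptic for every $e\in[0,1)$. Fourier expansion gives mean $1 - \tfrac{3e^2}{2} + O(e^4)$ and first-harmonic amplitude $3e$, which places the equation at $(\delta,\epsilon)=(1-\tfrac{3e^2}{2},\,3e)$ strictly below the $n{=}2$ Mathieu tongue boundary $\delta = 1-\epsilon^2/12$ for small $e$; an $\om$-index monotonicity argument then propagates ellipticity to the full interval $[0,1)$, yielding spectral stability of all four multipliers of $\gamma(2\pi)$. For linear instability, I look for a secular solution $\xi(t)=t\eta+\zeta(t)$ with $\zeta$ $2\pi$-periodic; substitution into (\ref{LHS}) yields $\zeta' - JB(t)\zeta = -\eta$, and the Fredholm alternative for $L=\pt_t-JB(t)$ on $2\pi$-periodic $4$-vectors is satisfied (its adjoint null space is spanned by $J\eta$, and $\om(\eta,\eta)=0$). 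The resulting $\zeta$ satisfies $M\zeta(0)=\zeta(0)+2\pi\eta$, exhibiting a nontrivial Jordan block of $M$ at the eigenvalue $1$ and hence polynomial growth.

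The main obstacle is the uniform sign condition $\lm_4<0$ on $\Pi_N$ in part (i): Leandro's non-convex central configuration is given implicitly and couples the geometry with the mass ratios, so establishing this inequality will likely require a careful monotonicity analysis after reducing to one or two of the side-lengths $r_i$. A secondary subtlety in part (ii) is propagating ellipticity of the reduced Hill equation from a neighbourhood of $e=0$ to the entire interval $[0,1)$, since the coefficient's mean diverges to $-\infty$ as $e\to 1^-$, so the perturbative argument near $e=0$ must be complemented by an index-based continuation covering all eccentricities.
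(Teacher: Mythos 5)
Your proposal leaves the decisive step of part (i) unproved and contains a substantive error in part (ii), so as it stands it does not establish the theorem.

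For part (i), everything hinges on the sign condition $\lm_4=1+\aa-3\bb<0$ (equivalently $\td\bb>0$) on all of $\Pi_N$, and you explicitly defer this as the ``main obstacle''; a direct monotonicity analysis of Leandro's implicit parametrisation is exactly the hard computation the paper avoids. The paper's route is structural: $\lm_4\mu_0$ is an eigenvalue of the Hessian $D^2U|_{\mathscr{E}}(a_0)$ of the potential restricted to the inertia ellipsoid (via the regularized Hessian $R=I_{2n-4}+\mathcal{D}$), Leandro's Proposition 6.2 gives spectral instability for $m_1=m_2$ on $I_1'\subset\Pi_N$, forcing $\lm_4<0$ there, and Leandro's Proposition 4.2 (non-degeneracy of all central configurations in $\Pi_N$) propagates the sign over the whole region by continuity. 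Two further issues: your proposed symplectic separation of the $(1,4)$- and $(2,3)$-blocks of $B(t)$ into a scalar Hill equation cannot work, because the quadratic Hamiltonian contains the coupling $p_1q_2-p_2q_1$ (the $-J$ blocks), so the two degrees of freedom do not decouple; and even granting $\lm_4<0$, the conclusion is not a one-line index comparison but the full classification of $\cR_{EH}$ (Theorem \ref{Th:stability.of.EH}), every case of which carries a $D(2)$ factor in the normal form.

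For part (ii), your computation $\aa=\bb=\frac12$, $\lm_3=3$, $\lm_4=0$ and the constant kernel vector $\eta$ are correct, but the claim that the reduced dynamics is \emph{strictly} elliptic for $e>0$ is false. At $(\td\aa,\td\bb)=(0,0)$ the operator $\td\cA(0,0,e)$ coincides with the $\bb_L=0$ Lagrangian operator of \cite{HLS}, whose nullity is $\nu_1=3$ for \emph{every} $e\in[0,1)$; hence $\td\xi_{0,0,e}(2\pi)\approx I_2\diamond N_1(1,1)$ and the point sits exactly on a $1$-degenerate surface for all eccentricities. A perturbative ``below the Mathieu tongue boundary'' argument therefore cannot succeed (it would give $\nu_1=1$), and no index-monotonicity continuation can rescue a claim that is already wrong to second order in $e$. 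Your Fredholm-alternative check for the secular solution is also incomplete: the adjoint null space of $\pt_t-JB(t)$ is $J$ applied to the full three-dimensional space of periodic solutions, not just $\span\{J\eta\}$, so two more orthogonality conditions must be verified. The paper bypasses all of this by reading off the normal form from the known index data in Theorem \ref{thm:bb=0.norm}\ref{thm:bb=0.norm.aa=0}.
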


Denote by $\Sp(2n)$ the symplectic group of real $2n\times 2n$ matrices.
Following \cite{Lon2} and \cite{Lon4}, for any $\omega\in\U=\{z\in\C\;|\;|z|=1\}$, define a real function
$D_\om(M)=(-1)^{n-1}\overline{\om}^n det(M-\om I_{2n})$ for any $M$ in the symplectic group $\Sp(2n)$.
Then we can define $\Sp(2n)_{\om}^0 = \{M\in\Sp(2n)\,|\, D_{\om}(M)=0\}$ and
$\Sp(2n)_{\om}^{\ast} = \Sp(2n)\bs \Sp(2n)_{\om}^0$. The orientation of $\Sp(2n)_{\om}^0$ at any of its point
$M$ is defined to be the positive direction $\frac{\d}{\d t}Me^{t J}|_{t=0}$ of the path $Me^{t J}$ with $t>0$ small
enough. Let $\nu_{\om}(M)=\dim_{\C}\ker_{\C}(M-\om I_{2n})$. Let
$\mathcal{P}_{2\pi}(2n) = \{\ga\in C([0,2\pi],\Sp(2n))\;|\;\ga(0)=I\}$ and
$\ga_0(t)=\diag(2-\frac{t}{2\pi}, (2-\frac{t}{2\pi})^{-1})$ for $0\le t\le 2\pi$.

Given any two $2m_k\times 2m_k$ matrices of square block form
$M_k = (\begin{smallmatrix}
A_k & B_k\\ C_k & D_k
\end{smallmatrix} )$ with $k=1, 2$,
the symplectic sum of $M_1$ and $M_2$ is defined (cf. \cite{Lon2} and \cite{Lon4}) by
the following $2(m_1+m_2)\times 2(m_1+m_2)$ matrix $M_1\dm M_2$:
$$
M_1\dm M_2=\bpm A_1 &   0 & B_1 &   0\\
                 0   & A_2 &   0 & B_2\\
                 C_1 &   0 & D_1 &   0\\
                0   & C_2 &   0 & D_2 \epm.
$$
 For any two paths $\xi_j\in\P_{\tau}(2n_j)$
 with $j=0$ and $1$, let $\xi_0\dm\xi_1(t)= \xi_0(t)\dm\xi_1(t)$ for all $t\in [0,\tau]$.
As in \cite{Lon4}, for $\lm\in\R\bs\{0\}$, $a\in\R$, $\th\in (0,\pi)\cup (\pi,2\pi)$,
$b=(\begin{smallmatrix}b_1 & b_2\\
                 b_3 & b_4\end{smallmatrix})$ with $b_i\in\R$ for $i=1, \ldots, 4$, and $c_j\in\R$
for $j=1, 2$, some normal forms are given by
\bea
&& D(\lm)=\begin{pmatrix}\lm & 0\\
                         0  & \lm^{-1} \end{pmatrix}, \qquad
   R(\th)=\begin{pmatrix}\cos\th & -\sin\th\\
                        \sin\th  & \cos\th\end{pmatrix},  \nn\\
&& N_1(\lm, a)=\begin{pmatrix}\lm & a\\
                             0   & \lm\end{pmatrix}, \qquad
   N_2(e^{\sqrt{-1}\th},b) = \begin{pmatrix}R(\th) & b\\
                                           0      & R(\th)\end{pmatrix},  \nn\\
&& M_2(\lm,c)=\begin{pmatrix}
\lm &   1 &       c_1 &         0 \\
0 & \lm &       c_2 & (-\lm)c_2 \\
0 &   0 &  \lm^{-1} &         0 \\
0 &   0 & -\lm^{-2} &  \lm^{-1} \end{pmatrix}. \lb{eqn:m2}\eea
Here $N_2(e^{\sqrt{-1}\th},b)$ is {\bf trivial} if $(b_2-b_3)\sin\th>0$, or {\bf non-trivial}
if $(b_2-b_3)\sin\th<0$, in the sense of Definition 1.8.11 on p.41 of \cite{Lon4}. Note that
by Theorem 1.5.1 on pp.24-25 and (1.4.7)-(1.4.8) on p.18 of \cite{Lon4}, when $\lm=-1$ there hold
$c_2 \not= 0$ if and only if $\dim\ker(M_2(-1,c)+I)=1$ and
$c_2 = 0$ if and only if $\dim\ker(M_2(-1,c)+I)=2$.
For more details, readers may refer Section 1.4-1.8 of \cite{Lon4}.

For any $\xi\in \mathcal{P}_{2\pi}(2n)$ we define $\nu_\om(\ga)=\nu_\om(\xi(2\pi))$ and
$$  i_\om(\xi)=[\Sp(2n)_\om^0:\xi\ast\xi_n], \qquad {\rm if}\;\;\xi(2\pi)\not\in \Sp(2n)_{\om}^0,  $$
i.e., the usual homotopy intersection number, and the orientation of the joint path $\xi\ast\xi_n$ is
its positive time direction under homotopy with fixed end points, where the path $\xi_n(t) = (\begin{smallmatrix}
2-\frac{t}{\tau} & 0 \\ 0 &  (2-\frac{t}{\tau})^{-1}
\end{smallmatrix})$. When $\xi(2\pi)\in \Sp(2n)_{\om}^0$, the index $i_{\om}(\xi)$  follows \cite{Lon4}. The pair
$(i_{\om}(\xi), \nu_{\om}(\xi)) \in \Z\times \{0,1,\ldots,2n\}$ is called the index function of $\xi$
at $\om$. When $\nu_{\om}(\xi)=0$ or $\nu_{\om}(\xi)>0$, the path $\xi$ is called $\om$-{\it non-degenerate}
or $\om$-{\it degenerate} respectively.
For more details readers may refer to \cite{Lon4}.

When $e = 0$, the region of $(\aa,\bb,e)$ is divided to $\cR_i$ for $1 \leq  i\leq 4$ (see I, II, III and IV of Section \ref{sec:e=0}) and the sub-regions of $\cR_2$ and $\cR_3$ are defined by (\ref{eqn:R30*} -\ref{eqn:R3n+}) when $e= 0$. Then we have following results on $\om$-Maslov index.

\begin{figure}[htbp]
    \centering
    \includegraphics[width=0.618\textwidth]{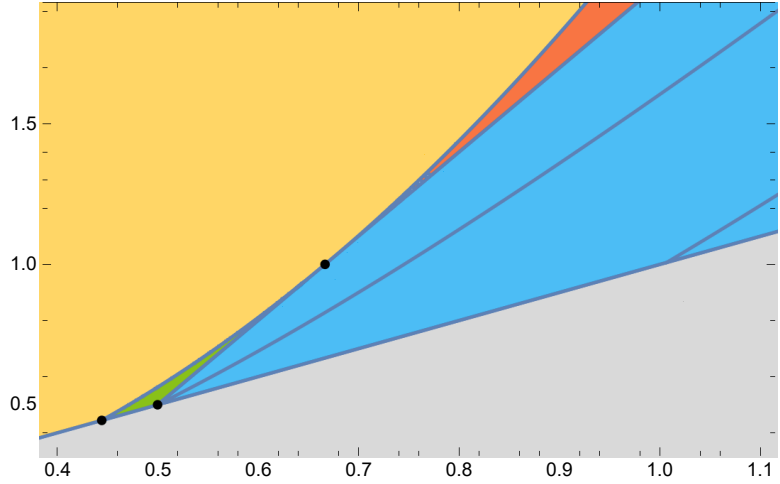}
    \caption{The yellow region is $\cR_1$; the green region is $\cR_2$; the blue region is $\cR_3$; the red region is $\cR_4$; and the gray region is the forbidden region which is $\aa<\bb$.
	The intersection points of the boundaries are presented as the black dots in the figure from left to right respectively $P_1 = (\frac{4}{9},\frac{4}{9})$, $P_2 =(\frac{1}{2},\frac{1}{2})$ and $P_3 =(1,\frac{2}{3})$. The two lines in the blue region $\cR_3$ are $\cR_{3,1}^*$ and $\cR_{3,\frac{3}{2}}^*$ which are the $1$-degenerate curve and $-1$-degenerate curve respectively.}
\label{fig:1}
\end{figure}

\begin{theorem}\lb{thm:gen.ind.nul}
  When $\aa \ge \bb> 0$, the essential part $\xi_{\aa,\bb,e}$ of the fundamental solutions of the linearized Hamiltonian system by (\ref{LHS}) satisfies following results on the $\om$-Maslov-type index and the nullity:
  \begin{enumerate}[label=(\roman*)]
    \item when $e = 0$, the $1$-index and nullity of $\xi_{\aa,\bb,0}$ satisfy that
    \bea
    i_1(\xi_{\aa,\bb,0}) &=&  \begin{cases}
    0,    & \mbox{if} \; (\aa,\bb) \in \cR_1\cup \cR_{2}\cup\cR_{3,0}^* \cup \cR_4;\\
	2n+1  & \mbox{if} \; (\aa,\bb) \in \cR_{3,n}^+\cup \cR^*_{3,n+\frac{1}{2}}\cup \cR_{3,n+1}^-, n \geq 0;\\
    \end{cases}	\lb{index.e=0}\\
    \nu_1(\xi_{\aa,\bb,0}) &=&  \begin{cases}
	3, & \mbox{if} \;  (\aa,\bb) = (\frac{1}{2},\frac{1}{2});\\
	2, & \mbox{if} \; (\aa,\bb) \in \cup_{n=1}^{\infty}\cR_{3,n}^*;\\
	1, & \mbox{if} \;(\aa,\bb) \in \cR_{3,0}^*; \\
	0, & \mbox{if} \; (\aa,\bb)\notin \cup_{n=0}^{\infty}\cR_{3,n}^*;
    \end{cases}\lb{eqn:null.e=0}
    \eea

    \item when $e = 0$, the $-1$-index and nullity of $\xi_{\aa,\bb,0}$ satisfy that
    \bea
    i_{-1}(\xi_{\aa,\bb,0}) &=& \begin{cases}
    0, & \mbox{if} \; (\aa,\bb)\in \cR_1 \cup \cR_{2,\frac{1}{2}}^- \cup\cR_{3,0}^+\cup \cR_4; \\
    2, & \mbox{if} \; (\aa,\bb)\in \cR_{2,\frac{1}{2}}^+; \\
	2n, & \mbox{if} \; (\aa,\bb)\in \cR_{3,n}^- \cup  \cR_{3,n}^* \cup \cR_{3,n}^+, n \geq 1;
    \end{cases} \lb{eqn:intro.-1.index}	\\
    \nu_{-1}(\xi_{\aa,\bb,0}) &=&
    \begin{cases}
	2, & \mbox{if} \;  (\aa,\bb)\in \cR_{2,\frac{1}{2}}^* \bigcup(\cup_{n = 1}^{\infty} \cR_{3,n+\frac{1}{2}}^*);\\
	0,  & \mbox{if} \;  (\aa,\bb)\notin \cR_{2,\frac{1}{2}}^* \bigcup(\cup_{n = 1}^{\infty} \cR_{3,n+\frac{1}{2}}^*); \lb{null.-1.index}
    \end{cases}
    \eea
  \item\lb{thm:hyper.region} when $\aa\geq 3\bb>0$, $e\in [0,1)$ and $\om \in \U$, the $\om$-index and nullity of $\xi_{\aa,\bb,e}$ satisfy
  \bea
    i_{\om}(\xi_{\aa,\bb,e}) = 0, \quad \nu_{\om}(\xi_{\aa,\bb,e}) = 0.
  \eea
    Therefore, $\xi_{\aa,\bb,e}(2\pi)$ is hyperbolic and linearly unstable;

  \item when $\aa> \bb$, $\aa> 3\bb-1$ and $e\in [0,1)$, the $1$-index and nullity of $\xi_{\aa,\bb,e}$ satisfy
  \be i_{1}(\xi_{\aa,\bb,e}) = 0, \quad \nu_{1}(\xi_{\aa,\bb,e}) = 0, \ee
  and when $0< \bb \leq \aa<3\bb-1$ and $e\in[0,1)$,
  $i_{1}(\xi_{\aa,\bb,e}) $ is positive and odd;

  \item for fixed $e\in [0,1)$ and $\om\in\U$, if we fixed $\bb_0\in(0,\infty)$,
  $i_{\om}(\xi_{\aa,\bb_0,e})$ is non-increasing in $\aa\in(0,\infty)$;
  if we fixed $\aa_0\in(0,\infty)$,
  $i_{\om}(\xi_{\aa,\bb_0,e})$ is non-decreasing in $\bb\in(0,\aa_0)$; and $ i_{\om}(\xi_{\aa,\bb_0,e})$ tends to infinity when $3\bb -\aa \to \infty$.
  \end{enumerate}
\end{theorem}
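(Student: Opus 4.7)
The theorem splits naturally into an exact analysis at $e=0$ (parts (i)--(ii)) and an operator-theoretic argument that extends to all $e\in[0,1)$ (parts (iii)--(v)). For parts (i) and (ii) I would exploit that $\xi_{\aa,\bb,0}(2\pi)=e^{2\pi JB_0}$ with $B_0$ the constant $e=0$ matrix of \eqref{LHS}. Since $JB_0$ is infinitesimally symplectic, its characteristic polynomial is even and reduces, after the substitution $y=\mu^{2}$, to an explicit quadratic in $(\aa,\bb)$; the four eigenvalues can then be listed cleanly, and the regions $\cR_1,\ldots,\cR_4$ are identified with the four possible spectral patterns (two hyperbolic pairs, two elliptic pairs, a complex quadruple, or a mixed pair), while the degenerate boundary curves $\cR^{*}_{3,n}$ and $\cR^{*}_{2,\frac{1}{2}}$ are the loci where an elliptic argument resonates with $\om=\pm 1$. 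Putting $\xi(2\pi)$ into one of the normal forms $R(\th_1)\dm R(\th_2)$, $D(\lm)\dm R(\th)$, $N_1(\pm 1,a)\dm R(\th)$, $N_2(e^{\ii\th},b)$, or $M_2(\pm 1,c)$ from the preamble and applying the index formulas of \cite{Lon4} then yields $(i_{\pm 1},\nu_{\pm 1})$ in each region.

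For parts (iii) and (iv) the plan is to study the self-adjoint operator
\[
A_\om := -J\,\tfrac{d}{dt}-B(t)
\]
on the Sobolev space $W^{1,2}_\om([0,2\pi],\R^4)$, whose kernel is isomorphic to the $\om$-eigenspace of $\xi_{\aa,\bb,e}(2\pi)$ and whose Morse index agrees with $i_\om(\xi_{\aa,\bb,e})$ up to an additive constant. A useful structural observation is that the symplectic change $(q_2,p_2)\mapsto(-p_2,q_2)$ brings $B(t)$ to the block-diagonal form $B'(t)=\diag(B'_{qq}(t),B'_{pp}(t))$, in which $B'_{qq}$ depends only on $\lm_4=1+\aa-3\bb$ and $B'_{pp}$ only on $\lm_3=1+\aa+3\bb$. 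Under the hypothesis $\aa\ge 3\bb>0$ of part (iii) both $\lm_3,\lm_4\ge 1$, and a direct coercivity estimate on the quadratic form of $A_\om$ (using the pointwise negativity of both blocks) gives $A_\om>0$ for every $\om\in\U$, forcing $\nu_\om=i_\om=0$ and the hyperbolicity claim. Part (iv) concerns the same operator at $\om=1$: in the open region $\aa>\max(\bb,3\bb-1)$ only $\lm_4>0$ is needed to run the coercivity estimate, while on the complementary strip $\bb\le\aa<3\bb-1$ the indefinite sign of $\lm_4$ combined with the monotonicity of part (v) and the $e=0$ computation of part (i) forces $i_1$ to be positive and odd.

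Part (v) is the backbone on which part (iv) rests. Monotonicity in $\aa$ is immediate: both $\lm_3$ and $\lm_4$ increase with $\aa$, so every entry of $B(t)$ is non-increasing in $\aa$, and the standard comparison theorem for $\om$-Maslov indices (cf.\ \cite{Lon4}) gives the non-increasing conclusion. Monotonicity in $\bb$ is the main difficulty, because $B(t)$ is not pointwise monotone in $\bb$: in the block-diagonal coordinates above, $B'_{qq}$ is non-decreasing in $\bb$ while $B'_{pp}$ is non-increasing. My plan here is a crossing-form analysis: at any parameter where $\om$-nondegeneracy fails along the $\bb$-family, I would compute the crossing form block-by-block and verify that its sign is consistent uniformly in $e\in[0,1)$ and $\om\in\U$. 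The divergence $i_\om\to\infty$ as $3\bb-\aa\to\infty$ then follows because $\lm_3\to\infty$ drives arbitrarily many negative eigenvalues into the $p$-part of $A_\om$. I expect the principal obstacle of the entire theorem to be precisely this $\bb$-direction crossing-form computation: it cannot be reduced to a pointwise comparison of $B$, and must remain consistent uniformly across the slice $(e,\om)\in[0,1)\times\U$.
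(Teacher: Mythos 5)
Your proposal leaves the two load-bearing steps of the theorem unproved, and in both cases the route you sketch would not close them. First, the monotonicity of $i_\om$ in $\bb$ (part (v)), which you correctly identify as the backbone of part (iv) and "the principal obstacle": your crossing-form plan requires the form $\bigl\langle \tfrac{3S(t)}{1+e\cos t}\,y,y\bigr\rangle$ restricted to $\ker\cA(\aa,\bb,e)$ to have a definite sign, and since $S(t)$ is pointwise indefinite there is no reason for this without already knowing the kernel's structure. The paper avoids the crossing form entirely by a rescaling trick (Lemma \ref{lem:A.mono}, after \cite{HLO}): write $\cA(\aa,\bb,e)=\bb\,\bar\cA(\aa,\bb,e)$ with $\bar\cA=\bb^{-1}\cA(\aa,0,e)+\tfrac{3S(t)}{1+e\cos t}$; since $\cA(\aa,0,e)\ge 0$, the operator $\bar\cA$ is \emph{pointwise} non-increasing in $\bb$, and $\cA$ and $\bar\cA$ share Morse index and nullity for $\bb>0$. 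That one identity is the missing idea. Second, the oddness of $i_1$ in $0<\bb\le\aa<3\bb-1$ cannot follow from monotonicity plus the $e=0$ value alone: along a path from the boundary the index jumps by the nullity at each $1$-degeneracy, so parity is preserved only if every such nullity is even. The paper proves this as a separate theorem (Theorem \ref{Th:multiplicity}) by a Fourier-recurrence argument that manufactures a second, independent kernel element from any given one; nothing in your proposal plays this role.

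Two further points. For parts (iii)--(iv) your coercivity claims are too strong as stated: positivity of $A_\om$ "using only $\lm_4>0$" would give $i_\om=0$ for \emph{all} $\om\in\U$ on $\{\aa>3\bb-1\}$, contradicting $i_{-1}=2$ on $\cR_{2,\frac12}^+$ from your own part (ii); and even for $\om=1$ a pointwise estimate fails near $\aa=3\bb-1$ because of the $-I_2$ term. The paper's part (iii) instead uses the non-pointwise fact that $\cA(0,0,e)\ge 0$ with explicit two-dimensional kernel (Lemma \ref{lem:00e.pos}) and then checks strict positivity of the remainder $\tfrac{6\bb}{1+e\cos t}R(t)\,{\rm diag}(1,0)\,R(t)^T$ on that kernel; the boundary case $\aa=3\bb$ genuinely requires this kernel computation. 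Finally, for parts (i)--(ii), knowing the normal form of the endpoint $\xi_{\aa,\bb,0}(2\pi)$ does not determine $i_1$ (it fixes it only modulo the winding); the paper pins down the value $2n+1$ by diagonalizing $\cA(\aa,\bb,0)$ on the explicit Fourier basis $\{f_{n,j}\}$ and counting negative eigenvalues of the $2\times2$ blocks $B_n,\bar B_n$, then transfers to $i_{-1}$ via splitting numbers. Your constant-coefficient eigenvalue computation could in principle recover this through the rotation numbers of $e^{tJB_0}$, but that step must be made explicit.
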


Since by (i) and (ii) of Theorem \ref{thm:gen.ind.nul}, we have the $\pm 1$-index and nullity when $e = 0$.
When $m_4$ locates on $\Pi_C$ or $\Pi_I$, for convenience of the discussion,
two new parameters $\tilde\alpha,\tilde\bb$ are introduced by
\bea
\bpm
\tilde{\aa} \\
\tilde{\bb}
\epm
=T
\bpm
\aa \\
\bb
\epm
=
\bpm
1 & -1\\
-1 & 3
\epm
\bpm
\aa \\
\bb
\epm
+
\bpm
0 \\
-1
\epm.\lb{eqn:trans.T}
\eea
Then $\tilde\alpha,\tilde\bb$ are two functions of $m_1,m_2$ by (\ref{aa.bb}) and (\ref{eqn:trans.T}).
By (iii) of Theorem \ref{thm:gen.ind.nul},
we only need to consider the region of $\td\aa\geq 0$, $\td\bb \geq -1$ and $e\in[0,1)$.
We further divide this region to $\cR_{NH}$ and $\cR_{EH}$ by
\bea
\cR_{NH}  &=&\{(\td\aa,\td\bb,e)| -1< \tilde{\bb} < 0,\tilde{\aa}>0,  e\in[0,1)\};\\
\cR_{EH} &=&\{(\td\aa,\td\bb,e)| \tilde{\bb} > 0, \tilde{\aa}>0, e\in[0,1)\}.
\eea

In $\cR_{NH}$, by (v) of Theorem \ref{thm:gen.ind.nul},
$\nu_{1}(\xi_{\aa,\bb,e}) = 0$ and there exist two $-1$-degeneracy surfaces $\B_s^*$ and $\B_m^*$ and the envelope of the $\om$-degenerate surface $B_k$.
By these surfaces, $\cR_{NH}$ is divided into $\B_s$, $\B_m$, $\B_k$ and $\B_h$ which are defined by (\ref{eqn:RE.Bm} - \ref{eqn:RE.-1.nul}).
For $(\td\aa,\td\bb,e) \in \cR_{NH}$, the norm form and the linear stability are given in Theorem \ref{thm:RE.norm.form} and Theorem \ref{thm:RE.lim}.

\begin{theorem}\lb{thm:RE.norm.form}
    The region $\cR_{NH}$ is divided into sub-regions $\B_s$, $\B_m$, $\B_k$ and $\B_h$ where $\B_h$ is the hyperbolic regions. When $(\td\aa,\td\bb,e) \in \B_s\cup \B_m \cup \B_k$,
	the normal form and linear stability of $\td\xi_{\td\aa,\td\bb,e}(2\pi)$
	satisfy following results.
	\begin{enumerate}[label=(\roman*)]
		\item if $(\td\aa,\td\bb,e) \in \B_m$, $\td\xi_{\td\aa,\td\bb,e}(2\pi) \approx R(\th_1)\diamond R(\th_2)$ for some $\th_1$ and $\th_2\in (\pi, 2\pi)$. Thus it is strongly linear stable;

		\item if $(\td\aa,\td\bb,e)\in \B_s$, $\xi_{\aa,\bb,e}(2\pi) \approx D(\lm)\diamond R(\th)$ for some $0>\lm\neq -1$ and $\th\in (\pi, 2\pi)$. Thus and it is elliptic–hyperbolic, and thus linearly unstable;

		\item if $(\td\aa,\td\bb,e)\in \B_k$. $\xi_{\aa,\bb,e}(2\pi) \approx R(\th_1)\diamond R(\th_2)$ for some $\th_1\in(0,\pi)$ and $\th_2\in (\pi, 2\pi)$ with $2\pi-\th_2 < \th_1$. Thus it is strongly linearly stable;
	\end{enumerate}
\end{theorem}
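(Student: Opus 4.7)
The plan is to determine the symplectic normal form of $\td{\xi}_{\td\aa,\td\bb,e}(2\pi) \in \Sp(4)$ on each open sub-region of $\cR_{NH}$ by combining the $\pm 1$-index/nullity data supplied by Theorem \ref{thm:gen.ind.nul} with Long's splitting number and normal form theory from Sections 1.4--1.8 of \cite{Lon4}. First I would verify that on the connected open components $\B_s$, $\B_m$, $\B_k$ (obtained by removing the $-1$-degeneracy surfaces $\B_s^*$, $\B_m^*$ and the $\om$-degeneracy envelope from $\cR_{NH}$), the path $\td{\xi}_{\td\aa,\td\bb,e}$ is $\om$-non-degenerate for every $\om \in \U$. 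On such regions the conjugacy class of $\td{\xi}_{\td\aa,\td\bb,e}(2\pi)$ is locally constant by homotopy invariance, so it suffices to identify the normal form at one convenient representative per region, which I would take on the slice $e=0$ whenever the component meets that slice.

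Next I would read off $(i_{\pm 1}, \nu_{\pm 1})$ on each component from Theorem \ref{thm:gen.ind.nul}(i)--(ii), and use the monotonicity statement (v) together with the continuity of the eigenvalues to track how the two eigenvalue pairs of $\td{\xi}_{\td\aa,\td\bb,e}(2\pi)$ move as one crosses the degeneracy surfaces. The three crossing scenarios produce the three cases of the theorem: a transverse crossing of $\B_s^*$ corresponds to a pair of eigenvalues leaving $\U$ through $-1$, which by the Long normal form theorem (together with $\nu_{-1}=1$ on $\B_s^*$, so the degenerate block is $M_2(-1,c)$ with $c_2\neq 0$) forces $\td{\xi}_{\td\aa,\td\bb,e}(2\pi)\approx D(\lm)\diamond R(\th)$ with $\lm<0$, $\lm\neq -1$; a crossing of $\B_m^*$ corresponds to two pairs of eigenvalues colliding at $-1$ and separating back along $\U$ (the degenerate block being $-I_2\diamond R(\th)$, equivalently $\nu_{-1}=2$ with trivial $N_2$-type), producing $R(\th_1)\diamond R(\th_2)$ with both $\th_i\in(\pi,2\pi)$; and a crossing of the $\om$-envelope $B_k$ corresponds to a Krein collision at some $\om=e^{\ii\phi}\neq\pm 1$ with one elliptic pair passing through the other along $\U$, yielding $R(\th_1)\diamond R(\th_2)$ with $\th_1\in(0,\pi)$ and $\th_2\in(\pi,2\pi)$. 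The ordering $2\pi-\th_2<\th_1$ in the $\B_k$ case would follow from the identity $i_1=0$ on $\B_k$, which by the standard formula for the $1$-index in terms of the rotation numbers of elliptic blocks constrains $\th_1+(2\pi-\th_2)<2\pi$. Strong linear stability in (i) and (iii) is then automatic from $\om$-non-degeneracy of all elliptic eigenvalues, while the elliptic--hyperbolic conclusion in (ii) follows from the fact that $\lm\neq -1$ is guaranteed by $\nu_{-1}=0$ in the interior of $\B_s$.

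I expect the main obstacle to be the Krein-signature calculation that distinguishes $\B_s$ from $\B_m$, i.e.\ verifying that crossing $\B_s^*$ really corresponds to the non-trivial basic normal form $M_2(-1,c)$ with $c_2\neq 0$ (eigenvalues leaving $\U$) rather than the trivial $-I_2\diamond R(\th)$ (eigenvalues reflecting along $\U$), and dually for $\B_m^*$. This requires a second-order perturbation analysis of the monodromy at a $-1$-degeneracy, carried out through the linear operator $A(\td\aa,\td\bb,e)-\ii\pt_t$ associated with the coefficient matrix in (\ref{LHS}); the sign of the perturbation of the two colliding eigenvalues with respect to $\td\bb$ is controlled by the Krein indefiniteness forced by the sign pattern of $\la_3,\la_4$ in the region $\td\bb<0$ of $\cR_{NH}$. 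Once this signature is established, the remaining normal form identifications reduce to bookkeeping with Long's classification.
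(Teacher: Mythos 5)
Your general toolkit ($\pm1$-indices, splitting numbers, Long's normal form list) is the same as the paper's, but several load-bearing steps in your outline are incorrect or missing. (1) The opening claim that $\td\xi_{\td\aa,\td\bb,e}$ is $\om$-non-degenerate for every $\om\in\U$ on $\B_s$, $\B_m$, $\B_k$ cannot hold: these are exactly the regions where the monodromy has eigenvalues on $\U$, hence is $\om$-degenerate at those eigenvalues (only $\B_h$ is $\om$-non-degenerate for all $\om$); and even in the corrected sense the conjugacy class is not locally constant, only the normal-form \emph{type} is. (2) The plan to read off the normal form at an $e=0$ representative fails for $\B_s$: since $\td\bb_1(\td\aa,0)=\td\bb_2(\td\aa,0)$, the surfaces $\B_s^*$ and $\B_m^*$ coincide on the slice $e=0$ and $\B_s$ does not meet that slice, so you are forced into the crossing analysis, which you defer to an unexecuted ``second-order perturbation of the monodromy.'' (3) The mechanism you propose at $\B_s^*$ is wrong: with $\nu_{-1}=1$ and one eigenvalue pair remaining elliptic at $e^{\ii\th}\ne -1$, the eigenvalue $-1$ has algebraic multiplicity two and the degenerate block is the $2\times2$ block $N_1(-1,\pm1)$, not the $4\times4$ block $M_2(-1,c)$ (whose spectrum is $\{-1,-1,-1,-1\}$); Lemma \ref{lem:boundary} in fact excludes $M_2(-1,c)$ there. (4) Your derivation of $2\pi-\th_2<\th_1$ in case (iii) from ``$\th_1+(2\pi-\th_2)<2\pi$'' is vacuous: that inequality holds automatically for $\th_1\in(0,\pi)$ and $\th_2\in(\pi,2\pi)$ and does not give the claimed ordering.

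For contrast, the paper's proof requires no perturbation analysis. It takes $i_{-1}$ on each region from Proposition \ref{prop:-1.index}, converts it into $i_1(\td\xi^2_{\td\aa,\td\bb,e})$ by the Bott-type formula, and then uses the identity $i_{-1}(\td\xi_{\td\aa,\td\bb,e})=\sum_{\om_i}\bigl(S^+_{M}(\om_i)-S^-_{M}(\om_i)\bigr)$ (valid because $i_1=0$, $S^+_M(1)=0$ and $S^-_M(-1)=0$). On $\B_m$ the squeeze $2=\sum(S^+-S^-)\le\sum(S^++S^-)\le2$ forces $S^-_M(\om_i)=0$ at both elliptic eigenvalues, hence $M\approx R(\th_1)\dm R(\th_2)$ with $\th_1,\th_2\in(\pi,2\pi)$; on $\B_s$ the value $1$ forces exactly one such rotation block and one hyperbolic block $D(\lm)$ with $\lm<0$; on $\B_k$ the value $0$ first excludes $D(\lm)\dm R(\th)$ and then forces $\th_1\in(0,\pi)$, $\th_2\in(\pi,2\pi)$. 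The ordering $2\pi-\th_2<\th_1$ comes from $i_{\om}\ge0$ at $\om=e^{\ii\th_1}$ together with $S^-_{R(\th_1)}(e^{\ii\th_1})=1$, and the borderline case $\th_1=2\pi-\th_2$ is excluded by the count $\sum_{t}\nu_\om(\ga(t))=2$ along the ray of Lemma \ref{lem:mono.ind.nul}(iii). These are precisely the ingredients your proposal would still have to supply.
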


\begin{theorem}\lb{thm:RE.lim}
	When $(\td\aa,\td\bb,e) \in \B_s^*$, $\B_m^*$ or $\B_k^*$, the normal form and the linear stability of $\td\xi_{\td\aa,\td\bb,e}(2\pi)$ satisfy followings.
	\begin{enumerate}[label=(\roman*)]
		\item If $\td\bb_s(\td\aa,e) < \td\bb_m(\td\aa,e)$, we have $\td\xi_{\td\aa,\td\bb_{m}(\aa,e),e}(2\pi) \approx N_1(-1,-1) \diamond R(\th)$ for some $\th\in (\pi,2\pi)$. Thus it is spectrally stable and linearly unstable;

		\item if $\td\bb_k(\td\aa,e)<\td\bb_s(\td\aa,e) =\bb_m(\aa,e)$, we have $\td\xi_{\td\aa,\td\bb_{s}(\td\aa,e),e}(2\pi) \approx -I_2 \diamond R(\th)$ for some $\th\in (\pi,2\pi)$. Thus it is linearly stable, but not strongly linearly stable;

		\item if $\td\bb_k(\td\aa,e)<\td\bb_s(\td\aa,e) < \td\bb_m(\td\aa,e)$, we have $\td\xi_{\td\aa,\td\bb_{s}(\td\aa,e),e}(2\pi) \approx N_1(-1,-1)  \diamond R(\th)$ for some $\th\in (\pi,2\pi)$. Thus it is spectrally stable and linearly unstable;

		\item if $\td\bb_k(\td\aa,e)<\td\bb_s(\td\aa,e) \leq \td\bb_m(\td\aa,e)$, we have $\td\xi_{\td\aa,\td\bb_{k}(\aa,e),e}(2\pi) \approx N_2(e^{\sqrt{-1}\th},b)$ for some $\th\in (0,\pi)$ and $b= \begin{pmatrix}
		b_1 & b_2 \\ b_3 & b_4 \end{pmatrix}$
		satisfying $(b_2-b_3)\sin\th> 0$,
		that is, $N_2(e^{\sqrt{-1}\th},b)$ is
		trivial in the sense of Definition 1.8.11 in p.41 of \cite{Lon1}.
		Consequently the matrix $\td\xi_{\td\aa,\td\bb_{k}(\td\aa,e),e}(2\pi)$ is spectrally stable and linearly unstable;

		\item if $\td\bb_k(\td\aa,e) = \td\bb_s(\td\aa,e) \leq \td\bb_m(\td\aa,e)$,
		we have either $\td\xi_{\aa,\td\bb_{k}(\td\aa,e),e}(2\pi)  \approx N_1(-1,1)  \diamond D(\lm)$ for some $-1< \lm <0$
		and is linearly unstable; or $\td\xi_{\td\aa,\td\bb_{k}(\td\aa,e),e}(2\pi) \approx M_2(-1,c)$ with $c_1$, $c_2\in \R$ and $c_2 \neq 0$. Thus it is spectrally stable and linearly unstable;

		\item if $\td\bb_k(\td\aa,e) = \td\bb_s(\td\aa,e) = \td\bb_m(\td\aa,e)$, either $\td\xi_{\td\aa,\td\bb_{k}(\td\aa,e),e}(2\pi) \approx M_2(-1,c)$
		with $c_1$, $c_2\in \R$ and $c_2 = 0$ which possesses basic normal form $N_1(-1,1) \diamond N_1(-1, 1)$,
		or
		$\td\xi_{\td\aa,\td\bb_{k}(\td\aa,e),e} \approx N_1(-1, 1)\diamond N_1(-1, 1)$.
		Thus $\td\xi_{\td\aa,\td\bb_{k}(\td\aa,e),e}(2\pi)$ is spectrally stable and linearly unstable.
	\end{enumerate}
\end{theorem}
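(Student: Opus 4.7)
The plan is to identify each normal form by combining three tools: (a) the $\omega$-index and nullity data of Theorem \ref{thm:gen.ind.nul}, together with the monotonicity item (v); (b) Long's classification of basic normal forms on $\Sp(4)$ (Sections 1.5-1.8 of \cite{Lon4}); and (c) the splitting number calculus $S_M^\pm(\omega)$ (Chapter 9 of \cite{Lon4}), which records how $\omega$-eigenvalues enter or leave $\U$ under small perturbations of the path and how $i_\omega$ jumps across $\omega$-degenerate hypersurfaces. On each of $\B_s^*$, $\B_m^*$, $\B_k^*$ I first symplectically $\diamond$-decompose $\td\xi_{\td\aa,\td\bb,e}(2\pi)$ into a $2\times 2$ block carrying the degenerate eigenvalue (which is $-1$ on $\B_s^*\cup\B_m^*$ and a pair $e^{\pm\ii\th}$ on $\B_k^*$) plus a non-degenerate rotation factor $R(\th)$ whose angle lies in $(\pi,2\pi)$, inherited from the adjacent interior region in $\B_m\cup\B_k$.

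For the simple boundary points in (i), (iii), (iv) the degeneracy has multiplicity one. In (i) and (iii) the degenerate eigenvalue is $-1$ with $\nu_{-1}=1$; the jump of $i_{-1}$ across $\td\bb_m$ (resp.\ $\td\bb_s$) is read from item (v) of Theorem \ref{thm:gen.ind.nul}, and matching this jump with the splitting number identity rules out the semisimple form $-I$ and forces the nontrivial Jordan block $N_1(-1,-1)$. Case (iv) is analogous with $\omega=e^{\ii\th}\in\U\setminus\{\pm 1\}$: the block is of type $N_2(e^{\ii\th},b)$, and the sign $(b_2-b_3)\sin\th>0$ in the triviality criterion of Definition 1.8.11 of \cite{Lon4} is deduced from the direction in which the two unit-circle eigenvalues cross $\U$, which is controlled by the monotonicity of $i_\omega$ in item (v).

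The coinciding cases (ii), (v), (vi) occur where two or all three of the curves $\td\bb_s, \td\bb_m, \td\bb_k$ meet. In (ii), $\td\bb_s=\td\bb_m$ produces $\nu_{-1}=2$; continuity of the eigenvalues from the adjacent strongly stable region $\B_k$ (where $-1$ is not an eigenvalue and both rotation angles lie in $(\pi,2\pi)$) forces the semisimple form $-I_2\diamond R(\th)$. In (v), the $-1$-degeneracy coalesces with an $e^{\ii\th}$-degeneracy; Long's list of admissible basic forms on $\Sp(4)$ narrows the possibilities to either $N_1(-1,1)\diamond D(\lambda)$ or $M_2(-1,c)$, the two being distinguished by $\dim\ker(\td\xi(2\pi)+I)$ as noted after (\ref{eqn:m2}). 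Case (vi) is the triple coincidence, where $\nu_{-1}=4$ and the same classification, together with compatibility with the adjacent sub-regions, leaves only $M_2(-1,c)$ with $c_2=0$ or $N_1(-1,1)\diamond N_1(-1,1)$.

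The main technical obstacle is the dichotomy in case (v): the splitting numbers do not separate $N_1(-1,1)\diamond D(\lambda)$ (hyperbolic split off $\U$) from $M_2(-1,c)$ with $c_2\neq 0$ (nontrivial Jordan block remaining at $-1$). To resolve it I would compute the second-order perturbation of the monodromy along a direction transverse to the curve $\td\bb_s=\td\bb_k$, in the spirit of the Lyapunov-type calculation carried out for the three-body problem in \cite{HLS} and \cite{ZL15ARMA}, using the explicit coefficient matrix in (\ref{LHS}) together with (\ref{aa.bb}) and (\ref{eqn:trans.T}). Extracting and signing this coefficient in terms of $(\td\aa,e)$ is where the bulk of the work lies, and is the reason the statement of (v) is left disjunctive.
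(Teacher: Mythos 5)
Your toolkit (indices, nullities, splitting numbers, Long's normal forms, limits from adjacent regions) is the right one and parts of the argument would go through, but the proposal has a genuine gap at its starting point. You assume from the outset that at each boundary surface the monodromy splits as a degenerate $2\times 2$ block $\dm R(\th)$, the rotation factor being ``inherited from the adjacent interior region''. That splitting is precisely what must be proved: a priori the eigenvalue $-1$ could have algebraic multiplicity four, i.e.\ $\td\xi_{\td\aa,\td\bb,e}(2\pi)\approx M_2(-1,c)$ or a matrix with basic normal form $N_1(-1,a)\dm N_1(-1,b)$, and these possibilities genuinely occur --- they are exactly cases (v) and (vi). The index and splitting identities alone do not exclude them in cases (i)--(iii): for instance on $\B_s^*$ in case (iii) one has $i_{1}=i_{-1}=0$, which is perfectly consistent with $M\approx M_2(-1,c)$. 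The paper's mechanism is Lemma \ref{lem:boundary}: if $M$ had one of these totally $-1$-degenerate forms, then $i_\om(\td\xi)=0$ for every $\om\in\U$, and by the monotonicity of $\td\cA(\td\aa,\td\bb,e)$ in $\td\bb$ the monodromy would be hyperbolic for all smaller $\td\bb$, contradicting the hypotheses $\td\bb_s<\td\bb_m$ in case (i) and $\td\bb_k<\td\bb_s$ in cases (ii) and (iii). This hyperbolicity-propagation argument is the one ingredient that separates (i)--(iii) from (v)--(vi), and nothing in your proposal replaces it. (It also renders the second-order perturbation computation you propose for (v) unnecessary: the paper leaves (v) disjunctive and only pins down the parameter $a=1$ in $N_1(-1,a)\dm D(\lm)$ via $0=i_{-1}=-S^-_{N_1(-1,a)}(-1)$.)

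There are also two factual slips. In case (vi) you assert $\nu_{-1}=4$; this is impossible, since the total $-1$-nullity accumulated along the ray from $(0,0,e)$ to $(\td\aa_*,-1,e)$ equals $2$ by Lemma \ref{lem:mono.ind.nul}(iii) (see also Proposition \ref{prop:-1index}), and indeed both admissible forms $M_2(-1,c)$ with $c_2=0$ and $N_1(-1,1)\dm N_1(-1,1)$ satisfy $\dim\ker(M+I_4)=2$; a symplectic matrix with $\nu_{-1}=4$ would be $-I_4$, which is not among the claimed normal forms, so the conclusion of (vi) does not follow from your premise. In case (ii) you describe the adjacent region $\B_k$ as having both rotation angles in $(\pi,2\pi)$; by Theorem \ref{thm:RE.norm.form}(iii) one angle lies in $(0,\pi)$ there (both angles in $(\pi,2\pi)$ characterizes $\B_m$), and in any case eigenvalue continuity cannot by itself distinguish $-I_2\dm R(\th)$ from $N_1(-1,a)\dm N_1(-1,b)$ --- that again requires the Lemma \ref{lem:boundary} exclusion combined with $\nu_{-1}=2$. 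Finally, your uniform treatment of (i) and (iii) is too coarse: the Jordan parameter is dictated by the value of $i_{-1}$ \emph{on} the degenerate surface, which is $1$ on $\B_m^*$ but $0$ on $\B_s^*$ by Proposition \ref{prop:-1.index}, so the splitting identity forces $S^-_M(-1)=0$ in the first case and $S^-_M(-1)=1$ in the second; the same computation cannot yield the same block in both.
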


When $(\td\aa_0,\td\bb_0,e)\in\cR_{EH}$, we find there exist infinitely many $1$-degenerate and $-1$-degenerate surfaces and these surfaces separate $\cR_{EH}$ into sub-regions.

\begin{theorem}\label{thm:REH.norm}
    For the given $(\td\aa_0,\td\bb_0,e)\in\cR_{EH}$,
    there exist $1$-degenerate surface functions $(\td\aa,e) \to \td\bb_i(\td\aa,1, e)$ and $-1$-degenerate surface functions $(\td\aa,e) \to \td\bb_i(\td\aa,-1, e)$ with $\td\bb_{2n}(\td\aa,1, e) = \td\bb_{2n-1}(\td\aa,1, e)$. If $1$-degenerate surfaces $\Ga_n$ and $-1$-degenerate surfaces $\Sg_n^{\pm}$ are defined by
    \bea
    \Ga_n &=&\{(\td\aa,\td\bb_{2n}(\td\aa,1, e),e)|e\in[0,1), \td\aa\geq 0\}, \quad \forall n\in\N_0,  \\
    \Sg_n^- &=&\{(\td\aa,\td\bb_{2n-1}(\td\aa,-1, e),e)|e\in[0,1), \td\aa\geq 0\}, \quad \forall n\in\N,\\
    \Sg_n^+ &=&\{(\td\aa,\td\bb_{2n+1}(\td\aa,-1, e),e)|e\in[0,1), \td\aa\geq 0\}, \quad \forall n\in\N,
    \eea
    then we have that
    \begin{enumerate}[label=(\roman*)]
      \item $\Ga_n$ starting from curve $\cR_{3,n}^*$ when $e = 0$, $\Ga_n$ is perpendicular to the $\aa\bb$-plane and for each $e\in [0,1)$,
      $\nu_1(\xi_{\td\aa,\td\bb_{2n}(\td\aa,1, e),e}) = 2$.
      Furthermore, $\td\bb_{2n}(\td\aa,1, e)$ is the real analytic functions in $(\td\aa,e)$;

      \item starting from the line $\cR_{3,n+\frac{1}{2}}^*$ defined in (\ref{eqn:R3n+1/2*}) for $n\in\N$,
      two $-1$-degenerate surfaces $\Sg_n^{\pm}$ of $\xi_{\aa,\bb,e}(2\pi)$ are perpendicular
      to the $\aa\bb$-plane.
      Moreover, for each $e\in (0,1)$,
      if $\td\bb_{2n-1}(\td\aa,-1,e)\ne\bb_{2n}(\td\aa,-1,e)$ with $(\td\aa,\td\bb_{2n-1}(\td\aa,-1,e),e)\in\Sg_n^{-}$ and $(\td\aa,\td\bb_{2n}(\td\aa,-1,e),e)\in\Sg_n^{+}$,
      the two surfaces satisfy $\nu_1(\xi_{\td\aa,\td\bb_{2n-1}(\td\aa,-1,e),e})=\nu_1(\xi_{\td\aa,\td\bb_{2n}(\td\aa,-1,e),e})=1$;
      if $\td\bb_{2n-1}(\td\aa,-1,e)=\td\bb_{2n}(\td\aa,-1,e)\in\Sg_n^{+}\cap\Sg_n^{-}$,
      the two surfaces satisfy $\nu_1(\xi_{\td\aa, \td\bb_{2n-1}(\td\aa,-1,e),e})=2$.
      Furthermore, both $\bb_{2n-1}(-1,e)$ and
      $\bb_{2n}(-1,e)$ are real piece-wise analytic functions in $e\in [0,1)$;

      \item the $1$-degenerate surfaces and $-1$-degenerate surfaces can be ordered from left to right by
      \be
      \Ga_0, \Sigma_1^-, \Sigma_1^+, \Ga_1,\Sigma_2^-, \Sigma_2^+, \Ga_2,\dots, \Sigma_n^-, \Sigma_n^+,  \Ga_n,\dots. \lb{eqn:order}
      \ee

      Moreover, for $n_1,n_2\in\N$, $\Ga_{n_1}$ and $\Sg^{\pm}_{n_2}$ cannot intersect each other;
      if $n_1\ne n_2$, $\Ga_{n_1}$ and $\Ga_{n_2}$ cannot intersect each other,
      and $\Sg^{\pm}_{n_1}$ and $\Sg^{\pm}_{n_2}$ cannot intersect each other.
      More precisely, for each fixed $\td\aa \geq 0$ and $e\in [0,1)$, we have
      \begin{eqnarray}
      &&0 <\td\bb_1(\td\aa,-1,e)\le \td\bb_2(\td\aa,-1,e) < \td\bb_1(\td\aa,1,e) = \td\bb_2(\td\aa,1,e) < \td\bb_3(\td\aa,-1,e)<\dots \nn\\
      && \qquad \qquad <\td\bb_{2n-1}(\td\aa,-1,e)\le \td\bb_{2n}(\td\aa,-1,e)<\td\bb_{2n-1}(\td\aa,1,e)=\td\bb_{2n}(1,e)<\dots .
      \end{eqnarray}
    \end{enumerate}
\end{theorem}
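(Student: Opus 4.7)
The plan is to reformulate the $\om$-Maslov degeneracy problem as a self-adjoint eigenvalue problem and then apply Kato's theory of analytic perturbation of linear operators together with the monotonicity statement in part (v) of Theorem \ref{thm:gen.ind.nul}. Following the operator-theoretic strategy of Hu-Long-Sun and Zhou-Long, I would first rewrite the Hamiltonian system (\ref{LHS}) in Lagrangian form so that $\nu_\om(\td\xi_{\td\aa,\td\bb,e})$ equals $\dim\ker\cA(\td\aa,\td\bb,e)$ for a self-adjoint second-order operator $\cA$ acting on the Sobolev space of paths satisfying the $\om$-boundary condition $x(2\pi)=\om x(0)$, $\dot x(2\pi)=\om\dot x(0)$. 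Viewing $\td\bb$ as the spectral parameter at fixed $(\td\aa,e)$, the operator is an affine analytic family in the sense of Kato, with compact resolvent and real-analytic eigenvalues accumulating at $+\infty$.

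The starting points at $e=0$ come directly from Theorem \ref{thm:gen.ind.nul}(i)-(ii): the $1$-degeneracies lie on the curves $\cR_{3,n}^*$ with $\nu_1=2$ for $n\ge 1$, and the $-1$-degeneracies lie on $\cR_{3,n+\frac{1}{2}}^*$ with $\nu_{-1}=2$. I would then apply Kato's analytic perturbation theorem to extend each zero eigenvalue to analytic branches $\td\bb_j(\td\aa,\om,e)$ for $e\in[0,1)$. The identification $\td\bb_{2n-1}(\td\aa,1,e)=\td\bb_{2n}(\td\aa,1,e)$ should reflect the persistence of the structural symmetry of $\cA$ at $\om=1$ that forces the $1$-eigenvalue to remain double; in contrast, at $\om=-1$ the double eigenvalue generically splits, producing two distinct analytic surfaces $\Sg_n^-$ and $\Sg_n^+$ for $e>0$ which may recollapse at isolated values where $\nu_{-1}=2$ recurs. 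The perpendicularity of each surface to the $\aa\bb$-plane is precisely the conclusion of the implicit function theorem solving the eigenvalue equation for $\td\bb$ as a function of $(\td\aa,e)$, with non-degeneracy guaranteed by the non-vanishing of $\pt_{\td\bb}\cA$ on the kernel.

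The ordering (\ref{eqn:order}) would then be extracted from part (v) of Theorem \ref{thm:gen.ind.nul}. At fixed $(\td\aa,e)$ the function $\td\bb\mapsto i_\om(\td\xi_{\td\aa,\td\bb,e})$ is nondecreasing and tends to infinity, and each crossing of a degenerate surface raises $i_\om$ by twice the local nullity, so the sequence of $\td\bb$-values where $\pm 1$-degeneracy occurs is exactly the ordered spectrum of $\cA(\td\aa,\cdot,e)$. Continuity in $e$ together with strict monotonicity of the eigenvalue branches in $\td\bb$ then prevents any two surfaces corresponding to different branch indices from ever crossing, yielding the non-intersection statements $\Ga_{n_1}\cap\Ga_{n_2}=\varnothing$ and $\Sg_{n_1}^\pm\cap\Sg_{n_2}^\pm=\varnothing$ for $n_1\neq n_2$, and the known interlacing at $e=0$ propagates to all $e\in[0,1)$ by continuity.

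The main obstacle will be the behavior at points where $\Sg_n^-$ and $\Sg_n^+$ coincide for some $e>0$. There the simple implicit function theorem breaks down, and I would need the Kato-Rellich branch-splitting lemma (or a Puiseux series expansion) to verify that the two branches remain real-analytic in a piecewise sense, which is the content of the ``piecewise analytic'' claim in (ii). A secondary but essential point is the strict interlacing $\Ga_{n_1}\cap\Sg_{n_2}^\pm=\varnothing$; this should follow from an $\om$-spectral flow argument along the homotopy $\om=e^{\ii\theta}$, $\theta\in[0,\pi]$, using the fact that exactly one $\om$-degenerate crossing occurs between two consecutive surfaces in the list (\ref{eqn:order}), so that a $1$-degenerate surface and a $-1$-degenerate surface cannot coalesce without violating the continuity of the eigenvalue branches of $\cA$ as $\om$ moves along the unit circle.
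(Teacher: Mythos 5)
Your overall framework---recasting $\om$-degeneracy as the eigenvalue problem for the compact self-adjoint operator of Lemma \ref{L4.6}, invoking Kato's analytic perturbation theory in $(\td\aa,e)$, seeding the branches with the $e=0$ data of Theorem \ref{thm:gen.ind.nul}, and using the monotonicity of part (v) to order the branches---is exactly the paper's strategy. But two load-bearing steps are missing or wrong, and without them the theorem does not follow.

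First, the claim $\td\bb_{2n-1}(\td\aa,1,e)=\td\bb_{2n}(\td\aa,1,e)$ with $\nu_1\equiv 2$ is the crux of part (i), and you only gesture at an unspecified ``structural symmetry'' of $\cA$ at $\om=1$. Kato--Rellich theory by itself permits the double zero eigenvalue at $e=0$ to split into two simple $1$-degenerate branches for $e>0$---this is precisely what \emph{does} happen at $\om=-1$, where $\Sg_n^-$ and $\Sg_n^+$ separate---so you must exhibit the mechanism that forbids splitting at $\om=1$. The paper's mechanism is Theorem \ref{Th:multiplicity}: from any nontrivial solution of $\cA(\aa,\bb,e)z=0$ in $\ol{D}(1,2\pi)$ one explicitly constructs a second, independent periodic solution by exploiting the identical structure of the two uncoupled Fourier-coefficient recurrences (the M\'artinez--Sam\`a--Sim\'o doubling), whence $\ker\cA$ has even dimension for all $e$, not just $e=0$. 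This evenness also feeds Proposition \ref{Th:odd.indices} ($i_1$ odd throughout $\cR_{EH}$), which in turn is what rules out the normal form $N_1(1,b_1)\diamond N_1(-1,b_2)$ and hence proves $\Ga_{n_1}\cap\Sg_{n_2}^{\pm}=\emptyset$ in part (iii); your proposed ``$\om$-spectral flow along $e^{\sqrt{-1}\theta}$'' argument assumes exactly the crossing pattern it is supposed to establish and cannot replace this parity obstruction.

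Second, perpendicularity to the $\aa\bb$-plane is \emph{not} ``precisely the conclusion of the implicit function theorem.'' It is the quantitative statement $\frac{\pt\td\bb}{\pt e}\big|_{e=0}=0$, obtained from the first-order perturbation identity $\frac{\pt\bb}{\pt e}\<\pt_\bb\cA\,x_0,x_0\>+\<\pt_e\cA\,x_0,x_0\>=0$ together with the cancellation $\<\pt_e\cA\,x_0,x_0\>=0$ on the kernel, which comes from $\int_0^{2\pi}\cos t\,(\cdots)\,\d t=0$ for the relevant trigonometric kernel elements (Theorem \ref{Th:orth}). That this is a genuine computation and not an automatic consequence of solvability is shown by the $n=0$ case at $\om=-1$: there the same implicit-function setup applies, yet $\frac{\pt\bb_{1,2}}{\pt e}\big|_{e=0}=\pm\frac{1}{24}\ne 0$ and the two surfaces bifurcate with nonzero slopes. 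Your argument as written would wrongly predict perpendicularity there as well.
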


Then the linear stability and the normal form of the $\xi_{\aa,\bb,e}(2\pi)$ are obtained when $(\td\aa,\td\bb,e) \in \cR_{EH}$.
 \begin{theorem}\lb{Th:stability.of.EH}
 For the given $\aa_0,\bb_0, e_0 \in \cR_{EH}$, the linear stability and the normal form $\td\xi_{\td\aa_0,\td\bb_0,e_0}(2\pi)$ satisfy following results: for $n\in\N_0$,
    \begin{enumerate}[label=(\roman*)]
        \item if $\td\bb_{2n}(\td\aa_0,1,e_0)<\td\bb_0<\td\bb_{2n+1}(\td\aa_0,-1,e_0)$,
        then $i_1(\td\xi_{\td\aa_0,\td\bb_0,e_0}(2\pi))=2n+1$, $\nu_1(\td\xi_{\td\aa_0,\td\bb_0,e_0}(2\pi))=0$, and
        $i_{-1}(\td\xi_{\td\aa_0, \td\bb_0,e_0}(2\pi))=2n$, $\nu_{-1}(\td\xi_{\td\aa_0, \td\bb_0,e_0}(2\pi))=0$. Therefore,
        $\td\xi_{\td\aa_0,\td\bb_0,e_0}(2\pi)\approx R(\theta)\diamond D(2)$ for some $\theta\in(0,\pi)$;

        \item if $\td\bb_0=\td\bb_{2n+1}(\td\aa_0,-1,e_0)=\td\bb_{2n+2}(\td\aa_0,-1,e_0)$,
        then  $i_1(\td\xi_{\td\aa_0, \td\bb_0,e_0}(2\pi))=2n+1$, $\nu_1(\td\xi_{\td\aa_0,\td\bb_0,e_0}(2\pi))=0$, and
        $i_{-1}(\td\xi_{\td\aa_0,\td\bb_0,e_0}(2\pi))=2n$, $\nu_{-1}(\td\xi_{\td\aa_0, \td\bb_0,e_0}(2\pi))=2$. Therefore,
        $\td\xi_{\td\aa_0, \td\bb_0,e_0}(2\pi)\approx -I_2\diamond D(2)$;

        \item if $\td\bb_{2n+1}(\td\aa_0,-1,e_0)\ne \td\bb_{2n+2}(\td\aa_0,-1,e_0)$ and $\td\bb_0=\td\bb_{2n+1}(\td\aa_0,-1,e_0)$,
        then  $i_1(\td\xi_{\td\aa_0,\td\bb_0,e_0}(2\pi))=2n+1$, $\nu_1(\td\xi_{\td\aa_0,\td\bb_0,e_0}(2\pi))=0$,
        and $i_{-1}(\td\xi_{\td\aa_0,\td\bb_0,e_0}(2\pi))=2n$, $\nu_{-1}(\td\xi_{\td\aa_0,\td\bb_0,e_0}(2\pi))=1$.
        Therefore, $\td\xi_{\td\aa_0,\td\bb_0,e_0}(2\pi)\approx N_1(-1,-1)\diamond D(2)$;

        \item if $\td\bb_{2n+1}(\td\aa_0,-1,e_0)\ne \td\bb_{2n+2}(\td\aa_0,-1,e_0)$ and $\td\bb_{2n+1}(\td\aa_0,-1,e_0)<\td\bb_0<\td\bb_{2n+2}(\aa_0,-1,e_0)$,
        then  $i_1(\td\xi_{\aa_0,\td\bb_0,e_0}(2\pi))=2n+1$, $\nu_1(\td\xi_{\td\aa_0,\td\bb_0,e_0}(2\pi))=0$,
        and $i_{-1}(\td\xi_{\td\aa_0,\td\bb_0,e_0}(2\pi))=2n+1$, $\nu_{-1}(\td\xi_{\td\aa_0,\td\bb_0,e_0}(2\pi))=0$.
        Therefore, $\td\xi_{\td\aa_0,\td\bb_0,e_0}(2\pi)\approx D(-2)\diamond D(2)$;

        \item if $\td\bb_{2n+1}(\td\aa_0,-1,e_0)\ne \td\bb_{2n+2}(\td\aa_0,-1,e_0)$ and $\td\bb_0=\td\bb_{2n+2}(\td\aa_0,-1,e_0)$,
        then  $i_1(\td\xi_{\td\aa_0,\td\bb_0,e_0}(2\pi))=2n+1$, $\nu_1(\td\xi_{\td\aa_0,\td\bb_0,e_0}(2\pi))=0$, and
        $i_{-1}(\td\xi_{\td\aa_0,\td\bb_0,e_0}(2\pi))=2n+1$,
        $\nu_{-1}(\td\xi_{\td\aa_0,\td\bb_0,e_0}(2\pi))=1$.
        Therefore, $\td\xi_{\td\aa_0,\td\bb_0,e_0}(2\pi)\approx N_1(-1,1)\diamond D(2)$;

        \item if $\td\bb_{2n+2}(\td\aa_0,-1,e_0)<\td\bb_0<\td\bb_{2n+1}(\td\aa_0,1,e_0)$,
        then  $i_1(\td\xi_{\td\aa_0,\td\bb_0,e}(2\pi))=2n+1$, $\nu_1(\td\xi_{\td\aa_0,\td\bb_0,e_0}(2\pi))=0$, and
        $i_{-1}(\td\xi_{\td\aa_0,\td\bb_0,e_0}(2\pi))=2n+2$, $\nu_{-1}(\td\xi_{\td\aa_0,\td\bb_0,e_0}(2\pi))=0$.
        Therefore, $\td\xi_{\td\aa_0,\td\bb_0,e_0}(2\pi)\approx R(\theta)\diamond D(2)$
        for some $\theta\in(\pi,2\pi)$;

        \item if $\td\bb_0=\td\bb_{2n+1}(\td\aa_0,1,e_0)(=\td\bb_{2n+2}(\td\aa_0,1,e_0))$,
        then  $i_1(\td\xi_{\td\aa_0, \td\bb_0,e_0}(2\pi))=2n+1$, $\nu_1(\td\xi_{\td\aa_0, \td\bb_0,e_0}(2\pi))=2$,
        and $i_{-1}(\td\xi_{\td\aa_0, \td\bb_0,e_0}(2\pi))=2n+2$,
        $\nu_{-1}(\td\xi_{\td\aa_0, \td\bb_0,e_0}(2\pi))=0$.
        Therefore, $\td\xi_{\td\aa_0, \td\bb_0,e_0}(2\pi)\approx I_2\diamond D(2)$.
    \end{enumerate}
\end{theorem}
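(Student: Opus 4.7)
The plan is to obtain the seven cases as a book-keeping consequence of the degenerate-surface ordering in Theorem \ref{thm:REH.norm}, combined with the monotonicity and base values of the $\om$-index established in Theorem \ref{thm:gen.ind.nul}. Since $(\td\aa_0,\td\bb_0,e_0)\in\cR_{EH}$ the essential part has a hyperbolic direction (which by part (iii) of Theorem \ref{thm:gen.ind.nul} stays hyperbolic throughout the half-strip $\td\bb<\td\aa/3$ but more importantly, inside $\cR_{EH}$ one of the two symplectic blocks contributes a $D(2)$ factor). Thus the classification in $\Sp(4)$ reduces to determining the remaining two-dimensional factor, which is completely pinned down by the pair $(i_1,\nu_1)$ and $(i_{-1},\nu_{-1})$.

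First I would fix $(\td\aa_0,e_0)$ and regard $\td\bb\mapsto \xi_{\td\aa_0,\td\bb,e_0}$ as a one-parameter family. By Theorem \ref{thm:REH.norm}(iii) the $\pm 1$-degenerate values of $\td\bb$ occur at the ordered sequence
\begin{equation*}
0<\td\bb_1(\td\aa_0,-1,e_0)\le \td\bb_2(\td\aa_0,-1,e_0)<\td\bb_1(\td\aa_0,1,e_0)=\td\bb_2(\td\aa_0,1,e_0)<\cdots,
\end{equation*}
and away from these values both $\nu_1$ and $\nu_{-1}$ vanish. Between two consecutive surfaces the indices $i_{\pm 1}$ are constant (by continuity of $\om$-Maslov index on $\om$-non-degenerate paths). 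To fix the constants I would descend $e\to 0$ along a path in $\cR_{EH}$ avoiding all degenerate surfaces, where Theorem \ref{thm:gen.ind.nul}(i)--(ii) gives $i_1,i_{-1}$ explicitly on the $\cR_{3,n}^{\pm},\cR_{3,n}^{*}$ pieces. Monotonicity in $\td\bb$ from part (v) of Theorem \ref{thm:gen.ind.nul} then forces the pairs $(i_1,i_{-1})$ to increase by the jumps prescribed by the nullity whenever we cross a $\Sg_n^{\pm}$ or $\Ga_n$. Carrying this bookkeeping across (\ref{eqn:order}) yields, in each of the seven intervals/surfaces of the theorem, the claimed values of the four integers $i_{\pm 1},\nu_{\pm 1}$.

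With those four invariants pinned down, the normal form follows from the standard classification of $\Sp(4)$-endpoints using basic normal forms. In cases (i) and (vi) the data $\nu_{\pm 1}=0$ with $i_1-i_{-1}=\pm 1$ force one elliptic and one hyperbolic block, i.e.\ $R(\th)\dm D(2)$, with $\th\in(0,\pi)$ or $(\pi,2\pi)$ dictated by the parity of $i_1-i_{-1}$. In case (iv) both $\pm 1$-nullities vanish but $i_1-i_{-1}=0$, producing $D(-2)\dm D(2)$. Cases (ii), (iii), (v) and (vii) are the boundary situations where exactly one of $\nu_{\pm 1}$ is positive: the $N_1$/$-I_2$/$I_2$ blocks are then selected by whether the $\pm 1$-eigenvalue block is diagonalizable (two-dimensional kernel) or a Jordan block (one-dimensional kernel), and the sign $\pm 1$ in $N_1(-1,\pm 1)$ is determined by checking whether we are approaching the degenerate surface from above or below (equivalently by the local splitting direction of the index), following Section 1.8 of \cite{Lon4}.

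The main technical obstacle is justifying the jump calculation at the degenerate surfaces: although the ordering (\ref{eqn:order}) prevents two different surfaces from coinciding except in the coincidences $\td\bb_{2n-1}=\td\bb_{2n}$ listed in Theorem \ref{thm:REH.norm}(ii), when two surfaces coalesce (cases (ii) and (vii)) the nullity jumps by $2$ rather than $1$, and one must verify that the full $2$-dimensional eigenspace splits off as $\pm I_2$ rather than a Jordan block. This I would check via the perturbation-theoretic argument already used to produce $\Sg_n^\pm$ and $\Ga_n$: the coalescence $\td\bb_{2n-1}=\td\bb_{2n}$ is exactly the codimension-one situation where the analytic splitting of the two $1$-degenerate (resp.\ $-1$-degenerate) branches degenerates, which forces the eigenvalue pair to be geometrically simple on each of them, hence $\pm I_2$. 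Once this one subtlety is settled, the remaining cases reduce to reading off the normal forms directly from $(i_{\pm 1},\nu_{\pm 1})$, completing the proof.
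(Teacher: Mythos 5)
Your index bookkeeping (descending to $e=0$, using the ordering of the degenerate surfaces and the monotonicity from Lemma \ref{lem:A.mono}) matches the paper's route for computing $(i_{\pm 1},\nu_{\pm 1})$ in each of the seven cases. The gap is in your claim that these four invariants ``completely pin down'' the remaining $2\times 2$ factor, and more fundamentally in the parenthetical assertion that inside $\cR_{EH}$ a $D(2)$ block splits off a priori. Neither is true, and case (iv) is exactly where this fails: there the invariants are $i_1=i_{-1}=2n+1$, $\nu_{\pm 1}=0$, and every one of $D(-2)\dm D(2)$, a Krein collision $N_2(e^{\sqrt{-1}\th},b)$ with $\th\notin\{0,\pi\}$ (trivial or non-trivial), and a complex-hyperbolic quadruple off the unit circle produces precisely this data, since all of them contribute $0$ to $\sum_{\om}\bigl(S^+_M(\om)-S^-_M(\om)\bigr)$ and have no $\pm 1$ eigenvalue. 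So the ``standard classification from $(i_{\pm1},\nu_{\pm1})$'' does not close case (iv), and your reduction of the whole theorem to a $2\times2$ block presupposes what must be proven.

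The paper closes this hole with a genuinely global argument that your proposal lacks: if $\td\xi_{\td\aa_0,\td\bb_0,e_0}(2\pi)\approx N_2(\om,b)$ then $\nu_\om\neq 0$ for some $\om\in\U\setminus\{\pm1\}$, and one can follow an $\om$-degenerate branch along the homotopy $e\mapsto te_0$ down to $e=0$; since the point stays strictly between the two $-1$-degenerate surfaces $\Sg_{n+1}^{-}$ and $\Sg_{n+1}^{+}$ (by Theorem \ref{thm:-1.om.nointersect}, $\om$-degenerate and $-1$-degenerate surfaces cannot meet in $\cR_{EH}$), while those two surfaces coalesce at $e=0$, the $\om$-degenerate branch is forced to cross one of them, a contradiction. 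This uses Proposition \ref{thm:nonintersec.-1.1} and Theorem \ref{thm:-1.om.nointersect}, which in turn rest on the even-multiplicity result Theorem \ref{Th:multiplicity} and the oddness of $i_1$ (Proposition \ref{Th:odd.indices}) — none of which appear in your outline. By contrast, the ``subtlety'' you flag in cases (ii) and (vii) is not actually delicate: once the complementary block is known to be $D(2)$, $\nu_{\mp1}=2$ forces the $\mp I_2$ block directly, and the Jordan alternative $N_1(\mp1,1)\dm N_1(\mp1,1)$ is already excluded by the splitting-number identity relating $i_1$ and $i_{-1}$. So your proposal spends its care in the wrong place and omits the one step that genuinely requires a new idea.
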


This paper is organized as follows.
In Section 2, We reduce the linearized Hamiltonian systems at ERE
for the general $4$-body problem,
and for the case $m_4=0$,
we will prove Theorem \ref{main.theorem.decomposition}.
In Section 3, we analyze the linear stability of circular ERE namely $e = 0$.
In Section 4, we study some general properties of the $\om$-indices. Then in Section 5, 6 and 7  we will discuss the linear stability in
the hyperbolic region which satisfies $\aa \geq 3\bb >0$, non-hyperbolic region $\cR_{NH}$,
and the elliptic-hyperbolic region $\cR_{EH}$ respectively.
Especially, Theorem \ref{Th:stability.of.Pi_N} is proved in Section 7.
In Section 8, we apply the results to the cases of $m_1=m_2$ by the assistance of the numerical computation. In this paper, we use $\N$ to denote the positive integers and use $\N_0$ to denote the non-negative integers.

\setcounter{equation}{0}
\setcounter{figure}{0}

\section{The Symplectic Reduction of the Linearized Hamiltonian Systems}\label{sec:2}

\subsection{Preliminaries of $\om$-Maslov-type indices and $\om$-Morse indices}
For $T>0$, suppose $x$ is a critical point of the functional
$$ F(x)=\int_0^TL(t,x,\dot{x})dt,  \qquad \forall\,\, x\in W^{1,2}(\R/T\Z,\R^n), $$
where $L\in C^2((\R/T\Z)\times \R^{2n},\R)$ and satisfies the
Legendrian convexity condition $L_{p,p}(t,x,p)>0$. It is well known
that $x$ satisfies the corresponding Euler-Lagrangian
equation:
\bea
&& \frac{\d}{\d t}L_p(t,x,\dot{x})-L_x(t,x,\dot{x})=0,    \label{A2.7}\\
&& x(0)=x(T),  \qquad \dot{x}(0)=\dot{x}(T).    \label{A2.8}\eea
For such an extremal loop, define
\bea
P(t) = L_{p,p}(t,x(t),\dot{x}(t)),\quad
Q(t) = L_{x,p}(t,x(t),\dot{x}(t)),\quad
R(t) = L_{x,x}(t,x(t),\dot{x}(t)).  \nn\eea
Note that
\be F\,''(x)=-\frac{\d}{\d t}(P\frac{\d}{\d t}+Q)+Q^T\frac{\d}{\d t}+R. \nn\ee

For $\omega\in\U$, set
\be  D(\omega,T)=\{y\in W^{1,2}([0,T],\C^n)\,|\, y(T)=\omega y(0) \}.   \lb{A2.10}\ee
We define the $\omega$-Morse index $\phi_\omega(x)$ of $x$ to be the dimension of the
largest negative definite subspace of
$$ \langle F\,''(x)y_1,y_2 \rangle, \qquad \forall\;y_1,y_2\in D(\omega,T), $$
where $\langle\cdot,\cdot\rangle$ is the inner product in $L^2$. For $\omega\in\U$, we
also set
\be  \ol{D}(\omega,T)= \{y\in W^{2,2}([0,T],\C^n)\,|\, y(T)=\omega y(0), \dot{y}(T)=\om\dot{y}(0) \}.
                     \lb{A2.11}\ee
Then $F''(x)$ is a self-adjoint operator on $L^2([0,T],\R^n)$ with domain $\ol{D}(\omega,T)$.
We also define the $\om$-nullity $\nu_{\om}(x)$ of $x$ by
$$ \nu_{\om}(x)=\dim\ker(F''(x)).  $$
Note that we only use $n=2$ in (\ref{A2.11}) from in this paper.

In general, for a self-adjoint linear operator $A$ on the Hilbert space $\mathscr{H}$,
we set $\nu(A)=\dim\ker(A)$ and denote by $\phi(A)$ its Morse index which is the maximum dimension
of the negative definite subspace of the symmetric form $\langle A\cdot,\cdot\rangle$. Note
that the Morse index of $A$  is equal to the total multiplicity of the negative eigenvalues
of $A$.

On the other hand, $\td{x}(t)=(\partial L/\partial\dot{x}(t),x(t))^T$ is the solution of the
corresponding Hamiltonian system of (\ref{A2.7})-(\ref{A2.8}), and its fundamental solution
$\gamma(t)$ is given by
\bea \dot{\gamma}(t) &=& JB(t)\gamma(t), \nn\\
     \gamma(0) &=& I_{2n},  \nn\eea
with
\be B(t)=\bpm P^{-1}(t)& -P^{-1}(t)Q(t)\\
                       -Q(t)^TP^{-1}(t)& Q(t)^TP^{-1}(t)Q(t)-R(t) \epm. \nn\ee

\begin{lemma} \lb{lem:2.1}(\cite{Lon4}, p.172)\lb{L2.3}
For the $\omega$-Morse index $\phi_\omega(x)$ and nullity $\nu_\omega(x)$ of the solution $x=x(t)$
and the $\omega$-Maslov-type index $i_\omega(\gamma)$ and nullity $\nu_\omega(\gamma)$ of the
symplectic path $\ga$ corresponding to $x$, for any $\omega\in\U$ we have
\be \phi_\omega(x) = i_\omega(\gamma), \qquad \nu_\omega(x) = \nu_\omega(\gamma).  \lb{A2.15}\ee
\end{lemma}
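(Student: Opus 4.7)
The plan is to establish the two equalities separately, using the nullity case as the conceptual backbone of the argument. For the nullity, I would explicitly identify $\ker F''(x)$ with $\ker(\ga(T)-\om I_{2n})$ via the Legendre transform. Given $y\in\ol{D}(\om,T)$ with $F''(x)y=0$, the function $y$ satisfies the linearized Euler-Lagrange equation $\frac{\d}{\d t}(P\dot y+Qy)=Q^T\dot y+Ry$. Setting $p(t)=P(t)\dot y(t)+Q(t)y(t)$ and $\td{y}(t)=(p(t),y(t))^T$, a direct computation shows $\dot{\td{y}}=JB(t)\td{y}$. The boundary conditions $y(T)=\om y(0)$ and $\dot y(T)=\om\dot y(0)$ translate to $\td{y}(T)=\om\td{y}(0)$, i.e.\ $\ga(T)\td{y}(0)=\om\td{y}(0)$. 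Conversely, any $\om$-eigenvector of $\ga(T)$ yields, via the configuration component of the corresponding Hamiltonian trajectory, an element of $\ker F''(x)$. The correspondence is linear and invertible, so $\nu_\om(x)=\nu_\om(\ga)$.

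For the Morse index equality, I would employ a continuation (spectral flow) argument. Connect the given Lagrangian $L$ to a simple reference Lagrangian $L_0$, for instance $L_0(t,x,p)=|p|^2/2$, via a smooth one-parameter homotopy $L_s$, $s\in[0,1]$. After a generic perturbation, the associated self-adjoint operators $A_s=F''(x_s)$ on $\ol{D}(\om,T)$ form a real-analytic family, and the symplectic paths $\ga_s$ have only finitely many parameter values for which $\ga_s(T)\in\Sp(2n)_\om^0$. On the Morse side, $\phi_\om(x_s)$ jumps by the signed spectral flow, i.e.\ the net number of eigenvalues of $A_s$ crossing $0$. On the Maslov side, $i_\om(\ga_s)$ jumps by the signed intersection number of $\ga_s(T)$ with $\Sp(2n)_\om^0$. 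The heart of the argument is to show the two crossing forms coincide: at a crossing parameter $s_0$, the nullity correspondence identifies the two sorts of eigenvectors, and standard first-order perturbation theory on the Morse side, combined with integration by parts, gives a derivative that matches, up to the orientation sign inherited from the positive direction $Me^{tJ}|_{t=0^+}$ on $\Sp(2n)_\om^0$, the Maslov crossing form $\<J\dot\ga_{s_0}(T)v,v\>$. An alternative and equivalent route is the saddle-point (Galerkin) reduction used in Long's monograph, in which $F''(x)$ is approximated by restrictions to finite-dimensional trigonometric subspaces, both indices are identified with the negative signature of an explicit finite Hermitian matrix, and the equality reduces to a direct linear-algebra statement.

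Combined with the base case, where $\phi_\om(x_0)$ and $i_\om(\ga_0)$ are computed directly for $L_0$ and seen to agree, the continuation yields $\phi_\om(x)=i_\om(\ga)$. The main obstacle will be matching sign conventions: the spectral-flow sign uses the Hermitian inner product on $\ol{D}(\om,T)$, while the Maslov intersection sign uses the orientation of $\Sp(2n)_\om^0$ chosen in the paper via $Me^{tJ}$. A secondary subtlety is that for $\om\ne\pm 1$ the kernels come in complex-conjugate pairs when restricted to real functions, so multiplicities in the complexified spectral flow must be counted consistently with the complex dimension appearing in $\nu_\om(M)=\dim_\C\ker_\C(M-\om I_{2n})$, so as to avoid double counting when passing between the real variational framework and the complex $\om$-boundary condition.
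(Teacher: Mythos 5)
The paper does not prove this lemma: it is quoted from Long's monograph \cite{Lon4} (p.~172), with only a pointer to \cite{HS1} for generalizations, so there is no in-paper argument to compare yours against. Judged on its own terms, your outline reproduces the standard proof from the literature. The nullity half is correct and essentially complete: the Legendre transform $p=P\dot y+Qy$ converts the linearized Euler--Lagrange equation into $\dot{\td{y}}=JB(t)\td{y}$ (a direct check against the block form of $JB$ confirms this), the $T$-periodicity of $P$ and $Q$ turns the $\om$-boundary conditions on $(y,\dot y)$ into $\ga(T)\td{y}(0)=\om\,\td{y}(0)$, and the correspondence is a linear isomorphism, so $\nu_\om(x)=\nu_\om(\ga)$. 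For the index half, both routes you name --- spectral flow with matching crossing forms, and the Galerkin (saddle-point) reduction --- are legitimate, and the second is essentially what Long's monograph actually does; your remarks on the complex $\om$-boundary condition versus the real variational setting, and on the sign conventions, correctly identify the genuine pitfalls. The caveat is that the sentence beginning ``the heart of the argument is to show the two crossing forms coincide'' is the entire technical content of the theorem, and you assert it rather than verify it: the comparison between the spectral-flow sign and the orientation of $\Sp(2n)_{\om}^0$ fixed by $Me^{tJ}$, together with the normalization built into the reference path $\xi_n$ in the definition of $i_\om$, is exactly where such an argument can silently drift by an additive constant, so the base-case computation and the crossing-form identity would both need to be written out before this counts as a proof. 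A minor presentational point: you do not need a family of critical points $x_s$ of deformed Lagrangians; it suffices to deform the coefficient matrices $(P,Q,R)$ through Legendre-convex data, since the statement concerns only the linearized problem.
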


A generalization of the above lemma to arbitrary  boundary conditions is given in \cite{HS1}.
For more information on these topics, we refer to \cite{Lon4}.
To measure the jumps between $i_\om(\ga)$
and $i_\lambda(\ga)$ with $\lambda\in\U$ near $\om$ from two sides of $\om$ in $\U$, the splitting numbers $S_M^{\pm}(\om)$ is defined by followings.
\begin{definition} (\cite{Lon2}, \cite{Lon4})\lb{D2.3}
For any $M\in\Sp(2n)$ and $\om\in\U$, choosing $\tau>0$ and $\ga\in\P_\tau(2n)$ with $\ga(\tau)=M$,
we define
\be
S_M^{\pm}(\om)=\lim_{\epsilon\rightarrow0^+}\;i_{\exp(\pm\epsilon\sqrt{-1})\om}(\ga)-i_\om(\ga).
\ee
They are called the splitting numbers of $M$ at $\om$.
\end{definition}

For any $\om_0=e^{\sqrt{-1}\th_0}\in\U$ with $0\le\th_0<2\pi$,
the eigenvalues of $M$ on $\U$ are denote by $\om_j$ with $1\le j\le p_0$ which are
distributed counterclockwise from $1$ to $\om_0$ and located strictly between $1$ and $\om_0$.
Then we have
\be i_{\om_0}(\ga)=i_1(\ga)+S_M^+(1)+\sum_{j=1}^{p_0}(-S_M^-(\om_j)+S_M^+(\om_j))-S_M^-(\om_0).  \lb{eqn:split}\ee
The splitting numbers have following properties.
\begin{lemma} (\cite{Lon4}, p.198)
The integer valued splitting number pair $(S_M^+(\om),S_M^-(\om))$ defined for all
$(\om,M)\in\U\times\cup_{n\ge1}\Sp(2n)$ are uniquely determined by the following axioms:

$1^{\circ}$ (Homotopy invariant) $S_M^{\pm}(\om)=S_N^{\pm}(\om)$ for all $N\in\Omega^0(M)$.

$2^{\circ}$ (Symplectic additivity) $S_{M_1\diamond M_2}^{\pm}(\om)=S_{M_1}^{\pm}(\om)+S_{M_2}^{\pm}(\om)$
for all $M_i\in\Sp(2n_i)$ with $i=1$ and $2$.

$3^{\circ}$ (Vanishing) $S_M^{\pm}(\om)=0$ if $\om\not\in\sigma(M)$.

$4^{\circ}$ (Normality) $(S_M^+(\om),S_M^-(\om))$ coincides with the ultimate type of
$\om$ for $M$ when $M$ is any basic normal form.
\end{lemma}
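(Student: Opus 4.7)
The plan is to prove the lemma in two halves: first verify that the splitting numbers given by Definition \ref{D2.3} do satisfy axioms $1^\circ$--$4^\circ$, and then show that these four axioms pin down any integer-valued pair on $\U\times\bigcup_{n\ge 1}\Sp(2n)$ satisfying them.

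For the verification of the four axioms I would proceed as follows. Axiom $3^\circ$ (vanishing) is easiest: if $\om\notin\sg(M)$, then for $\ep>0$ small neither $\om$ nor $\om e^{\pm\ii\ep}$ lies in $\sg(\ga(\tau))$, so no eigenvalue of $\ga(\tau)$ crosses between $\om$ and $\om e^{\pm\ii\ep}$ as $\ep\to 0^+$; consequently $i_{\om e^{\pm\ii\ep}}(\ga)=i_\om(\ga)$, forcing the limit to vanish. For axiom $1^\circ$ (homotopy invariance), a path $\{M_s\}_{s\in[0,1]}\subset\Om^0(M)$ lifts to a continuous family $\ga_s\in\P_\tau(2n)$ with $\ga_s(\tau)=M_s$; by the definition of $\Om^0(M)$, the Jordan structure on $\U$ is locally constant along this family, so for any fixed small $\ep>0$ neither $i_\om(\ga_s)$ nor $i_{\om e^{\pm\ii\ep}}(\ga_s)$ jumps in $s$, and passing to $\ep\to 0^+$ gives $S_{M_0}^{\pm}(\om)=S_{M_1}^{\pm}(\om)$. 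Axiom $2^\circ$ (symplectic additivity) is immediate from the symplectic additivity of the Maslov-type index $i_\om(\ga_1\dm\ga_2)=i_\om(\ga_1)+i_\om(\ga_2)$, which is preserved under the limit. Axiom $4^\circ$ (normality) is a finite check: for each basic normal-form block $D(\lm)$, $R(\th)$, $N_1(\lm,a)$, $N_2(e^{\ii\th},b)$, $M_2(\lm,c)$ from (\ref{eqn:m2}) one computes $i_{\om e^{\pm\ii\ep}}(\ga)-i_\om(\ga)$ directly near the relevant eigenvalue $\om\in\U$ and reads off the ultimate type from the tables in Chapter 1 of \cite{Lon4}.

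For uniqueness, suppose $\widetilde S_M^{\pm}(\om)$ is any integer-valued pair obeying $1^\circ$--$4^\circ$. I would invoke the normal-form decomposition for symplectic matrices (Theorem 1.8.10 of \cite{Lon4}): every $M\in\Sp(2n)$ can be connected within its own component $\Om^0(M)$ to a symplectic direct sum $N=N_1\dm N_2\dm\cdots\dm N_k$ whose blocks $N_j$ are basic normal forms. Axiom $1^\circ$ gives $\widetilde S_M^{\pm}(\om)=\widetilde S_N^{\pm}(\om)$; iterating axiom $2^\circ$ yields $\widetilde S_N^{\pm}(\om)=\sum_{j=1}^k \widetilde S_{N_j}^{\pm}(\om)$; axiom $3^\circ$ kills those $N_j$ with $\om\notin\sg(N_j)$; and axiom $4^\circ$ fixes $\widetilde S_{N_j}^{\pm}(\om)$ for each remaining block as its ultimate type at $\om$. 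Therefore $\widetilde S_M^{\pm}(\om)$ is entirely determined by the decomposition data, and since $S_M^{\pm}(\om)$ obeys the same four axioms, the two must coincide.

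The main obstacle is the structural input used in the uniqueness step: one has to know that the homotopy joining $M$ to the symplectic sum of basic normal forms $N_1\dm\cdots\dm N_k$ can be chosen to lie in $\Om^0(M)$, for only then does axiom $1^\circ$ apply. This depends on the fine stratification of $\Sp(2n)$ by Jordan type on $\U$ and on the classification of normal forms in Sections 1.4--1.8 of \cite{Lon4}. A parallel technical issue arises already in the verification step: one must check that $S_M^{\pm}(\om)$ as given in Definition \ref{D2.3} is well defined, i.e.\ independent of the chosen $\tau>0$ and representative path $\ga\in\P_\tau(2n)$ with $\ga(\tau)=M$, which reduces to homotopy invariance of the Maslov-type index $i_\om$ with fixed endpoints. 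Once these structural facts are granted, the remainder of the argument is a routine assembly of axioms.
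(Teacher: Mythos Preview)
The paper does not give its own proof of this lemma; it is quoted verbatim from \cite{Lon4}, p.~198, and used as a black box. So there is nothing in the paper to compare your argument against.

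That said, your outline is essentially the argument that appears in \cite{Lon4}. The verification half is correct in spirit: axioms $2^\circ$ and $3^\circ$ follow immediately from the additivity and local constancy of $i_\om$, axiom $4^\circ$ is the tabulated computation in List 9.1.12 of \cite{Lon4}, and axiom $1^\circ$ is a consequence of the fact that the $\om$-index jumps only when an eigenvalue on $\U$ crosses $\om$, which is controlled along $\Om^0(M)$. For uniqueness you have identified the right structural input, namely that any $M$ is joined \emph{within} $\Om^0(M)$ to a $\dm$-sum of basic normal forms (this is Theorem 1.9.1/1.8.10 in \cite{Lon4}); once that is granted, axioms $1^\circ$--$4^\circ$ force the value. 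You are also right to flag well-definedness of $S_M^\pm(\om)$ (independence of $\tau$ and of the path $\ga$) as a prerequisite; in \cite{Lon4} this is Lemma 9.1.5, and it does reduce to endpoint-fixed homotopy invariance of $i_\om$. There is no genuine gap in your plan; the nontrivial content lies entirely in the two cited structural facts from \cite{Lon4}, which you have correctly isolated.
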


Moreover, by Lemma 9.1.6 on p.192 of \cite{Lon4} for $\om\in\C$ and $M\in\Sp(2n)$, we have
\be  S_M^+(\om)=S_M^-(\overline\om).  \ee

The ultimate type of $\om\in\U$ for a symplectic matrix $M$ mentioned in the above lemma
is given in Definition 1.8.12 on pp.41-42 of \cite{Lon4} algebraically with its more properties studied
there.

For the reader's conveniences, following the List 9.1.12 on pp.198-199 of \cite{Lon4},
the splitting numbers (i.e., the ultimate types) for all basic normal forms are given by:

$\langle$1$\rangle$ $(S_M^+(1),S_M^-(1))=(1,1)$ for $M=N_1(1,b)$ with $b=1$ or $0$.

$\langle$2$\rangle$ $(S_M^+(1),S_M^-(1))=(0,0)$ for $M=N_1(1,-1)$.

$\langle$3$\rangle$ $(S_M^+(-1),S_M^-(-1))=(1,1)$ for $M=N_1(-1,b)$ with $b=-1$ or $0$.

$\langle$4$\rangle$ $(S_M^+(-1),S_M^-(-1))=(0,0)$ for $M=N_1(-1,1)$.

$\langle$5$\rangle$ $(S_M^+(e^{\sqrt{-1}\th}),S_M^-(e^{\sqrt{-1}\th}))=(0,1)$
for $M=R(\theta)$ with $\th\in(0,\pi)\cup(\pi,2\pi)$.

$\langle$6$\rangle$ $(S_M^+(\om),S_M^-(\om))=(1,1)$ for $M=N_2(\om,b)$
being non-trivial (cf. Definition 1.8.11 on p.41 of \cite{Lon4}) with
$\om=e^{\sqrt{-1}\th}\in\U\backslash\R$.

$\langle$7$\rangle$ $(S_M^+(\om),S_M^-(\om))=(0,0)$ for $M=N_2(\om,b)$
being trivial (cf. Definition 1.8.11 on p.41 of \cite{Lon4}) with
$\om=e^{\sqrt{-1}\th}\in\U\backslash\R$.

$\langle$8$\rangle$ $(S_M^+(\om),S_M^-(\om))=(0,0)$ for $\om\in\U$ and $M\in\Sp(2n)$
satisfying $\sigma(M)\cap\U=\emptyset$.

For any symplectic path $\ga \in \mathcal{P}_{2\pi}(2n)$ and $m \in \N$, the  $m$-th iteration $\ga_m:[0,\tau] \longrightarrow \Sp(2n)$ is by
\bea
\ga^m: [0,m\tau] \longrightarrow \Sp(2n)
\eea
with $\ga^m(t) = \ga(t-j\tau)\ga(\tau)^j$ for $g\tau\le t \leq (j+1)\tau$ and $j\in\{0,1,\dots, m-1\}$
The next Bott-type iteration formula is will be used in this paper.
\begin{lemma}[(See [19, Theorem 9.2.1, p. 199].)]
For any $z\in \U$,
\bea
i_z(\ga^m) = \sum_{\om^m=z}i_{\om}(\ga)
\eea
\end{lemma}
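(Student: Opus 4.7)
My plan is to convert the identity on the symplectic side into one on the Morse-index side via Lemma~\ref{lem:2.1}, and then exploit a discrete Fourier decomposition indexed by the $m$-th roots of $z$. If $x$ is a critical point of the Lagrangian action on $[0,\tau]$ whose associated symplectic fundamental solution is $\ga$, extend $x$ to $x^m$ on $[0,m\tau]$ by $\tau$-periodicity; then $x^m$ is a critical point of the action on $[0,m\tau]$, and its Hamiltonian fundamental solution is precisely $\ga^m$. By Lemma~\ref{lem:2.1} applied to the two problems, it suffices to prove the Morse-index identity
\[
\phi_z(x^m)\;=\;\sum_{\om^m=z}\phi_\om(x).
\]

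The central step is to construct a Hessian-orthogonal decomposition of the quadratic form $\langle F''(x^m)\cdot,\cdot\rangle$ on $D(z,m\tau)$ into $m$ pieces, one for each $m$-th root $\om$ of $z$. Given $y\in D(\om,\tau)$, I extend it to $\td y\in D(z,m\tau)$ by quasi-periodicity $\td y(t+j\tau):=\om^j y(t)$ for $t\in[0,\tau]$ and $j=0,\ldots,m-1$; the condition $\td y(m\tau)=z\td y(0)$ then reduces to $\om^m=z$. Conversely, for $\td y\in D(z,m\tau)$ I set
\[
y_\om(t)\;:=\;\frac{1}{m}\sum_{j=0}^{m-1}\om^{-j}\td y(t+j\tau),\qquad t\in[0,\tau],
\]
and a direct computation shows $y_\om\in D(\om,\tau)$ (using $\om^m=z$ together with the boundary condition on $\td y$) and $\td y=\sum_{\om^m=z}\td y_\om$. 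This is the isotypic decomposition of the $\Z/m\Z$-action by the shift $\td y\mapsto \td y(\,\cdot+\tau)$ on the space of quasi-periodic variations.

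Since $x^m$ is $\tau$-periodic, the coefficient matrices $P,Q,R$ of $F''(x^m)$ are $\tau$-periodic. Breaking the integral $\int_0^{m\tau}$ into $m$ subintegrals $\int_{j\tau}^{(j+1)\tau}$ and substituting the quasi-periodic extensions, any cross-term between $\td y_{\om_1}$ and $\td y_{\om_2}$ picks up the factor $\sum_{j=0}^{m-1}(\om_1\bar\om_2)^j=0$ whenever $\om_1\neq\om_2$, while each diagonal block reduces to exactly $m\,\langle F''(x)y_\om,y_\om\rangle_{L^2([0,\tau])}$. Therefore the negative-definite subspace of $\langle F''(x^m)\cdot,\cdot\rangle$ on $D(z,m\tau)$ splits as a direct sum of negative-definite subspaces of $\langle F''(x)\cdot,\cdot\rangle$ on each $D(\om,\tau)$, which yields $\phi_z(x^m)=\sum_{\om^m=z}\phi_\om(x)$.

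The main obstacle is the orthogonality verification in the previous paragraph. Because $F''(x^m)$ contains the derivative operator $-\frac{d}{dt}(P\frac{d}{dt}+Q)$, the cross-term integrations require integration by parts, and one must check that the boundary contributions at each interior node $t=j\tau$ cancel correctly when summing over the $m$ subintervals. This cancellation simultaneously uses the $\tau$-periodicity of $P,Q,R$, the quasi-periodicity $\td y_\om(t+\tau)=\om\,\td y_\om(t)$ of each component, and the roots-of-unity identity $\sum_{j=0}^{m-1}(\om_1\bar\om_2)^j=0$; once these three ingredients are kept in sync the block-decomposition emerges cleanly and Lemma~\ref{lem:2.1} closes the argument.
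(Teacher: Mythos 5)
Your argument is correct in substance, but it is worth noting that the paper offers no proof of this lemma at all: it is quoted verbatim from Long's book (Theorem 9.2.1, p.\ 199 of \cite{Lon4}), where it is established for arbitrary symplectic paths by working directly with the $\om$-index. What you give instead is essentially Bott's original strategy transported to the Lagrangian side: reduce to the $\om$-Morse indices via Lemma \ref{lem:2.1}, and then diagonalize the Hessian of the iterated action with respect to the isotypic decomposition of $D(z,m\tau)$ under the $\tau$-shift, the cross-terms dying by the root-of-unity identity $\sum_{j=0}^{m-1}(\om_1\bar\om_2)^j=0$ since $\om_1\bar\om_2$ is a nontrivial $m$-th root of unity. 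The averaging operators $y_\om=\frac{1}{m}\sum_j\om^{-j}\td y(\cdot+j\tau)$ do land in $D(\om,\tau)$ and do reconstitute $\td y$, so the decomposition is genuine, and your worry about boundary terms at the interior nodes evaporates if you use the weak (first-order) form of the Hessian, $\langle F''(x^m)y_1,y_2\rangle=\int\bigl[(P\dot y_1+Qy_1)\cdot\overline{\dot y_2}+(Q^T\dot y_1+Ry_1)\cdot\overline{y_2}\bigr]\,\d t$ on $W^{1,2}$, for which no integration by parts is needed. What your route buys is an elementary, self-contained argument for exactly the paths this paper uses, since $\xi_{\aa,\bb,e}$ does arise from the Legendre-convex operator $\cA(\aa,\bb,e)$. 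What it does not buy is the lemma in the generality in which it is stated: for a symplectic path $\ga$ whose coefficient matrix $B(t)=-J\dot\ga\ga^{-1}$ lacks the block structure with $P(t)>0$, there is no Lagrangian functional to which Lemma \ref{lem:2.1} applies, so the general case still requires either Long's direct symplectic-path argument or an additional homotopy/perturbation step reducing to the convex case. For the purposes of this paper that restriction is harmless, but you should state it explicitly rather than present the computation as a proof of the unrestricted statement.
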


\subsection{Two useful maps}
In this subsection, we introduce two useful maps for our later discussion.
We define $\varphi,\psi:\C\rightarrow GL(2,\R)$, $z=x+\sqrt{-1}y$ by
\bea
\varphi(z)
\bpm
x & -y\\
y & x
\epm, \quad
\psi(z)=
\bpm
x & y \\
y & -x
\epm,   \label{map}
\eea
where $z= x+\sqrt{-1}y$ with $x,y\in\R$.
Thus both $\varphi$ and $\psi$ are real linear maps.
Moreover, by direct computations, we have the following properties:
\begin{lemma}
	(i) If $z\in\R$, then
	\be
	\varphi(z)=zI_2,\quad\quad
	\psi(z)=z\bpm
	1 & 0 \\
	0 & -1\epm;
	\ee
	(ii) for any $z\in\C$, we have
	\bea
	\varphi(z)^T=\varphi(\bar{z}),
	\quad
	\psi(z)^T=\psi(z);
	\eea
	(iii) for any $z,w\in\C$, we have
	\bea
	\varphi(z)\varphi(w)&=&\varphi(zw),
	\\
	\psi(z)\psi(w)&=&\varphi(z\bar{w}),
	\\
	\varphi(z)\psi(w)&=&\psi(zw),
	\\
	\psi(z)\varphi(w)&=&\psi(z\bar{w}).
	\eea
	Specially, we have
	\bea
	\varphi(\bar{z})\varphi(z)=\varphi(z)\varphi(\bar{z})=\varphi(|z|^2)=|z|^2I_2,
	\\
	\psi(z)\psi(z)=\psi(\bar{z})\psi(\bar{z})=\varphi(|z|^2)=|z|^2I_2.
	\eea
\end{lemma}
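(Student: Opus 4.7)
The plan is to recognize $\varphi$ and $\psi$ as matrix representations of natural real-linear endomorphisms of $\C$ under the standard identification $\R^2\cong\C$ sending $(a,b)\leftrightarrow a+\sqrt{-1}b$. A quick inspection of the formulas in (\ref{map}) shows that, in the real basis $\{1,\sqrt{-1}\}$, the matrix $\varphi(z)$ represents the multiplication map $L_z\colon v\mapsto zv$, while $\psi(z)$ represents the conjugate-multiplication map $C_z\colon v\mapsto z\bar v$. Once this dictionary is established, each claimed identity becomes a one-line statement in complex arithmetic.

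For (i), specializing to $z\in\R$ makes $L_z$ a scalar multiplication, hence $\varphi(z)=zI_2$; and $C_z$ sends $1\mapsto z$, $\sqrt{-1}\mapsto -z\sqrt{-1}$, which is precisely $z\cdot\diag(1,-1)$. For (ii), I would exploit that the standard inner product on $\R^2$ corresponds to $\langle u,v\rangle=\mathrm{Re}(u\bar v)$, so $\varphi(z)^T$ is the adjoint of $L_z$ with respect to this pairing. Since $\langle zu,v\rangle=\mathrm{Re}(zu\bar v)=\mathrm{Re}(u\,\overline{\bar z v})=\langle u,\bar z v\rangle$, the adjoint is $L_{\bar z}$, giving $\varphi(z)^T=\varphi(\bar z)$. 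The symmetry $\psi(z)^T=\psi(z)$ then follows from the identity $\mathrm{Re}(z\bar v\bar u)=\mathrm{Re}(\bar z v u)$ obtained by complex-conjugating inside $\mathrm{Re}$.

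For (iii), I would just compose the underlying maps on $\C$: two multiplications compose to a multiplication, giving $\varphi(z)\varphi(w)=\varphi(zw)$; next, $\psi(z)\psi(w)$ acts as $v\mapsto w\bar v\mapsto z\,\overline{w\bar v}=z\bar w v$, yielding $\varphi(z\bar w)$; similarly $\varphi(z)\psi(w)$ sends $v\mapsto zw\bar v$, hence equals $\psi(zw)$; and $\psi(z)\varphi(w)$ sends $v\mapsto z\,\overline{wv}=z\bar w\bar v$, hence equals $\psi(z\bar w)$. The specialized $|z|^2I_2$ formulas follow by setting $w=\bar z$ in the first identity and $w=z$ (respectively $w=\bar z$) in the second, then applying (i). The only real pitfall is bookkeeping the complex conjugates correctly; as a safeguard, any single identity can be checked by a direct $2\times 2$ matrix multiplication from (\ref{map}), which is mechanical but longer. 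I do not anticipate any conceptual obstacle.
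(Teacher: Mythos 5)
Your proof is correct. The paper itself offers no written argument for this lemma---it simply asserts the identities ``by direct computations,'' i.e.\ the intended proof is brute-force $2\times 2$ matrix multiplication from the definitions in (\ref{map}). Your route is genuinely different and, to my mind, more illuminating: you first verify the single dictionary statement that, under $\R^2\cong\C$, $\varphi(z)$ is the matrix of the multiplication map $L_z\colon v\mapsto zv$ and $\psi(z)$ is the matrix of the conjugate-multiplication map $C_z\colon v\mapsto z\bar v$ (both checks are immediate on the basis $\{1,\sqrt{-1}\}$), after which every identity in (i)--(iii) collapses to one line of complex arithmetic: compositions of $L$'s and $C$'s track the conjugations automatically (e.g.\ $C_z\circ C_w\colon v\mapsto z\overline{w\bar v}=z\bar w v=L_{z\bar w}v$), the transpose identities follow from $\langle u,v\rangle=\mathrm{Re}(u\bar v)$ and self-adjointness of $C_z$, and the $|z|^2I_2$ formulas are the specializations $w=\bar z$ and $w=z$. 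What the paper's approach buys is that it requires no setup at all; what yours buys is that the eight identities are derived from a single structural observation rather than eight independent computations, and it explains \emph{why} the conjugation pattern in (iii) comes out as it does. Your fallback remark---that any individual identity can be confirmed by direct multiplication---covers the only residual risk, namely a sign slip in the dictionary itself. No gaps.
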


\begin{remark} \label{map.matrix}
	For a complex matrix $N=(N_{ij})_{m\times n}$, the $2m\times 2n$ matrix  $\varphi(N)$ is given by
	\be
	\varphi(N)=
	\bpm
	\varphi(N_{11})& \varphi(N_{12})& \ldots& \varphi(N_{1n})\\
	\varphi(N_{21})& \varphi(N_{22})& \ldots& \varphi(N_{2n})\\
	\ldots&          \ldots&          \ldots& \ldots\\
	\varphi(N_{m1})& \varphi(N_{m2})& \ldots& \varphi(N_{mn})
	\epm.
	\ee
\end{remark}

\subsection{The essential part of the fundamental solution}

In \cite{MS} (cf. p.275), Meyer and Schmidt gave the essential part of the fundamental solution of the
elliptic Lagrangian orbit. Readers may also refer \cite{Lon5} for details.
Based on their method, we reduce the linearized system of the planar restricted 4-body problem to 3 sub-systems.

Suppose the four particles which form one central configuration are in $\R^2$ at $a_1=(a_{1x},a_{1y}),a_2=(a_{2x},a_{2y}),a_3=(a_{3x},a_{3y}),a_4=(a_{4x},a_{4y})$. By identifying $\R^2$ with $\C$, we write $a_i$s as
\be
z_{a_i}=a_{ix}+\sqrt{-1}a_{iy},\quad i=1,2,3,4.  \label{complex.rep}
\ee
Without lose of generality, we normalize the four masses by
\begin{equation}\label{nomorlize.the.masses}
\sum_{i=1}^n m_i=1,
\end{equation}
fix the center of mass at origion and normalize the positions $a_i$s by
\bea
\sum_{i=1}^4 m_ia_i=0, \quad \sum_{i=1}^4 m_i|a_i|^2=2I(a)=1.\label{inertia}
\eea
Using the notations in (\ref{complex.rep}), (\ref{inertia}) are equivalent to
\bea
\sum_{i=1}^4 m_iz_{a_i}=0, \quad
\sum_{i=1}^4 m_i|z_{a_i}|^2=2I(a)=1.\label{inertia'}
\eea
Moreover, let
\begin{equation}\label{mu}
\mu=U(a)=\sum_{1\le i<j\le 4}\frac{m_im_j}{|a_i-a_j|}=\sum_{1\le i<j\le 4}\frac{m_im_j}{|z_{a_i}-z_{a_j}|},
\quad
\sigma=(\mu p)^{1/4},
\end{equation}
where $p$ is given by \eqref{eqn.r.p} and
\be
\tilde{M}=\diag(m_1,m_2,m_3,m_4),\quad M=\diag(m_1,m_1,m_2,m_2,m_3,m_3,m_4,m_4).\lb{eqn:M}
\ee
Because $a_1,a_2,a_3,a_4$ form a central configuration, following equation holds.
\be\label{eq.of.cc}
\sum_{j=1,j\ne i}^4\frac{m_j(z_{a_{j}}-z_{a_{i}})}{|z_{a_{i}}-z_{a_{j}}|^3}=
\frac{U(a)}{2I(a)}z_{a_{i}}=\mu z_{a_{i}}.
\ee
Let $B$ be a $4\times 4$ symmetric matrix such that
\begin{equation}
B_{ij}=
\left\{\begin{array}{c}
\frac{m_im_j}{|z_{a_i}-z_{a_j}|^3}, \quad \mbox{if}\;i\ne j,1\le i,j\le 4,\\
-\sum_{j=1,j\ne i}^4\frac{m_im_j}{|z_{a_i}-z_{a_j}|^3}, \quad  \mbox{if}\;i=j,1\le i\le 4,
\end{array}\right. \nn
\end{equation}
and define $D$ and $\tilde{D}$ by
\bea
D&=&\mu I_4+\tilde{M}^{-1}B, \label{D}
\\
\tilde{D}&=&\mu I_4+\tilde{M}^{-1/2}B\tilde{M}^{-1/2}=\tilde{M}^{1/2}D\tilde{M}^{-1/2}, \label{tilde.D}
\eea
where $\mu$ is given by (\ref{mu}).

Firstly, $D$ has two direct eigenvalues: $\lambda_1=\mu$ with $v_1=(1,1,\ldots,1)^T$, and $\lambda_2=0$ with
$v_2=(z_{a_1},z_{a_2},z_{a_3},z_{a_4})^T$.
Namely, by direct computations, we have
\bea
(Dv_1)_i&=&\mu-\sum_{j=1,j\ne i}^4\frac{m_j}{|z_{a_i}-z_{a_j}|^3}+\sum_{j=1,j\ne i}^4\frac{m_j}{|z_{a_i}-z_{a_j}|^3}=\mu,
\\
(Dv_2)_i&=&(\mu-\sum_{j=1,j\ne i}^4\frac{m_j}{|z_{a_i}-z_{a_j}|^3})z_{a_i}+\sum_{j=1,j\ne i}^4\frac{m_jz_{a_j}}{|z_{a_i}-z_{a_j}|^3}
\nonumber
\\
&=&\mu z_{a_i}+\sum_{j=1,j\ne i}^4\frac{m_j(z_{a_j}-z_{a_i})}{|z_{a_i}-z_{a_j}|^3}
\nonumber
\\
&=&0, \lb{eqn:Dv2}
\eea
where the last equality of (\ref{eqn:Dv2}) holds by (\ref{eq.of.cc}).
Moreover by (\ref{nomorlize.the.masses})-(\ref{inertia}), we have
\bea
\overline{v}_1^T\tilde{M}v_1=\sum_{i=1}^n m_i=1,&\quad&
\overline{v}_2^T\tilde{M}v_2=\sum_{i=1}^n m_i|z_{a_i}|^2=1. \label{v1.M.v1}
\\
\overline{v}_1^T\tilde{M}v_2=\sum_{i=1}^4 m_iz_{a_i}=0, &\quad&
\overline{v}_2^T\tilde{M}v_1=\sum_{i=1}^4 m_i\overline{z}_{a_i}=0.\label{v1.M.v2}
\eea
Let $\overline{v}_2=(\overline{z}_{a_1},\overline{z}_{a_2},\overline{z}_{a_3},\overline{z}_{a_4})^T$.
Because, $a_1,a_2,a_3,a_4$ form a non-colinear central configuration,
$\overline{v}_2$ is independent with $v_2$. Moreover, $\overline{v}_2$ is also independent with $v_1$.
So $\overline{v}_2$ is another eigenvector of $D$ corresponding to eigenvalue $\lambda_3=0$.

For $v_3$, suppose
\be
v_3=k\overline{v}_2+lv_2,  \label{v_3}
\ee
where $k\in\R,l\in\C$ are defined by
\bea
k=\frac{1}{\sqrt{1-|\sum_{i=1}^4 m_i\overline{z}_{a_i}^2|^2}}, \quad
l=-\frac{\sum_{i=1}^4 m_i\overline{z}_{a_i}^2}{\sqrt{1-|\sum_{i=1}^4 m_i\overline{z}_{a_i}^2|^2}}. \label{param.l}
\eea
If ${v}_2^T\tilde{M}v_2=\sum_{i=1}^n m_iz_{a_i}^2=0$, we have that $k=1$ and $l=0$, i.e., $v_3=\overline{v}_2$.
Then we have
\bea
\overline{v}_1^T\tilde{M}v_3&=&\sum_{i=1}^4 m_i\overline{z}_{a_i}=0,\label{v1.M.v3}
\\
\overline{v}_2^T\tilde{M}v_3&=&\sum_{i=1}^4 m_i\overline{z}_{a_i}^2=0,\label{v2.M.v3}
\\
\overline{v}_3^T\tilde{M}v_3&=&\sum_{i=1}^n m_i|z_{a_i}|^2=1.\label{v3.M.v3}
\eea
In the other cases, (\ref{v1.M.v3})-(\ref{v3.M.v3}) also hold by
\bea
&&\overline{v}_2^T\tilde{M}v_3=\overline{v}_2^T\tilde{M}(k\overline{v}_2+lv_2)=k\sum_{i=1}^4 m_i\overline{z}_{a_i}^2+l = 0, \nn
\\
&&\overline{v}_3^T\tilde{M}v_3=(kv_2+\overline{l}\overline{v}_2)^T\tilde{M}(k\overline{v}_2+lv_2)
=k^2+|l|^2+kl\sum_{i=1}^4 m_iz_{a_i}^2+k\overline{l}\sum_{i=1}^4 m_i\overline{z}_{a_i}^2  =1. \nn
\eea

Based on $v_1,v_2$ and $v_3$, the unitary matrix $\tilde{A}$ is defined by
\be
\tilde{A}=
\bpm
1\quad z_{a_1}\quad b_1\quad c_1\\
1\quad z_{a_2}\quad b_2\quad c_2\\
1\quad z_{a_3}\quad b_3\quad c_3\\
1\quad z_{a_4}\quad b_4\quad c_4
\epm, \nn
\ee
where $(b_1,b_2,b_3,b_4)=v_3^T$, i.e., $b_i=k\overline{z}_{a_i}+lz_{a_i},1\le i\le4$.
Then $c_i=A_{i4}$, where $A_{i4}$ is the algebraic cofactor of $c_i$.

On the other hand, the signed area of the triangle formed by $a_i,a_j$ and $a_k$ is given by
\be
\Delta_{ijk}=\frac{\sqrt{-1}}{4}\det
\bpm
1\quad z_{a_1}\quad \overline{z}_{a_1}\\
1\quad z_{a_2}\quad \overline{z}_{a_2}\\
1\quad z_{a_3}\quad \overline{z}_{a_3}
\epm. \nn
\ee
Then $c_1=\overline{4k\sqrt{-1}\Delta_{234}}=-4k\sqrt{-1}\Delta_{234}$.
Note that, for any $\om\in\C,|\om|=1$, if $c_i$ are replaced by $\om c_i,i=1,2,3,4$, $\tilde{A}$ is also a unitary matrix.
Thus let
\be
(c_1,c_2,c_3,c_4)=\l({4k\rho\over m_1}\Delta_{234},-{4k\rho\over m_2}\Delta_{134},{4k\rho\over m_3}\Delta_{124},-{4k\rho\over m_4}\Delta_{123}\r), \label{c}
\ee
with $\rho=\sqrt{m_1m_2m_3m_4}$.
Abusing the notations, we also write $v_4$ as
\be
v_4=(c_1,c_2,c_3,c_4)^T\in\R^4.  \label{v_4}
\ee
Now $v_1,v_2,v_3,v_4$ form a unitary basis of $\C^4$.
Note that $v_1,v_2,v_3$ are eigenvectors of matrix $D$, then $v_4$ is also an eigenvector of $D$ with the corresponding eigenvalue
\be
\lambda_4=tr(D)-\lambda_1-\lambda_2-\lambda_3=tr(D)-\mu. \nn
\ee
Moreover, we define $\bb_1$ and $\bb_2$ by
\bea
\bb_1=-\frac{\lambda_3}{\mu}=0,  \quad
\bb_2=-\frac{\lambda_4}{\mu}=1-\frac{tr(D)}{\mu}.  \label{bb2}
\eea

In the following, without causing the confusion, we will use $a_i$ to represent $z_{a_i}$, $1\le i\le4$.
By the definition of (\ref{v_3}) and (\ref{v_4}),
$Dv_k=\lambda_kv_k$, $k=3,4$, read
\bea
\mu b_i-\sum_{j=1,j\ne i}^4\frac{m_j(b_j-b_i)}{|a_i-a_j|^3}=\lambda_3 b_i,\quad 1\le i\le 4, \nn
\\
\mu c_i-\sum_{j=1,j\ne i}^4\frac{m_j(c_j-c_i)}{|a_i-a_j|^3}=\lambda_4 c_i,\quad 1\le i\le 4. \nn
\eea
For $1\leq i \leq 4$, let
\be
F_i=\sum_{j=1,j\ne i}^4\frac{m_im_j(b_i-b_j)}{|a_i-a_j|^3},\;
G_i=\sum_{j=1,j\ne i}^4\frac{m_im_j(c_i-c_j)}{|a_i-a_j|^3},\quad \label{F_ki}
\ee
then we have
\be
F_i=(\mu-\lambda_3)m_ib_i=\mu(1+\bb_1)m_ib_i,\quad G_i=(\mu-\lambda_4)m_ic_i=\mu(1+\bb_2)m_ic_i. \label{Fi.bi}
\ee

Now as in p.263 of \cite{MS}, Section 11.2 of \cite{Lon5}, we define
\begin{equation}\label{PQYX}
P=\bpm p_1\\ p_2\\ p_3\\ p_4 \epm,
\quad
Q=\bpm q_1\\ q_2\\ q_3\\ q_4 \epm,
\quad
Y=\bpm G\\ Z\\ W_1\\  W_2 \epm,
\quad
X=\bpm g\\ z\\ w_1\\ w_2 \epm,
\end{equation}
where $p_i$, $q_i$, $i=1,2,3,4$ and $G$, $Z$, $W_1$, $W_2$, $g$, $z$, $w_1$, $w_2$ are all column vectors in $\R^2$.
We make the symplectic coordinate change
\be\lb{transform1}  P=A^{-T}Y,\quad Q=AX,  \ee
where the matrix $A$ is constructed as in the proof of Proposition 2.1 in \cite{MS}.
Concretely, the matrix $A\in {\bf GL}(\R^{8})$ is given by
\begin{equation}
A=
\bpm
I\quad A_1\quad B_{1}\quad C_{1}\\
I\quad A_2\quad B_{2}\quad C_{2}\\
I\quad A_3\quad B_3\quad   C_3\\
I\quad A_4\quad B_{4}\quad C_4
\epm, \nn
\end{equation}
where each $A_i$ is a $2\times2$ matrix given by
\begin{eqnarray}
A_i = (a_i, Ja_i)=\varphi(a_i),\quad
B_i = (b_i, Jb_i)=\varphi(b_i), \quad
C_i = (c_i, Jc_i)=\varphi(c_i)=c_iI_2, \label{Cc}
\end{eqnarray}
and $\varphi$ is given by (\ref{map}).
Moreover, by the definition of $v_i,1\le i\le 4$, we obtain
\bea
\overline{\tilde{A}}^T\tilde{M}\tilde{A}&=&(\overline{v}_1,\overline{v}_2,\overline{v}_3,\overline{v}_4)^T\tilde{M}(v_1,v_2,v_3,v_4)
=I_4. \nn
\eea
By (\ref{map.matrix}), we have
$A^TMA=\varphi(\tilde{A})^T\varphi(\tilde{M})\varphi(\tilde{A})=\varphi(\overline{\tilde{A}}^T\tilde{M}\tilde{A})=\varphi(I_4)=I_{8}$
is fulfilled (cf. (13) in p.263 of \cite{MS}).

Under the coordinate change (\ref{transform1}), kinetic energy of the Hamiltonian function of the four-body problems is given by
\begin{equation}
K=\frac{1}{2}(|G|^2+|Z|^2+|W_1|^2+|W_2|^2),
\end{equation}
and the potential function is given by
\begin{eqnarray}
U(z,w_1,w_2)=\sum_{1\le i<j\le 4}U_{ij}(z,w_1,w_2),\label{U}
\end{eqnarray}
where $U_{ij}(z,w_1,w_2)=\frac{m_im_j}{|d_{ij}(z,w_1,w_2)|}$
with
\begin{eqnarray}
d_{ij}(z,w_1,w_2)&=&(A_i-A_j)z+(B_i-B_j)w_1+(C_i-C_j)w_2
\nonumber
\\
&=&\varphi(a_i-a_j)z+\varphi(b_i-b_j)w_1+\varphi(c_i-c_j)w_2,
\end{eqnarray}
by (\ref{Cc}).

Let $\theta$ be the true anomaly.
Then based on symplectic transformation in the proof of Theorem 11.10 (p. 100 of \cite{Lon5}),
the resulting Hamiltonian function of the $4$-body problem is given by
\begin{eqnarray}\label{new.H.function}
&&H(\theta,\bar{Z},\bar{W_1},\bar{W}_2,\bar{z},\bar{w_1},\bar{w}_2)=\frac{1}{2}(|\bar{Z}|^2+|\bar{W_1}|^2+|\bar{W_2}|^2)
+(\bar{z}\cdot J\bar{Z}+\bar{w_1}\cdot J\bar{W_1}+\bar{w_2}\cdot J\bar{W_2})
\nonumber
\\
&&\quad\quad\quad\quad\quad+\frac{p-r(\theta)}{2p}(|\bar{z}|^2+|\bar{w_1}|^2+|\bar{w_2}|^2)
-\frac{r(\theta)}{\sigma}U(\bar{z},\bar{w_1},\bar{w}_2),
\end{eqnarray}
where $\mu$ is given by (\ref{mu}) and $r(\theta)$ satisfies
\begin{equation}
r(\theta)=\frac{p}{1+e\cos\theta}. \lb{eqn.r.p}
\end{equation}

We now derived the linearized Hamiltonian system at the elliptic relative equilibria.
\begin{proposition}\label{linearized.Hamiltonian}
	Using notations in (\ref{PQYX}), elliptic relative equilibrium solution $(P(t),Q(t))^T$ of the system (\ref{1.2}) with
	\begin{equation}
	Q(t)=(r(t)R(\theta(t))a_1,r(t)R(\theta(t))a_2,r(t)R(\theta(t))a_3,r(t)R(\theta(t))a_4)^T,\quad P(t)=M\dot{Q}(t)
	\end{equation}
	in time $t$ with the matrix $M$ is given by (\ref{eqn:M}),
	is transformed to the new solution $(Y(\theta),X(\theta))^T$ in the variable true anomaly $\theta$
	with $G=g=0$ with respect to the original Hamiltonian function $H$ of (\ref{new.H.function}), which is given by
	\begin{equation}
	Y(\theta)=
	\bpm
	\bar{Z}(\theta)\\
	\bar{W}_1(\theta)\\
	\bar{W}_2(\theta)
	\epm
	=\bpm
	0\\
	\sigma\\
	0\\
	0\\
	0\\
	0
	\epm,
	\quad
	X(\theta)=\bpm
	\bar{z}(\theta)\\
	\bar{w_1}(\theta)\\
	\bar{w_2}(\theta)
	\epm
	=\bpm
	\sigma\\
	0\\
	0\\
	0\\
	0\\
	0
	\epm.
	\end{equation}
Moreover, the linearized Hamiltonian system at the elliptic relative equilibrium
${\xi}_0\equiv(Y(\theta),X(\theta))^T =$
\newline
$(0,\sigma,0,0,0,0,\sigma,0,0,0,0,0)^T\in\R^{12}$
depending on the true anomaly $\theta$ with respect to the Hamiltonian function
$H$ of (\ref{new.H.function}) is given by
\begin{equation}
\dot\zeta(\theta)=JB(\theta)\zeta(\theta),  \label{general.linearized.Hamiltonian.system}
\end{equation}
with
\begin{eqnarray}
B(\theta)&=&H''(\theta,\bar{Z},\bar{W_1},\bar{W}_2,\bar{z},\bar{w_1},\bar{w}_2)|_{\bar\xi=\xi_0}
\nonumber
\\
&=&
\bpm
I& O& O& -J&  O&  O\\
O& I& O&  O& -J&  O\\
O& O& I&  O&  O& -J\\
J& O& O& H_{\bar{z}\bar{z}}(\theta,\xi_0)& O& O\\
O& J& O& O& H_{\bar{w_1}\bar{w_1}}(\theta,\xi_0)& H_{\bar{w_1}\bar{w_2}}(\theta,\xi_0)\\
O& O& J& O& H_{\bar{w_2}\bar{w_1}}(\theta,\xi_0)& H_{\bar{w_2}\bar{w_2}}(\theta,\xi_0)
\epm,
\end{eqnarray}
and
\begin{eqnarray}
H_{\bar{z}\bar{z}}(\theta,\xi_0)&=&
\bpm
-\frac{2-e\cos\theta}{1+e\cos\theta} & 0\\
0 & 1
\epm, \nn
\\
H_{\bar{w_i}\bar{w_i}}(\theta,\xi_0)&=&I_2-\frac{r}{p}\left[\frac{3+\bb_i}{2}I_2+\psi(\bb_{ii})\right],\ \ i=1,2, \nn
\\
H_{\bar{w_1}\bar{w_2}}(\theta,\xi_0)&=&-\frac{r}{p}\psi(\bb_{12}),\nn
\end{eqnarray}
where $\bb_1=0$ and
$\bb_2$ are given by (\ref{bb2}),
and $\bb_{11},\bb_{12},\bb_{22}$ are given by
\bea
\bb_{11}&=&{3\over2\mu}\sum_{1\le i<j\le 4}\frac{m_im_j(a_i-a_j)^2(\overline{b}_i-\overline{b}_j)^2}{|a_i-a_j|^5},  \label{bb_11}
\\
\bb_{12}&=&{3\over2\mu}\sum_{1\le i<j\le 4}\frac{m_im_j(a_i-a_j)^2(\overline{b}_i-\overline{b}_j)(\overline{c}_i-\overline{c}_j)}{|a_i-a_j|^5},
\label{bb_12}
\\
\bb_{22}&=&{3\over2\mu}\sum_{1\le i<j\le 4}\frac{m_im_j(a_i-a_j)^2(\overline{c}_i-\overline{c}_j)^2}{|a_i-a_j|^5},\label{bb_22}
\eea
and $H''$ is the Hessian matrix of $H$ with respect to its variable $\bar{Z}$,
$\bar{W_1},\bar{W}_2$, $\bar{z}$, $\bar{w_1},\bar{w}_2$.
The corresponding quadratic Hamiltonian function is given by
\begin{eqnarray}
H_2(\theta,\bar{Z},\bar{W_1},\bar{W}_2,\bar{z},\bar{w_1},\bar{w}_2)
&=&\frac{1}{2}|\bar{Z}|^2+\bar{Z}\cdot J\bar{z}+\frac{1}{2}H_{\bar{z}\bar{z}}(\theta,\xi_0)|\bar{z}|^2+H_{\bar{w_1}\bar{w_2}}(\theta,\xi_0)\bar{w_1}\cdot\bar{w_2}
\nonumber\\
&&+\left(\frac{1}{2}|\bar{W_1}|^2+\bar{W_1}\cdot J\bar{w_1}+\frac{1}{2}H_{\bar{w_1}\bar{w_1}}(\theta,\xi_0)|\bar{w_1}|^2\right)
\nonumber\\
&&+\left(\frac{1}{2}|\bar{W_2}|^2+\bar{W_2}\cdot J\bar{w_2}+\frac{1}{2}H_{\bar{w_2}\bar{w_2}}(\theta,\xi_0)|\bar{w_2}|^2\right).
\end{eqnarray}
\end{proposition}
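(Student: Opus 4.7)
The plan is to follow the Meyer--Schmidt reduction procedure (as in \cite{MS} and Section 11.2 of \cite{Lon5}) adapted from the three-body case to the four-body setting, exploiting the $\tilde M$-unitary basis $\{v_1,v_2,v_3,v_4\}$ already constructed. First I verify that the ERE $(P(t),Q(t))$ is sent to the stated constant solution $(Y(\theta),X(\theta))$ by the coordinate change $P=A^{-T}Y$, $Q=AX$ together with the passage to the true anomaly $\theta$. Since each component of $Q(t)$ has the form $r(t)R(\theta(t))a_i$ and $(a_1,\ldots,a_4)^T$ is, by construction, proportional to the eigenvector $v_2$, in the new basis only the $v_2$-slot is excited; the time rescaling to $\theta$ together with the standard $\sigma$-normalisation then produces the claimed $X(\theta)=(\sigma,0,\ldots,0)^T$ and $Y(\theta)=(0,\sigma,0,\ldots,0)^T$.

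Next I linearise the Hamiltonian (\ref{new.H.function}) at $\xi_0$ by computing its Hessian block by block. The kinetic term $\frac{1}{2}(|\bar Z|^2+|\bar W_1|^2+|\bar W_2|^2)$ contributes the identity blocks on the momentum diagonal; the rotational term $\bar z\cdot J\bar Z+\bar w_1\cdot J\bar W_1+\bar w_2\cdot J\bar W_2$ contributes the off-diagonal $\pm J$ blocks; and $\frac{p-r(\theta)}{2p}(|\bar z|^2+|\bar w_1|^2+|\bar w_2|^2)$ contributes $\frac{p-r(\theta)}{p}I_2$ to each of $H_{\bar z\bar z}$, $H_{\bar w_i\bar w_i}$. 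The only nontrivial contributions come from $-\frac{r(\theta)}{\sigma}U(\bar z,\bar w_1,\bar w_2)$.

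The main computational step is the evaluation of $U_{\bar z\bar z}$ and $U_{\bar w_i\bar w_j}$ at $\xi_0$. Writing $d_{ij}=\varphi(a_i-a_j)\bar z+\varphi(b_i-b_j)\bar w_1+\varphi(c_i-c_j)\bar w_2$ and differentiating $|d_{ij}|^{-1}$ twice produces, for every pair $(i,j)$, a sum of two pieces: an ``isotropic'' piece proportional to $I_2$ (from the $-|d|^{-3}$ term) and an ``anisotropic'' piece of the form $\psi(\zeta^2)$ for a complex $\zeta$ built from $a_i-a_j$, $b_i-b_j$ or $c_i-c_j$ (from the $3|d|^{-5}$ term, after invoking the identity $\psi(z)\psi(w)=\varphi(z\bar w)$ from the previous subsection). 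The central configuration equation (\ref{eq.of.cc}) collapses the isotropic sum in $U_{\bar z\bar z}$ to a scalar multiple of $I_2$ which combines with the rotational term to produce $\begin{pmatrix}-(2-e\cos\theta)/(1+e\cos\theta)&0\\0&1\end{pmatrix}$, while the eigenvalue identities (\ref{Fi.bi}) for $v_3,v_4$ convert the isotropic sums in $U_{\bar w_i\bar w_i}$ into multiples of $\mu(1+\bb_i)$, yielding the factor $\frac{3+\bb_i}{2}I_2$. The anisotropic pieces depending on $(\bar b_i-\bar b_j)^2$, $(\bar c_i-\bar c_j)^2$ and the mixed product $(\bar b_i-\bar b_j)(\bar c_i-\bar c_j)$ assemble exactly into $\psi(\bb_{11})$, $\psi(\bb_{22})$ and $\psi(\bb_{12})$ as defined in (\ref{bb_11})--(\ref{bb_22}).

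The step I expect to be hardest is the bookkeeping for the cross term $H_{\bar w_1\bar w_2}=-\frac{r}{p}\psi(\bb_{12})$: it receives no contribution from the kinetic term or the rotational correction, so the formula must come entirely from the anisotropic part of $U_{\bar w_1\bar w_2}$, and one must check that the would-be isotropic contributions cancel. This cancellation follows from $\bar v_3^T\tilde M v_4=0$ (i.e.\ the $\tilde M$-orthogonality built into the matrix $\tilde A$), which forces $\sum_{i<j}m_im_j(\bar b_i-\bar b_j)(c_i-c_j)/|a_i-a_j|^3=0$ after using the eigenvector equations for $v_3$ and $v_4$; only the ``squared'' combination $(\bar b_i-\bar b_j)(\bar c_i-\bar c_j)$ survives, reproducing $\bb_{12}$. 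Once all blocks are assembled, the quadratic Hamiltonian $H_2$ is read off directly from $\frac{1}{2}\zeta^T B(\theta)\zeta$, completing the proof.
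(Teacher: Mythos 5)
Your proposal follows the same route as the paper: the Meyer--Schmidt symplectic reduction in the true anomaly, followed by a block-by-block computation of the Hessian of $H$ at $\xi_0$ using the algebra of the maps $\varphi,\psi$ to split each second derivative of $U_{ij}$ into an isotropic piece (handled via the central configuration equation, the eigenvector identities (\ref{Fi.bi}), and the $\tilde M$-orthogonality of $v_2,v_3,v_4$) and an anisotropic $\psi(\cdot)$ piece assembling into $\bb_{11},\bb_{12},\bb_{22}$. Your explicit identification of the cross-term cancellation $\sum_{i<j}m_im_j(\bar b_i-\bar b_j)(c_i-c_j)/|a_i-a_j|^3=\mu(1+\bb_2)\,\bar v_3^T\tilde M v_4=0$ correctly fills in the step the paper dispatches with ``similarly,'' and the argument is sound.
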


\begin{proof}
The proof is similar to those of Proposition 11.11 and Proposition 11.13 of \cite{Lon5}.
We only need to compute $H_{\bar{z}\bar{z}}(\theta,\xi_0)$, $H_{\bar{z}\bar{w_i}}(\theta,\xi_0)$
and $H_{\bar{w_i}\bar{w_j}}(\theta,\xi_0)$ for $i,j=1,2$.

For simplicity, we omit all the upper bars on the variables of $H$ in (\ref{new.H.function}) in this proof.
By (\ref{new.H.function}), the derivatives of $H$ with respect with the $z$ and $w_i$ is given by
\bea
H_z&=&JZ+\frac{p-r}{p}z-\frac{r}{\sigma}U_z(z,w_1,w_2),  \nn\\
H_{w_i}&=&JW_i+\frac{p-r}{p}w_i-\frac{r}{\sigma}U_{w_i}(z,w_1,w_2), \quad i=1,2, \nn
\eea
and second derivatives of $H$ with respect with the $z$ and $w_i$ by is given by
\be\lb{Hessian}\left\{
\begin{array}{l}
	H_{zz}=\frac{p-r}{p}I-\frac{r}{\sigma}U_{zz}(z,w_1,w_2),
	\\
	H_{zw_i}=H_{w_lz}=-\frac{r}{\sigma}U_{zw_i}(z,w_1,w_2),\quad i=1,2,
	\\
	H_{w_iw_i}=\frac{p-r}{p}I-\frac{r}{\sigma}U_{w_iw_i}(z,w_1,w_2),\quad i=1,2,
	\\
	H_{w_1w_2}=H_{w_2w_1}=-\frac{r}{\sigma}U_{w_1w_2}(z,w_1,w_2),
\end{array}\right. \ee
where all the items above are $2\times2$ matrices.

For $U_{ij}$ defined in (\ref{U}) with $1\le i<j\le 4$, $1\le l\le 2$,
we have
\bea
\frac{\partial U_{ij}}{\partial z}(z,w_1,w_2) &=& -\frac{m_im_j\varphi(a_i-a_j)^Td_{ij}(z,w_1,w_2)}{|d_{ij}(z,w_1,w_2)|^3},
\nn\\
\frac{\partial U_{ij}}{\partial w_1}(z,w_1,w_2) &=& -\frac{m_im_j\varphi(b_i-b_j)^Td_{ij}(z,w_1,w_2)}{|d_{ij}(z,w_1,w_2)|^3}
,
\nn\\
\frac{\partial U_{ij}}{\partial w_2}(z,w_1,w_2) &=& -\frac{m_im_j\varphi(c_i-c_j)^Td_{ij}(z,w_1,w_2)}{|d_{ij}(z,w_1,w_2)|^3},\nn
\eea
and then
\bea
\frac{\partial^2 U_{ij}}{\partial z^2}(z,w_1,w_2)
&=&-\frac{m_im_j|a_i-a_j|^2I_2}{|d_{ij}(z,w_1,w_2)|^3} +\frac{3m_im_j\varphi(a_i-a_j)^T\Phi(z,w_1,w_2)\varphi(a_i-a_j)}{|d_{ij}(z,w_1,w_2)|^5}, \nn\\
\frac{\partial^2 U_{ij}}{\partial {w_1}^2}(z,w_1,w_2)
&=&-\frac{m_im_j|b_i-b_j|^2I_2}{|d_{ij}(z,w_1,w_2)|^3} +\frac{3m_im_j\varphi(b_i-b_j)^T\Phi(z,w_1,w_2)\varphi(b_i-b_j)}{|d_{ij}(z,w_1,w_2)|^5},\nn\\
\frac{\partial^2 U_{ij}}{\partial z\partial w_1}(z,w_1,w_2)&=&
-\frac{m_im_j\varphi(a_i-a_j)^T\varphi(b_i-b_j)}{|d_{ij}(z,w_1,w_2)|^3}+\frac{3m_im_j\varphi(a_i-a_j)^T\Phi(z,w_1,w_2)\varphi(b_i-b_j)}{|d_{ij}(z,w_1,w_2)|^5}, \nn
\eea
where $\Phi(z,w_1,w_2) = d_{ij}(z,w_1,w_2)d_{ij}(z,w_1,w_2)^T$.
Let
$$
K=\bpm 2 & 0 \\
0 & -1 \epm, \quad
K_1=\bpm 1 & 0 \\
0 & 0 \epm,\quad
K_2=\bpm 1 & 0 \\
0 & -1 \epm=\psi(1),
$$
where $\psi$ is given by (\ref{map}).
Now evaluating these functions at $\bar\xi_0=(0,\sigma,0,0,0,0,\sigma,0,0,0,0,0)^T\in\R^{12}$
with $z=(\sigma,0)^T,w_i=(0,0)^T,1\le i\le 2$, and summing them up,
we obtain
\begin{eqnarray}
\left.\frac{\partial^2 U}{\partial z^2}\right|_{\xi_0}&=&
\sum_{1\le i<j\le 4} \left.\frac{\partial^2 U_{ij}}{\partial z^2}\right|_{\xi_0}
\nonumber\\
&=&\sum_{1\le i<j\le 4}\left(-\frac{m_im_j|a_i-a_j|^2}{|(a_i-a_j)\sigma|^3}I
+3\frac{m_im_j\sigma^2|a_i-a_j|^2K_1|a_i-a_j|^2}{|(a_i-a_j)\sigma|^5}\right)
\nonumber\\
&=&\frac{1}{\sigma^3}\left(\sum_{1\le i<j\le4}\frac{m_im_j}{|a_i-a_j|}\right)K
\nonumber\\
&=&\frac{\mu}{\sigma^3}K,  \label{U_zz}
\end{eqnarray}
where third equality holds by (\ref{F_ki}),
and
\begin{eqnarray}
\frac{\partial^2 U}{\partial w_1^2}\left|_{\xi_0}\right.
&=& \sum_{1\le i<j\le 4}\frac{\partial^2 U_{ij}}{\partial     w_l^2}\left|_{\xi_0}\right.\nonumber\\
&=&\sum_{1\le i<j\le 4}\left(-\frac{m_im_j|b_i-b_j|^2}{|(a_i-a_j)\sigma|^3}I
  +3\frac{m_im_j\sigma^2\varphi(b_i-b_j)^T\varphi(a_i-a_j)K_1\varphi(a_i-a_j)^T\varphi(b_i-b_j)}{|(a_i-a_j)\sigma|^5}\right)
\nonumber\\
&=&\sum_{1\le i<j\le 4}\left(-\frac{m_im_j|b_i-b_j|^2}{|(a_i-a_j)\sigma|^3}I
  +{3\over2}\frac{m_im_j\sigma^2\varphi(b_i-b_j)^T\varphi(a_i-a_j)\varphi(a_i-a_j)^T\varphi(b_i-b_j)}{|(a_i-a_j)\sigma|^5}\right)\nonumber\\
  &&+\sum_{1\le i<j\le 4}
  \left({3\over2}\frac{m_im_j\sigma^2\varphi(b_i-b_j)^T\varphi(a_i-a_j)\psi(1)\varphi(a_i-a_j)^T\varphi(b_i-b_j)}{|(a_i-a_j)\sigma|^5}\right)\nonumber\\
&=&\sum_{1\le i<j\le 4}\left(-\frac{m_im_j|b_i-b_j|^2}{|(a_i-a_j)\sigma|^3}I
  +{3\over2}\frac{m_im_j\sigma^2\varphi(|b_i-b_j|^2|a_i-a_j|^2)}{|(a_i-a_j)\sigma|^5}\right)\nonumber\\
&&+\sum_{1\le i<j\le 4}
\left({3\over2}\frac{m_im_j\sigma^2\psi((a_i-a_j)^2(\overline{b}_i-\overline{b}_j)^2)}{|(a_i-a_j)\sigma|^5}\right)
\nonumber\\
&=&{1\over2\sigma^3}\sum_{1\le i<j\le 4}\left(\frac{m_im_j|b_i-b_j|^2}{|a_i-  a_j|^3}\right)I_2
  +{1\over\sigma^3}\psi\left({3\over2}\sum_{1\le i<j\le 4}
  \frac{m_im_j(a_i-a_j)^2(\overline{b}_i-\overline{b}_j)^2}{|a_i-a_j|^5}\right)\nonumber\\
&=&\frac{1}{2\sigma^3}\left(\sum_{i=1}^4 \overline{b}_i\sum_{j=1,j\ne   i}^4\frac{m_im_j(b_i-b_j)}{|a_i-a_j|^3}\right)I_2
  +{1\over\sigma^3}\psi\left({3\over2}\sum_{1\le i<j\le 4}
\frac{m_im_j(a_i-a_j)^2(\overline{b}_i-\overline{b}_j)^2}{|a_i- a_j|^5}\right)\nonumber\\
&=&\frac{1}{2\sigma^3}\left(\sum_{i=1}^4 \overline{b}_iF_i\right)I_2
  +{1\over\sigma^3}\psi\left({3\over2}\sum_{1\le i<j\le 4}
  \frac{m_im_j(a_i-a_j)^2(\overline{b}_i-\overline{b}_j)^2}{|a_i-a_j|^5}\right)\nonumber\\
&=&\frac{\mu(1+\bb_1)}{2\sigma^3}I_2
+{\mu\over\sigma^3}\psi(\bb_{11}), \label{U_w1w1}
\end{eqnarray}
where the second formula holds because of (\ref{Fi.bi}) and (\ref{bb_11}).
Similarly, we have
\bea
\left.\frac{\partial^2 U}{\partial w_2^2}\right|_{\xi_0}&=&
\frac{\mu(1+\bb_2)}{2\sigma^3}I_2
+{1\over\sigma^3}\psi\left({3\over2}\sum_{1\le i<j\le 4}
\frac{m_im_j(a_i-a_j)^2(\overline{c}_i-\overline{c}_j)^2}{|a_i-a_j|^5}\right)
\nonumber
\\
&=&\frac{\mu(1+\bb_2)}{2\sigma^3}I_2
+{\mu\over\sigma^3}\psi(\bb_{22}), \label{U_w2w2}
\\
\left.\frac{\partial^2 U}{\partial w_1\partial w_2}\right|_{\xi_0}&=&
{1\over\sigma^3}\psi\left({3\over2}\sum_{1\le i<j\le 4}
\frac{m_im_j(a_i-a_j)^2(\overline{b}_i-\overline{b}_j)(\overline{c}_i-\overline{c}_j)}{|a_i-a_j|^5}\right)
\nonumber\\
&=&{\mu\over\sigma^3}\psi(\bb_{12}). \label{U_w1w2}
\eea
Moreover, we have
\begin{eqnarray}
\left.\frac{\partial^2 U}{\partial z\partial w_1}\right|_{\xi_0}&=&
\left.\sum_{1\le i<j\le 4}\frac{\partial^2 U_{ij}}{\partial z\partial w_1}\right|_{\xi_0}
\nonumber\\
&=&\sum_{1\le i<j\le4}\left(-\frac{m_im_j\varphi(a_i-a_j)^T\varphi(b_i-b_j)}{|(a_i-a_j)\sigma|^3}
+3\frac{m_im_j\sigma^2|a_i-a_j|^2K_1\varphi(a_i-a_j)^T\varphi(b_i-b_j)}{|(a_i-a_j)\sigma|^5}\right)
\nonumber\\
&=&\frac{K}{\sigma^3}\left(\sum_{1\le i<j\le 4}\frac{m_im_j\varphi((\overline{a}_i-\overline{a}_j)(b_i-b_j))}{|a_i-a_j|^3}\right)
\nonumber\\
&=&\frac{K}{\sigma^3}\varphi\left(\sum_{1\le i<j\le 4}\frac{m_im_j(\overline{a}_i-\overline{a}_j)(b_i-b_j)}{|a_i-a_j|^3}\right)
\nonumber\\
&=&\frac{K}{\sigma^3}\varphi\left(\sum_{1\le i<j\le 4}\frac{m_im_j\overline{a}_i(b_i-b_j)}{|a_i-a_j|^3}
-\sum_{1\le i<j\le 4}\frac{m_im_j\overline{a}_j(b_i-b_j)}{|a_i-a_j)|^3}\right)
\nonumber\\
&=&\frac{K}{\sigma^3}\varphi\left(\sum_{1\le i<j\le 4}\frac{m_im_j\overline{a}_i(b_i-b_j)}{|a_i-a_j|^3}
-\sum_{1\le j<i\le 4}\frac{m_jm_i\overline{a}_i(b_j-b_i)}{|a_j-a_i)|^3}\right)
\nonumber\\
&=&\frac{K}{\sigma^3}\varphi\left(\sum_{i=1}^4\overline{a}_i\sum_{j=i+1}^4\frac{m_im_j(b_i-b_j)}{|a_i-a_j|^3}
+\sum_{i=1}^4\overline{a}_i\sum_{j=1}^{i-1}\frac{m_im_j(b_i-b_j)}{|a_i-a_j|^3}\right)
\nonumber\\
&=&\frac{K}{\sigma^3}\varphi\left(\sum_{i=1}^4\overline{a}_i\sum_{j=1,j\ne i}^4\frac{m_im_j(b_i-b_j)}{|a_i-a_j|^3}\right)
\nonumber\\
&=&\frac{K}{\sigma^3}\varphi\left(\sum_{i=1}^4\overline{a}_iF_i\right)
\nonumber\\
&=&\frac{K}{\sigma^3}\varphi\left(\mu(1+\bb_1)\sum_{i=1}^4m_i\overline{a}_ib_i\right)
\nonumber\\
&=&O,\label{U_zw1}
\end{eqnarray}
where the second last equality holds because (\ref{eq.of.cc}), and the last equality holds because of (\ref{Fi.bi}).
Similarly, we have
\be
\left.\frac{\partial^2 U}{\partial z\partial w_2}\right|_{\xi_0}=0. \label{U_zw2}
\ee

By $r(\th) = \frac{p}{1+e\cos\th}$, $\sigma^4 = \mu p$ and (\ref{U_zz})-(\ref{U_zw2}), the second derivative of $H$ are given by
\begin{eqnarray}
H_{zz}|_{\xi_0}&=&\frac{p-r}{p}I-\frac{r\mu}{\sigma^4}K=I-\frac{r}{p}I-\frac{r\mu}{p\mu}K
=I-\frac{r}{p}(I+K)
=\bpm -\frac{2-e\cos\theta}{1+e\cos\theta} & 0\\
0 & 1 \epm,  \nn\\
H_{zw_i}|_{\xi_0}&=&-\frac{r}{\sigma}\frac{\partial^2U}{\partial z\partial w_i}|_{\xi_0}=O,\quad 1\le i\le 2,
\nn\\
H_{w_1w_1}|_{\xi_0}&=&\frac{p-r}{p}I-\frac{r}{\sigma}
\left[\frac{\mu(1+\bb_1)}{2\sigma^3}I_2+{\mu\over\sigma^3}\psi(\bb_{11})\right]
=I-\frac{r}{p}I-\frac{r}{p}\left[\frac{1+\bb_1}{2}I_2+\psi(\bb_{11})\right]
\nonumber
\\
&=&I-\frac{r}{p}\left[\frac{3+\bb_1}{2}I_2+\psi(\bb_{11})\right],
\nn\\
H_{w_2w_2}|_{\xi_0}&=&\frac{p-r}{p}I-\frac{r}{\sigma}
\left[\frac{\mu(1+\bb_2)}{2\sigma^3}I_2+{\mu\over\sigma^3}\psi(\bb_{22})\right]
=I-\frac{r}{p}I-\frac{r}{p}\left[\frac{1+\bb_2}{2}I_2+\psi(\bb_{22})\right]
\nonumber
\\
&=&I-\frac{r}{p}\left[\frac{3+\bb_2}{2}I_2+\psi(\bb_{22})\right],
\nn\\
H_{w_1w_2}|_{\xi_0}&=&H_{w_2w_1}|_{\xi_0}=-\frac{r}{\sigma}\frac{\partial^2U}{\partial w_1\partial w_2}|_{\xi_0}=-\frac{r}{p}\psi(\bb_{12}).
\end{eqnarray}
Thus the proof is complete.
\end{proof}

\subsection{The reduction at ERE of Lagrangian central configuration and one zero mass}
We consider the central configurations of restricted 4-body problem where  three primaries form an equilateral triangle.
We fix $m_1,m_2\in (0,1)$ such that $m_1+m_2<1$,
and let $m_3=1-m_1-m_2-\ep$ and $m_4=\ep$ with
$0<\ep<1-m_1-m_2$. Therefore, $\sum_{i = 0} ^4 m_i=1$.

Let $q_1=0$ and $q_2=1$. When $\epsilon\to0$, by our assumption,
$q_3$ must tend to one of the Lagrangian points of $m_1$ and $m_2$.
Without lose of generality, in the complex plane, we suppose such Lagrangian point
\be
z_L={1\over2}+\sqrt{-1}{\sqrt{3}\over2}.
\ee
Moreover, the limit positions of the zero mass $q_4$ are discussed in \cite{Leandro},
and we denote it by $z^*$, namely,
\bea
\lim_{\ep\to0}q_3 = z_L, \quad
\lim_{\ep\to0}q_4 = z^*. \lb{OneSmallSec:q4.limit}
\eea

The center of mass of the four particles is
\be
q_c=\sum_{i=1}^4m_iq_i=m_2+(1-m_1-m_2)q_3+\epsilon(q_4-q_3). \nn
\ee
For $i=1$, $2$, $3$ and $4$, re-scaling the distance between each body and the center of mass $q_i-q_c$ by
\be
a_i=(q_i-q_c)\alpha, \nn
\ee
where $\alpha>0$  such that
\be
\sum_{i=1}^4m_i|a_i|^2=1. \nn
\ee
Moreover, let
\be
\alpha_0=\lim_{\epsilon\to 0}\alpha=[m_1+m_2-(m_1^2+m_1m_2+m_2^2)]^{-{1\over2}},\lb{eq:aa0}
\ee
and
\be
q_{c,0}=\lim_{\epsilon\to 0}q_c=\left[{1\over2}(1-m_1+m_2)+\sqrt{-1}{\sqrt{3}\over2}(1-m_1-m_2)\right]\alpha_0, \nn
\ee
and hence
\bea
a_{1,0} &=& \lim_{\epsilon\to 0}{a_1}
= -\left[{1\over2}(1-m_1+m_2)+\sqrt{-1}{\sqrt{3}\over2}(1-m_1-m_2)\right]\alpha_0, \lb{OneSmallSec:a_10}\\ 
a_{2,0} &=& \lim_{\epsilon\to 0}{a_2}
= \left[{1\over2}(1+m_1-m_2)-\sqrt{-1}{\sqrt{3}\over2}(1-m_1-m_2)\right]\alpha_0,  \\
a_{3,0} &=& \lim_{\epsilon\to 0}{a_3}
= \left[{1\over2}(m_1-m_2)+\sqrt{-1}{\sqrt{3}\over2}(m_1+m_2)\right]\alpha_0,  \\ 
a_{4,0} &=& \lim_{\epsilon\to 0}{a_4}
= \left[z^*-{1\over2}(1-m_1+m_2)-\sqrt{-1}{\sqrt{3}\over2}(1-m_1-m_2)\right]\alpha_0.\lb{OneSmallSec:a_40}
\eea 
The potential $\mu = U(a)$ is given by
\be
\mu=\mu_{\epsilon,\tau}=\sum_{1\le i<j\le4}\frac{m_im_j}{|a_i-a_j|}, \nn
\ee
and by Lemma 3 of \cite{IM}, we have
\be
\mu_0=\lim_{\epsilon\to 0}\mu
=\frac{m_1m_2+m_2(1-m_1-m_2)+(1-m_1-m_2)m_1}{\alpha_0}=\alpha_0^{-3}.\label{mu0}
\ee
In the following, we will use the subscript $0$ to denote the limit value of the parameters
when $\epsilon\to 0$.

We now calculate $k$ and $l$ defined by (\ref{param.l}) for our case.
We first have
\bea
\lim_{\ep\to0} \sum_{i=1}^4 m_i\bar{a}_i^2
&=&\lim_{\ep\to0} \frac{1}{2}\sum_{i=1}^4 \sum_{j=1}^4 m_im_j(\bar{a}_i-\bar{a}_j)^2
\nn\\
&=& \frac{1}{2} \sum_{i=1}^3 \sum_{j=1}^3 m_im_j(\bar{a}_{i,0}-\bar{a}_{j,0})^2
\nn\\
&=&\alpha_0^2\left[m_1m_2-{1\over2}(m_1+m_2)(1-m_1-m_2)+\sqrt{-1}{\sqrt{3}\over2}(m_2-m_1)(1-m_1-m_2)\right].\nn\\
\eea
where the second equality holds by $m_4=\ep\to0$ and the third equality holds by (\ref{OneSmallSec:q4.limit}).
Hence by the definition of $k$ in (\ref{param.l}), $k_0$ is given by
\bea
k_0&=&\lim_{\ep\to0}\l(1-\l|\sum_{i=1}^4m_i\bar{a}_i^2\r|^2\r)^{-\frac{1}{2}}
\nn\\
&=&\l(1-\alpha_0^4\left[m_1m_2-{1\over2}(m_1+m_2)(1-m_1-m_2)\right]^2-{3\over4}\alpha_0^4[(m_2-m_1)(1-m_1-m_2)]^2\r)^{-\frac{1}{2}}
\nn\\
&=&\l(1-\alpha_0^4\left[\alpha_0^{-2}-{3\over2}(m_1+m_2)(1-m_1-m_2)\right]^2-{3\over4}\alpha_0^4[(m_2-m_1)(1-m_1-m_2)]^2\r)^{-\frac{1}{2}}
\nn\\
&=&\alpha_0^2\l(3m_1m_2(1-m_1-m_2)\r)^{-\frac{1}{2}},  \label{OneSmallSec:k0}
\eea
where one may verify the last equality holds by expending all the brackets and plug in $\aa_0$ which is defined by (\ref{eq:aa0}).
By the definition of $l_0$ in (\ref{param.l}) and direct computations, we obtain that
\bea
l_0
&=&-k_0\lim_{\ep\to0}\sum_{i=1}^4m_i\bar{a}_i^2
\nonumber\\
&=&-\frac{m_1m_2-{1\over2}(m_1+m_2)(1-m_1-m_2)+\sqrt{-1}{\sqrt{3}\over2}(m_2-m_1)(1-m_1-m_2)}
{\sqrt{3m_1m_2(1-m_1-m_2)}}.
\label{OneSmallSec:l0}
\eea
Moreover, by (\ref{c}), we obtain for $i$, $j$, $k\in \{1,2,3\}$,
\bea
\lim_{\ep\to0}c_i=\lim_{\ep\to0}4k\sqrt{m_j m_k m_4\over m_i}\Delta_{jk4}=0, \label{OneSmallSec.lim.c1}
\eea
and
\bea
\lim_{\ep\to0}\sqrt{m_4}c_4&=&\lim_{\ep\to0}-4k\sqrt{m_1m_2m_3}\Delta_{123}=-4k_0\sqrt{m_1m_2(1-m_1-m_2)}{\sqrt{3}\over4}\alpha_0^2
=-1, \label{OneSmallSec.lim.c4}
\eea
where the last equality holds by (\ref{OneSmallSec:k0}).

Plugging (\ref{OneSmallSec:a_10})-(\ref{OneSmallSec:a_40}) and (\ref{mu0}) into $\bb_{2}$ of (\ref{bb2}), $\bb_{2,0}$ can be obtained by
\bea
\bb_{2,0}&=&\lim_{\ep\to0}\bb_2
\nn\\
&=&1-\lim_{\ep\to0}{{\rm tr}(D)\over\mu}
\nn\\
&=&1-{1\over\mu_0}\left[4\mu_0-\sum_{1\le i<j\le3}{m_i+m_j\over|a_{i,0}-a_{j,0}|^3}
-\sum_{i=1}^3{m_i\over|a_{i,0}-a_{4,0}|^3}\right]
\nn\\
&=&1-{1\over\mu_0}\left[4\mu_0-2\mu_0-\sum_{i=1}^3{m_i\over|a_{i,0}-a_{4,0}|^3}\right]
\nn\\
&=&{1\over\mu_0}\sum_{i=1}^3{m_i\over|a_{i,0}-a_{4,0}|^3}-1.\nn
\eea
By (\ref{bb_12}) and (\ref{OneSmallSec.lim.c1})-(\ref{OneSmallSec.lim.c4}), we have
\be
\bb_{12,0}={3\over2\mu_0}\lim_{\ep\to0}
\sum_{1\le i<j\le4}\frac{m_im_j(a_i-a_j)^2(\bar{b}_i-\bar{b}_j)(\bar{c}_i-\bar{c}_j)}{|a_i-a_j|^5}
=0.
\label{bb_120}
\ee
Note that $\bb_{11,0}$ is given by
\begingroup\allowdisplaybreaks
\bea
\bb_{11,0}&=&{3\over2\mu_0}\lim_{\ep\to0}\sum_{1\le i<j\le4}\frac{m_im_j(a_i-a_j)^2(\bar{b}_i-\bar{b}_j)^2}{|a_i-a_j|^5}
\nonumber\\
&=&{3\over2\mu_0}\sum_{1\le i<j\le3}\frac{m_im_j(a_i-a_j)^2(\bar{b}_i-\bar{b}_j)^2}{|a_i-a_j|^5}
\nn\\
&=&{3\over2\mu_0}\alpha_0^{-5}
\sum_{1\le i<j\le3}m_{i,0}m_{j,0}(a_{i,0}-a_{j,0})^2[k(a_{i,0}-a_{j,0})+\bar{l}_0(\bar{a}_{i,0}-\bar{a}_{j,0})]^2
\nonumber\\
&=&{3\over2\mu_0}\alpha_0^{-5}\left[k_0^2\sum_{1\le i<j\le3}m_{i,0}m_{j,0}(a_{i,0}-a_{j,0})^4
+2\alpha_0^2k_0\bar{l}_0\sum_{1\le i<j\le3}m_{i,0}m_{j,0}(a_{i,0}-a_{j,0})^2+\alpha_0^2\bar{l}_0^2\right]
\nn\\
&=&{3\over2\mu_0}\alpha_0^{-5}\left[k_0^2\sum_{1\le i<j\le 3}m_{i,0}m_{j,0}(a_{i,0}-a_{j,0})^{-2}\alpha_0^6
+2\alpha_0^2k_0\bar{l}_0\left({-\bar{l}_0\over k_0}\right)+\alpha_0^2\bar{l}_0^2\right]
\nn\\
&=&{3\over2\mu_0}\alpha_0^{-5}
\left[k_0^2\sum_{1\le i<j\le3}m_{i,0}m_{j,0}\overline{(a_{i,0}-a_{j,0})}^{2}\alpha_0^2
-\alpha_0^2\bar{l}_0^2\right]
\nn\\
&=&{3\over2\mu_0}\alpha_0^{-5}\left[k_0^2\left({-l_0\over k_0}\right)\alpha_0^2-\alpha_0^2\bar{l}_0^2\right]
\nn\\
&=&{3\over2}(-k_0l_0-\bar{l}_0^2)
\nn\\
&=&{3\over2}\left\{\frac{[m_1m_2-{1\over2}(m_1+m_2)m_3^*+\sqrt{-1}{\sqrt{3}\over2}(m_2-m_1)m_3^*]\alpha_0^{-2}}{3m_1m_2m_3^*}\right.
\nn\\
&& \quad-\left.\frac{[m_1m_2-{1\over2}(m_1+m_2)m_3^*-\sqrt{-1}{\sqrt{3}\over2}(m_2-m_1)m_3^*]^2}{3m_1m_2m_3^*}\right\}
\nn\\
&=&{1\over2m_1m_2m_3^*}\left\{[m_1m_2-{1\over2}(m_1+m_2)m_3^*][{3\over2}(m_1+m_2)m_3^*] +{3\over4}(m_2-m_1)^2(m_3^*)^2
\right.\nn\\
&&\qquad+\left.\sqrt{-1}{\sqrt{9}\over2}(m_2-m_1)m_3^*m_1m_2\right\}
\nn\\
&=&{m_3^*\over 2m_1m_2m_3^*}\left\{{3\over2}m_1m_2(m_1+m_2)-{3\over4}(m_1+m_2)^2m_3^*
+{3\over4}(m_2-m_1)^2m_3^*\right\}\nn\\
&&\quad
+\sqrt{-1}{3\sqrt{3}\over4}(m_2-m_1)
\nn\\
&=&{3\over4}\left[3(m_1+m_2)-2+\sqrt{-1}\sqrt{3}(m_2-m_1)\right],\lb{eq:bb110}
\eea
\endgroup
where $m_3^* = 1-m_1-m_2$, the fifth equality holds by $(a_{i,0}-a_{j,0})^6=|a_{i,0}-a_{j,0}|^6=\alpha_0^6$ for $1\le i<j\le3$,
the sixth equality
holds by $(a_{i,0}-a_{j,0})^2\overline{(a_{i,0}-a_{j,0})}^2=|a_{i,0}-a_{j,0}|^4=\alpha_0^4$ for $1\le i<j\le3$
and tenth equality holds by (\ref{OneSmallSec:k0}) and (\ref{OneSmallSec:l0}). Also, $\bb_{22,0}$ is given by
\bea
\bb_{22,0}&=&{3\over2\mu_0}\lim_{\ep\to0}
\sum_{1\le i<j\le4}\frac{m_im_j(a_i-a_j)^2(\bar{c}_i-\bar{c}_j)^2}{|a_i-a_j|^5}
\nonumber\\
&=&{3\over2\mu_0}\lim_{\ep\to0}\sum_{1\le i<j\le3}\frac{m_im_j(a_i-a_j)^2(\bar{c}_i-\bar{c}_j)^2}{|a_i-a_j|^5}
+{3\over2\mu_0}\lim_{\ep\to0}\sum_{i=1}^3\frac{m_im_4(a_i-a_4)^2(\bar{c}_i-\bar{c}_4)^2}{|a_i-a_4|^5}
\nonumber\\
&=&{3\over2\mu_0}\lim_{\ep\to0}\sum_{i=1}^3\frac{m_i(a_i-a_4)^2(\sqrt{m_4}\bar{c}_i-\sqrt{m_4}\bar{c}_4)^2}{|a_i-a_4|^5}
\nonumber\\
&=&{3\over2\mu_0}\sum_{i=1}^3\frac{m_i(a_{i,0}-a_{4,0})^2}{|a_{i,0}-a_{4,0}|^5}.\nn
\eea

When $m_4=\epsilon\to0$,
since $\bb_{12,0}=0$ of (\ref{bb_120}),
the linearized Hamiltonian system (\ref{general.linearized.Hamiltonian.system})
can be decomposed to three independent Hamiltonian systems where
the first one is the linearized Hamiltonian system of the Kepler two-body problem at Kepler elliptic orbit,
and the other two systems can be written as
\be
\dot{\zeta}_i(\th) = JB_{i,0}(\th)\zeta_i(\th), \label{linearized.system.sep_i}\\
\ee
with
\be
B_{i,0}=\bpm I_2& -J_2\\ J_2& I_2-{r\over p}\left[{3+\bb_{i,0}\over2}I_2+\psi(\bb_{ii,0})\right] \epm, \nn
\ee
for $i=1,2$.
Thus the linear stability restricted four-body problem in our case
can be reduced to the linear stability problems of system (\ref{linearized.system.sep_i}) with $i=1,2$.

Let
\be
D_i={3+\bb_{i,0}\over2}I_2+\psi(\bb_{ii,0}), \label{D_i}
\ee
for $i=1,2$.
Then by $\bb_{1,0} =0$ in (\ref{bb2}) and (\ref{eq:bb110}), $D_1$ is given by
\be
D_1=\bpm  {9\over4}(m_1+m_2)& {3\sqrt{3}\over4}(m_2-m_1)\\
{3\sqrt{3}\over4}(m_2-m_1)&3-{9\over4}(m_1+m_2)
\epm, \nn
\ee
and hence the two characteristic roots of $D_1$ are:
\be
\lambda_{1,2}=\frac{3\pm\sqrt{9-\bb_L}}{2}, \nn
\ee
where $\bb_L$ is given by
$ \bb_L=27\alpha_0^2=27[m_1m_2+(m_1+m_2)(1-m_1-m_2)]$ in (\ref{L:bb}).

As in the proof of Theorem11.14 of \cite{Lon4},
the system (\ref{linearized.system.sep_i}) for $i=1$ becomes
\be\label{linearized.system.sep_1}
\dot{\zeta}_1(\th)= J B_{1,0}\zeta_1(\th)
=J \bpm 1&  0&  0& 1\\
0&  1& -1& 0\\
0& -1& 1-\frac{3+\sqrt{9-\bb}}{2(1+e\cos\th)}& 0\\
1&  0&   0& 1-\frac{3-\sqrt{9-\bb}}{2(1+e\cos\th)}
\epm
\zeta_1(\th),
\ee
thus this system coincides with the essential part of the linearized Hamiltonian system near the elliptic Lagrangian
relative equilibria of the three-body problem with masses $m_1,m_2$ and $m_3 = 1-m_1-m_2$ because the zero mass has no effect on the other three masses.
The system (\ref{linearized.system.sep_i}) for $i=1$ has been studied in detail in \cite{HLS}.
We will focus on the system (\ref{linearized.system.sep_i}) for $i=2$, which is called the essential part in the rest of this paper. It corresponds to the interactions of the zero mass body and three primaries.

The matrix $D_{2}$ is given by
\be
D_2={3+\bb_{2,0}\over2}I_2+\psi(\bb_{22,0}). \label{D_2}
\ee
The characteristic polynomial $\det(D_2-\lm I_2)$ of $D_2$ is
\be
\lambda^2-(3+\bb_{2,0})\lambda+\l({3+\bb_{2,0}\over2}\r)^2-|\bb_{22,0}|^2,\nn
\ee
and the characteristic roots are
\be
\lambda_{3,4}={3+\bb_{2,0}\over2}\pm|\bb_{22,0}|. \nn
\ee
By direct computations, we have that
\bea\label{lambda_3}
\lm_3 &=& {3+\bb_{2,0}\over2}+|\bb_{22,0}| \nn \\
&=& 1+ \frac{1}{2\mu_0}\sum_{i=1}^3{m_i\over|a_{i,0}-a_{4,0}|^3} +  \frac{3}{2\mu_0}\l|\sum_{i=1}^3{m_i(a_{i,0}-a_{4,0})^2\over|a_{i,0}-a_{4,0}|^5}\r|\nn \\
&=&1 + {1\over2}\sum_{i=1}^3\frac{m_i}{|q_{i,0}-z^*|^3}+ {3\over2}\l|\sum_{i=1}^3\frac{m_i(q_{i,0}-z^*)^2}{|q_{i,0}-z^*|^5}\r|\nn \\
&=& 1 + \aa+3\bb,
\eea
where $\aa\geq \bb>0$ are defined by (\ref{aa.bb})
and
\bea\label{lambda_4}
\lm_4 ={3+\bb_{2,0}\over2}-|\bb_{22,0}|
=1 + \aa - 3\bb.
\eea
Then the system (\ref{linearized.system.sep_i}) when $i =2$ can be written as
\be
\dot{\xi}(\th) = J B_{2,0}(\th) \xi(\th)
=J \bpm 1&  0&  0& 1\\
0&  1& -1& 0\\
0& -1& 1-\frac{1+\aa+3\bb}{1+e\cos\th}& 0\\
1&  0&   0& 1-\frac{1+\aa-3\bb}{1+e\cos\th}
\epm
\xi(\th). \lb{linearized.system.main}
\ee

By the transformation introduced by Section 2.4 of \cite{HLS}, the system can be related with operator $\cA(\aa,\bb, e)$, i.e.,
\bea
\cA(\aa,\bb,e) &=& -\frac{\d^2}{\d t^2}I_2  -I_2 + \frac{1}{1+e\cos t}R(t)K_{\bb,e}R(t)^T \nn \\
&=& -\frac{\d^2}{\d t^2}I_2  -I_2 + \frac{1}{1+e\cos t}(
(1+\aa)I_2 + 3\bb S(t)), \lb{cA}
\eea
where $R(t) = \l(\begin{smallmatrix}\cos t & -\sin t\\ \sin t & \cos t \end{smallmatrix}\r)$,
$K_{\bb,e}=
(\begin{smallmatrix}
1+\aa+3\bb & 0\\ 0 & 1+\aa-3\bb
\end{smallmatrix})$
and
$S(t) = (\begin{smallmatrix}
\cos 2t & \sin 2t\\ \sin 2t & \cos 2t
\end{smallmatrix})$.
By Lemma \ref{lem:2.1},
we have for any $(\aa,\bb,e)\in[\bb, +\infty) \times [0, +\infty) \times [0,1)$ and $\om\in \U$,
the Morse indices $\phi_{\omega}(\cA(\aa,\bb,e))$ and nullity
$\nu_{\om}(\cA(\aa,\bb,e))$ on the domain $D(\om,2\pi)$ satisfy
\be
\phi_{\om} (\cA(\aa,\bb,e)) = i_{\om}(\xi_{\aa,\bb,e}), \quad \nu_{\om} (\cA(\aa,\bb,e)) = \nu_{\om}(\xi_{\aa,\bb,e}),
\quad \forall \om \in \U. \lb{eqn:ind.equ}
\ee
where $i_{\om}(\xi_{\aa,\bb,e})$ is the $\om$-Maslov-type index and $\nu_{\om}(\xi_{\aa,\bb,e}) = \dim\ker(\xi_{\aa,\bb,e}(2\pi)-\om I_4)$ is the nullity of the sympletic path $\xi_{\aa,\bb,e}(t)$ for $t\in [0,2\pi]$ where $\xi_{\aa,\bb,e}(t)$ is the solution of \eqref{linearized.system.main}.

\setcounter{equation}{0}
\setcounter{figure}{0}
\section{Stability of the Circular Orbits}\lb{sec:e=0}
When $e = 0$ and $\aa\geq \bb>0$, the orbit of each body is circle and the linearized system (\ref{linearized.system.main}) is given by
\be
\dot{\xi}(\th)
=JB_{2,0}\xi(\th)
=J\bpm 1&  0&  0& 1\\
0&  1& -1& 0\\
0& -1& -\aa-3\bb & 0\\
1&  0&   0& -\aa+3\bb
\epm
\xi(\th).
\ee
The corresponding characteristic polynomial $\det(JB_{2,0}-\lm I)$ of $JB_{2,0}$ is given by
\be
p(\lm) = \lm^4+(2-2\aa)\lm^2+ (1+\aa)^2-9\bb^2. \lb{p2.e0}
\ee
The four roots of $p(\lm) =0$ are given by
\bea
\lm_{1,\pm} = \pm \sqrt{\aa -1 +\sqrt{9\bb^2 - 4\aa}},\lb{lm1} \quad
\lm_{2,\pm} = \pm \sqrt{\aa -1 -\sqrt{9\bb^2 - 4\aa}}. \lb{lm2}
\eea
The four characteristic multipliers of the matrix $\xi_{\aa,\bb,0}(2\pi)$ are given by
\be
\rho_{i,\pm} = e^{2\pi \lm_{i,\pm}}, \mbox{ for } i = 1,2.
\ee

According our assumption that $\aa \geq \bb \geq 0$, we have divide the region of $(\aa,\bb,e) \in [0, \infty) \times [0, \infty) \times \{0\}$ into four sub-regions $\cR_i$ for $1\leq i\leq 4$ as following. The corresponding figure is shown in Figure \ref{fig:1}

\begin{description}
    \item[I]   The first region $\cR_1 \equiv \{(\aa,\bb)|\aa\geq\bb>0,\aa>\frac{9}{4}\bb^2\}$.

    In $\cR_1$, both $\eta_{1}$ and $\eta_{2}$ are complex numbers because $ 9\bb^2-4\aa < 0$. It yields that $\lm_{i,\pm} \in \C $ with non-zero imaginary parts. Then the four characteristic multipliers of the matrix $\xi_{\aa,\bb,0}(2\pi)$ satisfy $\sg(\xi_{\aa,\bb,0}(2\pi)) \subset \C\bs\U$.

    \item[II]  The second region $\cR_2 \equiv \{(\aa,\bb)|\aa\geq\bb>0,\aa \leq \frac{9}{4}\bb^2, \aa \geq3\bb-1,\aa\leq 1\}$.

    In $\cR_2$,  $\eta_1$ and $\eta_2$ satisfy that
        \bea
        \eta_1 = \aa -1 +\sqrt{9\bb^2 - 4\aa}\leq  0, \quad
        \eta_2 = \aa -1 -\sqrt{9\bb^2 - 4\aa} \leq 0.\lb{R2.eta}
        \eea
    Then we have that $\lm_{i,\pm}\in \sqrt{-1}\R$ for $1\leq i \leq 4$ and the four characteristic multipliers of the matrix $\xi_{\aa,\bb,0}(2\pi)$ satisfy $\sg(\xi_{\aa,\bb,0}(2\pi)) \subset \U$.

    \item[III] The third region $\cR_3 \equiv \{(\aa,\bb)| \aa\geq\bb>0, \aa < 3\bb-1\}\cup\{(\aa,\bb)|\aa= 3\bb-1, \aa>1\}$.

    Note that $\aa < 3\bb-1$ and $\aa\geq \bb 0$ yield $9\bb^2>\aa^2+2\aa+1 > 4\aa$ and $\bb \geq \frac{1}{2}$.
    In $\cR_3$,  $\eta_1$ and $\eta_2$ satisfy that
    \bea
    \eta_1 = \aa -1 +\sqrt{9\bb^2 - 4\aa}> 0, \quad
    \eta_2 =\aa -1 -\sqrt{9\bb^2 - 4\aa} \leq 0. \lb{R3.eta}
    \eea

    Therefore, $\lm_{1,\pm}\in\R\bs\{0\}$ and $\lm_{2,\pm}\in  \sqrt{-1}\R$. Then the four characteristic multipliers of the matrix $\xi_{\aa,\bb,0}(2\pi)$ satisfy $\rho_{1,\pm}\in \R^+$ and $\rho_{2,\pm}\in \U$.

    \item[IV] The fourth region $\cR_4 \equiv\{(\aa,\bb)|\aa>\bb>0, \aa \leq \frac{9}{4}\bb^2, \aa>3\bb-1,\aa>1\}$.

    In $\cR_4$,
     $\eta_1$ and $\eta_2$ satisfy that
    \bea
    \eta_1 = \aa -1 +\sqrt{9\bb^2 - 4\aa}> 0,\quad
    \eta_2 =\aa -1 -\sqrt{9\bb^2 - 4\aa} >0.
    \eea
    Therefore, we have that
    $\lm_{i,\pm} \in \R$. Then the four characteristic multipliers of the matrix $\xi_{\aa,\bb,0}(2\pi)$ satisfy $\sg(\xi_{\aa,\bb,0}(2\pi)) \subset \R^+\bs\{1\} $.
\end{description}

Note that the $\cR_1$ and $\cR_4$ are hyperbolic regions.
We will discuss the linear stability of the essential part in $\cR_2$ in Section \ref{Stab.R2} and the linear stability in $\cR_3$ in Section \ref{Stab.R3}.

\subsection{Stability in region $\cR_2$}\lb{Stab.R2}
In $\cR_2$,  by  (\ref{R2.eta}), $\eta_1 \leq 0$ and $\eta_2 \leq 0$.
Then we have that  $\lm_{i,\pm}\in \sqrt{-1}\R$.
The four characteristic multipliers of the matrix $\xi_{\aa,\bb,0}(2\pi)$ can be written as
\be
\rho_{i,\pm} (\aa,\bb) = e^{2\pi \lm_{i,\pm}} = e^{\pm 2\pi\sqrt{-1} \th_i(\aa,\bb)}, \; \mbox{for} \; i = 1,2, \lb{rho}
\ee
where $\th_{i}(\aa,\bb)$ are given by
\bea
\th_1(\aa,\bb) = \sqrt{1-\aa -\sqrt{9\bb^2 - 4\aa}},\quad
\th_2(\aa,\bb) = \sqrt{1-\aa +\sqrt{9\bb^2 - 4\aa}}. \lb{eqn:th1.th2}
\eea
To determine the maximum and minimum of
$\th_1(\aa,\bb)$ and $\th_2(\aa,\bb)$ in $\cR_2$, by direct computations, we have
\bea
\frac{\pt \th_1}{\pt \aa} &=& \frac{1}{\sqrt{1-\aa -\sqrt{9\bb^2 - 4\aa}}}\l(-1+\frac{2}{\sqrt{9\bb^2-4\aa}}\r)  > 0, \lb{th1.pt.aa}\\
\frac{\pt \th_1}{\pt \bb}&=& \frac{1}{\sqrt{1-\aa -\sqrt{9\bb^2 - 4\aa}}}\l(\frac{-9\bb}{\sqrt{9\bb^2-4\aa}}\r)<0, \lb{min.th1}\\
\frac{\pt \th_2}{\pt \aa}&=& \frac{1}{\sqrt{1-\aa +\sqrt{9\bb^2 - 4\aa}}}\l(-1-\frac{2}{\sqrt{9\bb^2-4\aa}}\r) <0,\lb{th2.pt.aa}\\
\frac{\pt \th_2}{\pt \bb}&=&\frac{1}{\sqrt{1-\aa +\sqrt{9\bb^2 - 4\aa}}}\l(\frac{9\bb}{\sqrt{9\bb^2-4\aa}}\r) >0. \lb{min.th2}
\eea
Note (\ref{th1.pt.aa}) holds because $\sqrt{9\bb^2-4\aa} < 2$ in $\cR_2$ by direct computations.
Therefore the maximum and minimum of $\th_{1}$ and $\th_{2}$ in $\cR_2$ are attained at the boundary of $\cR_2$ because
\bea
\max_{(\aa,\bb)\in \cR_2}{\th_1} = \th_1|_{\aa=4/9,\bb=4/9}= \sqrt{5}/3, &\quad&
\min_{(\aa,\bb)\in \cR_2}{\th_1} =\th_1|_{\aa=3\bb-1}  = 0,\lb{eqn:th1.max.min}\\
\max_{(\aa,\bb)\in \cR_2}{\th_2} =\th_2|_{\aa=1/2,\bb=1/2} =  1, &\quad&
\min_{(\aa,\bb)\in \cR_2}{\th_2} =\th_2|_{\aa=1,\bb=2/3} =  0.\lb{eqn:th2.max.min}
\eea
Therefore, in $\cR_2$, we have that the $\th_1 \in [0,\frac{\sqrt{5}}{3}]$
and $\th_2 \in [0,1]$ and the eigenvalues of $\xi_{\aa,\bb,0}(2\pi)$
are given by $\sigma(\xi_{\aa,\bb,0}(2\pi)) = \{e^{2\pi\sqrt{-1}\th_1}, e^{-2\pi\sqrt{-1}\th_1},e^{2\pi\sqrt{-1}\th_2}, e^{-2\pi\sqrt{-1}\th_2}\}$. If $\aa \neq \frac{9}{4}\bb^2$ and $\aa \neq 3\bb-1$, $\th_1 \neq \th_2$. Therefore, we have that $\xi_{\aa,\bb,0}(2\pi) \approx R(\th_1) \diamond R(\th_2)$.

By (\ref{eqn:th1.max.min}) and (\ref{eqn:th2.max.min}), $\frac{1}{2}$ is in both the range of $\th_1$ and $\th_2$. By direct computations, we define the $-1$-degenerate line $\cR_{2,\frac{1}{2}}^*$ in $\cR_2$ by
\be
\cR_{2,\frac{1}{2}}^* = \{(\aa,\bb)\in\ol{\cR_2}| \aa= -\frac{5}{4}+\sqrt{9\bb^2+1}\}. \lb{eqn:seg.l}
\ee
We further define two sub-regions of $\cR_2$ by
\bea
\cR_{2,\frac{1}{2}}^- &=&  \{(\aa,\bb)\in\ol{\cR_2}| \aa > -\frac{5}{4}+\sqrt{9\bb^2+1}\}; \\
\cR_{2,\frac{1}{2}}^+ &=&  \{(\aa,\bb)\in\ol{\cR_2}| \aa < -\frac{5}{4}+\sqrt{9\bb^2+1}\}.
\eea
When $(\aa,\bb) \in \cR_{2,\frac{1}{2}}^*$, $-1\in \sigma(\xi_{\aa,\bb,0}(2\pi))$.
Furthermore, when $\bb \in [\frac{5+\sqrt{97}}{32}, \frac{\sqrt{3}}{3}]$,
$\th_1 \equiv \frac{1}{2}$ and $\th_2 \in [\frac{\sqrt{23-\sqrt{97}}}{4},1/2]$.
Therefore, $\sigma(\xi_{\aa,\bb,0}(2\pi)) = \{-1,-1,e^{2\pi\sqrt{-1}\th_2}, e^{-2\pi\sqrt{-1}\th_2}\}$.
When
$\bb \in [\frac{\sqrt{3}}{3},\frac{5}{8}]$,
$\th_2 \equiv  \frac{1}{2}$ and $\th_1 \in [0,1/2]$,
$\sigma(\xi_{\aa,\bb,0}(2\pi)) = \{e^{2\pi\sqrt{-1}\th_1}, e^{-2\pi\sqrt{-1}\th_1},-1, -1\}$.
Especially, when $(\aa,\bb)= (\frac{3}{4},\frac{\sqrt{3}}{3})$, $\th_1 = \th_2 = \frac{1}{2}$.
Then $\sigma(\xi_{\frac{3}{4},\frac{\sqrt{3}}{3},0}(2\pi)) = \{-1,-1,-1, -1\}$.

For the boundary of $\cR_2$,
along the segment $\{(\aa,\bb)\in \ol{\cR_2}|\aa= 3\bb-1\}$, we have that $\th_1 = 0$ and $\th_2 > 0$.
Then $\sigma(\xi_{\aa,\bb,0}(2\pi)) = \{1, 1,e^{2\pi\sqrt{-1}\th_2}, e^{-2\pi\sqrt{-1}\th_2}\}$ where $\th_2\in[0,1]$.
Especially,
when $(\aa,\bb)= (\frac{1}{2},\frac{1}{2})\in \pt \cR_2$, $\th_1 =0$ and  $\th_2 = 1$.
Then $\sg(\xi_{1/2,1/2,0}(2\pi)) = \{1,1,1,1\}$.
When $(\aa,\bb)= (\frac{7}{8},\frac{5}{8} )\in \pt \cR_2$, we have $\th_1 =0$ and  $\th_2 = \frac{1}{2} $.
Then $\sg(\xi_{1,2/3,0}(2\pi)) = \{1,1,-1,-1\}$.
When $(\aa,\bb) = (1,\frac{2}{3}) \in \pt \cR_2$, we have $\th_1 = \th_2 = 0$. Then $\sg(\xi_{1,2/3,0}(2\pi)) = \{1,1,1,1\}$.

\subsection{Stability in region $\cR_3$}\lb{Stab.R3}
When $(\aa,\bb) \in \cR_3$, by (\ref{rho}), the characteristic multipliers are given by
\be
\rho_{1,\pm} = e^{\pm 2\pi\lm_1}\in \R^+\bs\{1\}, \quad
\rho_{2,\pm} = e^{\pm 2\pi \sqrt{-1}\th} \in \U,
\ee
where $\th$ is given by
\be
\th = \sqrt{1-\aa+\sqrt{9\bb^2-4\aa}}.
\ee
We compute the derivatives of $\th$ with respect to $\aa$ and $\bb$ in $\cR_3$ and obtain that
\bea
\frac{\partial \th}{\pt \aa}  &=& \frac{1}{2\sqrt{1-\aa+\sqrt{9\bb^2-4\aa}}}\l(-1-\frac{2}{\sqrt{9\bb^2-4\aa}}\r) < 0,\\
\frac{\partial \th}{\pt \bb}  &=& \frac{1}{2\sqrt{1-\aa+\sqrt{9\bb^2-4\aa}}}\l(\frac{9\bb}{\sqrt{9\bb^2-4\aa}}\r)>0.
\eea
Therefore, the maximum and minimum of $\th$ must be attained at the boundary,
i.e., $\aa= \bb$ and $\aa = 3\bb-1$, or $(\aa,\bb)=(\infty,\infty)$.
By direct computations, along the line $\aa =\bb$, $\th$ tends to infinity when $\bb$ and $\aa$ tend to infinity.
Furthermore, when $\aa = 3\bb-1$ and $\bb \geq \frac{2}{3}$, we have $\th \equiv 0$.
Then range of $\th$ in $\cR_3$ is $[0,\infty)$.

For $(\aa,\bb) \in \cR_3 $ and $\th \geq 0$, we use $\aa_{\th} $ and $\bb_{\th}$ to denote the values of $(\aa,\bb)$ such that
\be
\th = \sqrt{1-\aa_{\th}+\sqrt{9\bb_{\th}^2-4\aa_{\th}}}, \lb{thaabb}
\ee
and hence
\be
\aa_{\th}(\bb) = -(\th^2+1) + \sqrt{9\bb_{\th}^2+4\th^2}. \lb{aath}
\ee
Then define the subsets $\cR_{3,n}^*$ and $\cR_{3,n+\frac{1}{2}}^*$ of $\cR_3$ by setting $\th = n$ and $\th = n+\frac{1}{2}$ in $\aa_{\th}(\bb)$.
\bea
\cR_{3,0}^* &\equiv& \{(\aa,\bb)\in \ol{\cR_{3}}| \aa =\aa_{0}(\bb) =  3\bb-1\},\lb{eqn:R30*}\\
\cR_{3,n+\frac{1}{2}}^* &\equiv& \l\{(\aa,\bb)\in \ol{\cR_{3}}|\aa =\aa_{n+\frac{1}{2}}(\bb) = -\l(n+\frac{1}{2}\r)^2-1+\sqrt{9\bb^2+4\l(n+\frac{1}{2}\r)^2}\r\},\lb{eqn:R3n+1/2*}\\
\cR_{3,n}^* &\equiv& \{(\aa,\bb)\in \ol{\cR_{3}}|\aa =\aa_{n}(\bb) =-n^2-1+ \sqrt{9\bb^2+4n^2} \}.
\eea
Note that $\cR_{3,0}^*\cap \cR_{3,\frac{1}{2}}^* = (\frac{7}{8},\frac{5}{8})$ and $(\frac{1}{2},\frac{1}{2}) \in\cR_{3,0}^* \cap \cR_{3,1}^*$.
The regions between $\cR_{3,n}^*$ and $\cR_{3,n+\frac{1}{2}}^*$ for $n\in\N_0$ are defined by
\bea
\cR_{3,0}^+ &\equiv& \{(\aa,\bb)\in \ol{\cR_{3}}| \aa_{\frac{1}{2}}(\bb)< \aa <\aa_{0}(\bb) \},\lb{eqn:R30+}\\
\cR_{3,1}^- &\equiv& \{(\aa,\bb)\in \ol{\cR_{3}}| \aa_{1}(\bb)< \aa < \min\{\aa_{0}(\bb),\aa_{\frac{1}{2}}(\bb)\}\},\\
\cR_{3,n}^- &\equiv& \{(\aa,\bb)\in \ol{\cR_{3}}| \max\{\aa_{n}(\bb), \bb\}  < \aa < \aa_{n-\frac{1}{2}}(\bb) \},\\
\cR_{3,n}^+ &\equiv& \{(\aa,\bb)\in \ol{\cR_{3}}|\max\{\aa_{n+\frac{1}{2}}(\bb), \bb\} < \aa < \aa_{n}(\bb) \}.\lb{eqn:R3n+}
\eea

By (\ref{thaabb}) and direct computations, for given $(\aa,\bb)\in\cR_3$, $\th(\aa,\bb)$ is the function of $(\aa,\bb)$. Then for the given positive $\th_1\neq \th_2$, $\aa_{\th_1}(\bb)$ cannot intersect with $\aa_{\th_2}(\bb)$. Then $\cR_{3,n}^+ $ and $\cR_{3,n}^- $ are pairwise disjoint for all $n\in \N_0$.

Note that $\lm_1 \in \R^+\bs\{1\}$. Then eigenvalues of $\xi_{\aa,\bb,0}(2\pi)$ for $(\aa,\bb) \in \cR_3$ can be given as followings.
\begin{enumerate}[label=(\roman*)]
    \item When $(\aa,\bb)\in \cR_{3,0}^*$, the case of $\bb\in(\frac{1}{2},\frac{2}{3})$ has been discussed in Section \ref{Stab.R2}.
    For $\bb > \frac{2}{3}$, we always have that $\th =  0$
    and $\sg(\xi_{\aa,\bb,0}(2\pi)) = \{1,1, e^{2\pi\lm_1}, e^{- 2\pi\lm_1}\}$.

    \item Let $i\in\N_0$. When $(\aa,\bb)\in \cR_{3,i}^*$, it yields $\th(\aa,\bb)=i$.
    Then $\rho_{2,\pm}=e^{\pm 2\pi\sqrt{-1} } = 1$ $\sg(\xi_{\aa,\bb,0}(2\pi)) = \{1, 1, e^{2\pi\lm_1}, e^{- 2\pi\lm_1}\}$.

    \item Let $i\in\N_0$. When $(\aa,\bb) \in \cR_{3,i}^+$, $\th(\aa,\bb)\in (i,  i+\frac{1}{2})$.
    Therefore $\rho_{2,+}(\aa,\bb)=e^{2\pi \sqrt{-1}\th(\aa,\bb)}$ on upper semi-unit circle in the complex plane $\C$.
    Correspondingly $\rho_{2,-}(\aa,\bb)=e^{-2\pi \sqrt{-1}\th(\aa,\bb)}$ on lower semi-unit circle in $\C$.
    Then $\sg(\xi_{\aa,\bb,0}(2\pi)) = \{e^{2\pi\lm_1}, e^{-2\pi\lm_1}, e^{2\pi \sqrt{-1}\th(\aa,\bb)}, e^{- 2\pi \sqrt{-1}\th(\aa,\bb)}\}$ with $\th\in(0, \pi)$.

    \item Let $i\in\N_0$.
    When $(\aa,\bb)\in \cR_{3,i+\frac{1}{2}}^*$, it yields $\th(\aa,\bb)=i+\frac{1}{2}$.
    Then $\rho_{2,\pm}=e^{\pm \sqrt{-1} \pi} = -1$ and $\sg(\xi_{\aa,\bb,0}(2\pi)) = \{-1, -1, e^{2\pi\lm_1}, e^{- 2\pi\lm_1}\}$.

    \item Let $i\in\N$.
    When $(\aa,\bb) \in \cR_{3,i}^-$, the angle $\th(\bb) \in (i-\frac{1}{2},i)$.
    Thus $\rho_{2,+}(\aa,\bb)=e^{2\pi \sqrt{-1}\th(\aa,\bb)}$ on the lower semi-unit circle in $\C$.
    Correspondingly, $\rho_{2,-}(\aa,\bb)=e^{-2\pi \sqrt{-1}\th(\aa,\bb)}$ on the upper semi-unit circle in $\C$.
    Then $\sg(\xi_{\aa,\bb,0}(2\pi)) = \{e^{2\pi \sqrt{-1}\th(\aa,\bb)}, e^{- 2\pi \sqrt{-1}\th(\aa,\bb)} e^{2\pi\lm_1}, e^{- 2\pi\lm_1}\}$ with $\th\in(\pi, 2\pi)$.
\end{enumerate}

\subsection{$\pm 1$-indices when $e = 0$}\lb{sec:e0.index}
First define an orthogonal basis $\{f_{0,1}, f_{0,2}, f_{n,1},f_{n,2},f_{n,3},f_{n,4}| n\in \N\}$ of $\ol{D}(1,2\pi)$ in (\ref{A2.10}) by
\bea
f_{0,1} = R(t)\bpm 1 \\ 0 \epm, \quad
f_{0,2} = R(t)\bpm 0 \\ 1 \epm,
\eea
and for $n \in \N$
\bea
f_{n,1} = R(t)\bpm \cos nt \\ 0 \epm, \quad
f_{n,2} = R(t)\bpm 0 \\ \cos nt \epm,\\
f_{n,3} = R(t)\bpm \sin nt \\ 0 \epm, \quad
f_{n,4} = R(t)\bpm 0 \\ \sin nt \epm.
\eea
 By (\ref{cA}) and $\frac{\d R(t)}{\d t} = JR(t)$, $\cA(\aa,\bb, 0)f_{n,1}$ is given by
\bea
\cA(\aa,\bb,0)f_{n,1} &=& \l(-\frac{\d^2}{\d t^2}I_2-I_2+R(t)K_{\aa,\bb,0}(t)R(t)^T\r)
R(t)
\bpm \cos nt \\ 0 \epm \nn \\
&=& R(t)
\l((n^2+1+\aa+3\bb)\cos nt, 2n\sin nt\r)^T \nn \\
&=& (n^2+1+\aa+3\bb)f_{n,1}+2nf_{n,4}.
\eea
Similarly, for $n \in \N_0$, it yields that
\bea
\bpm \cA(\aa,\bb,0) & 0 \\ 0 & \cA(\aa,\bb,0)\epm
\bpm f_{0,1} \\ f_{0,2} \epm &=&
B_0
\bpm f_{0,1} \\ f_{0,2} \epm ,\lb{A0}\\
\bpm \cA(\aa,\bb,0) & 0 \\ 0 & \cA(\aa,\bb,0) \epm
\bpm f_{n,1} \\ f_{n,4} \epm&=&
B_n
\bpm f_{n,1} \\ f_{n,4} \epm,\lb{A14}\\
\bpm \cA(\aa,\bb,0) & 0 \\ 0 & \cA(\aa,\bb,0) \epm
\bpm f_{n,2} \\ f_{n,3} \epm &=&
\bar{B}_n
\bpm f_{n,2} \\ f_{n,3} \epm,\lb{A23}
\eea
where $B_n$ and $\bar{B}_n$ are given by
\bea
B_n &=&  \bpm n^2+1+\aa+3\bb & 2n \\ 2n & n^2+1+\aa-3\bb \epm, \mbox{ for } n\in \N_0; \\
\bar{B}_n  &=&  \bpm n^2+1+\aa+3\bb & -2n \\ -2n & n^2+1+\aa-3\bb \epm, \mbox{ for } n\in \N.
\eea
The characteristic polynomials of $B_n$ and $\bar{B}_n$ receptively are denoted by $p_n(\lm)$ and $\bar{p}_n(\lm)$ which are
\be
p_n(\lm) = \bar{p}_n(\lm) =
\lm^2 - (2n^2 + 2\aa+ 2)\lm + (n^2+1+\aa)^2-9\bb^2-4n^2.
\ee
Let $i\in \N$. Fixing $\aa(\bb) = -i^2-1 + \sqrt{9\bb^2+4i^2}$,
$p_n(0) = \bar{p}_n(0) = 0$ if and only if $n = i$.

Note that when $\aa > 3\bb-1 $, i.e., $(\aa,\bb)\in\cR_1\cup \cR_2 \cup \cR_4$, $(n^2+1+\aa)^2-9\bb^2-4n^2 > 0 $ for all $n\geq 0$.
Therefore, when $(\aa,\bb)\in\cR_1\cup \cR_2 \cup \cR_4$, $(n^2+1+\aa)^2-9\bb^2-4n^2 > 0 $ and then
\be i_1(\xi_{\aa,\bb,0}) = 0, \quad \nu_1(\xi_{\aa,\bb,0}) = 0. \ee

We define that $G(n) = n^2-\sqrt{9\bb^2+4n^2}$ for given $\bb > 1/2$ and $n\geq 1$. Note that $\frac{\partial G}{\pt n} = 2n -\frac{4n}{\sqrt{9\bb^2+4n^2}} > 0$ because $\sqrt{9\bb^2+4n^2} > \frac{5}{2}$. Therefore, if $n_1<n_2$, we have that $G(n_1) < G(n_2)$, i.e.,
\be
n_1^2-\sqrt{9\bb^2+4n_1^2} < n_2^2-\sqrt{9\bb^2+4n_2^2}.
\ee
This yields that $p_n(0) = \bar{p}_n(0) < 0$ if $n < i$ , and $p_n(0) = \bar{p}_n(0) > 0$ if $n > i$ . Thus both $B_i$
and $\bar{B}_i$ have one zero and one positive eigenvalues when $n = i$; both $B_i$
and $\bar{B}_i$ with $n < i$ have one
negative and one positive eigenvalues; both $B_i$
and $\bar{B}_i$ with $n > i$ have two positive
eigenvalues.
Notice that $B_0$ has two positive eigenvalues when $\aa>3\bb+1$;  $B_0$ has one positive and one zero eigenvalue when $\aa=3\bb+1$;   $B_0$ has one positive and one negative eigenvalue when $\aa<3\bb+1$.
Therefore we have $i_1(\xi_{\aa,\bb,0}) = 2i+1$ and $\nu_1(\xi_{\aa,\bb,0}) = 2$ when $(\aa,\bb)\in \cR^*_{3,i}$.

For $(\aa, \bb) = (\frac{1}{2},\frac{1}{2})$, $B_0$, $B_1$ and $\bar{B}_1$ all possess one dimensional degenerate space and one dimension positive definite eigenspace.

When $(\aa, \bb) \in \cR_{3,i}^+ \cup \cR^*_{3,i+\frac{1}{2}}\cup \cR_{3,i+1}^-$ and $i\in\N_0$, then $p_n(0) = \bar{p}_n(0) \neq 0$. Similarly to the above argument, we have $p_n(0) = \bar{p}_n(0) < 0$ if $n \leq i$, and $p_n(0) = \bar{p}_n(0) > 0$ if $n > i$ .
Thus both $B_n$ and $\bar{B}_n$ with $n \leq i$ have a negative and a positive eigenvalues; both $B_n$ and $\bar{B}_n$ with $n > i$ have two positive eigenvalues. Notice that $B_0$ has a negative and a positive eigenvalues, we have $i_1(\xi_{\aa,\bb,0}) = 2i + 1$ and $\nu_1(\xi_{\aa,\bb,0}) = 0$ for $(\aa, \bb) \in \cR_{3,i}^+ \cup \cR^*_{3,i+\frac{1}{2}}\cup \cR_{3,i+1}^-$.
Therefore, we have (\ref{index.e=0}) and (\ref{eqn:null.e=0}) hold for $\aa \geq \bb > 0$ and $n\in \N_0$.

Following the same procedure, to compute $i_{-1}(\xi_{\aa,\bb,0})$, we define $\bar{f}_{n,i}$ by
\bea
\bar{f}_{n,1} = R(t) \bpm  \cos (n+1/2)t \\ 0 \epm, \quad
\bar{f}_{n,2} = R(t) \bpm 0 \\ \cos (n+1/2)t\epm,\\
\bar{f}_{n,3} = R(t) \bpm  \sin (n+1/2)t \\ 0 \epm, \quad
\bar{f}_{n,4} = R(t) \bpm 0 \\ \sin (n+1/2)t \epm,
\eea
for $n \in \N_0$.
Then we have that
\bea
\bpm \cA(\aa,\bb,0) & 0 \\ 0 & \cA(\aa,\bb,0) \epm
\bpm \bar{f}_{n,1} \\ \bar{f}_{n,4} \epm &=&
\bpm (n+\frac{1}{2})^2+1+\aa+3\bb & 2(n+\frac{1}{2}) \nn\\ 2(n+\frac{1}{2}) & (n+\frac{1}{2})^2+1+\aa-3\bb \epm
\bpm \bar{f}_{n,1} \\ \bar{f}_{n,4} \epm,\\
\bpm \cA(\aa,\bb,0) & 0 \\ 0 & \cA(\aa,\bb,0) \epm
\bpm \bar{f}_{n,2} \\ \bar{f}_{n,3} \epm &=&
\bpm (n+\frac{1}{2})^2+1+\aa+3\bb & -2(n+\frac{1}{2}) \nn\\ -2(n+\frac{1}{2}) & (n+\frac{1}{2})^2+1+\aa-3\bb \epm
\bpm \bar{f}_{n,2} \\ \bar{f}_{n,3} \epm .
\eea
By arguments similar as the $1$-index, we compute the
eigenvalues of $\cA(\aa,\bb, 0)$ in the domain $\ol{D}(-1, 2\pi)$, then the $-1$-indices of $\xi_{\aa,\bb, 0}$ and the nullity of $\xi_{\aa,\bb, 0}(2\pi)$.

Especially when $(\aa,\bb) \in \cR_{2,\frac{1}{2}} \bigcup(\cup_{n=0}^{\infty}\cR^*_{3,n+\frac{1}{2}})$, $\cA(\aa,\bb,0)$ has eigenvalue $-1$ with geometric multiplicity $2$. Thus
\be
\nu_{-1}(\xi_{\aa,\bb,0}(2\pi)) = 2.\lb{nu.-1.xi}
\ee

When $(\aa,\bb)\in \cR_1 \cup\cR_2\cup \cR_4$, as the discussion about the $1$-index we have that
\bea
i_{-1}(\xi_{\aa,\bb,0}) &=& \begin{cases}
    0 & \mbox{if} \; (\aa,\bb)\in \cR_1 \cup \cR_{2,\frac{1}{2}}^-\cup \cR_{2,\frac{1}{2}}^* \cup \cR_4; \\
    2 & \mbox{if} \; (\aa,\bb)\in \cR_{2,\frac{1}{2}}^+; \\
\end{cases}	\\
\nu_{-1}(\xi_{\aa,\bb,0}) &=&
\begin{cases}
	2, & \mbox{if} \;  (\aa,\bb)\in \cR_{2,\frac{1}{2}}^*;\\
	0  & \mbox{if} \;  (\aa,\bb)\in (\cR_1 \cup\cR_2\cup \cR_4) \bs \cR_{2,\frac{1}{2}}^*.
\end{cases}
\eea

When $(\aa,\bb)\in \cR_3$, $\xi_{\aa,\bb,0}(2\pi)$ possesses one pair of positive hyperbolic characteristic multipliers
$\rho_{1,\pm}(\bb)$ and one pair of elliptic characteristic multipliers $\rho_{2,\pm}(\bb)$ on the unit circle. Then
\be
\xi_{\aa,\bb,0}(2\pi) \approx D(e^{2\pi\sqrt{\aa_1(\bb)}}) \diamond M(2\pi), \lb{xi.nf.bb.0}
\ee
for some matrix $M(2\pi) \in \Sp(2)$. Due to the nullity of $M(2\pi)$ is even,  $M(2\pi)$ must be the normal form $N_1(\pm 1,b)$ with $b = 0$. Then we have for $n \in \N$
\be
M(2\pi) = \begin{cases}
	I_2, \quad \mbox{if} \quad
    (\aa,\bb) \in \cR_{3,n}^*; \\
	-I_2, \quad  \mbox{if} \quad (\aa,\bb) \in \cR_{3,n+\frac{1}{2}}^*;\\
	R(2\pi\th(\bb)) \; \mbox{or} \; -R(2\pi\th(\bb)),\quad \mbox{if} \quad (\aa,\bb)\in \cR_{3,n}^{\pm}.
\end{cases}
\ee

Note that there exists a path $M(t) \in {\cal P}_{2\pi}(2)$ which connecting $M(0) = I_2$ to $M(2\pi)$ such that the path $\xi_{\aa, \bb,0}(2\pi)$ is homotopic to the path $D(e^{t\sqrt{\aa_1(\bb)}}) \diamond M(t)$ for $t \in [0, 2\pi]$.

By the properties of splitting numbers \eqref{eqn:split}, for $(\aa,\bb) \in \cR_{3,n}^+$ and $\om = -1$, we have that
\bea
i_{-1}(\xi_{\aa,\bb,0})&=& i_{1}(\xi_{\aa,\bb,0})
+S^+_{\xi_{\aa,\bb,0}(2\pi)}(1)
-S^-_{\xi_{\aa,\bb,0}(2\pi)}(e^{\sqrt{-1}2\pi(\th-n)})\nn \\
&&+S^+_{\xi_{\aa,\bb,0}(2\pi)}(e^{\sqrt{-1}2\pi(\th-n)})
-S^-_{\xi_{\aa,\bb,0}(2\pi)}(-1)\nn\\
&=&i_{1}(\xi_{\aa,\bb,0})
-S^-_{M(2\pi)}(e^{\sqrt{-1}2\pi(\th-n)})
+S^+_{M(2\pi)}(e^{\sqrt{-1}2\pi(\th-n)})\nn\\
&=&
\begin{cases}
	i_{1}(\xi_{\aa,\bb,0}) - 1 = 2n, \quad \mbox{if} \quad M(2\pi) = R(2\pi\th(\bb)); \\
	i_{1}(\xi_{\aa,\bb,0}) + 1 = 2n + 2, \quad \mbox{if} \quad M(2\pi) = R(-2\pi\th(\bb)).
\end{cases}
\eea

When $(\aa,\bb) \in \cR_{3,n+1}^-$, we have that
\bea
i_{-1}(\xi_{\aa,\bb,0})&=& i_{1}(\xi_{\aa,\bb,0})
+S^+_{\xi_{\aa,\bb,0}(2\pi)}(1)
-S^-_{\xi_{\aa,\bb,0}(2\pi)}(e^{\sqrt{-1}2\pi(n+1-\th)}) \nn\\
&&+S^+_{\xi_{\aa,\bb,0}(2\pi)}(e^{\sqrt{-1}2\pi(n+1-\th)})
-S^-_{\xi_{\aa,\bb,0}(2\pi)}(-1) \nn\\
&=&i_{1}(\xi_{\aa,\bb,0})
-S^-_{M(2\pi)}(e^{\sqrt{-1}2\pi(n+1-\th)})
+S^+_{M(2\pi)}(e^{\sqrt{-1}2\pi(n+1-\th)}) \nn\\
&=&
\begin{cases}
	i_{1}(\xi_{\aa,\bb,0}) + 1 = 2n + 2, \quad \mbox{if} \quad M(2\pi) = R(2\pi\th(\bb)); \\
	i_{1}(\xi_{\aa,\bb,0}) - 1 = 2n , \quad \mbox{if} \quad M(2\pi) = R(-2\pi\th(\bb)).
\end{cases}\lb{itra.-1.xi}
\eea

If $M(2\pi) = R(-2\pi\th(\bb))$ for $(\aa,\bb) \in \cR_{3,n}^+$,
we have that $i_{-1}(\xi_{\aa,\bb,0}) = 2n+2$. By (\ref{nu.-1.xi}) and the non-decreasing of $i_{-1}(\xi_{\aa,\bb,0})$ with respect to $\bb$, we must have $i_{-1}(\xi_{\aa,\hat{\bb}++n +1/2+\ep,0}) = i_{-1}(\xi_{\aa,\hat{\bb}+n+1/2,0}) + \nu_{-1}(\xi_{\aa,\hat{\bb}+n+1/2,0}) \geq 2n+4$
which contradicts (\ref{itra.-1.xi}). Similarly, we cannot have $M(2\pi) = R(-2\pi\th(\bb))$ when $ (\aa,\bb) \in \cR_{3,n+1}^-$. Thus, we must have that $M(2\pi) = R(2\pi\th(\bb))$ when $(\aa,\bb) \in \cR_{3,0}^+ \bigcup\l(\cup_{n = 1}^{\infty}\cR_{3,n}^-\cup\cR_{3,n}^+ \r)$.

Therefore, we have that for $\aa\geq \bb > 0$ and $e = 0$, the $i_{-1}(\xi_{\aa,\bb,0})$ and $\nu_{-1}(\xi_{\aa,\bb,0})$ are given by  (\ref{eqn:intro.-1.index}) and (\ref{null.-1.index}).

\setcounter{equation}{0}
\section{The $\om$-Index Properties}
In this section,  we first discuss the special case $\bb = 0$ of the index and then the general properties of index.

At first, we consider the case of $\aa = \bb = 0$. By (\ref{cA}), $\cA(0,0,e)$ is given by
\be
\cA(0,0,e) =-\frac{\d^2}{\d t^2}I_2  -I_2 + \frac{1}{1+e\cos t}I_2.
\ee
The operator $\cA(0,0,e)$ has been discussed in Lemma 4.1 of \cite{ZL15ARMA}. We paraphrase their results in our notations. For details, readers may refer to \cite{ZL15ARMA} for details.

\begin{lemma}\lb{lem:00e.pos}
	For $\aa= \bb = 0$ and $0 \leq e < 1$, there holds

	(i) $\cA(0,0,e)$ are non-negative definite for the $\om = 1$ boundary condition and
	\be
	\ker \cA(0,0,e) = \l\{(c_1(1+e\cos t),c_2(1+e\cos t))^T
	|c_1,c_2 \in \C  \r\}.\lb{eqn:ker.00e}
	\ee

	(ii) $\cA(0,0,e)$ are positive definite for the $\om \neq 1$ boundary condition.
\end{lemma}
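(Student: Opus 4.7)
The plan is to exploit the fact that $\cA(0,0,e)=A(e)\otimes I_{2}$ where
$A(e)\equiv-\frac{\d^{2}}{\d t^{2}}-1+\frac{1}{1+e\cos t}$ is a scalar Sturm--Liouville operator, so it suffices to establish (i) and (ii) for $A(e)$ acting on scalar functions in $D(\om,2\pi)$. The key observation is that $\phi(t):=1+e\cos t$ is a distinguished positive solution: a one-line computation gives $\phi''=-e\cos t$ and $1-\frac{1}{1+e\cos t}=\frac{e\cos t}{1+e\cos t}$, hence $A(e)\phi=-\phi''-\phi+\frac{\phi}{1+e\cos t}=0$. Since $0\le e<1$, we also have $\phi(t)\ge 1-e>0$ on $[0,2\pi]$.

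Given this positive ground state, I would use the standard Jacobi factorization
\[
A(e)y \;=\; -\frac{1}{\phi}\,\frac{\d}{\d t}\!\left(\phi^{2}\,\frac{\d}{\d t}\!\left(\frac{y}{\phi}\right)\right),
\]
which I would verify by a direct expansion (writing $y=\phi u$, one gets $A(e)y=-2\phi'u'-\phi u''$, matching the right-hand side after using $A(e)\phi=0$ to cancel the $u$-term). With the substitution $u:=y/\phi$ and noting that $\phi$ is real and $2\pi$-periodic, any $y\in\ol{D}(\om,2\pi)$ yields $u\in\ol{D}(\om,2\pi)$ as well.

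The quadratic form then becomes, after one integration by parts on $[0,2\pi]$,
\[
\langle A(e)y,y\rangle \;=\; -\bigl[\phi^{2}u'\bar{u}\bigr]_{0}^{2\pi}+\int_{0}^{2\pi}\phi^{2}|u'|^{2}\,\d t.
\]
Because $\phi(0)=\phi(2\pi)$ and the boundary condition gives $u(2\pi)=\om u(0)$, $u'(2\pi)=\om u'(0)$, the boundary contribution equals $\phi^{2}(0)(|\om|^{2}-1)u'(0)\bar u(0)=0$ for every $\om\in\U$. Hence $\langle A(e)y,y\rangle=\int_{0}^{2\pi}\phi^{2}|u'|^{2}\,\d t\ge 0$, with equality iff $u$ is constant.

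Finally I would read off the two conclusions. For $\om=1$, $u\equiv c\in\C$ is admissible, giving $y=c\phi=c(1+e\cos t)$; tensoring back to the diagonal $\cA(0,0,e)$ produces exactly the kernel (\ref{eqn:ker.00e}), establishing (i). For $\om\in\U\setminus\{1\}$, the only constant satisfying $u(2\pi)=\om u(0)$ is $u\equiv 0$, so $y\equiv 0$ and $A(e)$ is strictly positive, which upgrades to positive-definiteness of $\cA(0,0,e)$, giving (ii). No genuine obstacle is expected; the only point requiring care is checking that the boundary terms vanish under the general $\om$-condition (where I use $|\om|=1$) and that the factorization identity holds with the correct conjugations in the complex $L^{2}$ inner product.
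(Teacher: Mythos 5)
Your proof is correct. A point worth noting first: the paper itself does not prove this lemma at all — it simply paraphrases Lemma 4.1 of the cited Zhou--Long paper \cite{ZL15ARMA} and refers the reader there, where the statement is obtained through the index analysis of the decoupled Keplerian system. Your argument is therefore a genuinely different (and self-contained) route: you observe that $\cA(0,0,e)$ is two diagonal copies of the scalar operator $A(e)=-\frac{\d^2}{\d t^2}-1+\frac{1}{1+e\cos t}$, that $\phi=1+e\cos t$ is a strictly positive solution of $A(e)\phi=0$ for $0\le e<1$, and you then run the standard ground-state (Jacobi/Picone) factorization $A(e)(\phi u)=-\phi^{-1}(\phi^2 u')'$ to get $\langle A(e)y,y\rangle=\int_0^{2\pi}\phi^2|u'|^2\,\d t$ after the boundary term $\phi(0)^2(|\om|^2-1)u'(0)\bar u(0)$ is killed by $|\om|=1$ and the periodicity of $\phi$. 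All the computations check: $A(e)\phi=e\cos t-(1+e\cos t)+1=0$, the factorization identity reduces to $-2\phi'u'-\phi u''$ on both sides, and the equality case $u'\equiv 0$ correctly yields the kernel $c\phi$ for $\om=1$ and only $y\equiv 0$ for $\om\ne 1$ (a constant $u$ with $u=\om u$ forces $u=0$), which tensors up to exactly \eqref{eqn:ker.00e}. What your approach buys is elementarity and transparency — no appeal to the Morse index of the Kepler orbit or to Gordon's minimizing property is needed, only the explicit positive kernel element already visible in \eqref{eqn:ker.00e}; what it gives up is the broader index information that the cited computation produces along the way (e.g.\ the values of $i_\om$ for all $\om$), though for the statement of this lemma none of that is required.
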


Since that $1+\aa > 1$ for $\aa> \bb = 0$, by Proposition 2 of \cite{HO}, we have Proposition \ref{prop:posiv.def}.
\begin{proposition}\lb{prop:posiv.def}
	For $\aa> \bb = 0$, the operator $\cA(\aa,0,e)$ is positive definite for any $\om$ boundary conditions.
\end{proposition}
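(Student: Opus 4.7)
The plan is to write the operator as a sum
\[
\cA(\aa,0,e) \;=\; \cA(0,0,e) + V_\aa(t), \qquad V_\aa(t) \;:=\; \frac{\aa}{1+e\cos t}I_2,
\]
so that the added potential $V_\aa(t)$ is a strictly positive multiplication operator whenever $\aa>0$ and $e\in[0,1)$, since $1+e\cos t>0$ for all $t$. Lemma~\ref{lem:00e.pos} already controls $\cA(0,0,e)$, so the task reduces to showing that adding $V_\aa$ upgrades the sign in the single degenerate boundary case.

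For $\om\ne 1$, no upgrade is needed: Lemma~\ref{lem:00e.pos}(ii) gives that $\cA(0,0,e)$ is positive definite on $D(\om,2\pi)$, and adding the strictly positive $V_\aa$ preserves this immediately.

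For $\om=1$, the only obstruction to positive definiteness of $\cA(0,0,e)$ lies in its kernel, which by (\ref{eqn:ker.00e}) consists of functions $y(t)=(c_1(1+e\cos t),\,c_2(1+e\cos t))^T$. I would test $V_\aa$ on this kernel by the direct computation
\[
\langle V_\aa y, y\rangle_{L^2} \;=\; \aa\int_0^{2\pi}\frac{|y(t)|^2}{1+e\cos t}\,dt \;=\; \aa(|c_1|^2+|c_2|^2)\int_0^{2\pi}(1+e\cos t)\,dt \;=\; 2\pi\aa(|c_1|^2+|c_2|^2),
\]
which is strictly positive for $y\ne 0$. Combined with the non-negativity of $\cA(0,0,e)$ on $D(1,2\pi)$, this yields $\langle \cA(\aa,0,e)y,y\rangle >0$ for every nonzero $y\in D(1,2\pi)$.

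This argument is essentially the content of Proposition~2 of \cite{HO} cited in the statement, which codifies that a non-negative self-adjoint operator becomes positive definite once a strictly positive potential of this form is added. Because the kernel of $\cA(0,0,e)$ is completely explicit from the preceding lemma, there is no genuine obstacle; the whole proof collapses to the one-line integral computation above.
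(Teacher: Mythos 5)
Your argument is correct. The identity $\cA(\aa,0,e)=\cA(0,0,e)+\frac{\aa}{1+e\cos t}I_2$ follows at once from (\ref{cA}) with $\bb=0$, the perturbation is a strictly positive multiplication operator since $1+e\cos t>0$ for $e\in[0,1)$, and your integral over the explicit kernel (\ref{eqn:ker.00e}) evaluates correctly to $2\pi\aa(|c_1|^2+|c_2|^2)$; together with the fact that a non-negative self-adjoint operator has $\langle \cA(0,0,e)y,y\rangle>0$ off its kernel, this covers every nonzero $y$. The paper itself does not argue this way here: it disposes of Proposition \ref{prop:posiv.def} in one line by citing Proposition 2 of \cite{HO}, which packages exactly the statement you reprove (a non-negative operator plus a potential of the form $\frac{c}{1+e\cos t}I_2$ with $c>0$ is positive definite). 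Your self-contained version is essentially the same mechanism the authors later deploy explicitly in the proof of Theorem \ref{prop:pos}, where they decompose $\cA(3\bb,\bb,e)=\cA(0,0,e)+(\mbox{non-negative part})$ and check positivity on $\ker\cA(0,0,e)$; so your route buys a proof that does not lean on the external reference, at the cost of a short computation the paper chose to outsource. One stylistic remark: since $\frac{\aa}{1+e\cos t}I_2$ is strictly positive on all of $L^2$, the kernel restriction is not even needed — $\langle\cA(\aa,0,e)y,y\rangle\geq\frac{\aa}{1+e}\|y\|^2$ gives positive definiteness (with a uniform lower bound) in one step for every $\om$.
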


When $\aa,\bb >0$, the operator $\cA(\aa, \bb,e)$ can be written as
\bea
\cA(\aa, \bb,e)
=  \bb\l(\frac{\cA(\aa,0, e)}{\bb} +\frac{ 3 S(t)}{1+e\cos t}\r)= \bb \bar{\cA}(\aa,\bb, e),
\eea
where $\bar{\cA}(\aa,\bb, e)  = \frac{\cA(\aa,0, e)}{\bb} +\frac{ 3 S(t)}{1+e\cos t}$.
Note that for $\aa,\bb >0$ and $e\in (0,1)$, we have that
\be
\phi_{\om}(\cA(\aa , \bb,e)) = \phi_{\om} (\bar{\cA}(\aa,\bb, e)), \quad
\nu_{\om}(\cA(\aa , \bb,e)) = \nu_{\om} (\bar{\cA}(\aa,\bb, e)).
\ee

We here follow the proof of Proposition 3.5 of \cite{HLO} and obtain following monotonic of the index and nullity of $\cA(\aa,\bb, e)$.
\begin{lemma}\lb{lem:A.mono}
	(i) The $\om$-Maslov index  $\phi_{\om}(\cA(\aa,\bb, e))$ and the corresponding index $i_{\om}(\xi_{\aa,\bb,e})$ are non-increasing in $\aa \in (0,\infty)$ and they are non-decreasing in $\bb \in (0,\infty)$.

	(ii) The sum of $\om$-Maslov index and nullity  $\phi_{\om}(\cA(\aa,\bb, e)) + \nu_{\om}(\cA(\aa,\bb, e))$ and the corresponding $i_{\om}(\xi_{\aa,\bb,e})+ \nu_{\om}(\xi_{\aa,\bb,e})$ are non-increasing in $\aa \in (0,\infty)$ and are non-decreasing in $\bb \in (0,\infty)$.
\end{lemma}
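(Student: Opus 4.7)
\medskip

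\noindent\textbf{Proof proposal.} The plan is to exploit the factorization $\cA(\aa,\bb,e)=\bb\,\bar\cA(\aa,\bb,e)$ valid for $\bb>0$, which allows us to replace the operator $\cA$ (which is not monotone in $\bb$ because $S(t)$ is indefinite) by the rescaled operator $\bar\cA$. Since $\bb>0$, multiplication by the positive scalar $\bb$ preserves Morse indices and nullities under every $\om$-boundary condition, so
\[
\phi_\om(\cA(\aa,\bb,e))=\phi_\om(\bar\cA(\aa,\bb,e)),\qquad \nu_\om(\cA(\aa,\bb,e))=\nu_\om(\bar\cA(\aa,\bb,e)).
\]
It therefore suffices to prove both monotonicity statements for $\bar\cA$, and then transfer the conclusions to the symplectic path $\xi_{\aa,\bb,e}$ via Lemma~\ref{lem:2.1}.

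\medskip

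\noindent First I would compute the partial derivatives of $\bar\cA$ as operator-valued functions of the parameters. Directly from $\bar\cA(\aa,\bb,e)=\bb^{-1}\cA(\aa,0,e)+\tfrac{3S(t)}{1+e\cos t}$ and $\cA(\aa,0,e)=-\tfrac{d^2}{dt^2}I_2-I_2+\tfrac{1+\aa}{1+e\cos t}I_2$, one finds
\[
\frac{\pt\bar\cA}{\pt\aa}=\frac{1}{\bb(1+e\cos t)}\,I_2,\qquad
\frac{\pt\bar\cA}{\pt\bb}=-\frac{1}{\bb^{2}}\,\cA(\aa,0,e).
\]
The first derivative is a positive definite multiplication operator on $D(\om,2\pi)$ for every $\om\in\U$, so $\bar\cA$ is strictly monotone increasing in $\aa$ in the operator order. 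By Proposition~\ref{prop:posiv.def}, $\cA(\aa,0,e)$ is positive definite on $D(\om,2\pi)$ for all $\aa>0$ and all $\om\in\U$; hence the second derivative is strictly negative, so $\bar\cA$ is strictly monotone decreasing in $\bb$.

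\medskip

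\noindent Next I would invoke the standard quadratic-form comparison principle for self-adjoint operators: if $A\le B$ on a Hilbert space (in the quadratic-form sense), then $\phi(A)\ge\phi(B)$ and $\phi(A)+\nu(A)\ge\phi(B)+\nu(B)$, since the non-positive (resp.\ negative) subspace of $B$ embeds into that of $A$. Applying this fiberwise to the monotone families above yields that $\phi_\om(\bar\cA(\aa,\bb,e))$ and $\phi_\om(\bar\cA(\aa,\bb,e))+\nu_\om(\bar\cA(\aa,\bb,e))$ are non-increasing in $\aa\in(0,\infty)$ and non-decreasing in $\bb\in(0,\infty)$. Translating back via $\phi_\om=\phi_\om(\bar\cA)=\phi_\om(\cA)$ and then via the index identity $\phi_\om(\cA(\aa,\bb,e))=i_\om(\xi_{\aa,\bb,e})$, $\nu_\om(\cA(\aa,\bb,e))=\nu_\om(\xi_{\aa,\bb,e})$ from \eqref{eqn:ind.equ} gives both assertions (i) and (ii) for the symplectic path $\xi_{\aa,\bb,e}$.

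\medskip

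\noindent The main obstacle, and the reason the naive argument fails, is that $S(t)$ is indefinite so $\pt\cA/\pt\bb$ is not sign-definite; the rescaling trick $\cA=\bb\bar\cA$ circumvents this, but it relies crucially on the positive definiteness of $\cA(\aa,0,e)$ established in Proposition~\ref{prop:posiv.def}. Once that ingredient is available, the remaining work is routine: verify that $\bb>0$ preserves indices and nullities (immediate from the eigenvalue description of Morse index and from $\ker(\bb T)=\ker T$), and apply the form-comparison lemma which is standard (see, e.g., the proof of Proposition~3.5 in \cite{HLO}).
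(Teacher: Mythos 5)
Your proposal is correct and follows essentially the same route as the paper: direct operator monotonicity in $\aa$, the rescaling $\cA=\bb\,\bar\cA$ combined with the non-negativity of $\cA(\aa,0,e)$ to get monotonicity in $\bb$, and the standard form-comparison principle for the Morse index and for index plus nullity, transferred to $i_\om(\xi_{\aa,\bb,e})$ via (\ref{eqn:ind.equ}). Your write-up is in fact slightly more careful than the paper's (whose displayed inequality (\ref{eqn:4.6}) has its direction reversed relative to the stated conclusion), and you correctly identify that the whole argument hinges on Proposition \ref{prop:posiv.def}.
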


\begin{proof}
	Note that for given $\bb_0>0$ and $e_0\in [0,1)$, we have that in $D(2\pi, \om)$ for $\aa_2>\aa_1\geq 0$, $e_0 \in [0,1)$ and any $\om\in \U$,
	\be
	\cA(\aa_2, \bb_0,e_0) - \cA(\aa_1, \bb_0,e_0) = \frac{\aa_2-\aa_1}{(1+e_0\cos t)} I_2 > 0. \lb{eqn:4.5}
	\ee
	Therefore, we have that the operator is increasing respect to $\aa \in (0,\infty)$.

	When $\bb_2> \bb_1 \geq 0$, since $\cA(\aa,0,e) \geq 0$, then
	\be
	\phi_{\om} (\bar{\cA}(\aa,\bb_2, e)) \leq \phi_{\om} (\bar{\cA}(\aa,\bb_1, e)),\lb{eqn:4.6}
	\ee
	where the equality holds only if $\aa = 0$ and $\om =1$, i.e., on $D(1,2\pi)$. Then we have (i) holds.

	By the monotonic of the operator $\cA(\aa, \bb,e)$ and  $\bar{\cA}(\aa,\bb, e)$ in (\ref{eqn:4.5}) and (\ref{eqn:4.6}), we have that (ii) holds. Readers may also refer the proofs of  Lemma 4.4 and Corollary 4.5 in \cite{HLS} to obtain (ii) holds
\end{proof}

When $\aa> \bb =0$ and $e \in [0,1)$, we have that the operator $\cA(\aa,0,e)$ is given by
\bea
\cA(\aa,0,e) = -\frac{\d^2}{\d t^2}I_2  -I_2 + \frac{1+\aa}{1+e\cos t}I_2.
\eea

\begin{corollary}
	For $\aa\geq\bb>0$ and $e\in [0,1)$, if $	3\bb-\aa>1+2C_n$, we have
	\begin{equation}
	i_{\om}(\xi_{\aa,\bb,e})\ge n,
	\end{equation}
	where $C_n$ is a constant depending only on $n$.
	Then as $3\bb-\aa$ increases to infinity, the index increases to infinity.
\end{corollary}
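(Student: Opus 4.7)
For each $n\in\N$, I will exhibit an $n$-dimensional subspace of $D(\om,2\pi)$ on which the quadratic form associated to $\cA(\aa,\bb,e)$ is negative definite. By \eqref{eqn:ind.equ} this forces $i_\om(\xi_{\aa,\bb,e}) = \phi_\om(\cA(\aa,\bb,e)) \geq n$, and letting $3\bb-\aa \to \infty$ drives the index to $\infty$. The key observation is that the pointwise matrix $(1+\aa)I_2 + 3\bb S(t)$ admits the $2\pi$-periodic unit vector $e_-(t) := (-\sin t, \cos t)^T$ as an eigenvector with eigenvalue $1+\aa-3\bb$, which becomes very negative when $3\bb-\aa$ is large. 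Accordingly I will use test functions of the form $y(t) = \phi(t)\,e_-(t)$ with scalar $\phi$ satisfying $\phi(2\pi) = \om\phi(0)$; since $e_-$ is $2\pi$-periodic, any such $y$ automatically lies in $D(\om,2\pi)$.

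To execute this, first compute $\langle \cA(\aa,\bb,e) y, y\rangle$ for $y=\phi e_-$ using $e_-'(t) = -e_+(t)$, $e_-''(t) = -e_-(t)$, and the orthogonality $e_+(t)\cdot e_-(t) = 0$ (where $e_+(t) = (\cos t,\sin t)^T$). The cross term $2\phi'\, e_+$ appearing in $-y''$ is annihilated by the inner product with $e_-$, and integration by parts over $[0,2\pi]$ (boundary terms vanish by the $\om$-twisted condition) gives
\begin{equation*}
\langle \cA(\aa,\bb,e)\, y, y\rangle \;=\; \int_0^{2\pi} \l[|\phi'(t)|^2 - \frac{3\bb-\aa-1}{1+e\cos t}\,|\phi(t)|^2\r]\, dt.
\end{equation*}
Next, restrict $\phi$ to the span of the $n$ lowest-eigenvalue eigenfunctions of $-d^2/dt^2$ on the scalar version of $D(\om,2\pi)$; these take the form $e^{i\mu_k t}$ with $\mu_k\in\{k+\theta_\om/(2\pi):k\in\Z\}$ when $\om = e^{i\theta_\om}$, are $L^2$-orthogonal, and the maximum $C_n$ of the $n$ smallest values of $\mu_k^2$ is bounded by a constant depending only on $n$ (uniformly in $\om\in\U$). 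For any $\phi$ in this span, $\int_0^{2\pi} |\phi'|^2 \leq C_n \int_0^{2\pi} |\phi|^2$, while the uniform bound $(1+e\cos t)^{-1}\geq(1+e)^{-1} > 1/2$ for $e\in[0,1)$ yields
\begin{equation*}
\langle \cA(\aa,\bb,e)\, y, y\rangle \;\leq\; \l(C_n - \tfrac{1}{2}(3\bb-\aa-1)\r)\int_0^{2\pi}|\phi|^2\, dt,
\end{equation*}
which is strictly negative whenever $3\bb-\aa > 1 + 2C_n$. This establishes the desired lower bound on $i_\om(\xi_{\aa,\bb,e})$.

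There is no real obstacle here; the argument is a direct min-max once the ansatz $y = \phi e_-$ is spotted. The two technical points to watch are: (i) $y = \phi e_-$ satisfies the vector $\om$-boundary condition precisely because $e_-(t)$ is $2\pi$-periodic, so the twist is carried entirely by the scalar $\phi$; and (ii) the constant $C_n$ must be chosen uniformly in $\om$ and $e$, which follows respectively from the elementary uniform estimate on the spectrum of $-d^2/dt^2$ under all twisted periodic boundary conditions, and from the explicit bound $(1+e)^{-1} > 1/2$ valid on the entire parameter range $e\in[0,1)$.
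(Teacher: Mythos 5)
Your proof is correct and follows essentially the same route as the paper's: both use test functions of the form $R(t)(0,x(t))^T$ (your $\phi\,e_-$ is exactly this), reduce the quadratic form to $\int_0^{2\pi}\bigl[|\phi'|^2+\tfrac{1+\aa-3\bb}{1+e\cos t}|\phi|^2\bigr]\,\d t$, and conclude by a min--max argument on an $n$-dimensional subspace. The only difference is technical: the paper multiplies a fixed span of $\cos(it)$ by a smooth cutoff $\eta$ vanishing to infinite order at the endpoints so as to land in $\ol{D}(\om,2\pi)$ for every $\om$ at once, whereas you carry the $\om$-twist in the scalar factor via the modes $e^{\sqrt{-1}\mu_k t}$, which gives exact $L^2$-orthogonality and the explicit uniform bound $C_n\le n^2$ -- a slightly cleaner quantification of the same constant.
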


\begin{proof}
	We firstly define a space
	\begin{equation}\label{En}
	E_n=\span\left\{
	R(t)\begin{pmatrix}
	0 \\ \cos it
	\end{pmatrix}\;\Big|\;0\le t\le 2\pi,\;i=1,2,...,n\right\}.
	\end{equation}
	Then we have $\dim E_n=n$. Let $\eta$ be a nonzero $C^{\infty}$ function such that
	$\eta^{(m)}(0)=\eta^{(m)}(2\pi)=0$ for any integer $m\ge0$. Then we have
	$\eta E_n\subseteq \ol{D}(\om,2\pi)$ for any $\om\in\U$.

	For $e\in [0,1)$, $0\ne y(t)=R(t)(0,x(t))^T\in E_n$, we have
	\bea
	\<\cA(\aa,\bb,e)\eta{y},\eta{y}\>
	&=&\left<\left[-\frac{d^2}{dt^2}I_2-I_2+RK_{\bb,e}(t)R^T\right]R
	\begin{pmatrix}
		0 \\ \eta{x}
	\end{pmatrix},R
	\begin{pmatrix}
		0 \\ \eta{x}
	\end{pmatrix}\right>
	\nonumber\\
	&=&\l<\left[R
	\begin{pmatrix}
		0 \\ \eta{x}
	\end{pmatrix}
	-2RJ_2
	\begin{pmatrix}
		0 \\ (\eta{x})'
	\end{pmatrix}
	+R
	\begin{pmatrix}
		0 \\  -(\eta{x})''+({1+\aa-3\bb\over1+e\cos t}-1)(\eta{x})
	\end{pmatrix}\right],
	R \begin{pmatrix}
		0 \\ \eta{x}
	\end{pmatrix}\r>
	\nonumber\\
	&=&\l<R
	\begin{pmatrix}
		2(\eta{x})' \\ -(\eta{x})''+{1+\aa-3\bb\over1+e\cos t}\eta{x}
	\end{pmatrix}
	,R\begin{pmatrix}
		0 \\ \eta{x}
	\end{pmatrix}\r>
	\nonumber\\
	&=&\int_0^{2\pi}[(\eta(t)x(t))']^2\d t+(1+\aa-3\bb)\int_0^{2\pi}{(\eta(t)x(t))^2\over1+e\cos t}\d t
	\nonumber\\
	&\le&(C_n+{1+\aa-3\bb\over1+e})\int_0^{2\pi}(\eta(t)x(t))^2\d t,
	\eea
	where we have used the property $\eta(t)x(t)|_{t=0}=0$,
	and $C_n$ is a constant which depend on space $E_n$ because of the finite dimension of $E_n$.
	When $\bb>\frac{1}{3}(1+\aa+(1+e)C_n)$, we obtain that $\cA(\aa,\bb,e)$ is negative definite on the subspace $\eta E_n$ of $\ol{D}(\om,2\pi)$. Hence
	\begin{equation}
	i_{\om}(\xi_{\aa,\bb,e})\ge n,\quad{\rm if}\;
	3\bb-\aa>1+2C_n \quad{\rm and} \;
	e \in [0,1). \lb{eqn:cond}
	\end{equation}
	Therefore, for any given $n_0\in \N$, if $\aa\geq\bb>0$ such that $3\bb-\aa>1+2C_n$, then $i_{\om}(\xi_{\aa,\bb,e})\ge n_0$. Then this corollary holds.
\end{proof}

\setcounter{equation}{0}
\section{Stability in the Hyperbolic Region}\lb{sec:hyper}
In this section, we will prove that the operator $ \cA(\aa,\bb,e)$  is positive definite with zero nullity when $\aa \geq 3\bb>0$ and $e\in [0,1)$.

 \begin{theorem}\lb{prop:pos}
     For any $\om$ boundary condition, $ \cA(\aa,\bb,e)$ is positive definite  for any $e\in [0,1)$ when  $\aa \geq 3\bb >0$. Furthermore, for any $\aa \geq 3\bb >0$, $e\in [0,1)$ and $\om \in \U$,
     \be
     i_{\om}(\xi_{\aa,\bb,e}) = 0, \quad \nu_{\om}(\xi_{\aa,\bb,e}) = 0.
     \ee
     Then $\xi_{\aa,\bb,e}(2\pi)$ possesses two pairs of hyperbolic eigenvalues and it is linearly unstable.
 \end{theorem}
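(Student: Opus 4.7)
The plan is to prove positive definiteness of $\cA(\aa,\bb,e)$ by comparing it to the baseline operator $\cA(0,0,e)$ studied in Lemma \ref{lem:00e.pos} and Proposition \ref{prop:posiv.def}, then invoke the identification (\ref{eqn:ind.equ}) to translate positivity into vanishing $\omega$-Maslov indices and nullities, from which hyperbolicity of $\xi_{\aa,\bb,e}(2\pi)$ follows automatically.

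First I would write the difference
\begin{equation*}
\cA(\aa,\bb,e) - \cA(0,0,e) \;=\; \frac{R(t)\,K_{\bb,e}\,R(t)^T - I_2}{1+e\cos t}.
\end{equation*}
Since $R(t)\,K_{\bb,e}\,R(t)^T$ is orthogonally similar to $K_{\bb,e}=\diag(1+\aa+3\bb,\,1+\aa-3\bb)$, its pointwise eigenvalues are $1+\aa\pm 3\bb$, and hence the matrix $R(t)K_{\bb,e}R(t)^T-I_2$ has eigenvalues $\aa\pm 3\bb$. Under the hypothesis $\aa\ge 3\bb>0$ both are non-negative, so the difference is a non-negative multiple of the identity at every $t$. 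Combined with $\cA(0,0,e)\ge 0$ on $D(1,2\pi)$ (Lemma \ref{lem:00e.pos}(i)) and $\cA(0,0,e)>0$ on $D(\om,2\pi)$ for $\om\ne 1$ (Lemma \ref{lem:00e.pos}(ii)), this immediately gives $\cA(\aa,\bb,e)\ge \cA(0,0,e)\ge 0$ on every $D(\om,2\pi)$, and strict positivity whenever $\om\ne 1$.

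The delicate step is to promote $\ge 0$ to $>0$ on $D(1,2\pi)$, i.e.\ to rule out a zero of $\<\cA(\aa,\bb,e)\xi,\xi\>$ on the one-dimensional family $\ker\cA(0,0,e)=\{(1+e\cos t)(c_1,c_2)^T\}$ given by (\ref{eqn:ker.00e}). On such a $\xi\ne 0$ the first summand vanishes, so I must show
\begin{equation*}
\int_0^{2\pi}\frac{\xi^T\,[R(t)K_{\bb,e}R(t)^T-I_2]\,\xi}{1+e\cos t}\,\d t \;>\;0.
\end{equation*}
When $\aa>3\bb$ this is clear from the uniform lower bound $(\aa-3\bb)I_2$. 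The borderline case $\aa=3\bb$ is where the real work is: here $R(t)K_{\bb,e}R(t)^T-I_2$ is positive semidefinite but has a one-dimensional kernel spanned by the rotating vector $R(t)(0,1)^T=(-\sin t,\cos t)^T$, so a constant direction $(c_1,c_2)\ne 0$ cannot lie in this kernel for all $t$. Consequently the integrand is strictly positive on a set of positive measure, giving the required strict inequality. This rotating-kernel argument is the main technical obstacle and the one place where the assumption $\aa\ge 3\bb$ is used sharply.

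Having established that $\cA(\aa,\bb,e)$ is positive definite on $D(\om,2\pi)$ for every $\om\in\U$, I read off $\phi_{\om}(\cA(\aa,\bb,e))=0$ and $\nu_{\om}(\cA(\aa,\bb,e))=0$. The identification (\ref{eqn:ind.equ}) then yields $i_{\om}(\xi_{\aa,\bb,e})=0$ and $\nu_{\om}(\xi_{\aa,\bb,e})=0$ for all $\om\in\U$. Since $\nu_{\om}(\xi_{\aa,\bb,e}(2\pi))=0$ for every $\om$ on the unit circle, the monodromy matrix has no eigenvalues on $\U$; all four Floquet multipliers are therefore hyperbolic and $\xi_{\aa,\bb,e}(2\pi)$ is linearly unstable, completing the proof.
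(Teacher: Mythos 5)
Your proof is correct and follows essentially the same route as the paper: both isolate the non-negative baseline operator $\cA(0,0,e)$, reduce strict positivity on $\overline{D}(1,2\pi)$ to its kernel $\{(1+e\cos t)(c_1,c_2)^T\}$, and exploit the rotating rank-one structure of $R(t)K_{\beta,e}R(t)^T-I_2$ at the borderline $\alpha=3\beta$ to obtain a strictly positive integral, before passing to the indices via (\ref{eqn:ind.equ}). The only cosmetic differences are that you handle $\alpha>3\beta$ directly through the uniform lower bound $(\alpha-3\beta)I_2$ where the paper invokes its monotonicity lemma, and your positive-measure justification of the borderline integral is, if anything, more careful than the paper's displayed inequality (minor slips aside: that kernel is two-dimensional over $\C$, and the pointwise difference is positive semidefinite rather than a multiple of the identity).
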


\begin{proof}
The operator $\cA(\aa,\bb,e) $ can be written as
\bea
\cA(\aa,\bb,e) &=& -\frac{\d^2}{\d t^2}I_2  -I_2 + \frac{1}{1+e\cos t}(
(1+\aa)I_2 + 3\bb S(t)) \nn \\
&=&-\frac{\d^2}{\d t^2}I_2  -I_2 + \frac{1}{1+e\cos t}\l((1+\aa)I_2+3\bb\l(-I_2+2R(t)\begin{pmatrix}
    1& 0 \\
    0 & 0
\end{pmatrix}R(t)^T\r)\r)\nn\\
&=&-\frac{\d^2}{\d t^2}I_2  -I_2 + \frac{1}{1+e\cos t}\l((1+\aa-3\bb)I_2+6\bb R(t)\begin{pmatrix}
    1& 0 \\
    0 & 0
\end{pmatrix}R(t)^T\r) .
\eea
where $S(t) $ is given by (\ref{cA}).
When $\aa = 3\bb$, the operator $\cA(\aa,\bb,e)$ can be given by
\bea
\cA(3\bb,\bb,e)
&=&-\frac{\d^2}{\d t^2}I_2  -I_2 + \frac{1}{1+e\cos t}\l(I_2+6\bb R(t)\begin{pmatrix}
    1& 0 \\
    0 & 0
\end{pmatrix}R(t)^T\r)\nn \\
&=& \cA(0,0,e)+ \frac{6\bb}{1+e\cos t} R(t)\begin{pmatrix}
    1& 0 \\
    0 & 0
\end{pmatrix}R(t)^T.
\eea
By the Lemma 5.3, the operator $\cA(0,0,e)$ is non-negative definite on $\ol{D(1,2\pi)}$ with the kernel
\be
\ker \cA(0,0,e) = \{( c_1(1+e\cos t),c_2(1+e\cos t) )^T|c_1,c_2 \in \C  \}
\ee
and it is positive definite on $\ol{D(\om,2\pi)}$ where $\om \in \U\bs\{1\}$.
The operator $\frac{6\bb}{1+e\cos t} R(t)\begin{pmatrix}
    1& 0 \\
    0 & 0
\end{pmatrix}R(t)^T$ is also non-negative definite on $\ol{D(\om,2\pi)}$ for $\om \in \U$.

Therefore, we only need to verify that $\cA(3\bb,\bb,e)$ is positive definite on $\ker \cA(0,0,e)$ for any $e\in [0,1)$.
Let $c_1$ and $c_2$ are complex numbers and
\be
x_0 = ( c_1(1+e\cos t),c_2(1+e\cos t) )^T.
\ee
Therefore, we have that $\<\cA(3\bb,\bb,e)x_0, x_0\> > 0$ for any $\om$ boundary condition by
\bea
\<\cA(3\bb,\bb,e)x_0, x_0\>&=&
\left<\frac{6\bb}{1+e\cos t}R(t)\begin{pmatrix}
    1& 0 \\
    0 & 0
\end{pmatrix}R(t)^T x_0,x_0\right> \nn\\
&>& \left<R(t)\begin{pmatrix}
    1& 0 \\
    0 & 0
\end{pmatrix}R(t)^T \begin{pmatrix} c_1 \\ c_2\end{pmatrix},
\begin{pmatrix} c_1 \\ c_2 \end{pmatrix}\right>\nn\\
&=& \pi (|c_1|^2+|c_2|^2), \lb{3bb.bb.positi}
\eea
where the second inequality holds because $6\bb(1+e\cos t) > 0$ for all $\bb>0$ and $e\in[0,1)$.

Additionally, by (i) of Lemma \ref{lem:A.mono}, when $\aa > 3\bb>0$ and $e\in [0,1)$ , the operator  $\cA(\aa,\bb,e)$ is positive definite for any $\om$ boundary condition. By (\ref{eqn:ind.equ}), this proposition holds.
\end{proof}

\setcounter{equation}{0}
\setcounter{figure}{0}
\section{Study in the Non-Hyperbolic Region}
Note that $T$ given by (\ref{eqn:trans.T}) is invertible whose invert is denoted by $T^{-1}$. The corresponding fundamental solution is defined by $\xi_{\aa,\bb,e}(t) =\td\xi_{\td\aa,\td\bb,e}(t)$ and the corresponding operator $\tilde\cA(\tilde{\aa},\tilde{\bb},e)=\cA(\aa,\bb,e)$ is given
by
\be
\tilde\cA(\tilde{\aa},\tilde{\bb},e) = -\frac{\d^2}{\d t^2}I_2  -I_2 + \frac{1}{2(1+e\cos t)}(3(\td\aa+1)(I_2+S(t)) + \td\bb(I_2+3S(t)).
\ee
We divide $(\td\aa,\td\bb,e) \in [0,1)\times[-1,\infty)\times [0,1)$ into two regions $\cR_{NH}$ and  $\cR_{EH}$ by
\bea
\cR_{NH}  &=&\{(\td\aa,\td\bb,e)| -1< \tilde{\bb} < 0,\tilde{\aa}>0,  e\in[0,1)\};\\
\cR_{EH} &=&\{(\td\aa,\td\bb,e)| \tilde{\bb} > 0, \tilde{\aa}>0, e\in[0,1)\}.
\eea

By the affine transformation $T$, we have the direct results from Lemma \ref{lem:A.mono} of the index and nullity of the operator $\td\cA(\td\aa,\td\bb, e)$.
\begin{lemma}\lb{lemma.index.tdphi}
	(i) The $\om$-Maslov index  $\phi_{\om}(\td\cA(\td\aa,\td\bb, e))$ and the corresponding index $i_{\om}(\td\xi_{\td\aa,\td\bb,e})$ are non-increasing in $\td\aa \in (0,\infty)$ and they are non-decreasing in $\td\bb \in (-1,\infty)$.

	(ii) The sum of $\om$-Maslov index and nullity  $\td\phi_{\om}(\td\cA(\td\aa,\td\bb, e)) + \nu_{\om}(\td\cA(\td\aa,\td\bb, e))$ and the corresponding $i_{\om}(\td\xi_{\td\aa,\td\bb,e})+ \nu_{\om}(\td\xi_{\td\aa,\td\bb,e})$ are non-increasing in $\td\aa \in (0,\infty)$ and they are non-decreasing in $\td\bb \in (-1,\infty)$.
\end{lemma}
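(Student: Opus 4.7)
The plan is to mirror the two-step argument used for Lemma~\ref{lem:A.mono}, but applied to the reparametrized operator
\[
\tilde\cA(\tilde\aa,\tilde\bb,e)
= -\frac{\d^2}{\d t^2}I_2 - I_2
+ \frac{1}{2(1+e\cos t)}\bigl(3(\tilde\aa+1)(I_2+S(t)) + \tilde\bb(I_2+3S(t))\bigr).
\]
Since $\tilde\xi_{\tilde\aa,\tilde\bb,e}=\xi_{\aa,\bb,e}$ under the invertible affine change of variables~(\ref{eqn:trans.T}), the identity $\phi_\om(\tilde\cA)=i_\om(\tilde\xi)$ and $\nu_\om(\tilde\cA)=\nu_\om(\tilde\xi)$ from Lemma~\ref{L2.3} reduces both claims to proving the stated monotonicity for $\phi_\om(\tilde\cA)$ and $\phi_\om(\tilde\cA)+\nu_\om(\tilde\cA)$ as operators on $\ol D(\om,2\pi)$.

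For the $\tilde\aa$ direction I would mimic the first half of Lemma~\ref{lem:A.mono} directly. Writing $S(t)=R(t)\diag(1,-1)R(t)^T$ one obtains
\[
\partial_{\tilde\aa}\tilde\cA
=\frac{3(I_2+S(t))}{2(1+e\cos t)}
=\frac{3}{1+e\cos t}\,R(t)\diag(1,0)R(t)^T\ \ge\ 0,
\]
so $\tilde\cA$ is monotone non-decreasing in $\tilde\aa$ under the operator ordering. The usual min–max consequence is that both the Morse index and the sum $\phi_\om+\nu_\om$, which count respectively the strictly negative and the non-positive eigenvalues of $\tilde\cA$ on $\ol D(\om,2\pi)$, are non-increasing in $\tilde\aa$.

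The $\tilde\bb$ direction is the essential step, because $I_2+3S(t)=R(t)\diag(4,-2)R(t)^T$ is indefinite and direct operator ordering therefore fails. The trick, which parallels the factorization argument in the proof of Lemma~\ref{lem:A.mono} for $\bb$, is to exploit the fact that at $\tilde\bb=-1$ one has $\aa=3\tilde\aa/2$ and $\bb=\tilde\aa/2$, i.e.\ $\aa=3\bb$, so Theorem~\ref{prop:pos} yields
\[
\tilde\cA(\tilde\aa,-1,e)=\cA(3\tilde\aa/2,\tilde\aa/2,e)\ >\ 0\qquad\text{for every }\tilde\aa>0,\ e\in[0,1).
\]
Using this I would rewrite
\[
\tilde\cA(\tilde\aa,\tilde\bb,e)
=(\tilde\bb+1)\left[\frac{\tilde\cA(\tilde\aa,-1,e)}{\tilde\bb+1}+\frac{I_2+3S(t)}{2(1+e\cos t)}\right]
=:(\tilde\bb+1)\,\bar\cA(\tilde\aa,\tilde\bb,e).
\]
For $\tilde\bb>-1$ the scalar $\tilde\bb+1$ is positive, hence does not change the Morse index or the nullity, so $\phi_\om(\tilde\cA)=\phi_\om(\bar\cA)$ and $\nu_\om(\tilde\cA)=\nu_\om(\bar\cA)$. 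Inside the bracket, the only $\tilde\bb$-dependent term is $\tilde\cA(\tilde\aa,-1,e)/(\tilde\bb+1)$, which is a positive-definite operator divided by an increasing positive scalar, and is therefore monotone non-increasing in $\tilde\bb$. Hence $\bar\cA$ decreases with $\tilde\bb$, giving both $\phi_\om(\tilde\cA)$ and $\phi_\om(\tilde\cA)+\nu_\om(\tilde\cA)$ non-decreasing in $\tilde\bb\in(-1,\infty)$.

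The single obstacle is the $\tilde\bb$ step, where direct differentiation produces an indefinite sign; it is overcome by identifying the frontier $\tilde\bb=-1$ with the line $\aa=3\bb$ and then invoking the hyperbolic-region positive-definiteness from Theorem~\ref{prop:pos} to supply the positive factor needed for the rescaling trick. Everything else is the routine translation of Lemma~\ref{lem:A.mono} through $T$.
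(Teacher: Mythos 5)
Your proof is correct. The paper itself disposes of this lemma in one sentence (``by the affine transformation $T$, the results follow directly from Lemma \ref{lem:A.mono}''), but that citation alone is not literally sufficient: inverting (\ref{eqn:trans.T}) gives $\aa=\frac{3\td\aa+\td\bb+1}{2}$ and $\bb=\frac{\td\aa+\td\bb+1}{2}$, so increasing either $\td\aa$ or $\td\bb$ increases \emph{both} $\aa$ and $\bb$, and the two monotonicities of Lemma \ref{lem:A.mono} (non-increasing in $\aa$, non-decreasing in $\bb$) then pull in opposite directions. What is actually needed is exactly what you supply: a re-run of the Lemma \ref{lem:A.mono} argument in the tilde variables. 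Your $\td\aa$-step is the verbatim analogue of (\ref{eqn:4.5}), with the directional derivative $\frac{3}{2(1+e\cos t)}(I_2+S(t))=\frac{3}{1+e\cos t}R(t)\,\diag(1,0)\,R(t)^T\ge 0$, and your $\td\bb$-step replaces the anchor ``$\cA(\aa,0,e)\ge 0$'' of Proposition \ref{prop:posiv.def} by the anchor $\td\cA(\td\aa,-1,e)=\cA(3\td\aa/2,\td\aa/2,e)>0$ supplied by Theorem \ref{prop:pos} on the line $\aa=3\bb$, which is precisely the positive factor the rescaling trick $\td\cA=(\td\bb+1)\bar\cA$ requires; since $\td\bb+1>0$ the index and nullity are unchanged, and $\bar\cA$ is operator-decreasing in $\td\bb$. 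Both routes reach the same conclusion, but yours makes explicit the one genuinely new ingredient (positive definiteness at $\td\bb=-1$) that the paper's one-line derivation leaves implicit, and it correctly handles the indefiniteness of $I_2+3S(t)=R(t)\,\diag(4,-2)\,R(t)^T$ that blocks a naive operator-ordering argument in the $\td\bb$ direction.
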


\subsection{The index in $\ol{\cR_{NH}}$}
In this part, the $1$-index and nullity of $\td{\cA}(\aa,\bb,e)$ will be discussed when $(\td\aa,\td\bb,e) \in \ol{\cR_{NH}}$.
Note that the stability of $\td\bb\leq -1$ has been discussed in Section \ref{sec:hyper}.

\begin{proposition}\lb{prop:bnd.ind}
    For any $\td\aa \geq 0$, $-1<\td\bb \leq  0$, and $e\in [0,1)$,
    the $1$-index and nullity satisfy
    \be
    \phi_{1}(\td\cA(\td\aa,\td\bb,e) ) = 0, \quad \nu_1(\td\cA(\td\aa,\td\bb,e) ) =
    \begin{cases}
        3, \; \mbox{if}\; \td\aa = 0, \td\bb =0;\\
        1, \; \mbox{if}\; \td\aa >0, \td\bb =0;\\
        0, \; \mbox{others}.
    \end{cases}
    \ee
\end{proposition}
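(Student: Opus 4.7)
The plan is to analyse the kernel of $\td\cA(\td\aa,\td\bb,e)$ directly through a rotated-frame reduction, and then to pin down $\phi_1$ using the monotonicity statements of Lemma~\ref{lemma.index.tdphi} combined with the anchor values at the slices $\td\bb=-1$ (Theorem~\ref{prop:pos}) and $e=0$ (Theorem~\ref{thm:gen.ind.nul}(i)). Writing $\xi(t)=R(t)y(t)$ and using $R'=JR$ together with the dimension-two identity $RJ=JR$, a routine conjugation turns the kernel equation $\cA(\aa,\bb,e)\xi=0$ into the coupled system with constant diagonal matrix potential
\[
-y_1''+2y_2'+\frac{\lambda_3}{1+e\cos t}\,y_1=0,\qquad -y_2''-2y_1'+\frac{\lambda_4}{1+e\cos t}\,y_2=0,
\]
where $\lambda_3=1+\aa+3\bb>0$ and $\lambda_4=1+\aa-3\bb=-\td\bb\ge 0$ throughout the range $-1<\td\bb\le 0$.

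For the open slab $-1<\td\bb<0$ (so $\lambda_4>0$) the kernel should be trivial. The cleanest route is to invoke Theorem~\ref{thm:REH.norm}: the $1$-degenerate surfaces $\Ga_n$ all lie in $\cR_{EH}=\{\td\bb>0\}$ and cannot enter the slab, so on each fiber $\nu_1$ is continuous in $e$ and equals its $e=0$ value $0$ from Theorem~\ref{thm:gen.ind.nul}(i). On the boundary $\td\bb=0$ the situation changes: since $\lambda_4=0$, the second equation integrates to $y_2'=-2y_1+C_1$ with $C_1=\pi^{-1}\int_0^{2\pi}y_1\,dt$ enforced by periodicity of $y_2$, and substitution into the first equation reduces the problem to the inhomogeneous Hill equation
\[
y_1''+\Big(4-\frac{3(1+\td\aa)}{1+e\cos t}\Big)y_1=-2C_1,\qquad C_1=\frac{1}{\pi}\int_0^{2\pi}y_1\,dt.
\]
The choice $y_1\equiv 0$ with $y_2$ a free constant is always a solution, contributing one kernel dimension; this is the generic behaviour on the surface $\Ga_0$. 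For $\td\aa>0$ I plan to show that no further periodic solution exists, giving $\nu_1=1$. For the distinguished point $(\td\aa,\td\bb)=(0,0)$ (i.e.\ $(\aa,\bb)=(\tfrac12,\tfrac12)$), I plan to exhibit two additional independent periodic solutions reflecting the extra symmetry at the intersection $\cR_{3,0}^*\cap\{\aa=\bb\}$, matching the count $\nu_1=3$ established at $e=0$ in Section~\ref{sec:e=0}.

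With $\nu_1$ pinned in each of the three scenarios, the identity $\phi_1=0$ follows from Lemma~\ref{lemma.index.tdphi}: in each case $\nu_1$ is constant in $e$, so $\phi_1$ (which can only jump across $\nu_1>0$ surfaces as $e$ varies) is constant in $e$ as well, equal to its $e=0$ value~$0$ provided by Theorem~\ref{thm:gen.ind.nul}(i), since the whole range $-1<\td\bb\le 0$ lies in $\cR_1\cup\cR_2\cup\cR_{3,0}^*\cup\cR_4$. The non-decreasing monotonicity of $\phi_1+\nu_1$ in $\td\bb$ then provides the compatibility check at the boundary $\td\bb=0$.

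\textbf{Main obstacle.} The analytic content concentrates on the kernel count at $\td\bb=0$ for $e>0$. For $\td\aa>0$, ruling out extra periodic solutions of the Hill equation requires a careful quadratic-form estimate pairing the equation against $y_1$, integrating by parts, and combining the mean-value constraint $C_1=\pi^{-1}\int y_1$ with the strict positivity generated by $\td\aa>0$, e.g.\ via a Wirtinger-type bound on $\int(y_1')^2$. For $(\td\aa,\td\bb)=(0,0)$ the two additional periodic solutions are expected to encode the Keplerian conservation laws that survive at the symmetric point $\cR_{3,0}^*\cap\{\aa=\bb\}$; the plan is to exhibit them by an explicit ansatz adapted to the Kepler zero-modes of Lemma~\ref{lem:00e.pos} or by transporting those modes through the symplectic reduction of Section~\ref{sec:2}. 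This is the step where the specific distinguished nature of the point $(0,0)$ is genuinely used.
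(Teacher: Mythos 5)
Your proposal is a skeleton with the right anchor points but two genuine gaps. First, for the open slab $-1<\td\bb<0$ you cannot get $\nu_1=0$ by invoking Theorem \ref{thm:REH.norm}: the surfaces $\Ga_n$ are \emph{defined} (via (\ref{bb_s})--(\ref{eqn:td.bb.n})) only as minimal degenerate values with $\td\bb>0$, so their confinement to $\cR_{EH}$ says nothing about possible $1$-degeneracy at $\td\bb<0$; worse, the $\cR_{EH}$ theory in Section 7 is built on top of this very proposition (e.g.\ Proposition \ref{Th:odd.indices} cites it), so the argument is circular. The paper's route is different and is the ingredient you are missing: monotonicity of $\phi_1+\nu_1$ in $\td\bb$ (Lemma \ref{lemma.index.tdphi}(ii)) together with the value $\phi_1+\nu_1=1$ on the boundary $\td\bb=0$ gives $\nu_1\le 1$ in the slab, and then Theorem \ref{Th:multiplicity} (every $1$-degeneracy for $\td\bb\neq 0$ has \emph{even} multiplicity, proved independently by the Fourier-recurrence construction of a second kernel element) forces $\nu_1=0$. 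Without a parity argument of this kind, $\nu_1=1$ in the slab is not excluded by anything you propose.

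Second, the analytic core at $\td\bb=0$ is left as a plan: the Wirtinger-type estimate ruling out extra periodic solutions of your Hill equation for $\td\aa>0$, and the two additional kernel elements at $(\td\aa,\td\bb)=(0,0)$, are precisely the hard steps and are not carried out. The paper sidesteps both: for $\td\aa>0$ it uses the operator comparison $\tilde{\cA}(\td\aa,0,e)\ge \tilde{\cA}\bigl(\frac{\td\aa-e_0}{1+e_0},0,0\bigr)$ (inequality (\ref{e.to.0})), which transports the already-computed circular ($e=0$) index and nullity to all $e$, and combines this upper bound $\nu_1\le 1$ with the explicit kernel element $R(t)(0,c)^T$ to conclude $\nu_1=1$; for the origin it identifies $\td\cA(0,0,e)$ with the elliptic Lagrangian operator at $\bb_L=0$ of \cite{HLS}, whose three-dimensional kernel is known for all $e$. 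Also note two smaller points: your forcing term should be $+2C_1$, not $-2C_1$, and your derivation of $\phi_1=0$ from constancy of $\nu_1$ in $e$ presupposes the nullity count you have not yet established, whereas the paper gets $\phi_1=0$ directly from monotonicity in $(\td\aa,\td\bb)$ against the positive-definite anchor at $\td\bb=-1$ (Theorem \ref{prop:pos}).
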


\begin{proof}
    We prove this proposition in 4 steps.

    {\bf Step 1.} {\it The $1$-index and nullity in $\{(\td\aa,\td\bb, e)| \td\aa =0 , -1<\td\bb \leq 0, e\in[0,1)\}$.}

    If $\td\aa =0$ and  $-1<\td\bb \leq 0$, the operator $\td\cA(0,\td\bb,e) $ can be written as
    \bea
    \td\cA(0,\td\bb,e)
    =-\frac{\d^2 }{\d t^2}I_2 -I_2+\frac{I_2}{1+e\cos t}+\frac{\td\bb+1}{2(1+e\cos t)}(I_2 +3 S(t)).
    \eea
    When $\td\bb = -1$, for any $\om \in \U$, $\td\cA(0,-1,e) =\cA(0,0,e)$, the $\om$-index and nullity is given by
    \be
    \phi_{\om}(\td\cA(0,-1,e)) = 0,
    \quad
    \nu_{\om}(\td\cA(0,-1,e) ) = \begin{cases}
        2,\; \mbox{if}\; \om = 1;\\
        0,\; \mbox{if}\; \om \in \U\bs\{1\}.
    \end{cases}
    \ee
    When $\bb = 0$, by (3.6) and (3.8) of \cite{HLS},
    \be
    \phi_{\om}(\td\cA(0,0,e) ) = \begin{cases}
        0, \; \mbox{if} \; \om = 1;\\
        2, \; \mbox{if} \; \om \in \U\bs\{1\};
    \end{cases}
    \quad
    \nu_{\om}(\td\cA(0,0,e)) = \begin{cases}
        3, \; \mbox{if} \; \om = 1;\\
        0, \; \mbox{if} \; \om \in \U\bs\{1\}.
    \end{cases}\lb{eqn.ind.00}
    \ee
    By Lemma \ref{lemma.index.tdphi}, for $-1<\td\bb <0$, the $\om$-index of $\td\cA(0,\td\bb,e)$ is non-decreasing with respect with $\td\bb$. Then for $\td\bb\in(-1,0)$, the $1$-index and nullity of $\td\cA(0,\td\bb,e)$ are given by
    \bea
    \phi_{1}(\td\cA(0,\td\bb,e) ) = 0, \quad \nu_{1}(\td\cA(0,\td\bb,e) \leq 3.
    \eea

    \medskip

    {\bf Step 2.} {\it The $1$-index and nullity in $\{(\td\aa,\td\bb, 0)|\td\aa >0, \td\bb =0\}$.}

    When $\bb = 0$, $e=0$ and $0\leq \td\aa \leq \frac{1}{3}$, the operator $\tilde{\cA}(\td\aa,0,0)$ is given by
    \be
    \tilde{\cA}(\td\aa,0,0) = -\frac{\d^2 }{\d t^2}I_2 -I_2+\frac{3(\td\aa+1)}{2}(I_2 +S(t)).
    \ee
    By the discussion in the Section \ref{Stab.R2}, if $\td\bb =0$ and  $0\leq \td\aa < \frac{1}{3}$ i.e., $\aa=3\bb-1$ and $\frac{1}{2}\leq\bb< \frac{2}{3} $,
    the $1$-indices and nullity satisfy
    \be
    \phi_{1}(\tilde{\cA}(\td\aa,0,0) ) = 0, \quad
    \nu_{1}(\tilde{\cA}(\td\aa,0,0) ) = \begin{cases}
        3,\; \mbox{if}\; \td\aa = 0;\\
        1,\;\mbox{if}\; \td\aa\in(0,\frac{1}{3}],
    \end{cases} \lb{tilde.A.bb.0.index}
    \ee

    When $\td\aa > \frac{1}{3}$, $\td\bb=0$ and $e = 0$, following the discussion in Section \ref{Stab.R3}, we have
    \bea
    \phi_{1}(\tilde{\cA}(\td\aa,0,0) ) = 0, \quad
     \nu_{1}(\tilde{\cA}(\td\aa,0,0)) = 1.
    \eea

    \medskip

    {\bf Step 3.} {\it The $1$-index and nullity in $\{(\td\aa,\td\bb, e)|\td\bb=0, \td\aa> 0, e\in[0,1)\}$.}

    Note that for any $e_0\in [0,1)$ and $\td\aa \geq 1$,
    \bea
    \tilde{\cA}(\td\aa,0,e) &=& -\frac{\d^2 }{\d t^2}I_2 -I_2+\frac{3(1+\td\aa)}{2(1+e\cos t)}(I_2 +S(t))\nn\\
    &\geq& -\frac{\d^2 }{\d t^2}I_2 -I_2+\frac{3(1+\td\aa)}{2(1+e_0)}(I_2 +S(t)) \nn\\
    &=& \tilde{\cA}\l(\frac{\td\aa-e_0}{1+e_0},0,0\r). \lb{e.to.0}
    \eea
    where the second inequality and the forth hold because $I_2 +S(t) \geq 0$ for $\ol{D}(\om,2\pi)$ and $\frac{3(1+\td\aa)}{2(1+e\cos t)} \geq \frac{3(1+\td\aa)}{2(1+e_0)}$ when $e_0 \in [0,1)$ and $\td\aa >1$.

    When $\td\aa >1$, it yields that $\frac{\td\aa-e_0}{1+e_0} > 0$. By (\ref{eqn.ind.00}), we have
    \be
    \phi_{1}(\tilde{\cA}(\td\aa,0,e) ) \leq \phi_{1}\l(\tilde{\cA}\l(\frac{\td\aa-e_0}{1+e_0},0,0\r)\r) =0.
    \ee
    By (\ref{e.to.0}), we have that
    \be
    \phi_{1}(\tilde{\cA}(\td\aa,0,e) ) + \nu_{1}(\tilde{\cA}(\td\aa,0,e))\leq \phi_{1}\l(\tilde{\cA}\l(\frac{\td\aa-e_0}{1+e_0},0,0\r)\r)
    +\nu_{1}\l(\tilde{\cA}\l(\frac{\td\aa-e_0}{1+e_0},0,0\r)\r),
    \ee
    and
    \be
    \nu_{1}(\tilde{\cA}(\td\aa,0,e)) \le \nu_{1}\l(\tilde{\cA}\l(\frac{\td\aa-e_0}{1+e_0},0,0\r)\r) =1.\lb{tilde.A.e}
    \ee

    When $\td\aa \in (0,1]$,
    the operator $\tilde{\cA}_0(\td\aa,0,e)$ is defined by
    \bea
    \tilde{\cA}(\td\aa,0,e) &=& -\frac{\d^2 }{\d t^2}I_2 -I_2+\frac{3(\td\aa+1)}{2(1+e\cos t)}(I_2 +S(t))\nn\\
    &=&-\frac{\d^2 }{\d t^2}I_2 -I_2+\frac{3}{2(1+e\cos t)}(I_2 +S(t)) +\frac{3\td\aa}{2(1+e\cos t)}(I_2 +S(t))\nn\\
    &=&\td\aa\l( \frac{\tilde{\cA}(0,0,e)}{\td\aa}+\frac{3}{2(1+e\cos t)}(I_2 +S(t))\r)\nn\\
    &=&\td\aa\cA_0(\td\aa, 0,e), \lb{eqn:cA0}
    \eea
    where $\tilde{\cA}(0,0,e) \geq 0$ on $\ol{D}(1,2\pi)$.
    Since $\aa> 0$, we have that
    \be
    \phi_{\om}(\cA(\td\aa, 0,e)) = \phi_{\om}(\cA_0(\td\aa, 0,e)),\; \nu_{\om}(\cA(\td\aa, 0,e)) = \nu_{\om}(\cA_0(\td\aa, 0,e)).\lb{eqn:ind.cA0}
    \ee
    By (i) of Lemma \ref{lemma.index.tdphi}, for $\td\aa\in(0,1)$, we have
    \be
    \phi_{1}(\td\cA(\td\aa, 0,e)) = \phi_{1}(\cA_0(\td\aa, 0,e) ) \leq  \phi_{1}(\cA_0(1,0,e) ) = \phi_{1}(\tilde{\cA}(1,0,e)) =0, \quad
    \ee
    and
    \be
    \nu_{1}(\td\cA(\td\aa, 0,e))\leq  \phi_{1}(\td\cA(\td\aa, 0,e) ) +\nu_{1}(\td\cA(\td\aa, 0,e)) \le \phi_{1}(\tilde{\cA}(1,0,e)) +\nu_{1}(\tilde{\cA}(1,0,e)) = 1.\lb{eqn:null.tildeA}
    \ee

    Suppose that $x_0 = R(t)(0,c_0)^T\in D(1, 2\pi)$ where $R(t)$ is given by (\ref{cA}) and $c\in \C$ is a constant.
    By direct computations, we have $\tilde{\cA}(\td\aa,0,e) x_0 = 0$ for $\td\aa \geq 0$.
    This yields $\nu_{1}(\tilde{\cA}(\td\aa,0,e))  \geq 1$.
    Together with (\ref{tilde.A.e}), for $\td\aa \geq 0$,
    \be \nu_{1}(\tilde{\cA}(\td\aa,0,e))  =1.\ee

    Above all, for all $e\in [0,1)$, the $1$-index and nullity satisfy
    \be
    \phi_{1}(\tilde{\cA}(\td\aa,0,e) ) = 0, \quad \nu_1(\tilde{\cA}(\td\aa,0,e) ) =
    \begin{cases}
        3, \; \mbox{if}\; \td\aa =0;\\
        1, \; \mbox{if}\; \td\aa >0.
    \end{cases}\lb{eqn:6.21}
    \ee

    \medskip

    {\bf Step 4.}{\it The $1$-index and nullity in $\{(\td\aa,\td\bb, e)|\td\aa>0, -1< \td\bb<0,e\in[0,1) \}$.}

     By (ii) of Lemma \ref{lemma.index.tdphi} and (\ref{eqn:6.21}), the $1$-index and nullity satisfy
    \be
    \phi_{1}(\td\cA(\td\aa,\td\bb,e)) = 0,\quad \nu_{1}(\td\cA(\td\aa,\td\bb,e)) \leq 1.
    \ee
    By Theorem \ref{Th:multiplicity}, the nullity must be even if $\td\bb \neq 0$. Therefore, when  $\td\aa>0$, $-1< \td\bb<0$ and $e\in[0,1)$,
    \be
     \nu_{1}(\td\cA(\td\aa,\td\bb,e)) = 0.
    \ee
    Then we have this proposition holds.
\end{proof}

\begin{proposition}\lb{prop:-1index}
    For $(\td\aa,\td\bb,e)$ where $\td\aa\geq0$, $-1<\td\bb \leq 0$ and $e\in[0,1)$, $\om$-index and nullity satisfy
    \be
    \phi_{-1}(\td\cA(\td\aa,\td\bb,e)) \leq 2, \quad
    \nu_{-1}(\td\cA(\td\aa,\td\bb,e)) \leq 2. \lb{eqn:general.-1.ind}
    \ee
    Especially, for given $e_0\in[0,1)$, when $\td\aa >\frac{1}{4}+ \frac{5}{4}e_0$ and $-1<\td\bb\leq 0$,
    \be
    \phi_{-1}(\tilde{\cA}(\td\aa,\td\bb,e) )  = 0, \quad
    \nu_{-1}(\tilde{\cA}(\td\aa,\td\bb,e))= 0. \lb{eqn: -1.index}
    \ee
\end{proposition}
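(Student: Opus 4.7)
The plan is to exploit the monotonicity properties from Lemma~\ref{lem:A.mono} (or rather Lemma~\ref{lemma.index.tdphi}) combined with a pointwise comparison of operators in the eccentricity variable, reducing everything to the already-understood case $\td\bb=0$, $e=0$.

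For the general bound \eqref{eqn:general.-1.ind}, I would apply Lemma~\ref{lemma.index.tdphi} with $\om=-1$ to the base point $(\td\aa,\td\bb)=(0,0)$. Since $\td\aa\ge 0$ and $-1<\td\bb\le 0$, the non-increasing monotonicity in $\td\aa$ and non-decreasing monotonicity in $\td\bb$ (for both $\phi_{-1}$ and the sum $\phi_{-1}+\nu_{-1}$) give
\[
\phi_{-1}(\td\cA(\td\aa,\td\bb,e))\le\phi_{-1}(\td\cA(0,0,e)),\qquad
(\phi_{-1}+\nu_{-1})(\td\cA(\td\aa,\td\bb,e))\le(\phi_{-1}+\nu_{-1})(\td\cA(0,0,e)).
\]
From equation~\eqref{eqn.ind.00} (originally Lemma~3.6 of \cite{HLS}), $\phi_{-1}(\td\cA(0,0,e))=2$ and $\nu_{-1}(\td\cA(0,0,e))=0$, so both inequalities in \eqref{eqn:general.-1.ind} follow at once.

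For the sharper bound \eqref{eqn: -1.index}, I first use the monotonicity in $\td\bb$ to reduce to the slice $\td\bb=0$:
\[
\phi_{-1}(\td\cA(\td\aa,\td\bb,e_0))\le\phi_{-1}(\td\cA(\td\aa,0,e_0)),
\]
and similarly for $\phi_{-1}+\nu_{-1}$. On this slice $\td\cA(\td\aa,0,e_0)=-\frac{\d^2}{\d t^2}I_2-I_2+\frac{3(1+\td\aa)}{2(1+e_0\cos t)}(I_2+S(t))$. Since $I_2+S(t)\ge 0$ and $\frac{1}{1+e_0\cos t}\ge\frac{1}{1+e_0}$, taking $\td\aa':=\frac{\td\aa-e_0}{1+e_0}$ yields the pointwise inequality $\td\cA(\td\aa,0,e_0)\ge\td\cA(\td\aa',0,0)$, so the Morse index again decreases. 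The hypothesis $\td\aa>\frac14+\frac54 e_0$ is exactly equivalent to $\td\aa'>\frac14$.

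It remains to verify that $\phi_{-1}(\td\cA(\td\aa',0,0))=\nu_{-1}(\td\cA(\td\aa',0,0))=0$ for every $\td\aa'>\tfrac14$. Inverting the affine transformation $T$ of \eqref{eqn:trans.T}, the point $(\td\aa',0)$ corresponds to $(\aa,\bb)=(\frac{3\td\aa'+1}{2},\frac{\td\aa'+1}{2})$, which lies on the curve $\aa=3\bb-1$, i.e.\ on $\cR_{3,0}^*$, with $\bb>\tfrac58$. Under the identification \eqref{eqn:ind.equ}, Theorem~\ref{thm:gen.ind.nul}(ii) gives $i_{-1}(\xi_{\aa,\bb,0})=0$ and $\nu_{-1}(\xi_{\aa,\bb,0})=0$ along this ray (it is the boundary of $\cR_{3,0}^+$, and the nullity list excludes it since $\cR_{3,0}^*\cap\cR^*_{2,1/2}=\{(7/8,5/8)\}$ sits exactly at $\td\aa'=1/4$). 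This transfers through $T$ to the required vanishing statement.

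The main obstacle is the boundary identification in the final step: Theorem~\ref{thm:gen.ind.nul}(ii) records the $-1$-index on open regions, so I must confirm that along the closed curve $\cR_{3,0}^*$ with $\bb>5/8$ both $i_{-1}$ and $\nu_{-1}$ vanish. This is done either by a direct basis computation at $\td\bb=0$, $e=0$ analogous to \eqref{A0}--\eqref{A23} using the $-1$-basis $\bar f_{n,i}$ of Section~\ref{sec:e0.index}, showing that the characteristic polynomials $p_n(0)=\bar p_n(0)$ are bounded away from zero for all $n$ when $\td\aa'>\tfrac14$, or by observing that the degenerate intersection point $(7/8,5/8)$ corresponds precisely to the threshold $\td\aa'=\tfrac14$, so the strict inequality $\td\aa>\tfrac14+\tfrac54 e_0$ keeps us strictly away from it. This pins down the sharp constant appearing in \eqref{eqn: -1.index}.
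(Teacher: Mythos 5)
Your proposal is correct and follows essentially the same route as the paper: the general bound $\le 2$ comes from the monotonicity of Lemma \ref{lemma.index.tdphi} anchored at $(\td\aa,\td\bb)=(0,0)$ via (\ref{eqn.ind.00}), and the sharp vanishing statement comes from the pointwise comparison $\tilde{\cA}(\td\aa,0,e_0)\ge\tilde{\cA}(\frac{\td\aa-e_0}{1+e_0},0,0)$ together with the $e=0$, $\td\bb=0$ computation (the paper's (\ref{eqn:-1.index.tildeA})), then monotonicity in $\td\bb$. Your extra care in verifying the $-1$-index and nullity along the closed curve $\cR_{3,0}^*$ for $\bb>5/8$ (rather than only on the open regions listed in Theorem \ref{thm:gen.ind.nul}(ii)) is a welcome tightening of a step the paper passes over by citing Sections \ref{Stab.R2} and \ref{Stab.R3}, but it does not change the underlying argument.
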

\begin{proof}
	By the discussion in the Section \ref{Stab.R2}, for $\td\bb =0$ and  $0\leq \td\aa < \frac{1}{3}$, i.e., $\aa=3\bb-1$ and $\frac{1}{2}\leq\bb< \frac{2}{3} $,
	and in the Section \ref{Stab.R3} for $\td\aa>\frac{1}{3}$,
	\be
	\phi_{-1}(\tilde{\cA}(\td\aa,0,0)) = \begin{cases}
		2, \;\mbox{if}\; \td\aa \in [0,\frac{1}{4});\\
		0, \;\mbox{if}\; \td\aa\in[\frac{1}{4},\infty),
	\end{cases} \quad
	\nu_{-1}(\tilde{\cA}(\td\aa,0,0)) = \begin{cases}
		2, \;\mbox{if}\; \td\aa = \frac{1}{4};\\
		0, \;\mbox{if}\; \td\aa\neq \frac{1}{4};
	\end{cases}\lb{eqn:-1.index.tildeA}
	\ee
	By (\ref{e.to.0}) and (\ref{eqn:-1.index.tildeA}), when $\frac{\td\aa-e_0}{1+e_0} > \frac{1}{4}$, i.e., $\td\aa >\frac{1}{4}+ \frac{5}{4}e_0$, following  similar arguments, we have
	\be
	\phi_{-1}(\tilde{\cA}(\td\aa,0,e) ) \leq\phi_{-1}\l(\tilde{\cA}\l(\frac{\td\aa-e_0}{1+e_0},0,0\r)\r) = 0, \quad
	\nu_{-1}(\tilde{\cA}(\td\aa,0,e)) \le \nu_{-1}\l(\tilde{\cA}\l(\frac{\td\aa-e_0}{1+e_0},0,0\r)\r)= 0.
	\ee
	By Lemma \ref{lemma.index.tdphi}, we have that (\ref{eqn: -1.index}) holds.

	By (\ref{eqn.ind.00}) and (i) of Lemma \ref{lemma.index.tdphi}, we also have that for $\td\aa\in(0,2)$,

	\be
	\phi_{-1}(\td\cA(\td\aa, 0,e)) \leq   \phi_{-1}(\cA_0(0,0,e) ) = 2, \quad
	\ee
	and
	\be
	\nu_{-1}(\td\cA(\td\aa, 0,e))\leq  \phi_{-1}(\td\cA(0, 0,e) ) +\nu_{-1}(\td\cA(0, 0,e))  = 2.\lb{null.tildeA}
	\ee
	Again by Lemma \ref{lemma.index.tdphi}, (\ref{eqn:general.-1.ind}) holds.
	Then we have this proposition holds
\end{proof}

By the discussion in Theorem \ref{prop:pos} and Proposition \ref{prop:-1index}, especially (\ref{eqn:-1.index.tildeA}), we have that there exist two $-1$-degenerate surfaces. Then for $\tilde\aa > 0$, $e \in[0,1)$, we let $\tilde\bb_{1}(\tilde\aa,e)$, $\tilde\bb_{2}(\tilde\aa,e)$ be the two $-1$-degenerate surfaces  where the $-1$-index changes and further define $\tilde\bb_s(\tilde\aa,e)$ and $\tilde\bb_m(\tilde\aa,e)$ by
\bea
\tilde\bb_{s}(\tilde\aa,e)=\min \left\{\tilde\bb_{1}(\tilde\aa,e), \tilde\bb_{2}(\tilde\aa,e)\right\} \quad \text {and} \quad
\tilde\bb_{m}(\aa,e)=\max \left\{\tilde\bb_{1}(\tilde\aa,e), \tilde\bb_{2}(\tilde\aa,e)\right\}. \lb{eqn:bbs.bbm}
\eea
Note that, when $e = 0$, $\tilde\bb_{1}(\tilde\aa,0) = \tilde\bb_{2}(\tilde\aa,0)$ and $(\td\aa,\tilde\bb_{1}(\tilde\aa,0)) = T \cR_{2,\frac{1}{2}}^*$ where $\cR_{2,\frac{1}{2}}^*$ is given by (\ref{eqn:seg.l}).
When $e>0$, $\tilde\bb_{s}(\tilde\aa,e)$ and $\tilde\bb_{m}(\tilde\aa,e)$ bifurcate from $\cR_{2,\frac{1}{2}}^*$.

We define the boundary of the elliptic region in $\cR_{NH}$ by
\be
\tilde\bb_{k}(\td{\aa},e)=\inf \left\{\td\bb^{\prime} \in[-1,0] | \sigma\left(\td\xi_{\td\aa,\td\bb, e}(2 \pi)\right) \cap \mathbf{U} \neq \emptyset, \quad \forall \tilde\bb\in[-1, \tilde\bb^{\prime})\right\}. \lb{eqn:bbk}
\ee
Note that for any given $\tilde\aa_0 > 0$, if $\tilde\bb  = -1$, i.e., $\aa= 3\bb$, $\sigma\left(\td\xi_{\td\aa,\td\bb, e}(2 \pi)\right) \cap \mathbf{U} = \emptyset$. Therefore for any $(\tilde{\aa} ,e)$, $\tilde\bb_{k}(\tilde{\aa},e)$ is well defined.

By the definition of (\ref{eqn:bbs.bbm}) and (\ref{eqn:bbk}), we have that for the given $\tilde\aa_0$ and $e_0\in[0,1)$,  if $\tilde\bb_s(\aa_0,e_0)$ exists, following holds
\be
\tilde\bb_k(\tilde\aa_0,e) \leq \tilde\bb_s(\tilde\aa_0,e_0).
\ee

Then we define the $-1$-degenerate surfaces and the elliptic boundary by
\bea
\B_{m}^* &=& \{ (\td\aa,\td\bb,e)\in\cR_{NH}| \tilde\bb=\tilde\bb_{m}(\tilde\aa,e) \}, \lb{eqn:RE.Bm}\\
\B_{s}^* &=& \{ (\td\aa,\td\bb,e)\in\cR_{NH}| \tilde\bb=\tilde\bb_s(\tilde\aa,e)\}, \\
\B_{k}^* &=& \{ (\td\aa,\td\bb,e)\in\cR_{NH}| \tilde\bb =\tilde\bb_{k}(\tilde\aa,e)\}.  \lb{eqn:RE.-1.ind}
\eea
The regions between them are defined as $\B_s$, $\B_m$, $\B_k$ ,and $\B_{h}$ of $\cR_{NH}$ by
\bea
\B_{m} &=& \{ (\td\aa,\td\bb,e)\in\cR_{NH}| \tilde\bb_{m}(\tilde\aa,e) <\tilde\bb< 0,\tilde{\aa} > 0 \}, \\
\B_{s} &=& \{ (\td\aa,\td\bb,e)\in\cR_{NH}| \tilde\bb_s(\td\aa,e)<\tilde\bb<0, \tilde{\aa} > 0\}\bs \ol{\B_{m}},\\
\B_{k} &=& \{ (\td\aa,\td\bb,e)\in\cR_{NH}| \tilde\bb_{k}(\tilde\aa,e) < \tilde\bb < 0,\tilde\aa> 0\} \bs (\ol{\B_{m}\cup \B_{s}} ),\\
\B_{h} &=& \cR_{NH}\bs ( \ol{\B_{m} \cup \B_{s} \cup \B_{k}}).\lb{eqn:RE.-1.nul}
\eea

\begin{proposition}\lb{prop:-1.index}
	For $(\td\aa,\td\bb,e) \in \cR_{NH}$, the $-1$-index and nullity of $\td\xi_{\td\aa,\td\bb,0}$ satisfy
	\bea
	i_{-1}\left(\td\xi_{\td\aa,\td\bb, e}\right) &=&
	\left\{\begin{array}{l}
		{2, \text { if } (\td\aa,\td\bb,e) \in \B_m};\\
		{1, \text { if } (\td\aa,\td\bb,e) \in \B_s\cup\B_m^*}; \\
		{0, \text { if } (\td\aa,\td\bb,e) \in \cR_{NH}\bs(\B_s\cup \B_m\cup\B_m^*)};
	\end{array}\right. \\
	\nu_{-1}\left(\xi_{\td\aa,\td\bb, e}\right) &=&
	\left\{\begin{array}{l}
		{0, \text { if } (\td\aa,\td\bb,e) \in \cR_{NH}\bs (\B_{s}^*(\td\aa,e)\cup\B_{m}^*(\td\aa,e))}; \\
		{1, \text { if } (\td\aa,\td\bb,e) \in \B_{s}^*(\td\aa,e) \cup \B_{m}^*(\td\aa,e)\bs(\B_{s}^*(\td\aa,e) \cap \B_{m}^*(\td\aa,e)}); \\
		{2, \text { if } (\td\aa,\td\bb,e) \in \B_{s}^*(\td\aa,e) \cap \B_{m}^*(\td\aa,e)}.
	\end{array}\right.
	\eea
\end{proposition}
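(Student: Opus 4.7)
The plan is to combine three ingredients already at hand: the monotonicity of $i_{-1}$ and of $i_{-1}+\nu_{-1}$ in $\td\bb$ from Lemma~\ref{lemma.index.tdphi}, the uniform upper bounds $i_{-1}\le 2$ and $\nu_{-1}\le 2$ from Proposition~\ref{prop:-1index}, and the vanishing $i_{\om}=\nu_{\om}=0$ throughout the hyperbolic region $\aa\ge 3\bb$ from Theorem~\ref{prop:pos}. Together with the defining properties of $\td\bb_k,\td\bb_s,\td\bb_m$ in \eqref{eqn:bbs.bbm}--\eqref{eqn:bbk}, these reduce the proof to checking values on four successive strata $\B_h,\B_k,\B_s,\B_m$ as one ascends the $\td\bb$-axis at each fixed $(\td\aa,e)$.

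First I would dispose of $\B_h$ and $\B_k$. On $\B_h$ the matrix $\td\xi_{\td\aa,\td\bb,e}(2\pi)$ is hyperbolic by definition of $\td\bb_k$ in \eqref{eqn:bbk}, so $\nu_{-1}=0$; and since $i_{-1}=0$ at the limit $\td\bb\to -1^{+}$ (which in the $(\aa,\bb)$ coordinates is the locus $\aa=3\bb$) by Theorem~\ref{prop:pos}, monotonicity and the absence of $-1$-degeneracies on $\B_h$ force $i_{-1}\equiv 0$ there. On $\B_k$, the open interval $\td\bb_k<\td\bb<\td\bb_s$ contains no $-1$-degeneracies by definition of $\td\bb_s$ in \eqref{eqn:bbs.bbm}, so $\nu_{-1}=0$, and $i_{-1}$ remains $0$ for the same reason.

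Next I would track the jumps across the two $-1$-degeneracy sheets $\B_s^{*}$ and $\B_m^{*}$. By the very definition of $\td\bb_s,\td\bb_m$, these are precisely the $\td\bb$-values at which $i_{-1}$ strictly increases; since $i_{-1}\le 2$ throughout, the two jumps sum to at most $2$. When $\td\bb_s<\td\bb_m$ each jump is exactly $1$, producing $i_{-1}=1$ on $\B_s$ and $i_{-1}=2$ on $\B_m$; when $\td\bb_s=\td\bb_m$ the jumps coalesce into a single jump of $2$ on $\B_s^{*}\cap\B_m^{*}$, still leaving $i_{-1}=2$ on $\B_m$. The values on the sheets themselves come from Long's left-limit convention for $i_{-1}$ at degenerate endpoints (cf.\ Chapter~9 of \cite{Lon4}): $i_{-1}=0$ on $\B_s^{*}\setminus\B_m^{*}$ (the limit from $\B_k$) and $i_{-1}=1$ on $\B_m^{*}\setminus\B_s^{*}$ (the limit from $\B_s$). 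The nullities then read off directly: $\nu_{-1}=1$ wherever exactly one simple eigenvalue of $\td\xi(2\pi)$ sits at $-1$, namely on $(\B_s^{*}\cup\B_m^{*})\setminus(\B_s^{*}\cap\B_m^{*})$, and $\nu_{-1}=2$ at the collision locus $\B_s^{*}\cap\B_m^{*}$, the upper bound $\nu_{-1}\le 2$ preventing any larger value.

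The main obstacle is to rule out the possibility that the jump across a simple $-1$-degeneracy is zero, i.e.\ that $\td\bb_s$ or $\td\bb_m$ is a ``false'' degeneracy where an eigenvalue merely touches $-1$ and retreats. This reduces to a splitting-number computation at the $N_{1}(-1,b)$ block of $\td\xi(2\pi)$ at the degenerate endpoint: by entries $\langle 3\rangle$ and $\langle 4\rangle$ of the splitting-number list after Definition~\ref{D2.3}, the pair $(S^{+}_{M}(-1),S^{-}_{M}(-1))$ equals $(1,1)$ when $b\in\{-1,0\}$ and $(0,0)$ when $b=1$, so one must verify that the relevant $b$ is not $+1$. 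This can be done either by inspecting the sign of the quadratic form $\td\cA(\td\aa,\td\bb,e)$ on the one-dimensional $-1$-eigenspace near the degeneracy and comparing it with the $\td\bb$-derivative $\partial_{\td\bb}\td\cA$, or by perturbing $(\td\aa,e)$ to separate the two sheets and applying the constancy of $i_{-1}+\nu_{-1}$ across a generic degeneracy; the coalescent case $\td\bb_s=\td\bb_m$ is handled in the same way by such a perturbation.
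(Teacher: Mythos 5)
Your argument is correct and follows essentially the same route as the paper: the published proof likewise reduces everything to the defining property of $\td\bb_s,\td\bb_m$ as the only loci where $i_{-1}$ changes, the monotonicity of Lemma \ref{lemma.index.tdphi}, and the boundary values supplied by Theorem \ref{prop:pos} and Proposition \ref{prop:-1index} (with the $e=0$ picture from Section \ref{Stab.R2} as anchor). The ``false degeneracy'' obstacle you raise at the end is already excluded by the two monotonicities you invoke in your first paragraph --- non-decrease of both $i_{-1}$ and $i_{-1}+\nu_{-1}$ in $\td\bb$, combined with the upper semicontinuity of the Morse index, forces the jump across each degeneracy to equal the nullity there --- so the concluding splitting-number detour is unnecessary.
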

\begin{proof}
	When $e = 0$, as the discussion in Section \ref{Stab.R2}, the two degenerate surfaces satisfy $(\td\aa,\td\bb_{s}) = (\td\aa,\td\bb_{m}) \in T\cR_{2,\frac{1}{2}}^*$
	where $\cR_{2,\frac{1}{2}}^*$ is given by (\ref{eqn:seg.l}).

	For $0< e <1 $, by the definition of $\bb_s(\aa,e)$ and $\bb_m(\aa,e)$ satisfying  for $\aa > 0$ and $e \in[0,1)$ in (\ref{eqn:bbs.bbm}),
	we have that $-1$-index stays the same and only changes when $(\aa,\bb,e) \in \B_{m}^*\cup \B_{s}^*$.
	Then we have this proposition holds.
\end{proof}

\begin{lemma}\lb{lem:mono.ind.nul}
(i) For the given $\td\aa_0$ and $e_0$, if
$(\td\aa_0, \td\bb_1, e_0)$ and $(\td\aa_0, \td\bb_2, e_0)$ are both in $\cR_{NH}$ with $-1<\td\bb_1 \leq \td\bb_2 < 0$
and $\td\xi_{\td\aa_0,\td\bb_2,e_0}(2\pi)$ is hyperbolic, then $\td\xi_{\td\aa_0,\td\bb_1,e_0}(2\pi)$ is hyperbolic.
Consequently, the hyperbolic region of $\td\xi_{\td\aa,\td\bb,e}$ in $\cR_{NH}$ is connected.

(ii) For $(\td\aa,\td\bb,e) \in \B_h$, every matrix $\td\xi_{\td\aa,\td\bb,e}(2\pi)$ is hyperbolic.
Thus $\B_k^*$ is the boundary set of this hyperbolic region.

(iii) For the any $e\in [0,1)$, and $\td\aa_*\in( 0, \infty)$,
the total multiplicity of $\om$ degeneracy  of
\be \ga(t) = \td\xi_{\td\aa(t),\td\bb(t), e}(2 \pi), \lb{eqn:path.ga}\ee
for $t\in[0,1]$  with $\td\aa(t) = t\td\aa_* $ and $\td\bb(t) = -t$ satisfies that
\be
\sum_{ 0\leq t\leq 1} \nu_{\om}(\ga(t))=2, \quad \forall \omega \in \mathbf{U} \backslash\{1\}.
\ee
\end{lemma}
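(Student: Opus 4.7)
My plan is to establish the three assertions in the order (iii), (i), (ii).

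For (iii), I would first compute $i_\om$ at the two endpoints of $\ga$ and then exploit monotonicity along $\ga$. At $t=1$ the parameter point $(\td\aa_*,-1)$ corresponds under the inverse of (\ref{eqn:trans.T}) to $\aa=3\bb$, which lies on the boundary of the hyperbolic region treated in Theorem \ref{prop:pos}; consequently $\cA(\aa,\bb,e)$ is positive definite for every $\om$ boundary condition, giving $i_\om(\ga(1))=\nu_\om(\ga(1))=0$ for all $\om\in\U$. At $t=0$ the parameter point $(\td\aa,\td\bb)=(0,0)$ corresponds to $(\aa,\bb)=(1/2,1/2)$, and a direct spectral analysis of $\td\cA(0,0,e)=\cA(1/2,1/2,e)$, in the spirit of Lemma \ref{lem:00e.pos} together with the Morse-index bookkeeping already used in the proof of Proposition \ref{prop:bnd.ind}, should yield $i_\om(\ga(0))=2$ for every $\om\in\U\setminus\{1\}$.

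Next, along $\ga$, the parameter $\td\aa(t)=t\td\aa_*$ strictly increases and $\td\bb(t)=-t$ strictly decreases, so Lemma \ref{lemma.index.tdphi} forces $t\mapsto i_\om(\ga(t))$ to be a monotone non-increasing integer-valued function on $[0,1]$. Each jump at some parameter $t_j$ satisfies $\nu_\om(\ga(t_j))\ge|\Delta i_\om|$ by the splitting-number accounting recalled after Definition \ref{D2.3} and by (\ref{eqn:split}); summing over all jumps yields $\sum_{0\le t\le 1}\nu_\om(\ga(t))=i_\om(\ga(0))-i_\om(\ga(1))=2$, which is exactly (iii).

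For (i), I would argue by contradiction. Assume $\td\xi_{\td\aa_0,\td\bb_2,e_0}(2\pi)$ is hyperbolic while $\td\xi_{\td\aa_0,\td\bb_1,e_0}(2\pi)$ is not, for some $\td\bb_1<\td\bb_2$ in $(-1,0)$. Pick $\hat\om\in\U$ with $\nu_{\hat\om}(\td\xi_{\td\aa_0,\td\bb_1,e_0})\ge 1$ and $\nu_{\hat\om}(\td\xi_{\td\aa_0,\td\bb_2,e_0})=0$. Theorem \ref{prop:pos} gives hyperbolicity at $\td\bb=-1$, and Lemma \ref{lemma.index.tdphi} ensures that $i_{\hat\om}$ is non-decreasing in $\td\bb$. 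Tracking the four eigenvalues of $\td\xi_{\td\aa_0,\td\bb,e_0}(2\pi)$ as $\td\bb$ varies in $[-1,\td\bb_2]$, each entry into or exit from $\U$ occurs only at a $\hat\om$-degenerate value, and the sign of the induced jump of $i_{\hat\om}$ is pinned down by the splitting-number list recalled after Definition \ref{D2.3} together with the Krein signature of the crossing eigenvalue. A loss of hyperbolicity at $\td\bb=\td\bb_1$ followed by a later recovery in $(\td\bb_1,\td\bb_2]$ would then produce two $\hat\om$-index jumps with opposite signs, contradicting the monotonicity just established. Hence the hyperbolic set at fixed $(\td\aa_0,e_0)$ is downward closed in $\td\bb$, and consequently the hyperbolic subset of $\cR_{NH}$ is connected. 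Statement (ii) then follows from (i) together with the definitions (\ref{eqn:RE.Bm})--(\ref{eqn:RE.-1.nul}): the hyperbolic region at fixed $(\td\aa,e)$ is exactly the open interval $\{\td\bb<\td\bb_k(\td\aa,e)\}\subset\cR_{NH}$, which matches the definition of $\B_h$, and its topological boundary is $\B_k^*$. The main obstacle in the whole argument is the Krein-signature/eigenvalue-tracking step in (i): bare monotonicity of $i_{\hat\om}$ does not by itself forbid loss--recovery transitions, and the splitting-number list is essential for assigning a consistent sign to each index jump so that the contradiction goes through.
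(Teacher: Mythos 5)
Your overall strategy for (iii) --- compute $i_\om$ at the two endpoints of $\ga$ and convert the total index drop into a count of degeneracies via monotonicity --- is the same as the paper's, and your endpoint values ($i_\om=2$ at $(\td\aa,\td\bb)=(0,0)$ and $i_\om=\nu_\om=0$ at $\td\bb=-1$, for $\om\in\U\setminus\{1\}$) are correct. However, your argument for (i) has a genuine gap, which you yourself flag. You reduce (i) to forbidding a ``loss--recovery'' transition by tracking eigenvalues and assigning signs to the induced jumps of $i_{\hat\om}$ via splitting numbers and Krein signatures. This cannot be made to work as stated: when a pair of eigenvalues enters and later leaves $\U$ through \emph{trivial} Krein collisions, the normal form at the collision is a trivial $N_2(\hat\om,b)$ whose splitting numbers are $(0,0)$ (item $\langle 7\rangle$ of the list in Section 2), so $i_{\hat\om}$ does not jump at all and monotonicity of the index is never violated --- there are no jumps whose signs could conflict. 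The paper's proof of (i) avoids eigenvalue tracking entirely by using part (ii) of Lemma \ref{lemma.index.tdphi}: hyperbolicity at $\td\bb_2$ forces $\phi_\om+\nu_\om=0$ there for every $\om\in\U$ (the $\phi_\om=0$ part comes from $\phi_1=0$ in $\cR_{NH}$, Proposition \ref{prop:bnd.ind}, plus the vanishing of all splitting numbers for a hyperbolic matrix via (\ref{eqn:split})), and since $\phi_\om+\nu_\om$ is non-decreasing in $\td\bb$ it must vanish for all $\td\bb_1\le\td\bb_2$, giving $\nu_\om(\td\bb_1)=0$ for every $\om$, i.e.\ hyperbolicity. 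This is two lines and is the tool you are missing.

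The same omission shows up in (iii): your jump estimate $\nu_\om(\ga(t_j))\ge|\Delta i_\om|$ points the wrong way, since summing it only yields $\sum_t\nu_\om(\ga(t))\ge 2$. Equality requires that at each degeneracy the index drops by \emph{exactly} the nullity, and that again comes from the simultaneous monotonicity of $\phi_\om$ and $\phi_\om+\nu_\om$ along $\ga$ (both the increase of $\td\aa$ and the decrease of $\td\bb$ are index-non-increasing directions, so one composes the two one-variable monotonicities of Lemma \ref{lemma.index.tdphi}). Note you cannot shortcut this with naive operator monotonicity along the path: the $\td\bb$-derivative of $\td\cA$ is $\frac{1}{2(1+e\cos t)}(I_2+3S(t))$, which is indefinite, and for $\td\aa_*<2/3$ the derivative of the operator along $\ga$ has a negative eigenvalue. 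Your deduction of (ii) from (i) and the definitions of $\B_h$ and $\B_k^*$ is fine once (i) is repaired.
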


\begin{proof}
By the Lemma \ref{lemma.index.tdphi}, for any fixed $\td\aa$, $\om$ and $e$,
\bea
\phi_{\om}(\td\cA(\aa,\bb_1,e))&<& \phi_{\om}(\td\cA(\aa,\bb_2,e)) ; \\
\phi_{\om}(\td\cA(\aa,\bb_1,e))+\nu_{\om}(\td\cA(\aa,\bb_1,e))&<& \phi_{\om}(\td\cA(\aa,\bb_2,e)) +\nu_{\om}(\td\cA(\aa,\bb_2,e)).
\eea
Suppose that $\td\xi_{\td\aa,\td\bb_2,e}(2\pi)$ is hyperbolic. This implies $\phi_{\om}(\td\cA(\aa,\bb_2,e)) =0 $ and $\nu_{\om}(\td\cA(\aa,\bb_2,e)) =0$. Then $\phi_{\om}(\td\cA(\aa,\bb_1,e)) =0 $ and $\nu_{\om}(\td\cA(\aa,\bb_1,e))$ for any  $\om \in \U$. Therefore $\td\xi_{\td\aa,\td\bb,e}(2\pi)$ must be hyperbolic for all $\bb \in [0,\bb_2)$.

Note that when $\td\bb = -1$ and $\td\aa> 0$, the matrix $\td\xi_{\td\aa,\td\bb,e}$ is hyperbolic by Theorem \ref{prop:pos}.
Therefore, the hyperbolic region of $\td\xi_{\td\aa,\td\bb,e}$ is connected in $\cR_{NH}$.

(ii) By the definition of $\td\bb_k(\td\aa,e)$, there exists a sequence$\{\td\bb_i\}_{i\in \N}$ satisfying $\td\bb_i<\td\bb_k(\aa,e)$, $\td\bb_i\to \td\bb_k(\td\aa,e)$, and $\td\xi_{\td\aa,\td\bb_i,e}(2\pi)$ is hyperbolic.
Therefore $\td\xi_{\td\aa,\td\bb,e}(2\pi)$ is hyperbolic
for every $\td\bb\in [-1,\td\bb_k(e))$ by (i).
Then (\ref{eqn:bbk}) holds and $\td\bb_k(\td\aa,e)$ is the envelope surface of this hyperbolic region.

(iii) Note that both $\td\xi_{0,0, e}(2 \pi)$ and $\td\xi_{\aa_*,-1, e}(2 \pi)$ are both non-degenerate when $\om\in \U\bs\{1\}$.
The corresponding operator path is defined by $\ga^*(t) = \cA(\td\aa(t),\td\bb(t),e)$.
For $\td\aa_* \in (0,\infty)$ and $t_0\in(0,1)$ such that $\td\cA(\td\aa(t_0),\td\bb(t_0),e)$ is
degenerate, the $\om$-index must decrease strictly.
By Theorem \ref{prop:pos} and Lemma \ref{lemma.index.tdphi}, there exist at most two $t_1$ and $t_2$ such that
at each of which the $\om$-index decreases
by 1 if $t_1 \neq t_2$, or the $\om$-index decreases by 2 if $t_1 =t_2$. Suppose that the two values are given by $t_1 = t_1(\td\aa_*,e)$
and $t_2 = t_2(\td\aa_*, e)$ such that for $\ep > 0$ small enough, we have that
$\phi_{\om}(\ga^*(0)) = \phi_{\om}(\ga^*(t_1-\ep))$, $\phi_{\om}(\ga^*(t_1)) = \phi_{\om}(\ga^*(t_2-\ep))$ and $\phi_{\om}(\ga^*(t_2)) =\phi_{\om}(\ga^*(1))$.
Then we have that
\bea
2 &=&\phi_{\om}(\ga^*(0))-\phi_{\om}(\ga^*(1)) \nn\\ &=&\phi_{\om}(\ga^*(t_1-\ep))-\phi_{\om}(\ga^*(t_1))) +\phi_{\om}(\ga^*(t_2-\ep))-\phi_{\om}(\ga^*(t_2)) \nn\\
&=&\dim \ker(\ga^*(t_1))+\dim \ker(\ga^*(t_2)) \nn \\
&=&\nu_{\om}(\ga(t_1))+\nu_{\om}(\ga(t_2)) \nn\\
&=&\sum_{t \in[0,1]} \nu_{\om}(\ga(t)).
\eea
Then we have that (iii) of this lemma holds.
\end{proof}

\begin{corollary}\lb{coro:null.sep}
	For $e\in [0,1)$ and $\td\aa_* \in (0,\infty)$, suppose the continuous path $\ga(t)$ is defined by (\ref{eqn:path.ga}). There exists $t_*\in (0,1)$ such that $\ga(t_*) = \td\xi_{\td\aa(t_*),\td\bb(t_*),e}(2 \pi)\in \B_s^*$, and
	\be
	\sum_{t \in[0,t^* ]} \nu_{-1}(\ga(t))=2, \quad
	\sum_{t \in(t^*,1 ]} \nu_{-1}(\ga(t))=0. \lb{eqn:part.null}
	\ee
\end{corollary}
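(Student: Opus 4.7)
The strategy is to track the $-1$-Maslov index along the straight-line path $(\td\aa(t),\td\bb(t))=(t\td\aa_*,-t)$ in $\cR_{NH}$ and to locate where the two $-1$-degeneracies counted by Lemma \ref{lem:mono.ind.nul}(iii) actually occur. First I would evaluate $i_{-1}$ at the endpoints: from (\ref{eqn.ind.00}) one reads off $i_{-1}(\ga(0))=2$ and $\nu_{-1}(\ga(0))=0$, while at $t=1$ the pair $(\td\aa_*,-1)$ corresponds under $T^{-1}$ to the line $\aa=3\bb$, so Theorem \ref{prop:pos} makes $\ga(1)$ hyperbolic with $i_{-1}(\ga(1))=\nu_{-1}(\ga(1))=0$. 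By continuity and Proposition \ref{prop:-1.index}, $(\td\aa(t),\td\bb(t))\in\B_m$ for $t>0$ sufficiently small, so $i_{-1}(\ga(t))=2$ near $t=0$.

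Next I would define $t^*$ as the largest $t\in[0,1]$ for which $\td\bb(t)=\td\bb_s(\td\aa(t),e)$, i.e., $\ga(t^*)\in\B_s^*$. Such a $t^*$ exists and lies in $(0,1)$: the path starts in $\B_m$ (where $i_{-1}=2$) and ends in $\B_h$ (where $i_{-1}=0$), so by the non-increasing monotonicity of $i_{-1}$ from Lemma \ref{lemma.index.tdphi} together with the support characterisation in Proposition \ref{prop:-1.index}, the path must cross $\B_s^*$; moreover $\td\bb_s(\td\aa_*,e)>-1$ since $\B_s^*\subset\cR_{NH}$, which forces $t^*<1$. Applying Proposition \ref{prop:-1.index} once more, for every $t\in(t^*,1]$ the point $(\td\aa(t),\td\bb(t),e)$ lies in $\B_k\cup\B_h$, which is disjoint from $\B_s^*\cup\B_m^*$; therefore $\nu_{-1}(\ga(t))=0$ for all such $t$, giving the second equality in (\ref{eqn:part.null}).

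The first equality then follows by subtraction: Lemma \ref{lem:mono.ind.nul}(iii) asserts $\sum_{t\in[0,1]}\nu_{-1}(\ga(t))=2$, so the zero contribution from $(t^*,1]$ forces $\sum_{t\in[0,t^*]}\nu_{-1}(\ga(t))=2$. The main obstacle is handling the possible coalescence of the two $-1$-degeneracies: in the regular case $\td\bb_s(\td\aa(t^*),e)<\td\bb_m(\td\aa(t^*),e)$, they occur at distinct times $t_m<t^*$, each contributing nullity $1$; in the coalescent case $\td\bb_s(\td\aa(t^*),e)=\td\bb_m(\td\aa(t^*),e)$, both degeneracies collapse into $t^*$ with $\nu_{-1}(\ga(t^*))=2$. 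Fortunately Proposition \ref{prop:-1.index} treats both sub-cases uniformly via the support of $\nu_{-1}$, so the counting argument concludes without further case splitting.
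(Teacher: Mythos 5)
Your argument is correct and follows essentially the same route as the paper: both rest on Lemma \ref{lem:mono.ind.nul}(iii) (total $-1$-nullity equal to $2$ along $\ga$) together with locating the crossings of $\B_m^*$ and $\B_s^*$ relative to $t^*$. The only difference is the direction of the final estimate — the paper bounds $\sum_{t\in[0,t_*]}\nu_{-1}(\ga(t))\ge 2$ from below by noting that both degenerate surfaces are met in $[0,t_*]$, whereas you show $\sum_{t\in(t^*,1]}\nu_{-1}(\ga(t))=0$ via the support characterisation in Proposition \ref{prop:-1.index}; given the total count of $2$, the two estimates are equivalent.
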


\begin{proof}
	Fix $\td\aa_*$ and $e\in [0,1)$. There exists a $t_*$ such that $\ga(t_*) \in \B_s^*$. Then if $(t_*\td\aa_*, -t_*, e) \in \B_s^*$ and $(t_*\td\aa_*, -t_*, e) \notin \B_m^*$,
	\be
	\sum_{t\in[0, t_*]} \nu_{-1}(\xi_{\td\aa,\td\bb, e}(2 \pi)) \geq \nu_{-1}(\xi_{\td\aa,\td\bb_{s}(\td\aa,e), e}(2 \pi))+\nu_{-1}(\xi_{\td\aa,\td\bb_{m}(\td\aa,e), e}(2 \pi)) \geq 2;
	\ee
	if $(t_*\td\aa_*, -t_*, e) \in \B_s^*\cap \B_m^*$,
		\be
	\sum_{t\in[0, t_*]} \nu_{-1}(\xi_{\td\aa,\td\bb, e}(2 \pi)) \geq \nu_{-1}(\xi_{\td\aa,\td\bb_{m}(\td\aa,e), e}(2 \pi)) \geq 2.
	\ee
	Together with (iii) of Lemma \ref{lem:mono.ind.nul}, we have that (\ref{eqn:part.null}) holds.
\end{proof}

\begin{proposition}
    The function $\td\bb_k(\td\aa,e)$ is continuous in $\td\aa$ and $e$.
\end{proposition}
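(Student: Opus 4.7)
The plan is to prove the continuity of $\td\bb_k(\td\aa,e)$ by separately establishing lower and upper semi-continuity, exploiting two standard inputs: continuous dependence of the symplectic matrix $\td\xi_{\td\aa,\td\bb,e}(2\pi)$ on the parameter triple $(\td\aa,\td\bb,e)$ (from ODE theory applied to \eqref{linearized.system.main}, since the coefficient matrix depends continuously, in fact real-analytically, on the parameters), and Krein's stability theorem for symplectic matrices. The structural facts collected in Theorem \ref{thm:RE.norm.form} and Lemma \ref{lem:mono.ind.nul} will then do the rest of the work.

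For lower semi-continuity, fix $(\td\aa_0,e_0)$ and a sequence $(\td\aa_n,e_n)\to(\td\aa_0,e_0)$ and pick any $\td\bb_*<\td\bb_k(\td\aa_0,e_0)$. By \eqref{eqn:bbk} together with Lemma \ref{lem:mono.ind.nul}(i), the matrix $\td\xi_{\td\aa_0,\td\bb_*,e_0}(2\pi)$ is hyperbolic, i.e.\ its spectrum is bounded away from $\U$. Since hyperbolicity is an open condition in $\Sp(4)$, continuity of $\td\xi$ in $(\td\aa,e)$ forces $\td\xi_{\td\aa_n,\td\bb_*,e_n}(2\pi)$ to be hyperbolic for all $n$ large, whence $\td\bb_k(\td\aa_n,e_n)>\td\bb_*$; letting $\td\bb_*\uparrow \td\bb_k(\td\aa_0,e_0)$ yields $\liminf\td\bb_k(\td\aa_n,e_n)\geq\td\bb_k(\td\aa_0,e_0)$. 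For upper semi-continuity, set $\td\bb^\epsilon:=\td\bb_k(\td\aa_0,e_0)+\epsilon$ for $\epsilon>0$ small. Then $(\td\aa_0,\td\bb^\epsilon,e_0)\in\B_k\cup\B_s\cup\B_m$, so Theorem \ref{thm:RE.norm.form} provides an explicit normal form for $\td\xi_{\td\aa_0,\td\bb^\epsilon,e_0}(2\pi)$; in each of the three cases (i)--(iii) the normal form contains at least one rotational block $R(\theta)$ with $\theta\in(0,\pi)\cup(\pi,2\pi)$, yielding a simple conjugate pair $e^{\pm\sqrt{-1}\theta}\in\U\setminus\{\pm 1\}$. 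Using Lemma \ref{lemma.index.tdphi} together with Lemma \ref{lem:mono.ind.nul}(iii), the set of $\omega$-degenerate values of $\td\bb$ on the vertical segment $\{\td\aa_0\}\times(-1,0)\times\{e_0\}$ is finite for every $\omega\in\U$; by shrinking $\epsilon$ if necessary we arrange that $\td\xi_{\td\aa_0,\td\bb^\epsilon,e_0}(2\pi)$ is $\omega$-non-degenerate for all $\omega\in\U$. By Krein's stability theorem for symplectic matrices (cf.\ Section 1.3 of \cite{Lon4}), such a simple non-real pair on $\U$ persists on $\U$ under any sufficiently small symplectic perturbation; hence for $n$ large, $\td\xi_{\td\aa_n,\td\bb^\epsilon,e_n}(2\pi)$ still has a spectrum point on $\U$, is therefore not hyperbolic, and $\td\bb_k(\td\aa_n,e_n)\leq\td\bb^\epsilon$. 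Letting $\epsilon\to 0$ yields $\limsup\td\bb_k(\td\aa_n,e_n)\leq\td\bb_k(\td\aa_0,e_0)$, and combining with the earlier inequality gives continuity.

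The most delicate point is the upper semi-continuity step: the non-hyperbolic set in $\Sp(4)$ is closed but not open, so non-hyperbolicity is not automatically preserved under small perturbations, and one could a priori fear that the eigenvalues of $\td\xi_{\td\aa_0,\td\bb^\epsilon,e_0}(2\pi)$ coalesce at $\pm 1$ or undergo a Krein collision exactly on the approximating sequence, allowing $\td\bb_k$ to jump upward in the limit. The discreteness of the $\omega$-degeneracy loci on each vertical segment, a direct consequence of the Maslov-index monotonicity in Lemma \ref{lemma.index.tdphi} combined with Lemma \ref{lem:mono.ind.nul}(iii), is exactly what precludes this pathology: it guarantees that $\epsilon$ can be chosen so that the matrix just above the hyperbolic boundary is a genuinely non-degenerate elliptic (or elliptic-hyperbolic) matrix with a simple, Krein-definite pair on $\U\setminus\{\pm 1\}$, whose stability then yields the needed estimate.
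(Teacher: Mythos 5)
Your proof is correct in outline, and the lower semi-continuity half is essentially the paper's own argument (openness of hyperbolicity plus the monotonicity/connectedness statement of Lemma \ref{lem:mono.ind.nul}). For the upper semi-continuity half, however, you take a genuinely different route. The paper argues by contradiction with the $\om_0$-Maslov index: picking $\om_0\in\sg(\td\xi_{\hat\aa,\td\bb_k(\hat\aa,\hat e),\hat e}(2\pi))\cap\U$, it shows that $i_{\om_0}$ jumps to at least $1$ just above $\td\bb_k(\hat\aa,\hat e)$, while the approximating points lie in segments $L_i$ of the hyperbolic region where $i_{\om_0}=0$; local constancy of the index at a nondegenerate point then gives the contradiction. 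You instead invoke the normal forms of Theorem \ref{thm:RE.norm.form} just above the hyperbolic boundary and the Krein--Gelfand--Lidskii persistence of elliptic spectrum. Your route is arguably more geometric and makes the mechanism transparent, but it leans on Theorem \ref{thm:RE.norm.form}, which in the paper is proved \emph{after} this proposition; there is no circularity (that proof does not use continuity of $\td\bb_k$), but the logical ordering should be made explicit. The paper's index argument has the advantage of needing only the index machinery already in place.

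Two imprecisions in your write-up should be repaired. First, the sentence asserting that $\td\xi_{\td\aa_0,\td\bb^\epsilon,e_0}(2\pi)$ can be made ``$\om$-non-degenerate for all $\om\in\U$'' is literally impossible for a non-hyperbolic matrix: if $e^{\sqrt{-1}\theta}\in\sg(M)\cap\U$ then $\nu_{e^{\sqrt{-1}\theta}}(M)\geq 1$ by definition. What you actually need, and what the finiteness argument via Lemma \ref{lemma.index.tdphi} and Proposition \ref{prop:-1index} does deliver, is only that $\epsilon$ avoids the at most finitely many values where the segment meets $\B_s^*\cup\B_m^*$, so that the point lies in the open region $\B_k\cup\B_s\cup\B_m$ where Theorem \ref{thm:RE.norm.form} (rather than Theorem \ref{thm:RE.lim}, some of whose normal forms such as $N_1(-1,1)\diamond D(\lm)$ would \emph{not} give persistence) applies. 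Second, in the cases $\B_m$ and $\B_k$ the two rotation angles may coincide, so the elliptic pair need not be simple; there the persistence of non-hyperbolicity should be drawn from the strong linear stability asserted in Theorem \ref{thm:RE.norm.form} (equivalently, from Krein-definiteness of the full eigenspace), not from simplicity. With these adjustments the argument is sound.
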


\begin{proof}
    We prove this proposition by contradiction. Suppose that $\td\bb_k(\td\aa,e)$ is not continuous in $\td\aa$ or $e$. There must exist some $(\hat{\aa},\hat{e})$ and a sequence $\{(\td\aa_i,e_i)\}_{i=1}^{\infty} \subset [0,\infty]\times[0,1)\bs\{(\hat{\aa},\hat{e})\}$ and $\td\bb_0 \in [-1,0]$ such that
    \be
    \td\bb_k(\td\aa_i,e_i) \to \td\bb_0 \neq \td\bb_k(\hat{\aa},\hat{e}),\; (\td\aa_i,e_i) \to (\hat{\aa},\hat{e}), \; {\rm if } \; i \to \infty.
    \ee
    We discuss the two cases of the continuity according to the sign of $\td\bb_0 -\td\bb_k(\hat{\aa},\hat{e})$. By the continuity of the eigenvalues of the matrix $\td\xi_{\td\aa,\td\bb,e}(2\pi)$ and by (\ref{eqn:bbk}) following holds.
    \be
    \sg(\xi_{\hat{\aa},\bb_0,\hat{e}}(2\pi))\cap \U \neq \emptyset.
    \ee
    By the definition of $\td\bb_k(\hat{\aa},\hat{e})$ and (i) of Lemma \ref{lem:mono.ind.nul}, we must have $\td\bb_k(\hat{\aa},\hat{e})<\bb_0$.

    Now we suppose $\td\bb_k(\hat{\aa},\hat{e})<\td\bb_0$. By the continuity of $\td\bb_s(\td\aa,e)$ and the definition of $\td\bb_0$,
    \be
    \td\bb_k(\hat{\aa},\hat{e}) < \td\bb_0
    \leq \td\bb_s(\hat{\aa},\hat{e}).
    \ee
    By the definition of $\td\bb_k(\hat{\aa},\hat{e})$, let $\om_0\in \sg(\td\xi_{\hat{\aa},\td\bb_k(\hat{\aa},\hat{e}),\hat{e}}(2\pi))\cap \U$.
    Let $L=\{(\hat{\aa},\td\bb, \hat{e}) | \td\bb \in     [-1,\td\bb_{k}(\hat{\aa},\hat{e}))\}$, $V=\{(\td\aa,-1, e)| \td\aa\in(0,\infty), e \in[0,1)\}$, and $L_i =\{(\td\aa_i,\td\bb,e_i)|\td\bb\in [-1,\td\bb_k(\td\aa_i,e_i)]\}$.
    \be
    i_{\omega_{0}}\left(\td\xi_{\td\aa,\td\bb, e}\right)=
    \nu_{\omega_{0}}\left(\td\xi_{\td\aa,\td\bb, e}\right)=0,
    \quad \forall(\td\aa,\td\bb,e) \in L \cup V \cup \bigcup_{i \geq 1} L_{i}.
    \ee
    In particular, we have
    \be
    i_{\omega_{0}}\left(\td\xi_{\hat{\aa},\td\bb_{k}(\hat{\aa},\hat{e}), \hat{e}}\right)=0 \quad \text { and } \quad \nu_{\omega_{0}}\left(\td\xi_{\hat{\aa},\td\bb_{k}(\hat{\aa},\hat{e}) \hat{e}}\right) \geq 1.
    \ee
    Therefore, by the definition of $\om_0$, there exists $\hat{\bb} \in (\td\bb_k(\hat{\aa},\hat{e}),\td\bb_0)$ sufficiently close to $\td\bb_k (\hat{\aa},\hat{e})$ such that
    \be
    i_{\omega_{0}}\left(\td\xi_{\hat{\aa},\hat{\bb}, \hat{e}}\right)
    =i_{\omega_{0}}\left(\td\xi_{\hat{\aa},\td\bb_{k}(\hat{\aa},\hat{e}), \hat{e}}\right)+\nu_{\omega_{0}}\left(\td\xi_{\hat{\aa},\td\bb_{k}(\hat{\aa},\hat{e}), \hat{e}}(2 \pi)\right) \lb{eqn:accol}
    \geq 1.
    \ee
    Note that (\ref{eqn:accol}) holds for all $\td\bb \in (\hat{\bb}_k(e), \hat{\bb}_0]$. Also $(\hat{\aa},\hat{\bb},\hat{e})$ is an accumulation point of $\cup_{i\geq 1} L_i$. This yields there exists $(\td\aa_i,\td\bb_i,e_i) \in L_i$ such that $\xi_{\td\aa_i,\td\bb_i,e_i}$ is $\om_0$-degenerate.
    Moreover $\td\bb_i\to\hat{\bb}_0$ and $\xi_{\td\aa_i,\td\bb_i,e_i}(2\pi) \to \xi_{\hat\aa,\td\bb_0,\hat e}(2\pi)$ as $i\to \infty$.
    Then we have following contradiction for $i \geq 1$ large enough
    \be
    1 \leq  i_{\omega_{0}}(\xi_{\hat{\aa},\hat{\bb}, \hat{e}}) \leq
     i_{\omega_{0}}(\xi_{\td\aa_i,\td\bb_i(\td\aa_i,e_i), e_i}) = 0.
    \ee
    Then we have the continuity of $\td\bb_k(\td\aa,e)$ in $\td\aa$ and $e$.
\end{proof}

\begin{proof}[Proof of Theorem \ref{thm:RE.norm.form}]
    (i) By the Bott-type formula (Theorem 9.2.1 in p. 199 of \cite{Lon1}), the index and nullity of $2$nd-iteration of the symplectic path $\td\xi_{\td\aa,\td\bb, e}(t)$ satisfies
    \bea
    i_{1}\l(\td\xi_{\td\aa,\td\bb, e}^{2}\r) &=& i_{-1}\l(\td\xi_{\td\aa,\td\bb, e}\r)+i_{1}\l(\td\xi_{\td\aa,\td\bb,e}\r),\nn\\
    \nu_{1}\left(\td\xi_{\td\aa,\td\bb, e}^{2}\right)&=&\nu_{1}\left(\td\xi_{\td\aa,\td\bb, e}\right)+ \nu_{-1}\left(\td\xi_{\td\aa,\td\bb, e}\right)=\nu_{-1}\left(\td\xi_{\td\aa,\td\bb, e}\right).\nn
    \eea
    where $\nu_{1}(\td\xi_{\td\aa,\td\bb, e}) = 0$ when $(\td\aa,\td\bb,e)\in\cR_{NH}$. Therefore, by (\ref{eqn:RE.-1.ind}) and (\ref{eqn:RE.-1.nul}), the 2-iteration of the index is given by
    \be
    i_1(\td\xi^2_{\td\aa,\td\bb, e})=\left\{
    \begin{array}{ll}
        {2,} & {\text { if } (\td\aa,\td\bb, e) \in \B_m}; \\
        {1,} & {\text { if } (\td\aa,\td\bb, e) \in \B_s}; \\
        {0,} & {\text { if } (\td\aa,\td\bb, e) \in \B_k }.
    \end{array}\right.
    \ee
    Follow the discussion in the proof of Theorem 1.2 of \cite{HS2}, if $(\td\aa,\td\bb, e)$ satisfies $\td\bb \neq \bb_{m}(\td\aa,e)$ and $\td\bb \neq \bb_{s}(\td\aa,e)$, the matrix  $\td\xi_{\td\aa,\td\bb, e}(4\pi) =
    \td\xi_{\td\aa,\td\bb, e}^{2}(2\pi)$ is non-degenerate with respect with eigenvalue $1$.

    For (i), suppose  $\omega_{i}=e^{\sqrt{-1 \theta_{i}}} \in \sigma\left(\td\xi_{\td\aa,\td\bb, e}(2\pi) \cap \mathbf{U}\right)$, $\th_i \in(0,\pi)$. Note that $i_{-1}(\td\xi_{\td\aa,\td\bb,e}) = 2$, and by (\ref{eqn:split})
     \bea
    i_{-1}(\td\xi_{\td\aa,\td\bb,e}) &=& i_{1}(\td\xi_{\td\aa,\td\bb,e})
    +\sum_{\omega_{i}}(S_{\td\xi_{\td\aa,\td\bb,e}(2\pi)}^{+} (\omega_{i} )-S_{\td\xi_{\td\aa,\td\bb,e}(2\pi)}^{-} (\omega_{i}))+ S_{\td\xi_{\td\aa,\td\bb,e}(2\pi)}^{+}(-1) \nn\\
    &=&\sum_{\omega_{i}} (S_{\td\xi_{\td\aa,\td\bb,e}(2\pi)}^{+} (\omega_{i} )-S_{\td\xi_{\td\aa,\td\bb,e}(2\pi)}^{-} (\omega_{i})). \lb{eqn:-1.split}
     \eea
    It yields that
     \be
     2=\sum_{\omega_{i}}\l(S_{\td\xi_{\td\aa,\td\bb,e}(2\pi)}^{+}(\omega_{i})-S_{\td\xi_{\td\aa,\td\bb,e}(2\pi)}^{-}(\omega_{i})\r) \leq \sum_{\omega_{i}}\left(S_{\td\xi_{\td\aa,\td\bb,e}(2\pi)}^{+}(\omega_{i})+S_{\td\xi_{\td\aa,\td\bb,e}(2\pi)}^{-}(\omega_{i})\right) \leq 2.
     \ee
     Then we have that $S_{\td\xi_{\td\aa,\td\bb,e}}^{-}(\omega_{i}) =0$, by the list of splitting number in Section 2.
     Therefore, there exist the two $\om_1$ and $\om_2$ such that $S_{\td\xi_{\td\aa,\td\bb,e}}^{+}(\omega_{i}) =1$.
     Then we have (i) of Theorem \ref{thm:RE.norm.form} holds.

     (ii) Note that $i_1(\td\xi^2_{\td\aa,\td\bb, e}) = 1$ implies that $i_{-1}(\td\xi_{\td\aa,\td\bb,e}) = 1$.
     Therefore, still by (\ref{eqn:-1.split}), there exists exact one eigenvalue $\omega=e^{\sqrt{-1 \theta_{i}}} \in \sigma\left(\td\xi_{\td\aa,\td\bb,e}(2\pi) \cap \mathbf{U}\right)$ for $\th \in (0,\pi)$ with the splitting number $(1,0)$.
     By the splitting number the list of splitting number in Section 2, we must have $\td\xi_{\td\aa,\td\bb,e}(2\pi) \approx D(\lm) \diamond R(\th)$.
     Also note that $i_1(\td\xi_{\td\aa,\td\bb,e}) = 0$ implies $\lm < 0$.
     Then we have (ii) of Theorem \ref{thm:RE.norm.form} holds.

     (iii)
 	  Suppose that  $(\td\aa_{0},\td\bb_{0},e_{0})\in \B_k$.
 	  By (i) and (ii) of Lemma \ref{lem:mono.ind.nul}, the matrix $M\equiv\td\xi_{\td\aa_{0},\td\bb_{0},e_{0}}(2\pi)$ is not hyperbolic with at least one pair of on the unit circle $\U$.
 	  Furthermore, by the definition of $\td\bb_k(\td\aa_{0},e_{0})$ and $\td\bb_s(\td\aa_0,e_0)$, $\pm 1 \notin \sg(\td\xi_{\td\aa_{0},\td\bb_{0},e_{0}}(2\pi))$.
 	  Suppose that
 	\be
 	\sigma(\td\xi_{\td\aa_{0},\td\bb_{0},e_{0}}(2\pi))= \{\lambda_{1} , \lambda_{1}^{-1}, \lambda_{2}, \lambda_{2}^{-1}\},
 	\ee
 	where $\lambda_{1}\in \mathbf{U} \backslash \mathbf{R}$
 	and $\lambda_{2}\in(\mathbf{U} \cup \mathbf{R}) \backslash\{ \pm 1,0\}$.

 	If $\lambda_{2}\in \R\bs\{\pm 1, 0\}$, the normal form is given by  $\td\xi_{\td\aa_0,\td\bb_0,e_0}(2\pi) \approx D(\lm) \diamond R(\th)$ for some $\th \in (0,\pi) \cup (\pi,2\pi)$. Then by (\ref{eqn:split}), we obtain following contradiction.
 	\bea
 	0 &=&i_{-1}\left(\td\xi_{\td\aa_0,\td\bb_{0}, e_{0}}\right) \nn\\
  &=&i_{1}\left(\td\xi_{\td\aa_0,\td\bb_{0}, e_{0}}\right)+S_{M}^{+}(1)-S_{M}^{-}\left(e^{ \pm \sqrt{-1} \theta}\right)+S_{M}^{+}\left(e^{ \pm \sqrt{-1} \theta}\right)-S_{M}^{-}(-1) \nn\\
 	&=&0+0-S_{R(\theta)}^{-}\left(e^{ \pm \sqrt{-1} \theta}\right)+S_{R(\theta)}^{+}\left(e^{ \pm \sqrt{-1} \theta}\right)-0 \nn\\
 	&=&\pm 1. \lb{eqn:lm2.split}
 	\eea
 	Then we have $\lambda_{2}\in \mathbf{U} \bs \mathbf{R}$ and
 	$\xi_{\aa_0,\bb_{0}, e_{0}}(2\pi) \approx R(\th_1) \diamond R(\th_2)$ for $\th_1$, $\th_2 \in (0,\pi) \cup (\pi ,2\pi)$.
 	Again as (\ref{eqn:lm2.split}), we have that
 	\bea
 	0&=& i_{-1}\left(\td\xi_{\td\aa_0,\td\bb_{0}, e_{0}}\right) \nn\\
 	&=& i_{1}\left(\td\xi_{\td\aa_0,\td\bb_{0}, e_{0}} \right)+S_{M}^{+}(1)-S_{R\left(\theta_{1}\right)}^{-}\left(e^{ \pm \sqrt{-1} \theta_{1}}\right)+S_{R\left(\theta_{1}\right)}^{+}\left(e^{ \pm \sqrt{-1} \theta_{1}}\right) \nn\\ &&-S_{R\left(\theta_{2}\right)}^{-}\left(e^{ \pm \sqrt{-1} \theta_{2}}\right)+S_{R\left(\theta_{1}\right)}^{+}\left(e^{ \pm \sqrt{-1} \theta_{2}}\right)-S_{M}^{-}(-1) \nn\\
 	&=&-S_{R\left(\theta_{1}\right)}^{-}\left(e^{ \pm \sqrt{-1} \theta_{1}}\right)+S_{R\left(\theta_{1}\right)}^{+}\left(e^{ \pm \sqrt{-1} \theta_{1}}\right)-S_{R\left(\theta_{2}\right)}^{-}\left(e^{ \pm \sqrt{-1} \theta_{2}}\right)+S_{R\left(\theta_{2}\right)}^{+}\left(e^{ \pm \sqrt{-1} \theta_{2} }\right).
 	\eea
 	Note that if $\th_1$ and $\th_2$ locate at the same interval $(0,\pi)$ or $(\pi, 2\pi)$, the right hand side will be $\pm 2$. Thus, we must have that  $\th_1 \in (0,\pi)$ and $\th_2\in (\pi, 2\pi)$.

 	Suppose that If $\th_1 = 2\pi -\th_2$, following equation holds.
 	\bea
 	\sum_{\td\bb_k < \td\bb \leq  0} \nu_{\omega}(\td\xi_{\td\aa_0,\td\bb_{0}, e_{0}})
 	\geq \sum_{ \td\bb_{s}\leq \bb \leq 0} 	\nu_{\omega}(\td\xi_{\td\aa_0,\td\bb_{0}, e_{0}})
 	+\nu_{\omega}(\td\xi_{\td\aa_0,\td\bb_{0}, e_{0}}) \geq 1+2,
 	\eea
 	where $\om = \exp(\sqrt{-1}\th_1)$.
 	This is a contradiction with (iii) of Lemma \ref{lem:mono.ind.nul}.

 	If $2\pi -\th_2 > \th_1$, for $\om =\exp(\sqrt{-1}\th_1)$, we have that
 	\be
 	0 \leq i_{\omega}(\td\xi_{\td\aa_0,\td\bb_{0}, e_{0}})
 	=i_{1}(\td\xi_{\td\aa_0,\td\bb_{0}, e_{0}})+S_{M}^{+}(1)-S_{R(\theta_{1})}^{-}(e^{\sqrt{-1} \theta_{1}})
 	=-S_{R(\theta_{1})}^{-}(e^{\sqrt{-1} \theta_{1}})=-1.
 	\ee
 	where $M = \td\xi_{\td\aa_0,\td\bb_{0}, e_{0}}(2\pi)$.
 	This contradiction yields that $2\pi -\th_2 <\th_1$.
 	Then we have (ii) of Theorem \ref{thm:RE.norm.form} holds.
\end{proof}

\begin{lemma}\lb{lem:boundary}
    For some $(\td\aa,\td\bb_0,e) \in \cR_{NH}$, if $\td\xi_{\td\aa,\td\bb_0,e}(2\pi) \approx M_2$ where $M_2$ is given by \eqref{eqn:m2} with $c_1$, $c_2\in \R$
    holds,
    or it possesses the
    basic normal form $N_1(-1,a) \diamond N_1(-1,b)$ and $a$, $b \in \R$, then $\td\xi_{\td\aa,\td\bb,e}(2\pi)$ is hyperbolic for all $\td\bb \in [-1,\td\bb_0)$.
\end{lemma}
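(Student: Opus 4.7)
The plan is to argue by contradiction, combining the monotonicity of the $\omega$-Maslov-type index in $\td\bb$ (Lemma \ref{lemma.index.tdphi}) with the splitting-number formula \eqref{eqn:split}. Suppose some $\td\bb_1 \in [-1, \td\bb_0)$ yields a non-hyperbolic $\td\xi_{\td\aa, \td\bb_1, e}(2\pi)$. By Theorem \ref{prop:pos} the matrix at $\td\bb = -1$ (equivalently $\aa = 3\bb$) is hyperbolic, and by Lemma \ref{lem:mono.ind.nul}(i) the hyperbolic region is connected in $\td\bb$. Hence there is a first $\td\bb_* \in (-1, \td\bb_0)$ at which an eigenvalue $\om_* \in \U$ emerges with $\nu_{\om_*}(\td\xi_{\td\aa, \td\bb_*, e}) \geq 1$, while $\td\xi_{\td\aa, \td\bb, e}(2\pi)$ remains hyperbolic for every $\td\bb \in [-1, \td\bb_*)$.

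Because both prescribed normal forms at $\td\bb_0$ have $-1$ as their unique eigenvalue, $\sg(M_0) \cap \U = \{-1\}$ where $M_0 = \td\xi_{\td\aa, \td\bb_0, e}(2\pi)$. Proposition \ref{prop:-1index} restricts the admissible sub-cases by $\nu_{-1}(M_0) \leq 2$; namely, to $M_2(-1, c)$ with either $c_2 = 0$ or $c_2 \neq 0$, and to $N_1(-1, a) \diamond N_1(-1, b)$ with $ab \neq 0$. For each such sub-case the splitting pair $(S_{M_0}^+(-1), S_{M_0}^-(-1))$ can be read off from the basic-form list in Section 2 via the symplectic-additivity axiom, together with Theorem 1.5.1 of \cite{Lon4} for the genuine size-four Jordan form; plugging this into \eqref{eqn:split} along with $i_1(\td\xi_{\td\aa, \td\bb, e}) = 0$ on $\cR_{NH}$ (Proposition \ref{prop:bnd.ind}) yields the exact value of $i_{-1}(\td\xi_{\td\aa, \td\bb_0, e})$ dictated by $M_0$.

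The contradiction then comes from propagating the index back along the segment $[\td\bb_*, \td\bb_0]$: by Lemma \ref{lemma.index.tdphi} the functions $i_{\om}$ are non-decreasing in $\td\bb$ and vanish identically on $[-1, \td\bb_*)$, so the jump at $\td\bb_*$ forces $i_{\om_*}(\td\xi_{\td\aa, \td\bb, e}) \geq 1$ for $\td\bb$ slightly above $\td\bb_*$. By continuity of the eigenvalues, any such $\om_* \neq -1$ must coalesce with $-1$ somewhere in $(\td\bb_*, \td\bb_0]$, and this coalescence consumes additional $(-1)$-nullity which, together with the value $i_{-1}(M_0)$ computed from the normal form, violates the uniform bound $i_{-1}(\td\xi_{\td\aa, \td\bb, e}) + \nu_{-1}(\td\xi_{\td\aa, \td\bb, e}) \leq 2$ of Proposition \ref{prop:-1index}. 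In the residual case $\om_* = -1$ one applies the same argument directly to $i_{-1}$ and its jump at $\td\bb_*$, using that the splitting contribution there already saturates the budget allowed at $\td\bb_0$.

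The principal obstacle will be the uniform handling of the case analysis: verifying that each splitting pair $(S_{M_0}^+(-1), S_{M_0}^-(-1))$, combined with any auxiliary $\om_*$-jump at $\td\bb_*$, exceeds the allowed budget from Proposition \ref{prop:-1index}. The sign-dependent sub-cases of $N_1(-1, a) \diamond N_1(-1, b)$ and the size-four Jordan block $M_2(-1, c)$ with $c_2 \neq 0$ require separate treatment, the latter via the symplectic invariants of $(c_1, c_2)$ extracted by Theorem 1.5.1 and Definition 1.8.11 of \cite{Lon4}.
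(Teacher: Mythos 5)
Your setup --- arguing by contradiction from the first degenerate value $\td\bb_*$, using Theorem \ref{prop:pos} at $\td\bb=-1$, the connectedness of the hyperbolic region from Lemma \ref{lem:mono.ind.nul}, and the jump $i_{\om_*}\ge 1$ just above $\td\bb_*$ --- is sound, but the step that is supposed to deliver the contradiction does not close. For $\om_*\neq -1$ you argue that the emergent unit eigenvalue pair must coalesce with $-1$ somewhere in $(\td\bb_*,\td\bb_0]$ and that this coalescence ``consumes'' $(-1)$-nullity beyond what the normal form at $\td\bb_0$ permits. If the coalescence occurs at some $\td\bb_{**}<\td\bb_0$ you indeed get $i_{-1}(\td\xi_{\td\aa,\td\bb_0,e})\ge 1$ and a contradiction; but nothing prevents the pair from reaching $-1$ exactly at $\td\bb_0$ --- and $-1$ being an eigenvalue of $M_0=\td\xi_{\td\aa,\td\bb_0,e}(2\pi)$ is precisely what both prescribed normal forms assert. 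In that scenario $i_{-1}(\td\bb_0)=0$ and $\nu_{-1}(\td\bb_0)\le 2$, so no budget is exceeded. (Note also that Proposition \ref{prop:-1index} asserts $\phi_{-1}\le 2$ and $\nu_{-1}\le 2$ separately, not the sum bound $i_{-1}+\nu_{-1}\le 2$ that your accounting relies on.)

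The repair is to extract more from the hypothesis than the single value $i_{-1}(\td\bb_0)$: since $\sg(M_0)\cap\U\subseteq\{-1\}$ and the basic normal forms occurring in $M_0$ all have $S^+_{M_0}(-1)=S^-_{M_0}(-1)$, formula \eqref{eqn:split} together with $i_1(\td\xi_{\td\aa,\td\bb_0,e})=0$ on $\cR_{NH}$ (Proposition \ref{prop:bnd.ind}) and $S^+_{M_0}(1)=0$ gives $i_{\om}(\td\xi_{\td\aa,\td\bb_0,e})=-S_{M_0}^-(\om)\le 0$, hence $i_{\om}(\td\xi_{\td\aa,\td\bb_0,e})=0$ for \emph{every} $\om\in\U$. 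Your own jump-plus-monotonicity argument then yields $i_{\om_*}(\td\xi_{\td\aa,\td\bb_0,e})\ge 1$, an immediate contradiction requiring no eigenvalue tracking and no appeal to Proposition \ref{prop:-1index}. This is in substance the paper's proof, run forward rather than by contradiction: $i_\om(\td\bb_0)=0$ for all $\om$ means $\td\cA(\td\aa,\td\bb_0,e)\ge 0$ on every $\ol{D}(\om,2\pi)$, and the strict monotonicity of the operator family in $\td\bb$ then makes $\td\cA(\td\aa,\td\bb,e)$ positive definite for $\td\bb\in[-1,\td\bb_0)$, whence $\nu_\om(\td\xi_{\td\aa,\td\bb,e})=0$ for all $\om\in\U$ and the monodromy is hyperbolic.
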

\begin{proof}
    Note that the basic normal form of the matrix $M_2(-1,c) $ is either $N_1(-1,\hat{a}) \diamond N_1(-1, \hat{b})$ or $N_1(-1,\hat{a})\diamond D(\lm)$ for some $\hat{a}$, $\hat{b} \in \R$ and $0 > \lm \neq -1$. Thus for any $\om \in \U\bs\{1\}$, by the study in Section 9.1 of
    \cite{Lon4}, we obtain
    \be
    0 \leq i_{\omega}(\td\xi_{\td\aa,\td\bb_{0}, e})=i_{1}(\td\xi_{\td\aa,\td\bb_{0}, e})+S_{M}^{+}(1)-S_{M}^{-}(\omega)=-S_{M}^{-}(\omega) \leq 0.
    \ee
    where $M = \td\xi_{\td\aa,\td\bb_{0}, e}(2\pi)$. Then we have that $i_{\omega}(\td\xi_{\td\aa,\td\bb_{0}, e}) = 0$ for all $\om \in \U$. Note that $\phi_{\om}(\td\cA(\td\aa,\td\bb_0,e)) = i_{\omega}(\td\xi_{\td\aa,\td\bb_{0}, e})$ and $\nu_{\om}(\td\cA(\td\aa,\td\bb_0,e)) = \nu_{\omega}(\td\xi_{\td\aa,\td\bb_{0}, e})$.
    Now from $\phi_{\om}(\td\cA(\td\aa,\td\bb_0,e)) = 0$ and the monotonic of eigenvalues of $\td\cA(\td\aa,\td\bb_0,e)$ in Lemma \ref{lem:mono.ind.nul}, we have that $\td\cA(\td\aa,\td\bb,e) > 0$ for all $\td\bb \in [-1,\td\bb_0)$ on $D(\om, 2\pi)$. Therefore, $\nu_{\omega}(\td\xi_{\td\aa,\td\bb, e}) = \nu_{\om}(\td\cA(\td\aa,\td\bb,e)) = 0$ holds for all $\td\bb \in [-1,\td\bb_0)$. Then this lemma holds.
\end{proof}

\begin{proof}[Proof of Theorem \ref{thm:RE.lim}]
(i) Let $e\in [0,1)$ and $\td\aa \in [0,\infty)$ satisfy $\td\bb_s(\td\aa,e) < \td\bb_m(\aa,e)$.
Then Corollary \ref{coro:null.sep} implies $\nu_{-1}(\td\xi_{\td\aa,\td\bb_{m}(\td\aa,e),e}(2\pi)) =1$.
As the limit case of (i) and (ii) of \ref{thm:RE.norm.form},
we have the eigenvalues of matrix $\td\xi_{\td\aa,\td\bb_{m}(\td\aa,e),e}(2\pi)$ are all on $\U$ and the normal form is either $M\approx M_2(-1,c)$ for some $c_2 \neq 0$ or $M \approx N_1(-1,1)\diamond R(\th)$ for some $\th \in (\pi,2\pi)$.

By Lemma \ref{lem:boundary} and $\td\bb_s(\td\aa,e) < \td\bb_m(\td\aa,e)$, we have that  $M\approx M_2(-1,c)$ for some $c_2 \neq 0$ cannot holds.
The normal form is given by $M \approx N_1(-1,1)\diamond R(\th)$.
So M is spectrally stable and linearly unstable.

(ii) Let $e\in [0,1)$ and $\td\aa \in [0,\infty)$ satisfy $\td\bb_k(\td\aa,e)<\td\bb_s(\td\aa,e) = \td\bb_m(\td\aa,e)$.
As the limit case of (i) and (ii) of the Theorem \ref{thm:RE.lim} and Lemma \ref{lem:boundary},
the normal form of the matrix $M \equiv \td\xi_{\td\aa,\td\bb_s(\td\aa,e),e}(2\pi)$ is either $M \approx N_1(-1,a) \diamond N_1(-1,b)$ for some $a$, $b \in \{-1,1\}$,
or $M \approx -I_2\diamond R(\th)$ for some $\th \in (\pi,2\pi)$.
However, the case of $M \approx N_1(-1,a) \diamond N_1(-1,b)$ is impossible by Lemma \ref{lem:boundary}.
Then we have that  $M \approx -I_2 \diamond R(\th)$ for some $\th \in (\pi,2\pi)$ and it is linear stable but not strongly linear stable.

(iii) Let $e\in [0,1)$ and $\td\aa \in [0,\infty)$ satisfy $\td\bb_k(\td\aa,e)<\td\bb_s(\td\aa,e) < \td\bb_m(\td\aa,e)$.
As the limiting case of Cases (ii) and (iii) of Theorem  \ref{thm:RE.lim}, the normal form of the matrix $M \equiv \td\xi_{\td\aa,\td\bb_s(\td\aa,e),e}(2\pi)$
must satisfy either $M \approx N_1(-1,-1)\diamond R(\th)$ for some $\th\in (\pi,2\pi)$ or $M \approx M_2(-1,c)$ for some $c_2\neq 0$.
Here the second case is also impossible by Lemma \ref{lem:boundary}, and the conclusion of (iii) follows.

(iv)
Let $e\in [0,1)$ and $\td\aa \in [0,\infty)$ satisfy $\td\bb_k(\td\aa,e)<\td\bb_s(\td\aa,e) < \td\bb_m(\td\aa,e)$.
As the limiting case of the cases (iii) of Theorem \ref{thm:RE.lim},
the matrix $\td\xi_{\td\aa,\td\bb_k(\td\aa,e),e}(2\pi)$ must have Krein collision eigenvalues $\sg(M) = \{ \lm_1,\bar{\lm}_1, \lm_2,\bar{\lm}_2\}$ with $\lm_1 = \bar{\lm}_2 = e^{\sqrt{-1}\th}$ for some $\th \in (0,\pi) \cup (\pi,2\pi)$.
By the Proposition \ref{prop:bnd.ind}, Proposition \ref{prop:-1index} and the definition of $\td\bb_k(\td\aa,e)$,
$\pm 1$ cannot be the eigenvalue of $\td\xi_{\td\aa,\td\bb_k(\td\aa,e),e}(2\pi)$. Therefore, we must have that $M \approx N_2(\om,b)$ for $\om = e^{\sqrt{-1}\th}$ and some matrix
$b=
(\begin{smallmatrix}
b_1 & b_2\\ b_3 & b_4
\end{smallmatrix})$
which has the form of (25-27) by Theorem 1.6.11 in p. 34 of \cite{Lon1}.
Because $(I_2 \diamond -I_2)^{-1} N_2( e^{\sqrt{-1}\th},b)(I_2 \diamond -I_2) = N_2( e^{\sqrt{-1}(2\pi-\th)},\hat{b}) $ where
$\hat{b}=(
\begin{smallmatrix}
b_1 & -b_2\\ -b_3 & b_4
\end{smallmatrix})$.
We can always suppose $\th \in (0,\pi)$ without changing the fact  $M \approx N_2(\om,b)$.

Note that by (\ref{eqn:bbk}) and Lemma \ref{lem:mono.ind.nul}, we have $i_{\om} (\td\xi_{\td\aa,\td\bb_{k},e}) = 0$.
Suppose $b_2-b_3 =0$, by Lemma 1.9.2 in p. 43 of \cite{Lon1},
$\nu_{\om} (N_2(\om,b)) =2$ and then $N_2(\om,b)$ has basic normal form $R(\th)\diamond R(2\pi-\th)$ by the study in case 4
in p. 40 of \cite{Lon1}.
Thus we have following contradiction
\be
0=i_{\omega}\left(\td\xi_{\td\aa,\td\bb_{k}(\td\aa,e), e}\right)=i_{1}\left(\td\xi_{\td\aa,\td\bb_{k}(\td\aa,e), e}\right)+S_{M}^{+}(1)-S_{R(\theta)}^{-}(\omega)-S_{R(\theta)}^{-}(\bar{\omega}) \leq-1.
\ee
Therefore $b_2 -b_3 \neq 0$ must hold. Then we obtain
\be
0=i_{\omega}(\td\xi_{\td\aa,\td\bb_{k}(\td\aa,e), e})=i_{1}(\td\xi_{\td\aa,\td\bb_{k}(\td\aa,e), e})+S_{M}^{+}(1)-S_{N_{2}(\omega, b)}^{-}(\omega)=-S_{N_{2}(\omega, b)}^{-}(\omega).
\ee

By $\<6\>$ and $\<7\>$ in the list of splitting number in Section 2, we obtain that $N_2(\om, b)$ must be
trivial.
Then by Theorem 1 of \cite{ZhuLong}, the matrix M is spectrally stable and is linearly unstable as claimed.

(v) Let $e\in [0,1)$ and $\td\aa \in [0,\infty)$ satisfy $\td\bb_k(\td\aa,e) =\td\bb_s(\td\aa,e) < \td\bb_m(\td\aa,e)$. Note that $-1$ must be an eigenvalue of $M \approx \td\xi_{\td\aa,\td\bb_k(\td\aa,e),e}(2\pi)$ with the geometric multiplicity 1 by Proposition \ref{prop:-1.index}.
As the limit case of (ii) and (iii) of Theorem \ref{thm:RE.norm.form}, the matrix $M$ must satisfy either $M \approx M_2(-1,b)$ with $b_1$,
$b_2 \in \R$ and $b_2 \neq 0$, and thus is spectrally stable and linearly unstable;
or $M \approx N_{-1}(-1,a)\diamond D(\lm)$ for some $a\in \{-1,1\}$ and $-1\neq \lm < 0$. Then in the later case we obtain
\be
0=i_{-1}(\td\xi_{\td\aa,\td\bb_{k}(\td\aa,e), e})=i_{1}(\td\xi_{\td\aa,\td\bb_{k}(\td\aa,e), e})+S_{M}^{+}(1)-S_{N_{1}(-1, a)}^{-}(-1)=-S_{N_{1}(-1, a)}^{-}(-1).
\ee
Then by $\<3\>$ and $\<4\>$ inthe list of splitting number in Section 2, we must have $a = 1$.
Thus $M = \td\xi_{\td\aa,\td\bb_{k}(\td\aa,e), e}(2\pi)$ is hyperbolic (elliptic-hyperbolic) and linearly unstable.
Note that by the above argument, the matrix $M_2(-1, b)$ also has the basic
normal form $N_1(-1, 1)\diamond	D(\lm)$ for some $-1 \neq \lm < 0$.

(vi) Let $e\in [0,1)$ and $\td\aa \in [0,\infty)$ satisfy $\td\bb_k(\td\aa,e) =\td\bb_s(\td\aa,e) = \td\bb_m(\td\aa,e)$.
As the limiting case of
cases (i), (ii) and (iii) of Theorem \ref{thm:RE.norm.form}, $-1$ must be the only eigenvalue of
$M= \td\xi_{\td\aa,\td\bb_k(\td\aa,e),e}$ with $\nu_{-1}(M) =2$ by  Proposition \ref{prop:-1.index}. Thus the matrix $M$ must satisfy $M \approx M_1(-1,c)$ with $c =0$ and $v_{-1}(M_2(-1,c)) = 2$; or $M \approx N_1(-1,\hat{a}) \diamond N_1(-1,\hat{b})$ for some $\hat{a}$ and $\hat{b} \in \{-1,1\}$. In both case $M$ possesses the basic normal form $ N_1(-1,\hat{a}) \diamond N_1(-1,\hat{b})$ for some $\hat{a}$ and $\hat{b} \in \{-1,1\}$. Thus we obtain
\bea
 0 &=& i_{-1}\left(\td\xi_{\td\aa,\td\bb_{k}(\td\aa,e), e}\right) \nn\\ &=&i_{1}\left(\td\xi_{\td\aa,\td\bb_{k}(\td\aa,e), e}\right)+S_{M}^{+}(1)-S_{N_{1}(-1, a)}^{-}(-1)-S_{N_{1}(-1, b)}^{-}(-1) \nn\\
 &=&-S_{N_{1}(-1, a)}^{-}(-1)-S_{N_{1}(-1, b)}^{-}(-1).
\eea
Then by $\<3\>$ and $\<4\>$ inthe list of splitting number in Section 2, we must have $a = b = 1$ similar
to our above study for (v). Therefore it is spectrally stable and linearly unstable as claimed.
\end{proof}

Next, we discuss one boundary of the $\cR_{NH}$ namely $\td\bb = 0$. When $\td\bb = 0$ and $\td\aa=0$, the operator $\td\cA(0,0,e)$ is the same as the operator $A(\bb_L,e)$ given by (2.29) of \cite{HLS} when $\bb_L = 0$.
The we have that for $e\in[0,1)$, the index and nullity satisfy
(\ref{eqn.ind.00}).
The norm form of $\td\xi_{0,0,e}(2\pi)$ is given by
\be
\td\xi_{0,0,e}(2\pi)\approx I_2 \diamond N_1(1,1).
\ee
By the continuity of the eigenvalues of $\td\xi_{0,0,e}(2\pi)$ and $\nu_{-1}(\td\xi_{\td\aa,0,e}(2\pi)) = 0$ when $\td\aa\geq \frac{3}{2}$ by Proposition \ref{prop:-1index}, we must that
 $\tilde\aa_s(e)$ is the intersection curve of  $\tilde\bb_{s}(\td\aa,e) = 0$,
 $\tilde\aa_m(e)$ is the intersection curve of  $\tilde\bb_{m}(\tilde\aa,e) = 0$, and $\tilde\aa_k(e)$ is the intersection curve of  $\tilde\bb_{k}(\tilde\aa,e) = 0$.
 Furthermore, $\tilde\aa_s(0) = \tilde\aa_m(0)= \frac{1}{4}$ and $\tilde\aa_k(0) = \frac{1}{3}$.

Similiar to Proposition \ref{prop:-1.index}, we have following results and omit the proof.
\begin{proposition}\lb{prop:bb.0.-1.ind}
    When $\td\bb = 0$, the $-1$-index and nullity of $\td\xi_{\td\aa,\td\bb, e}$ satisfy
	\bea
	i_{-1}\left(\td\xi_{\td\aa,0, e}\right) &=&
	\left\{\begin{array}{l}
		{2, \text { if } 0 \leq \td\aa <\td\aa_m(e);}\\
		{1, \text { if } \td\aa_m(e) \leq \td\aa <\td\aa_s(e);} \\
		{0, \text { if } \td\aa \geq \td\aa_s(e);}
	\end{array}\right. \\
	\nu_{-1}\left(\xi_{\td\aa,0, e}\right) &=&
	\left\{\begin{array}{l}
		{2, \text { if } \td\aa=\aa_m(e) =\aa_s(e);}\\
		{1, \text { if } \td\aa= \aa_m(e) \text{ or } \aa_s(e);}\\
		{0, \text { if } \td\aa\neq \aa_m(e) \text{ and } \aa_s(e).}
		\end{array}\right.
	\eea
\end{proposition}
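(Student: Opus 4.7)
The plan is to adapt the argument for Proposition \ref{prop:-1.index} to the boundary line $\{(\td\aa,0,e) : \td\aa \geq 0\}$ of $\cR_{NH}$, viewing the $-1$-index as a non-increasing step function along this line and identifying its jump locations with $\td\aa_m(e)$ and $\td\aa_s(e)$. First, I would fix the starting value: applying Lemma \ref{lem:2.1} to equation (\ref{eqn.ind.00}) gives $i_{-1}(\td\xi_{0,0,e})=2$ and $\nu_{-1}(\td\xi_{0,0,e})=0$ for every $e\in[0,1)$. By Lemma \ref{lemma.index.tdphi}(i), the map $\td\aa\mapsto i_{-1}(\td\xi_{\td\aa,0,e})$ is non-increasing and integer-valued, hence piecewise constant, with jumps only where $\nu_{-1}(\td\xi_{\td\aa,0,e})>0$. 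Proposition \ref{prop:-1index} gives $i_{-1}(\td\xi_{\td\aa,0,e})=0$ for $\td\aa>\tfrac14+\tfrac54 e$, so the total drop along the ray equals $2$, and Proposition \ref{prop:-1index} also bounds $\nu_{-1}\le 2$ at every point.

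Next I would locate the jump points using the definitions preceding the statement. By construction, $\td\aa_s(e)$ and $\td\aa_m(e)$ are precisely the $\td\aa$-values at which the surfaces $\B_s^\ast$ and $\B_m^\ast$ meet the plane $\td\bb=0$; these are $-1$-degenerate surfaces for $\td\xi_{\td\aa,\td\bb,e}(2\pi)$ in $\ol{\cR_{NH}}$. Conversely, any point on $\td\bb=0$ at which $\nu_{-1}>0$ must lie on $\ol{\B_s^\ast\cup\B_m^\ast}$, because by Lemma \ref{lemma.index.tdphi}(ii) the $\om$-index changes only across the $\om$-degeneracy locus, and the definitions (\ref{eqn:bbs.bbm}) list $\td\bb_s$ and $\td\bb_m$ as the unique two such surfaces. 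Consequently the $-1$-degeneracy set on the ray is exactly $\{\td\aa_s(e),\td\aa_m(e)\}$, and the total nullity drop splits between these points.

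I would then distinguish the two cases. If $\td\aa_m(e)<\td\aa_s(e)$, the total drop of $2$ is distributed over two distinct crossings; combined with $\nu_{-1}\le 2$ at each and Lemma \ref{lemma.index.tdphi}(ii) (which forces $i_{-1}$ to strictly decrease by at least $\nu_{-1}$ at each crossing and by a total of $2$), this forces $\nu_{-1}=1$ at each of $\td\aa_m(e)$ and $\td\aa_s(e)$, and gives the step sequence $2,1,0$. If $\td\aa_m(e)=\td\aa_s(e)$, the single crossing carries the full drop $2$, so $\nu_{-1}=2$ there and $i_{-1}$ jumps directly from $2$ to $0$. To justify the ordering of $\td\aa_m$ and $\td\aa_s$ in terms of the corresponding index values, I would run Corollary \ref{coro:null.sep} (or rather its proof) on the one-parameter segment $t\mapsto(t\td\aa_\ast,0,e)$ joining the starting point to the hyperbolic region past $\td\aa_s(e)$.

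The main technical point I expect to need to argue carefully is the exhaustion claim: that no degenerate value of $\td\aa$ lies outside $\{\td\aa_m(e),\td\aa_s(e)\}$ on the open boundary $\td\bb=0$. This is slightly delicate because on $\td\bb=0$ the nullity need not be even (contrast Theorem \ref{Th:multiplicity}), so one cannot dismiss stray degeneracies by parity alone. I would handle this by observing that any hypothetical additional degeneracy point would, by the same continuity and monotonicity arguments that produced $\B_m^\ast$ and $\B_s^\ast$ in $\cR_{NH}$, extend to a $-1$-degenerate surface distinct from $\B_m^\ast$ and $\B_s^\ast$, contradicting the upper bound $\nu_{-1}\le 2$ together with the total-nullity accounting in Lemma \ref{lem:mono.ind.nul}(iii). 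Once exhaustion is in hand, the piecewise constancy of $i_{-1}$ and the listed values follow immediately.
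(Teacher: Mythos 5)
Your proposal is correct and follows essentially the route the paper intends: the paper omits the proof of this proposition, stating only that it is ``similar to Proposition \ref{prop:-1.index}'', and your argument (endpoint values $i_{-1}=2$ at $\td\aa=0$ and $i_{-1}=0$ for $\td\aa$ large, monotonicity in $\td\aa$ from Lemma \ref{lemma.index.tdphi}, jumps occurring only at the traces $\td\aa_m(e)$, $\td\aa_s(e)$ of $\B_m^*$, $\B_s^*$ on $\td\bb=0$, with the drop at each degeneracy accounted by its nullity) is precisely the fleshed-out version of that sketch. Your extra care on the exhaustion of the degeneracy set via the total-drop bookkeeping addresses a point the paper leaves implicit, and is a sound way to close it.
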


Following the discussion in the proof of Theorem \ref{thm:RE.lim}, by Proposition \ref{prop:bb.0.-1.ind} and Proposition \ref{prop:bnd.ind},
we can have the norm form of $\td\xi_{\td\aa,0,e}(2\pi)$
when $(\td\aa_0,\td\bb_0,e)\in\ol{\cR_{EH}}\cap \ol{\cR_{NH}}$, namely, $\td\bb = 0$. The intersection $\td\bb_*(\td\aa,e)$ with $\td\bb = 0$ are denoted by $\td\aa_*(e)$ where $* = s, m, k$ respectively.
\begin{theorem}\lb{thm:bb=0.norm}
For $e\in[0,1)$, the normal form and linear stability of  $\td\xi_{\td\aa,0,e}(2\pi)$ satisfy followings.
\begin{enumerate}[label=(\roman*)]
      \item \lb{thm:bb=0.norm.aa=0} If $\td\aa = 0$, we have $\td\xi_{\td\aa,0,e}(2\pi) \approx I_2\diamond N_1(1,1)$.
      Thus it is spectrally stable and linear unstable;

      \item if $0 < \td\aa < \td\aa_m(e)$, we have $\td\xi_{\td\aa,0,e}(2\pi) \approx R(\th) \diamond N_1(1,1)$ with $\th \in ( \pi, 2\pi)$.
      Thus it is spectrally stable and linear unstable;

      \item if $\td\aa_m(e) < \td\aa_s(e)$, we have $\td\xi_{\td\aa_m(e),0,e}(2\pi) \approx N_1(-1,1) \diamond N_1(1,1) $ with $\th \in ( \pi, 2\pi)$.
      Thus it is spectrally stable and linear unstable;

      \item if $\td\aa_m(e) < \td\aa_s(e)$ and $\td\aa_m(e) < \td\aa < \td\aa_s(e)$,
      we have $\td\xi_{\td\aa,0,e}(2\pi) \approx D(\lm) \diamond N_1(1,1) $ with $0> \lm \neq -1$.
      Thus it is spectrally unstable;

      \item if $\td\aa_m(e) < \td\aa_s(e)$, we have $\td\xi_{\td\aa,0,e}(2\pi) \approx N_1(-1, -1) \diamond N_1(1,1)$.
      Thus it is spectrally stable and linear unstable;

      \item if $\td\aa_m(e) \leq \td\aa_s(e) < \td\aa_k(e)$ and $\td\aa_s(e) < \td\aa \leq \td\aa_k(e)$, we have $\td\xi_{\td\aa,0,e}(2\pi) \approx R(\th) \diamond N_1(1,1)$ with $\th \in (0, \pi)$.
      Thus it is spectrally stable and linear unstable;

      \item if  $\td\aa > \td\aa_k(e)$, we have $\td\xi_{\td\aa,0,e}(2\pi) \approx  D(\lm)  \diamond N_1(1,1)$ with with $0< \lm \neq 1$.
      Thus it is spectrally unstable;

      \item if $\td\aa_m(e) =\td\aa_s(e) < \td\aa_k(e)$, we have $\td\xi_{\td\aa_m(e),0,e}(2\pi) \approx  -I_2\diamond N_1(1,1)$.
      Thus it is spectrally stable and linear unstable;

      \item if $\td\aa_m(e) < \td\aa_s(e) = \td\aa_k(e)$, we have $\td\xi_{\td\aa_s(e),0,e}(2\pi) \approx N_1(-1, -1) \diamond N_1(1,1)$.
      Thus $\td\xi_{\td\aa,\td\bb_{k}(\td\aa,e),e}(2\pi)$ is spectrally stable and linearly unstable;

      \item if $\td\aa_m(e) =\td\aa_s(e) = \td\aa_k(e)$,  $\td\xi_{\td\aa_k(e),0,e}(2\pi) \approx -I_2\diamond N_1(1,1)$.
      Thus $\td\xi_{\td\aa,\td\bb_{k}(\td\aa,e),e}(2\pi)$ is spectrally stable and linearly unstable.
\end{enumerate}
\end{theorem}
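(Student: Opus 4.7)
The plan is to proceed along the boundary $\td\bb=0$ case by case, mimicking the strategy that established Theorem \ref{thm:RE.lim} but now with the role of the $-1$-degeneracy replaced by the $+1$-degeneracy coming from Proposition \ref{prop:bnd.ind}. The key structural input is that for every $(\td\aa,0,e)$ with $\td\aa\ge 0$ there is always an explicit kernel vector $R(t)(0,c_0)^\top\in\ker\td\cA(\td\aa,0,e)$, exhibited in Step~3 of the proof of Proposition \ref{prop:bnd.ind}, so $1\in\sg(\td\xi_{\td\aa,0,e}(2\pi))$ throughout. I will split off the corresponding $\Sp(2)$ block and use the index data $(i_{\pm 1},\nu_{\pm 1})$ provided by Proposition \ref{prop:bnd.ind} and Proposition \ref{prop:bb.0.-1.ind}, together with the splitting number recipe (\ref{eqn:split}), to identify the remaining $2\times 2$ factor.

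For case \ref{thm:bb=0.norm.aa=0}, observe that $\nu_1=3$, $i_1=0$, $i_{-1}=2$ and $\nu_{-1}=0$ by Proposition \ref{prop:bnd.ind}, Proposition \ref{prop:bb.0.-1.ind} and (\ref{eqn.ind.00}). Geometric multiplicity $3$ at eigenvalue $1$ in $\Sp(4)$, together with all eigenvalues lying on $\U$ forcing algebraic multiplicity $4$, uniquely picks out the basic normal form $I_2\diamond N_1(1,1)$ via the classification in Section~1.8 of \cite{Lon4}. For the generic cases (ii), (iv), (vi), (vii) one has $\nu_1=1$ and $\nu_{-1}=0$, so the $+1$-block is a single $N_1(1,\epsilon)$ with $\epsilon\in\{-1,0,1\}$; once $\epsilon=1$ is established (see below) the residual $\Sp(2)$ factor is pinned down by its $\pm 1$-indices computed from Proposition \ref{prop:bb.0.-1.ind} together with the splitting number list $\langle 1\rangle$--$\langle 8\rangle$ of Section~2.

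For the transitional cases (iii), (v) and the coincidence cases (viii)--(x) I will take the limit from the $\cR_{NH}$ side as $\td\bb\uparrow 0$, matching each with the appropriate entry of Theorem \ref{thm:RE.lim} and invoking Lemma \ref{lem:boundary} to exclude the alternatives $M_2(-1,c)$ with $c_2\ne 0$ or $N_1(-1,-1)\diamond D(\lm)$ (either of which would force hyperbolicity throughout a neighbourhood of $\td\bb=0$ inside $\cR_{NH}$ and contradict the definition of $\td\bb_k$). The remaining normal form is then the one listed in the statement, symplectically summed with the $N_1(1,1)$ factor from the $+1$-block.

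The main obstacle I anticipate is keeping the $+1$-block fixed as $N_1(1,1)$ rather than $I_2$ or $N_1(1,-1)$ along the whole segment $\td\bb=0$; this corresponds precisely to the splitting-number alternative $\langle 1\rangle$ versus $\langle 2\rangle$ at $\om=1$ in Section~2. I plan to rule out $\epsilon=-1$ by the monotonicity statement Lemma \ref{lemma.index.tdphi}(ii): that choice would contribute $(S^+(1),S^-(1))=(0,0)$ and so force $i_1+\nu_1$ to drop by at least $2$ when $\td\bb$ is pushed slightly into $\cR_{NH}$, contradicting $\nu_1\equiv 0$ and $i_1\equiv 0$ established in Proposition \ref{prop:bnd.ind}. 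A symmetric argument from the $\cR_{EH}$ side, combined with the strict inequality $\td\aa>0$ and the structure of $\ker\td\cA(\td\aa,0,e)$, rules out $\epsilon=0$. Once $\epsilon=1$ is secured, all ten subcases reduce to direct bookkeeping of splitting numbers, exactly parallel to the arguments in the proofs of Theorems \ref{thm:RE.norm.form} and \ref{thm:RE.lim}.
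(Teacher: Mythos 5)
Your overall strategy --- split off the eigenvalue-$1$ block guaranteed by the explicit kernel element $R(t)(0,c_0)^T$, identify the complementary $\Sp(2)$ factor from the $\pm 1$-index data of Propositions \ref{prop:bnd.ind} and \ref{prop:bb.0.-1.ind} via the splitting-number identity (\ref{eqn:split}), and treat the transitional cases as limits from $\cR_{NH}$ --- is exactly the route the paper indicates (it omits the details, referring back to the proof of Theorem \ref{thm:RE.lim}). The genuine gap is in the step you yourself flag as the main obstacle: fixing the sign in $N_1(1,\epsilon)$. Your proposed mechanism conflates two different monotonicities. The pair $(S_M^{+}(1),S_M^{-}(1))$ measures the jump of $i_{\om}$ as $\om$ moves along $\U$ away from $1$; it says nothing about how $i_1+\nu_1$ changes as $\td\bb$ is pushed into $\cR_{NH}$. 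Lemma \ref{lemma.index.tdphi}(ii) only gives $i_1+\nu_1\le 1$ for $\td\bb<0$, and since $i_1$ can drop by at most $\nu_1=1$ on leaving the degenerate boundary, both $\epsilon=1$ and $\epsilon=-1$ are fully compatible with $i_1\equiv\nu_1\equiv 0$ in $\cR_{NH}$. The contradiction you invoke never materializes, so $\epsilon=-1$ is not excluded by this argument.

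Worse, a uniform conclusion $\epsilon=1$ along the whole segment $\td\bb=0$ cannot be extracted at all, because the sign is pinned down case by case by (\ref{eqn:split}) in the form $i_{-1}=i_1+S_M^{+}(1)+(\hbox{contribution of the second block})$. In cases (ii), (iv), (vi) the second block contributes $+1$, $0$, $-1$ respectively, and the values $i_{-1}=2,1,0$ from Proposition \ref{prop:bb.0.-1.ind} force $S_M^{+}(1)=1$, hence $N_1(1,1)$. But in case (vii) one has $i_1=i_{-1}=0$ while the second block $D(\lm)$ with $\lm>0$ contributes nothing, so $S_M^{+}(1)=0$ and the list $\langle 1\rangle$--$\langle 2\rangle$ gives $N_1(1,-1)$ there, not $N_1(1,1)$; this is confirmed by a direct check at $e=0$, $\td\aa>\frac{1}{3}$ (equivalently $\aa=3\bb-1$, $\bb>\frac{2}{3}$), where the perturbed $2\times 2$ matrix $B_{\th_0/2\pi}$ has determinant $(\th_0/2\pi)^2\bigl((\th_0/2\pi)^2+6\bb-4\bigr)>0$, so the zero eigenvalue of $\cA$ moves upward and $S^{+}(1)=0$. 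You therefore need to replace the uniform argument by the case-by-case splitting-number bookkeeping, and in doing so you will find that the sign printed in case (vii) of the statement is not consistent with the paper's own index data and should be reconciled.
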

The proof of this theorem is similar as the one of Theorem \ref{thm:RE.lim}. We omit it here.

\setcounter{equation}{0}

\section{Stability in the Elliptic-Hyperbolic Region}
In this section, we will discuss the linear stability in $\cR_{EH}$.
Firstly, we consider the degenerate surfaces in $\cR_{EH}$.
\subsection{Degenerate surfaces}
Since $\td\cA(\td\aa,\td\bb,e)$ is a self-adjoint operator on $\ol{D}(\om,2\pi)$,
and a bounded perturbation of the operator $-\frac{\d^2}{\d t^2}I_2$,  $\td\cA(\td\aa,\td\bb,e)$ has discrete spectrum on $\ol{D}(\om,2\pi)$.
Thus we can define the $n$-th degenerate point of $\td\bb$ for $\td\aa\geq 0$, $\om\in\U$ and $e\in[0,1)$:
\be\lb{bb_s}
\td\bb_n(\td\aa,\om,e)=\min\left\{\td\bb>0\;\bigg|\;
[i_\om(\td\xi_{\td\aa,\td\bb,e})+\nu_\om(\td\xi_{\td\aa,\td\bb,e})]-[i_\om(\td\xi_{0,0,e})+\nu_\om(\td\xi_{0,0,e})]\ge n
\right\}.
\ee
Furthermore, we define that the degenerate surfaces of $\td\xi(\td\aa,\td\bb,e)(2\pi)$ by
\be \Pi_n(\om,e) = \{(\td\aa,\td\bb,e)|\td\bb=\td\bb_n(\td\aa,\om,e), \td\aa> 0,\td\bb>0, e\in[0,1)\}. \lb{eqn:td.bb.n}
\ee

Additionally, when $\td\aa =0$ and $\td\bb \to \infty$, the $i_\om(\td\xi_{\td\aa,\td\bb,e})+\nu_\om(\td\xi_{\td\aa,\td\bb,e})$ tends to infinity and when $\td\aa=\td\bb=0$, $i_\om(\td\xi_{0,0,e})+\nu_\om(\td\xi_{0,0,e}) =2$ when $\om \neq 1$.
Indeed, $\td\xi_{\td\aa,\td\bb,e}(2\pi)$ is $\om$-degenerate at surface $(\td\aa,\td\bb_n(\td\aa,\om,e),e)$ respectively, i.e.,
\be\label{degenerate.of.bn}
\nu_\om(\td\xi_{\td\aa,\td\bb_n, e})\ge 1.
\ee
Otherwise, if there existed some small enough $\ep>0$ such that $\td\bb=\td\bb_n(\td\aa,\om,e)-\ep$ would
satisfy $[i_\om(\td\xi_{\td\aa,\td\bb,e})+\nu_\om(\td\xi_{\td\aa,\td\bb,e})] - [i_\om(\td\xi_{0,0,e})+\nu_\om(\td\xi_{0,0,e})]\ge n$ in (\ref{bb_s}),
it would yield a contradiction.

By Lemma \ref{lemma.index.tdphi}, $\td\bb_n(\td\aa,\om,e)$ is non-decreasing with respect to $n$ for fixed $\td\aa$, $\om$ and $e$.
For fixed $\om$,
$\Pi_n(\om,e)$ called the $n$-th $\om$-degenerate surface is continuous surface with respect to $\td\aa$ and $e$ .
\begin{lemma}
    For any fixed  $n\in\N$ and $\om\in\U$, the degenerate surface $\Pi_n(\om,e)$ is continuous with respect to $\td\aa$ and  $e$.
\end{lemma}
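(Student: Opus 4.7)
The strategy is proof by contradiction, in the same spirit as the continuity argument already given for $\td\bb_k(\td\aa,e)$ in Section 6. Suppose $\Pi_n(\om,e)$ fails to be continuous: then there exist $(\hat{\aa},\hat{e})\in[0,\infty)\times[0,1)$, a sequence $\{(\td\aa_i,e_i)\}\to(\hat{\aa},\hat{e})$, and a number $\td\bb_0\ne\td\bb_n(\hat{\aa},\om,\hat{e})$ such that $\td\bb_n(\td\aa_i,\om,e_i)\to\td\bb_0$. I will derive contradictions in both possibilities $\td\bb_0<\td\bb_n(\hat{\aa},\om,\hat{e})$ and $\td\bb_0>\td\bb_n(\hat{\aa},\om,\hat{e})$.

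The key analytic input is that the self-adjoint operator family $\td\cA(\td\aa,\td\bb,e)$ on $\ol{D}(\om,2\pi)$ is a relatively compact, real-analytic perturbation of $-\frac{\d^2}{\d t^2}I_2$; by standard Kato perturbation theory its eigenvalues, counted with multiplicity, vary continuously with $(\td\aa,\td\bb,e)$. The quantity $\Phi(\td\aa,\td\bb,e):=\phi_\om(\td\cA(\td\aa,\td\bb,e))+\nu_\om(\td\cA(\td\aa,\td\bb,e))$ counts the non-positive eigenvalues and is therefore \emph{locally constant} at any parameter where $\nu_\om(\td\cA)=0$. Moreover, by Lemma \ref{lemma.index.tdphi} the operator $\td\cA(\hat{\aa},\cdot,\hat{e})$ is strictly monotone in $\td\bb$, so its degenerate values form a discrete set of $\td\bb$.

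For the case $\td\bb_0>\td\bb_n(\hat{\aa},\om,\hat{e})$, I would pick a non-degenerate $\td\bb'\in(\td\bb_n(\hat{\aa},\om,\hat{e}),\td\bb_0)$ with $\nu_\om(\td\xi_{\hat{\aa},\td\bb',\hat{e}})=0$. Monotonicity in $\td\bb$ combined with the definition of $\td\bb_n$ yields $\Phi(\hat{\aa},\td\bb',\hat{e})-\Phi(0,0,\hat{e})\ge n$. Since $\Phi(0,0,e)$ is locally constant in $e$ (the explicit computation in Section 6 gives $\Phi(0,0,e)=2$ for $\om\ne 1$ and $\Phi(0,0,e)=3$ for $\om=1$), local constancy of $\Phi$ at the non-degenerate base point $(\hat{\aa},\td\bb',\hat{e})$ propagates this inequality to $(\td\aa_i,\td\bb',e_i)$ for all large $i$, forcing $\td\bb_n(\td\aa_i,\om,e_i)\le\td\bb'<\td\bb_0$ and contradicting $\td\bb_n(\td\aa_i,\om,e_i)\to\td\bb_0$. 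For the symmetric case $\td\bb_0<\td\bb_n(\hat{\aa},\om,\hat{e})$, I would choose a non-degenerate $\td\bb''\in(\td\bb_0,\td\bb_n(\hat{\aa},\om,\hat{e}))$, use minimality in the definition of $\td\bb_n$ to get $\Phi(\hat{\aa},\td\bb'',\hat{e})-\Phi(0,0,\hat{e})<n$, and then for large $i$ exploit $\td\bb_n(\td\aa_i,\om,e_i)<\td\bb''$ together with monotonicity in $\td\bb$ to obtain $\Phi(\td\aa_i,\td\bb'',e_i)-\Phi(0,0,e_i)\ge n$, again contradicting local constancy at $(\hat{\aa},\td\bb'',\hat{e})$.

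The main obstacle is handling the semicontinuity of $\Phi=\phi_\om+\nu_\om$ correctly: $\phi_\om$ is lower semicontinuous, $\nu_\om$ is upper semicontinuous, and their sum is only upper semicontinuous in general because eigenvalues may cross zero in either direction. The device that rescues the argument, echoing the continuity proof given earlier for $\td\bb_k$, is to perform all perturbation analysis at \emph{non-degenerate} base points where $\nu_\om=0$, so that $\Phi$ is genuinely locally constant and two-sided comparisons go through; the existence of such non-degenerate $\td\bb'$ and $\td\bb''$ between $\td\bb_0$ and $\td\bb_n(\hat{\aa},\om,\hat{e})$ is guaranteed by the discreteness of the spectrum of $\td\cA(\hat{\aa},\cdot,\hat{e})$ noted immediately before the lemma.
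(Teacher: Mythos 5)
Your proposal is correct and follows essentially the same route as the paper's own proof: a contradiction argument on a convergent sequence, using the quantity $\phi_\om+\nu_\om$ (the paper's $g(\td\aa,\td\bb,e)$), its monotonicity in $\td\bb$, its local constancy at non-degenerate parameter values, and the minimality in the definition of $\td\bb_n$; your two cases correspond exactly to the paper's two cases. The only presentational difference is that you make explicit the semicontinuity caveat and the role of discreteness of the degenerate $\td\bb$-values, which the paper uses implicitly.
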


\begin{proof}
    We always assume that $n$ and $\om$ are fixed. In fact, if the function $\td\bb_n(\td\aa,\om,e)$ is not continuous in $(\td\aa,e)$,
    then there exist some $(\td{\aa}_*,e_*)$, a sequence $\{(\td\aa_i,e_i)|i\in\N\}$
    and $\td\aa_0\ge0$ such that
    \be
    \td\bb_n(\td\aa_i,\om,e_i)\to\td\bb_0\ne\td\bb_n(\td\aa_*,\om,e_*) \quad{\rm and}\quad \td\aa_i\to \td\aa_*,\quad e_i\to e_*  \quad{\rm as}\quad i\to+\infty. \lb{eqn:lim.bb}
    \ee
    By (\ref{degenerate.of.bn}), we have $\om\in\sigma(\td\xi_{\td\aa_i, \td\bb_n(\td\aa_i,\om,e_i),e_i}(2\pi))$.

    By the continuity of eigenvalues of $\td\xi_{\td\aa,\td\bb,e}(2\pi)$ with respect to $\td\aa$ and $e$, and
    (\ref{eqn:lim.bb}), we have $\om\in\sigma(\td\xi_{\td{\aa},\td\bb_n,e}(2\pi))$,
    and hence
    \be\label{bb0.om-degenerate}
    \nu_\om(\td\xi_{\td{\aa},\td\bb_0,e}) \ge 1.
    \ee
    We distinguish two cases according to the sign of the difference $\td\bb_0-\td\bb_n(\bar{\aa},\om,e)$.
    For convenience, let
    \be
    g(\td\aa,\td\bb,e)=[i_\om(\td\xi_{\td\aa,\td\bb,e})+\nu_\om(\td\xi_{\td\aa,\td\bb,e})] -[i_\om(\td\xi_{0,0,e})+\nu_\om(\td\xi_{0,0,e})].\lb{eqn:g.td}
    \ee

    If $\td\bb_0<\td\bb_n(\td\aa_*,\om,e_*)$, firstly we must have $g(\td\aa_*,\td\bb_0,e_*)<n$.
    Otherwise by the definition of $\td\bb_n(\td\aa_*,\om,e_*)$,  we must have $\td\bb_n(\td\aa_*,\om,e_*)\le\td\bb_0$.

    Let $\bar\bb \in (\bb_0,\td\bb_n(\td\aa_*,\om,e_*))$ such that $\nu_\om(\td\xi_{\td\aa_*,\bar\bb,e_*})=0$
    for any $\td\bb\in(\td\bb_0,\bar\bb]$. By the continuity of eigenvalues of $\td\xi_{\td\aa,\td\bb,e}(2\pi)$ with
    respect to $\td\aa$, $\td\bb$ and $e$, there exists a neighborhood $\mathcal{O}$ of $(\td\aa_*,\bar\bb, e_*)$
    such that $\nu(\td\xi_{\td\aa,\td\bb,e})=0$ for any $(\td\aa,\td\bb,e)\in\mathcal{O}$. Then $i_\om(\td\xi_{\td\aa,\td\bb,e})$, and
    hence $g(\td\aa,\td\bb,e)$ is constant in $\mathcal{O}$.
    By (\ref{eqn:lim.bb}), for $i$ large enough,
    we have $\td\bb_n(\td\aa_i,\om,e_i)<\bar\bb$ and $(\td\aa_i,\bar\bb,e_i)\in\mathcal{O}$, and hence
    $g(\td\aa_i,\bar\bb,e_i)\ge g(\td\aa_i,\td\bb_n(\td\aa_i,\om,e_i),e_i)\ge n$.
    Therefore, we have $g(\td{\aa}_*,\bar\bb,e_*)\ge n$.
    By the definition of (\ref{bb_s}), we have $\td\bb_n(\td{\aa}_*,\om,e_*)\le\bar\bb$
    which contradicts $\bar\bb\in(\td\bb_0,\td\bb_n(\td{\aa}_*,\om,e_*))$.

    If $\td\bb_n(\td\aa_*,\om,e_*)<\td\bb_0$,
    there exists $\bar\bb\in(\td\bb_n(\td\aa_*,\om,e_*),\td\bb_0)$ such that $\nu_\om(\td\xi_{\td\aa_*,\td\bb,e_*})=0$
    for any $\td\bb\in(\td\bb_n(\td\aa_*,\om,e_*),\bar\bb]$.
    By the continuity of eigenvalues of $\td\xi_{\td\aa,\td\bb,e}(2\pi)$ with respect to $\td\aa$, $\td\bb$ and $e$,
    there exists a neighborhood $\mathcal{U}$ of $(\td\aa_*,\bar\bb,e_*)$ such that $\nu(\td\xi_{\td\aa,\td\bb,e})=0$
    for any $(\td\aa,\td\bb,e)\in\mathcal{U}$.
    Then $i_\om(\td\xi_{\td\aa,\td\bb,e})$, and hence $g(\td\aa,\td\bb,e)$ is constant in $\mathcal{U}$.
    By (\ref{eqn:lim.bb}), for $i$ large enough,
    we have $\bar\bb<\td\bb_n(\td\aa_i,\om,e_i)$ and $(\td\aa_i,\bar\bb,e_i)\in\mathcal{U}$.
    $g(\td\aa_i,\bar\bb,e_i)=g(\td\aa,\bar\bb,e)\ge n$ implies $\td\bb_n(\td\aa_i,\om,e_i)\le \bar\bb$, a contradiction.

    Thus the continuity of $\td\bb_n(\td\aa,\om,e)$ in $\td\aa$, $e$ is proved.
\end{proof}

\subsection{The elliptic-hyperbolic region}
Note that in Lemma \ref{lem:EH.e0}, we will discuss that the degenerate curves when $e= 0$.
To be consistence with the discussion in Section \ref{Stab.R3}, we use the notations $(\aa,\bb)$ instead of $(\td\aa,\td\bb)$.

\begin{lemma}\label{lem:EH.e0}
    When $\om =1$ and $e = 0$, the degenerate surfaces satisfy
    \begin{equation}
    T^{-1}\Pi_n(1,0)=\{(\aa,\bb)|\aa=-n^2-1+\sqrt{9\bb^2+4n^2}\}, \quad {\rm if}\;\;n = 2m-1\;\;{\rm or}\;\;2m. \lb{bb_n0}
    \end{equation}
\end{lemma}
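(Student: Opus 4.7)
The plan is to invoke the explicit block decomposition of $\cA(\aa,\bb,0)$ carried out in Section 3.4 to locate the $1$-degenerate points, and then pair them off to the surfaces $\Pi_n(1,0)$ via the definition (7.2) and the monotonicity from Lemma 7.1.

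First I would recall from Section 3.4 that on $\ol{D}(1, 2\pi)$ the operator $\cA(\aa,\bb,0)$ is block-diagonal with respect to the Fourier basis $\{f_{0,j}, f_{k,j}\}$, breaking into $2\times 2$ blocks $B_n$ and $\bar{B}_n$ for $n \geq 0$ that share the characteristic polynomial $p_n(\lm) = \lm^2 - (2n^2 + 2\aa + 2)\lm + (n^2 + 1 + \aa)^2 - 9\bb^2 - 4n^2$. For $n \geq 1$, the condition $p_n(0) = 0$ forces $\aa = -n^2 - 1 + \sqrt{9\bb^2 + 4n^2}$ (taking the branch with $n^2 + 1 + \aa > 0$), which is precisely the defining equation of $\cR_{3,n}^*$. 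On such a curve both $B_n$ and $\bar{B}_n$ acquire a one-dimensional kernel, so $\nu_1(\xi_{\aa,\bb,0}) = 2$ there, in agreement with (\ref{eqn:null.e=0}); at other parameters in the region $\aa \geq \bb > 0$ (apart from the exceptional point $(1/2, 1/2)$), no other block is singular and $\nu_1 = 0$.

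Next, using Lemma \ref{lemma.index.tdphi}, I would note that for fixed $\td\aa$ the quantity $i_1(\td\xi_{\td\aa,\td\bb,0}) + \nu_1(\td\xi_{\td\aa,\td\bb,0})$ is non-decreasing in $\td\bb$. Transporting Theorem \ref{thm:gen.ind.nul}(i) into $(\td\aa, \td\bb)$-coordinates via $T$, this sum equals $3$ at $(\td\aa, \td\bb) = (0, 0)$ (which is the image of $(1/2, 1/2)$), stays constant on each open slab bounded by consecutive images $T\cR_{3,n}^*$ and $T\cR_{3,n+1}^*$, and jumps by exactly $2$ across each curve $T\cR_{3,n}^*$ since on that curve $\nu_1$ is exactly $2$ while $i_1$ increases by $2$ as we cross it (from value $2n-1$ on one side to $2n+1$ on the other). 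By the definition (\ref{bb_s}) of $\td\bb_n(\td\aa, 1, 0)$, both $\td\bb_{2n-1}(\td\aa, 1, 0)$ and $\td\bb_{2n}(\td\aa, 1, 0)$ must lie on the image $T\cR_{3,n}^*$, which yields the stated identification $T^{-1}\Pi_n(1,0) = \{\aa = -m^2 - 1 + \sqrt{9\bb^2 + 4m^2}\}$ when $n = 2m-1$ or $n = 2m$.

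The main technical point will be the careful bookkeeping needed to verify that the jump in $i_1 + \nu_1$ across $T\cR_{3,n}^*$ is exactly $2$, as opposed to $1$ (which would separate $\td\bb_{2n-1}$ from $\td\bb_{2n}$) or some larger value (which would destroy the correspondence with a single curve). This reduces to the observation that only the two blocks $B_n$ and $\bar{B}_n$ become singular on $\cR_{3,n}^*$, each contributing a one-dimensional kernel and one negative eigenvalue after crossing, so the total contribution to the jump is $1 + 1 = 2$. The secondary point will be confirming that, as $\td\bb$ increases with $\td\aa$ fixed, the images $T\cR_{3,m}^*$ are encountered in order of increasing $m$; this follows from the monotonicity of $\aa_m(\bb) = -m^2 - 1 + \sqrt{9\bb^2 + 4m^2}$ with respect to $m$ (for fixed $\bb$) combined with the direction in which $T$ orients the $(\td\aa, \td\bb)$-plane.
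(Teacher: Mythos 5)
Your overall strategy coincides with the paper's: both arguments reduce the identification of $\Pi_n(1,0)$ to the $e=0$ Fourier block computation of Section 3 (the blocks $B_k$, $\bar B_k$ with $p_k(0)=0$ precisely on the curves $\cR_{3,k}^*$, where the nullity is $2$), combined with the monotonicity of $i_1+\nu_1$ in $\td\bb$ from Lemma \ref{lemma.index.tdphi} and the definition (\ref{bb_s}). The only structural difference is that you re-derive the block decomposition while the paper simply cites the index and nullity values from Section \ref{sec:e=0}, introduces $g(\aa,\bb,0)=[i_1+\nu_1](\aa,\bb,0)-[i_1+\nu_1](\frac{1}{2},\frac{1}{2},0)$, tabulates $g$ on each sub-region of $\cR_3$, and reads off the minimal $\bb$ with $g\ge n$.

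The step that does not close is exactly the bookkeeping you flag as the main technical point. The definition (\ref{bb_s}) normalizes by $[i_1+\nu_1](\td\xi_{0,0,0})=3$, since $(\td\aa,\td\bb)=(0,0)$ corresponds to $(\aa,\bb)=(\frac{1}{2},\frac{1}{2})$ where $i_1=0$ and $\nu_1=3$. On $\cR_{3,k}^*$ monotonicity forces $i_1=2k-1$ and $\nu_1=2$, so $g=2(k-1)$ there and on the slab immediately above; in particular $g=0$ already on the first curve $T\cR_{3,1}^*$, which is therefore captured by $\td\bb_0$, and $g$ first attains the value $2m$ on $T\cR_{3,m+1}^*$. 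Since $g$ takes only even values, the conditions $g\ge 2m-1$ and $g\ge 2m$ cut out the same set, and the paper's proof accordingly concludes $T^{-1}\Pi_{2m-1}(1,0)=T^{-1}\Pi_{2m}(1,0)=\cR_{3,m+1}^*$ --- one curve later than your assertion that $\td\bb_{2n-1}$ and $\td\bb_{2n}$ lie on $T\cR_{3,n}^*$. With your own figures ($i_1+\nu_1=2n+1$ on and just above $T\cR_{3,n}^*$), the indices landing on that curve are $2n-3$ and $2n-2$, not $2n-1$ and $2n$; the repair is simply to carry the constant $3$ through the computation of $g$ instead of asserting the correspondence directly. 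In fairness, the lemma as printed is internally inconsistent (the symbol $n$ appears both as the label of $\Pi_n$ and inside the formula, which cannot hold simultaneously for $n=2m-1$ and $n=2m$), and Theorem \ref{thm:REH.norm} uses your convention; but measured against the definition (\ref{bb_s}) and the paper's own proof, your final identification is shifted by one.
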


\begin{proof}
    As in Section \ref{Stab.R3}, we have $i_1(\xi_{1/2,1/2,0})+\nu_1(\xi_{1/2,1/2,0})=3$, and if $(\aa,\bb)\in \cR_{3,m+1}^* $, we have $\nu_1(\aa,\bb,0)=2$. As (\ref{eqn:g.td}), we define
    \be
    g(\aa,\bb,e) =[i_1(\xi_{\aa,\bb,0})+\nu_1(\xi_{\aa,\bb,0})]-[i_1(\xi_{1/2,1/2,0})+v_1(\xi_{1/2,1/2,0})],
    \ee
     and
    \begin{equation}
     g(\aa,\bb,e)
    \left\{
    \begin{array}{l}
    \le 2m-2, \quad {\rm if}\;(\aa,\bb)\in\cR_{3,0}^+\cup\l(\cup_{i=0}^{m}\cR_{3,i}^-\cup\cR_{3,i}^* \cup \cR_{3,i}^+\r);
    \\
    =2m , \quad {\rm if}\;\;(\aa,\bb)\in\cR_{3,m+1}^*\cup\cR_{3,m+1}^+\cup\cR_{3,m+2}^-;
    \\
    \ge 2m+2, \quad {\rm if}\;\;(\aa,\bb)\in (\cup_{i \geq m + 2}^{\infty}\cR_{3,i}^*\cup \cR_{3,i}^+) \bigcup (\cup_{i = m + 3}^{\infty}\cR_{3,m+2}^-).
    \end{array}
    \right.
    \end{equation}
    For $n=2m-1$ or $2m$, $[i_1(\xi_{\aa,\bb,0})+\nu_1(\xi_{\aa,\bb,0})]-[i_1(\xi_{1/2,1/2,0})+\nu_1(\xi_{1/2,1/2,0})]\ge n$ is equivalent to $(\aa,\bb) \in  (\cup_{i \geq m + 1}^{\infty}\cR_{3,i}^*\cup \cR_{3,i}^+) \bigcup (\cup_{i = m + 2}^{\infty}\cR_{3,m+2}^-)$.
    Then the minimal value of $\bb$ in $(\aa,\bb) \in  (\cup_{i \geq m + 1}^{\infty}\cR_{3,i}^*\cup \cR_{3,i}^+) \bigcup (\cup_{i = m + 2}^{\infty}\cR_{3,m+2}^-)$ such that $\cA(\aa,\bb,e)$ is degenerate on
    $\overline{D}(1,2\pi)$ is $(\aa,\bb) \in \cR_{3,m+1}^*$. Thus applying $T^{-1}
    $, we obtain (\ref{bb_n0}).
\end{proof}

Note that when $\td\bb =0$, we have already discussed in Proposition \ref{prop:bnd.ind} and the nullity is always odd.
When $(\td\aa,\td\bb,e)\in \cR_{NH}\cup\cR_{EH}$, we have following theorem.

\begin{theorem}\label{Th:multiplicity}
    For any given $(\td\aa,\td\bb,e)\in \cR_{NH}\cup\cR_{EH}$, every
    $1$-degenerate surface has even geometric multiplicity.
\end{theorem}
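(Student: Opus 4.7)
The plan is to use a time-reversal anti-symplectic reversibility of the linearized system (\ref{LHS}). Since $1+e\cos(2\pi - t) = 1+e\cos t$, the coefficient matrix $B(t)$ in (\ref{LHS}) satisfies $B(2\pi - t) = B(t)$. A direct calculation shows that $\Sigma := \diag(1,-1,-1,1)$ is anti-symplectic ($\Sigma^T J\Sigma = -J$), satisfies $\Sigma^2 = I_4$, and anticommutes with $JB(t)$ for every $t$: one checks this separately on the constant part $JB_0$ and on the rank-two time-dependent correction $-\frac{1}{1+e\cos t}J\,\diag(0,0,\lambda_3,\lambda_4)$. Consequently the map $\xi(t)\mapsto \Sigma\xi(2\pi - t)$ sends solutions of $\dot\xi = JB(t)\xi$ to other solutions, and comparing monodromies yields the reversibility relation
\[
\Sigma M\Sigma = M^{-1}, \qquad M := \td\xi_{\td\aa,\td\bb,e}(2\pi).
\]
In particular, $\Sigma$ preserves $\ker(M - I_4)$.

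Because $M \in \Sp(4)$ has a reciprocal characteristic polynomial, the algebraic multiplicity of $1$ as an eigenvalue is automatically even, so the geometric multiplicity $\nu_1 = \dim\ker(M - I_4)$ can be odd only if $M$ possesses a nontrivial Jordan block at eigenvalue $1$. The idea is to rule this out using the reversibility. Suppose $v \in \ker(M - I_4)$ with $\Sigma v = \epsilon v$ ($\epsilon = \pm 1$), and that there exists $w$ with $(M - I_4)w = v$. Applying $\Sigma$ and using $\Sigma M\Sigma = M^{-1}$ together with $M^{-1}w = w - v$, one finds $(M - I_4)\Sigma w = -\epsilon v$, so that $\Sigma w + \epsilon w \in \ker(M - I_4)$. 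The remaining step is to exploit that $\td\bb \neq 0$ forces $\lambda_3 \neq \lambda_4$, so that the time-dependent perturbation of $JB(t)$ cannot act proportionally on the two $\Sigma$-eigenspaces, in order to conclude that $\Sigma w + \epsilon w$ is linearly independent from $v$. This produces a second element of $\ker(M - I_4)$, forcing $\nu_1 \geq 2$; iterating on a complement shows that the total nullity is even.

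The hard part will be justifying this independence rigorously, since the abstract combination of symplecticity and anti-symplectic reversibility is in principle compatible with a $2\times 2$ Jordan block at $1$ (as the elementary shear in $\Sp(2,\R)$ illustrates). One must therefore invoke the specific ODE structure together with $\td\bb \neq 0$. As a complementary route, the conclusion can be anchored at $e=0$, where Section 3 shows $\nu_1 = 2$ on every $1$-degenerate curve inside $\cR_{EH}$, and then propagated to $e > 0$ using the real-analyticity of the monodromy in $(\td\aa,\td\bb,e)$ and the fact that the locus $\td\bb = 0$ --- where odd nullities such as $\nu_1 = 1$ and $\nu_1 = 3$ actually occur (see (\ref{eqn:null.e=0})) --- is excluded from $\cR_{NH}\cup\cR_{EH}$, hence cannot be crossed by a path lying entirely in that open region.
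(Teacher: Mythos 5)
Your reversibility setup is correct as far as it goes: $\Sigma=\diag(1,-1,-1,1)$ is anti-symplectic, anticommutes with $JB(t)$ from (\ref{LHS}), and together with $B(2\pi-t)=B(t)$ this yields $\Sigma M\Sigma=M^{-1}$ for the monodromy; this is indeed the structural symmetry underlying the paper's proof. But neither of your routes closes the argument. In route 1, the entire content of the theorem is the independence of $\Sigma w+\epsilon w$ from $v$, which you leave open, and no abstract argument can supply it: the same reversibility relation holds for \emph{all} parameter values, including $\td\bb=0$ (i.e. $\aa=3\bb-1$), where odd nullities $\nu_1=1$ and $\nu_1=3$ genuinely occur by (\ref{eqn:null.e=0}) and Proposition \ref{prop:bnd.ind}; and your own shear example $N_1(1,1)$, which is $\diag(1,-1)$-reversible, realizes exactly the degeneration $\Sigma w+\epsilon w=0$. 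So the hypothesis $\td\bb\neq0$ must enter through the concrete ODE, and your sketch (``the perturbation cannot act proportionally on the two $\Sigma$-eigenspaces'') is a restatement of what needs proving, not a proof. Route 2 is not viable either: $\nu_1$ is only upper semicontinuous in $(\td\aa,\td\bb,e)$ and its parity is not a deformation invariant, so knowing $\nu_1=2$ on the $e=0$ degenerate curves says nothing about the kernel dimension on the a priori unknown degenerate surfaces for $e>0$; asserting evenness along those surfaces is precisely the theorem, which makes the propagation circular.

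For comparison, the paper closes exactly the gap your route 1 identifies, by implementing the symmetry at the level of Fourier coefficients. Writing a kernel element as $z=R(t)(x,y)^T$ and expanding, the coefficients split into the two uncoupled recurrences (\ref{ad.equations}) and (\ref{bc.equations}) with identical matrices $A_n$, $B_n$ --- these are your two $\Sigma$-parity sectors ($x$ even/$y$ odd versus $x$ odd/$y$ even) --- and from any nontrivial solution $\{(a_n,d_n)\}$ of one sector it \emph{constructs} a solution of the other via $(\tilde b_n,\tilde c_n)=(a_n,-d_n)$, the constant term $\tilde c_0$ in (\ref{tilde.c0}) being solvable precisely because $1+\aa-3\bb\neq0$, i.e. $\td\bb\neq0$ (with the exceptional case $\det B_1=0$, $\aa=-2+\sqrt{9\bb^2+4}$, handled separately). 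When $\td\bb=0$ this construction breaks at $\tilde c_0$, which is exactly why the odd-nullity examples live on that excluded locus. This cross-sector map is the quantitative independence step missing from your proposal: it exhibits the kernel as a direct sum of two isomorphic subspaces, hence of even dimension.
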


\begin{proof}
    The statement has already been proved for $e=0$ in (\ref{eqn:null.e=0}).
    We will give the proof when $e \neq 0$.
    We notice that the proof will be simple if we use parameters $\aa$,$\bb$.
    Therefore, we consider the region $T^{-1}(\cR_{EH}\cup \cR_{NH}) = \{(\aa,\bb,e)|0<\bb<\aa< 3\bb, \aa\neq 3\bb-1,e\in [0,1)\}$.
    Once we have the kernel of $\cA(\aa,\bb,e)z=0$ in $\ol{D}(1,2\pi)$ has even dimensions for $(\aa,\bb,e)\in T^{-1}(\cR_{NH}\cup\cR_{EH})$,
    we have this theorem holds.

    {\bf Claim.} {\it If $\cA(\aa,\bb,e)z=0$
        has a solution $z\in\ol{D}(1,2\pi)$ for a fixed value $e\in (0,1)$, there exists a second periodic
        solution which is independent of $z$. }

    If the claim holds, then the space of solutions of $\cA(\aa,\bb,e)z=0$ is the direct
    sum of two isomorphic subspaces, hence it has even dimension.
    This method is originally due to R. Mat\'{i}nez,
    A. Sam\`{a} and C. Sim\`{o} in \cite{MSS1}.
    Here we follows the arguments from Theorem 4.8 of \cite{ZL15ARMA}.

    Let $z(t)=R(t)(x(t),y(t))^T$ be a nontrivial solution of $\cA(\aa,\bb,e)z(t)=0$, then it yields
    \begin{equation}
    \left\{
    \begin{array}{l}
    (1+e\cos t)x''(t)=(1+\aa+3\bb)x(t)+2y'(t)(1+e\cos t),
    \\
    (1+e\cos t)y''(t)=(1+\aa-3\bb) y(t)-2x'(t)(1+e\cos t).
    \end{array}
    \right.
    \end{equation}
    By Fourier expansion, $x(t)$ and $y(t)$ can be written as
    \bea
    x(t) =a_0+\sum_{n\ge1}a_n\cos nt+\sum_{n\ge1}b_n\sin nt, \nn
    \\
    y(t) =c_0+\sum_{n\ge1}c_n\cos nt+\sum_{n\ge1}d_n\sin nt. \nn
    \eea
    Then the coefficients must satisfy the following uncoupled sets of recurrences:
    \begin{equation}
    \left\{
    \begin{array}{l}
    (1+\aa+3\bb)a_0=-e(d_1+\frac{a_1}{2}),
    \\
    eA_2\begin{pmatrix}
    a_2\\d_2
    \end{pmatrix}= B_1\begin{pmatrix}
    a_1\\ d_1
    \end{pmatrix},\\
    eA_{n+1}\begin{pmatrix}
    a_{n+1}\\ d_{n+1}
    \end{pmatrix}
    =B_n\begin{pmatrix}
    a_n\\ d_n
    \end{pmatrix}
    -eA_{n-1}
    \begin{pmatrix}
    a_{n-1}\\ d_{n-1}
    \end{pmatrix},
    \quad n\ge2,
    \end{array}
    \right.
    \label{ad.equations}
    \end{equation}
    and
    \begin{equation}
    \label{bc.equations}
    \left\{
    \begin{array}{l}
    (1+\aa-3\bb)c_0=e(b_1-\frac{c_1}{2}),
    \\
    eA_2
    \begin{pmatrix}
    b_2\\ -c_2
    \end{pmatrix}
    =B_1\begin{pmatrix}
    b_1\\ -c_1
    \end{pmatrix},
    \\
    eA_{n+1}
    \begin{pmatrix}
    b_{n+1}\\ -c_{n+1}
    \end{pmatrix}
    =B_n\begin{pmatrix}
    b_{n}\\ -c_{n}
    \end{pmatrix}-eA_{n-1}
    \begin{pmatrix}
    b_{n-1}\\ -c_{n-1}
    \end{pmatrix},
    \quad n\ge2,
    \end{array}
    \right.
    \end{equation}
    where
    \begin{equation}
    A_n=-\frac{n}{2}
    \begin{pmatrix}
    n&2\\ 2 & n
    \end{pmatrix},\quad
    B_n=
    \begin{pmatrix}
    n^2+1+\aa+3\bb & 2n\\ 2n & n^2+1+\aa-3\bb
    \end{pmatrix}.
    \end{equation}

    Thus $\det(B_1)=\aa^2+4\aa-9\bb^2\ne0$ if $\aa \neq -2 + \sqrt{9\bb^2+4}$ and $\det(A_n)\ne0$ when $n\ge3$.
    Thus given $(a_2,d_2)^T$, we can obtain $(a_1,d_1)^T$ uniquely from the second equality of (\ref{ad.equations}),
    and then obtain $(a_n,d_n)^T$ for $n\ge3$ by the last equality of (\ref{ad.equations}).

    By the non-triviality of $z=z(t)$, both (\ref{ad.equations}) and (\ref{bc.equations}) have solutions
    $\{(a_n,d_n)\}_{n=1}^\infty$ and $\{(b_n,c_n)\}_{n=1}^\infty$ respectively.
    We assume (\ref{ad.equations}) admits a nontrivial solutions.
    Then $\sum_{n\ge1}a_n\cos nt$ and $\sum_{n\ge1}d_n\sin nt$ are convergent.
    Thus, $\sum_{n\ge1}a_n\sin nt$ and $-\sum_{n\ge1}d_n\cos nt$ are convergent too.
    Moreover, by the similar structure between equations (\ref{ad.equations}) and (\ref{bc.equations}),
    we can construct a new solution of (\ref{bc.equations}) given below if $\aa+1-3\bb  \neq 0$,
    \bea
    \tilde{c}_0&=&-\frac{e}{1+\aa-3\bb}(a_1+\frac{d_1}{2}),\lb{tilde.c0}
    \\
    \begin{pmatrix}
        \tilde{b}_n \\ \tilde{c}_n
    \end{pmatrix}
    &=&
    \begin{pmatrix}
        \tilde{a}_n \\ -\tilde{d}_n
    \end{pmatrix},\quad n\ge1. \lb{bn.cn}
    \eea
    Therefore we can build two independent solutions of $A(\bb,e)w=0$ as
    \bea
    w_1(t)&=&R(t)
    \begin{pmatrix}
        a_0+\sum_{n\ge1}a_n\cos nt \\\sum_{n\ge1}d_n\sin nt
    \end{pmatrix},
    \\
    w_2(t)&=&R(t)
    \begin{pmatrix}
        \sum_{n\ge1}\tilde{b}_n\sin nt \\
        \tilde{b}_0+\sum_{n\ge1}\tilde{c}_n\cos nt
    \end{pmatrix}
    =R(t)
    \begin{pmatrix}
        \sum_{n\ge1}a_n\sin nt\\
        -\frac{e}{\bb}(a_1+\frac{d_1}{2})-\sum_{n\ge1}d_n\cos nt
    \end{pmatrix}.
    \eea
    If $\aa = -2 + \sqrt{9\bb^2+4}$, then $B_1$ is degenerate.
    Note that $\bb\neq 0$.
    Then the $\{a_n,d_n\}$ satisfies (\ref{ad.equations}), we must have that $a_2+d_2=0$.
    When $a_2+d_2=0$, the $(a_1,d_1)$ is given by $(a_1,d_1)^T = (\frac{\sqrt{9\bb^2+4}-3\bb-2}{\sqrt{9\bb^2+4}+3\bb-2},1)d_1$.
    Then $\{(a_n,d_n)\}$ is obtained by (\ref{ad.equations}) and  $\{c_n,d_n\}$ can be given by (\ref{bc.equations}).

    Therefore, the claim holds and then this theorem holds.
\end{proof}

\begin{proposition}\label{Th:odd.indices}
    For any $(\td\aa,\td\bb,e) \in \cR_{EH}$, $i_1(\td\xi_{\td\aa,\td\bb,e})$ is an odd number.
\end{proposition}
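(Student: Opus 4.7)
}
The plan is to fix $\td\aa_0>0$ and $e_0\in[0,1)$ arbitrarily and trace the behavior of the one-parameter family $\td\bb\mapsto i_1(\td\xi_{\td\aa_0,\td\bb,e_0})$ as $\td\bb$ increases from $0$. The key input will be Theorem \ref{Th:multiplicity}: inside $\cR_{EH}$ every $1$-degenerate point has even nullity, so only the initial ``boundary jump'' at $\td\bb=0$ can change the parity of $i_1$, and that jump will be responsible for the odd value.

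First, on the boundary $\td\bb=0$, Proposition \ref{prop:bnd.ind} gives $i_1(\td\xi_{\td\aa_0,0,e_0})=0$ and $\nu_1(\td\xi_{\td\aa_0,0,e_0})=1$. By Lemma \ref{lemma.index.tdphi} both $i_1$ and $i_1+\nu_1$ are non-decreasing in $\td\bb$. Since the eigenvalues of $\td\cA(\td\aa_0,\cdot,e_0)$ depend analytically on $\td\bb$, the set of $1$-degenerate $\td\bb$-values is discrete, so for $\td\bb>0$ strictly below the first interior degenerate value $\td\bb_1(\td\aa_0,1,e_0)$ we have $\nu_1=0$ and therefore
$$
i_1(\td\xi_{\td\aa_0,\td\bb,e_0})
\;=\; i_1(\td\xi_{\td\aa_0,0,e_0}) + \nu_1(\td\xi_{\td\aa_0,0,e_0})
\;=\; 0+1 \;=\; 1.
$$

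Next, for an arbitrary $\td\bb_0>0$, the function $\td\bb\mapsto i_1(\td\xi_{\td\aa_0,\td\bb,e_0})$ is piecewise constant on $(0,\td\bb_0]$ and jumps only at the finitely many $1$-degenerate values $\td\bb_n(\td\aa_0,1,e_0)\in(0,\td\bb_0]$ supplied by $(\ref{eqn:td.bb.n})$. At each such interior crossing, the same two monotonicities force the jump of $i_1$ to equal exactly $\nu_1$ at that point, which by Theorem \ref{Th:multiplicity} is even. Adding a finite sum of even jumps to the starting value $1$ preserves parity, so $i_1(\td\xi_{\td\aa_0,\td\bb_0,e_0})$ is odd. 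Since $(\td\aa_0,\td\bb_0,e_0)\in\cR_{EH}$ was arbitrary, the proposition follows.

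The main technical obstacle is pinning down the ``jump $=\nu_1$'' formula at an isolated interior degenerate value $\td\bb_*$: one combines $i_1(\td\bb_*^+)-i_1(\td\bb_*^-)\in[0,\nu_1(\td\bb_*)]$ coming from Lemma \ref{lemma.index.tdphi}(i) with $i_1(\td\bb_*^+)+\nu_1(\td\bb_*^+)\ge i_1(\td\bb_*^-)+\nu_1(\td\bb_*^-)$ from Lemma \ref{lemma.index.tdphi}(ii) and uses $\nu_1(\td\bb_*^\pm)=0$ to conclude equality. This uses the isolation of $\td\bb_*$, which is where the analytic dependence of the spectrum on $\td\bb$ enters. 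Finally, the value at a degenerate endpoint $\td\bb_0$ (if it is itself degenerate) coincides with the left limit by Long's convention, so the parity is still odd.
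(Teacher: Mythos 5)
Your argument is essentially the paper's own proof: both start from $i_1+\nu_1=1$ on the boundary $\td\bb=0$ (Proposition \ref{prop:bnd.ind}), let $\td\bb$ increase, and add the nullities accumulated at the interior $1$-degenerate crossings, each of which is even by Theorem \ref{Th:multiplicity}, so the parity stays odd. The paper uses a two-leg path (first in $\td\aa$ along $\td\bb=0$, then in $\td\bb$) and simply writes the resulting sum formula (\ref{odd.sum}), whereas you spell out the ``jump equals nullity'' mechanism via the two monotonicities of Lemma \ref{lemma.index.tdphi}; aside from a slight imprecision there (the lower bound on the jump should come from comparing $i_1+\nu_1$ at $\td\bb_*$ itself with its value just above $\td\bb_*$, not from the one-sided limits where $\nu_1$ vanishes), the route is the same.
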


\begin{proof}
    When $e=0$, the conclusion of Proposition \ref{Th:odd.indices} follows from (\ref{index.e=0}).

    Now we suppose $0<e<1$.
    For the given $(\td\aa_*, \td\bb_*)$, we can choose the path that first increase $\td\aa$ from $0$ to $\td\aa_*$ and then increase $\td\bb$  from $0$ to $\td\bb_*$.
    Suppose there are $n$ intersections with the degenerate surfaces which are defined by $(\td\aa_*,\td\bb_{n_*})$.
    Then by (ii) of Proposition \ref{prop:bnd.ind}, we have
    \begin{equation}
    i_1(\td\xi_{\td\aa_*,\td\bb_*,e}) = i_1(\td\xi_{\aa_*,0,e})+ \nu_1(\td\xi_{\aa_*,0,e})+
    \sum \nu_1(\td\xi_{\aa_*,\bb_{k,\ast},e})
    = 1 + \sum_{k=1}^n\nu_1(\td\xi_{\aa_*,\bb_{k,\ast },e}).\label{odd.sum}
    \end{equation}
    By the proof of Theorem \ref{Th:multiplicity}, every $\nu_1(\td\xi_{\aa_*,\bb_{n_*},e})$ is even for $1\le k\le n$.
    Thus by $i_1(\td\xi_{\aa_*, \bb_{\ast},e})$ is odd by (\ref{odd.sum}).
\end{proof}

Note that $\nu_1(\td\cA(\td\aa,\td\bb,e)) =0$ when $(\td\aa,\td\bb,e) \in \cR_{NH}$ and for  $(\td\aa,\td\bb,e) \in \cR_{EH}$ the intersection of $1$-degenerates surface and the $-1$-degenerate surfaces satisfy following theorem.

\begin{proposition}\lb{thm:nonintersec.-1.1}
    If $(\td\aa,\td\bb,e) \in \cR_{EH}$, any $1$-degenerate surface and any $-1$-degenerate surface cannot intersect each other.
    That is, for any $0\le e<1$, for any  $n_1$ and $n_2\in\N$,
    $\Pi_{n_1}(1,e) \cap \Pi_{n_2}(-1,e)= \emptyset$.
\end{proposition}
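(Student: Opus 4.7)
The plan is to argue by contradiction. Suppose there exist $n_1, n_2 \in \N$, $e_0 \in [0,1)$ and a point $(\td\aa_0, \td\bb_0, e_0) \in \Pi_{n_1}(1, e_0) \cap \Pi_{n_2}(-1, e_0) \subset \cR_{EH}$. By the definition of the degenerate surfaces in \eqref{degenerate.of.bn}, both $\nu_1(\td\xi_{\td\aa_0, \td\bb_0, e_0}) \geq 1$ and $\nu_{-1}(\td\xi_{\td\aa_0, \td\bb_0, e_0}) \geq 1$. Applying Theorem~\ref{Th:multiplicity} immediately upgrades the first estimate to $\nu_1(\td\xi_{\td\aa_0, \td\bb_0, e_0}) \geq 2$.

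Writing $M = \td\xi_{\td\aa_0, \td\bb_0, e_0}(2\pi) \in \Sp(4)$, its characteristic polynomial $p(\lambda)$ satisfies the symplectic reciprocity $p(\lambda) = \lambda^4 p(1/\lambda)$; as a consequence, whenever $\pm 1$ is an eigenvalue of $M$ it must have even algebraic multiplicity. The bound $\nu_1 \geq 2$ gives algebraic multiplicity at least $2$ at the eigenvalue $1$, and $\nu_{-1} \geq 1$ combined with the parity constraint gives algebraic multiplicity at least $2$ at $-1$. These exhaust the four eigenvalues of the symplectic matrix $M$, so $\sigma(M) = \{1, 1, -1, -1\}$ and $M$ has no eigenvalue off the unit circle.

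The main obstacle is converting this eigenvalue restriction into an outright contradiction inside $\cR_{EH}$. I would attempt this along two complementary routes. The first is to establish that every $\td\xi_{\td\aa, \td\bb, e}(2\pi)$ with $(\td\aa, \td\bb, e) \in \cR_{EH}$ must carry at least one strictly hyperbolic pair $\{\mu, \mu^{-1}\}$ with $|\mu| \neq 1$: this is transparent at $e=0$, since Section~\ref{Stab.R3} gives the explicit splitting $\xi_{\aa,\bb,0}(2\pi) = D(e^{2\pi \lambda_1}) \diamond M(2\pi)$ with $\lambda_1 > 0$ throughout the image $T^{-1}\cR_{EH}|_{e=0}=\cR_3$, and for $e>0$ the hyperbolic pair can only leave the real axis by a collision at $\pm 1$, i.e.\ by meeting a $\pm 1$-degenerate surface. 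Careful bookkeeping of these collisions using Proposition~\ref{Th:odd.indices} (oddness of $i_1$ in $\cR_{EH}$) together with the Bott-type iteration formula $i_1(\td\xi^2) = i_1(\td\xi) + i_{-1}(\td\xi)$ and $\nu_1(\td\xi^2) = \nu_1(\td\xi) + \nu_{-1}(\td\xi)$ should rule out the simultaneous collisions $D(e^{2\pi\lambda_1})\to\{1,1\}$ and $M(2\pi)\to\{-1,-1\}$ required to produce the forbidden spectrum.

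The second, more direct, route is to upgrade $\nu_{-1} \geq 1$ to $\nu_{-1} \geq 2$ by running the Fourier-doubling argument of Theorem~\ref{Th:multiplicity} on $\ol{D}(-1,2\pi)$ instead of $\ol{D}(1,2\pi)$, with integer Fourier modes replaced by half-integer modes in \eqref{ad.equations}-\eqref{bc.equations}; this would force $M \sim I_2 \diamond (-I_2)$, hence $M^2 = I_4$, so every solution of the linearized system \eqref{LHS} would be $4\pi$-periodic. A direct inspection of the half-integer frequency recurrences, using $\td\bb_0>0$ and $\td\aa_0>0$ to guarantee the nondegeneracy conditions $\det B_n\ne 0$ and $1+\td\aa-3\td\bb\ne 0$ analogous to those exploited in Theorem~\ref{Th:multiplicity}, should then preclude the required $4\pi$-periodicity of every solution and close the argument.
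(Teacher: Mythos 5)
Your spectral reduction is sound and coincides with the paper's opening move: from $\nu_1(\td\xi_{\td\aa_*,\td\bb_*,e_*})\ge 1$ and $\nu_{-1}(\td\xi_{\td\aa_*,\td\bb_*,e_*})\ge 1$, Theorem \ref{Th:multiplicity} forces $\nu_1 = 2$, and the even algebraic multiplicity of the eigenvalues $\pm 1$ of a symplectic matrix yields $\sigma(M)=\{1,1,-1,-1\}$, hence $M\approx N_1(1,0)\diamond N_1(-1,b_2)$. But beyond this point neither of your two routes closes the argument, and this is where the genuine gap lies. Your second route rests on a false premise: the analogue of Theorem \ref{Th:multiplicity} for $\om=-1$ fails. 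In $\cR_{EH}$ the $-1$-degenerate surfaces generically carry $\nu_{-1}=1$; see Theorem \ref{thm:REH.norm} (ii) and the normal forms $N_1(-1,\pm 1)\diamond D(2)$ in items (iii) and (v) of Theorem \ref{Th:stability.of.EH}. The mechanism of failure is concrete: in a half-integer Fourier expansion the lowest mode couples to itself, since $\cos t\,\cos\frac{t}{2}=\frac{1}{2}\bigl(\cos\frac{3t}{2}+\cos\frac{t}{2}\bigr)$, whereas $\cos t\,\cos nt$ never returns mode $n$ for integer $n\ge 1$; this self-coupling (with opposite signs in the $\cos$ and $\sin$ components) destroys the symmetry between the two recurrence systems that manufactures the second independent solution in the proof of Theorem \ref{Th:multiplicity}. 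So $\nu_{-1}$ need not be even, and the upgrade to $M\approx I_2\diamond(-I_2)$ is unavailable.

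Your first route is an unexecuted sketch (``careful bookkeeping \ldots should rule out'') and, as set up, risks circularity: persistence of a hyperbolic pair throughout $\cR_{EH}$ is essentially the content of Theorem \ref{Th:stability.of.EH}, which in the paper is proved \emph{after} and \emph{by means of} the present proposition, through the ordering of the degenerate surfaces. Moreover, the forbidden spectrum $\{1,1,-1,-1\}$ is itself the collision configuration, so continuity of eigenvalues cannot exclude it, and you never specify how the Bott-type identities $i_1(\td\xi^2)=i_1(\td\xi)+i_{-1}(\td\xi)$ and $\nu_1(\td\xi^2)=\nu_1(\td\xi)+\nu_{-1}(\td\xi)$ would produce a contradiction. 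The missing idea is the paper's parity argument: since $b_1=0$ (forced by $\nu_1=2$ via Theorem \ref{Th:multiplicity}), the path is $1$-homotopic to a symplectic sum $\td\xi_1\diamond\td\xi_2$ of $\Sp(2)$ paths with end matrices $I_2$ and $N_1(-1,b_2)$, with $i_1(\td\xi_{\td\aa_*,\td\bb_*,e_*})=i_1(\td\xi_1)+i_1(\td\xi_2)$; by $1^\circ$ and $2^\circ$ of Lemma 5.6 in the appendix of \cite{ZL15ARMA}, each of $i_1(\td\xi_1)$ and $i_1(\td\xi_2)$ is odd, so $i_1(\td\xi_{\td\aa_*,\td\bb_*,e_*})$ would be even, contradicting the oddness of $i_1$ in $\cR_{EH}$ established in Proposition \ref{Th:odd.indices}. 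You cite that oddness, but you never bring it to bear through this normal-form decomposition, which is the one step that actually finishes the proof.
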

\begin{proof}
    If not, suppose $0\leq e_*<1$ and $(\td\aa_*,\td\bb_*)$  is an
    intersection point of some $1$-degenerate surface $\Pi_{n_1}(1,e_*)$ and a $-1$-degenerate surface $\Pi_{n_2}(-1,e_*)$.
    Then $\nu_1(\td\xi_{\td\aa_*,\td\bb_*,e_*})\ge 1$ and $\nu_{-1}(\td\xi_{\td\aa_*,\td\bb_*,e_*})\ge 1$,
    and hence $\sigma(\td\xi_{\td\aa_*,\td\bb_*,e_*})=\{1,1,-1,-1\}$.
    Therefore
    there exist $b_1$ and $b_2$ such that $\td\xi_{\aa_*,\bb_*,e_*}(2\pi)\in \Sp(4)$ satisfies:
    \begin{equation}
    \td\xi_{\aa_*,\bb_*,e_{\ast}}(2\pi)\approx N_1(1,b_1)\diamond N_1(-1,b_2).
    \end{equation}

    Moreover, by Theorem \ref{Th:multiplicity}, the integer
    $\nu_1(\td\xi_{\aa_*,\bb_*,e_*}) \ge 1$ is even. Together with the fact
    $\nu_1(N_1(1,b_1))=1$ when $b_1\ne 0$, we must have $b_1=0$.

    There exist two paths $\td\xi_i\in\P_{2\pi}(2)$ such that we have
    $\td\xi_1(2\pi)=I_2$, $\td\xi_2(2\pi)=N_1(-1,b_2)$, $\td\xi_{\aa_*,\bb_*,e_{\ast}}\sim_1 \td\xi_1 \dm \td\xi_2$, and
    $i_1(\td\xi_{\aa_*,\bb_*,e_{\ast}})=i_1(\td\xi_1)+i_1(\td\xi_2)$.
    By $1^\circ$ and $2^\circ$ of Lemma 5.6 in Appendix of \cite{ZL15ARMA},
    both $i_1(\td\xi_1)$ and $i_1(\td\xi_2)$ are odd numbers. Therefore
    $i_1(\td\xi_{\td\aa_*,\td\bb_*,e_*})$ must be even. But Proposition \ref{Th:odd.indices} yields
    $i_1(\td\xi_{\aa_*,\bb_*,e_*})$ is an odd number. It is a contradiction. Then this theorem holds.
\end{proof}

By $4^\circ$ of Lemma 5.6 in Appendix of \cite{ZL15ARMA}, $i_1(R(\th))$ is also odd number for $\th \in (0,\pi)\cup (\pi,2\pi)$. Then we obtain Theorem \ref{thm:-1.om.nointersect} and omit the proof.

\begin{theorem}\lb{thm:-1.om.nointersect}
	For $\om\neq \pm 1$, any $\om$-degenerate curve and any $-1$-degenerate
	curve cannot intersect each other in $\cR_{EH}$. That is, for any $0 \leq  e < 1$, and for any $n_1$ and $n_2\in \N$ such that $\Pi_{n_1}(\om,e) \cap \Pi_{n_2}(-1,e)= \emptyset $.
\end{theorem}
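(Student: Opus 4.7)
My plan is to mimic the contradiction argument already given for Proposition \ref{thm:nonintersec.-1.1}, replacing the role of the $1$-eigenvalue by an $\omega$-eigenvalue with $\omega\in\U\setminus\R$, and using the hint that $i_1(R(\theta))$ is odd. Suppose for contradiction that for some $0\le e_*<1$ and positive integers $n_1, n_2$, there is a point $(\td\aa_*,\td\bb_*)$ lying in both $\Pi_{n_1}(\omega,e_*)$ and $\Pi_{n_2}(-1,e_*)$. Then $\nu_\omega(\td\xi_{\td\aa_*,\td\bb_*,e_*})\ge 1$ and $\nu_{-1}(\td\xi_{\td\aa_*,\td\bb_*,e_*})\ge 1$, so both $\omega$ and $-1$ lie in $\sigma(\td\xi_{\td\aa_*,\td\bb_*,e_*}(2\pi))$. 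Because $\td\xi_{\td\aa_*,\td\bb_*,e_*}(2\pi)$ is a real symplectic $4\times 4$ matrix and $\omega\in\U\setminus\R$, its spectrum is closed under complex conjugation and inversion; together with $|\sigma|=4$ this forces $\sigma(\td\xi_{\td\aa_*,\td\bb_*,e_*}(2\pi))=\{\omega,\bar\omega,-1,-1\}$.

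Writing $\omega=e^{\sqrt{-1}\theta}$ with $\theta\in(0,\pi)\cup(\pi,2\pi)$, the normal form classification (see Section 1 of \cite{Lon4}) then gives
\[
\td\xi_{\td\aa_*,\td\bb_*,e_*}(2\pi)\;\approx\;R(\theta)\,\diamond\,N_1(-1,b_2)
\]
for some $b_2\in\{-1,0,1\}$. Consequently we can choose paths $\td\xi_1,\td\xi_2\in\mathcal{P}_{2\pi}(2)$ with $\td\xi_1(2\pi)=R(\theta)$ and $\td\xi_2(2\pi)=N_1(-1,b_2)$ such that $\td\xi_{\td\aa_*,\td\bb_*,e_*}$ is homotopic (with fixed endpoints, in the sense of $\sim_1$) to $\td\xi_1\diamond\td\xi_2$, and by symplectic additivity of the Maslov-type index,
\[
i_1(\td\xi_{\td\aa_*,\td\bb_*,e_*})\;=\;i_1(\td\xi_1)+i_1(\td\xi_2).
\]

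The key step is now the parity of each summand. By $4^\circ$ of Lemma 5.6 in the Appendix of \cite{ZL15ARMA}, $i_1(R(\theta))$ is odd for every $\theta\in(0,\pi)\cup(\pi,2\pi)$; by $1^\circ$ and $2^\circ$ of the same lemma (applied, as in the proof of Proposition \ref{thm:nonintersec.-1.1}, to $N_1(-1,b_2)$), $i_1(\td\xi_2)$ is also odd. Hence $i_1(\td\xi_{\td\aa_*,\td\bb_*,e_*})$ is even, contradicting Proposition \ref{Th:odd.indices}, which asserts that $i_1(\td\xi_{\td\aa_*,\td\bb_*,e_*})$ is odd whenever $(\td\aa_*,\td\bb_*,e_*)\in\cR_{EH}$. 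This contradiction proves $\Pi_{n_1}(\omega,e)\cap\Pi_{n_2}(-1,e)=\emptyset$.

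The only place that requires real care is the spectral step: one must be sure that $\nu_\omega\ge 1$ together with $\nu_{-1}\ge 1$ truly pins down the whole spectrum, and in particular that the chosen representative normal form is $R(\theta)\diamond N_1(-1,b_2)$ rather than some $N_2$-type block (which would be ruled out here because $\omega\ne\bar\omega$ cannot appear in a single $N_2$ together with $-1$). Once that linear-algebra dichotomy is nailed down, the parity computation is essentially identical to that in Proposition \ref{thm:nonintersec.-1.1}, and the odd-index input from Proposition \ref{Th:odd.indices} delivers the contradiction immediately.
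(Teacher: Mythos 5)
Your proposal is correct and is exactly the argument the paper intends: it explicitly omits the proof, remarking only that it follows from the argument of Proposition \ref{thm:nonintersec.-1.1} together with the fact ($4^\circ$ of Lemma 5.6 of \cite{ZL15ARMA}) that $i_1(R(\th))$ is odd, which is precisely the route you take. Your spectral step pinning down $\sigma=\{\om,\bar\om,-1,-1\}$ and the normal form $R(\th)\diamond N_1(-1,b_2)$ is the right way to fill in the omitted details, and the parity contradiction with Proposition \ref{Th:odd.indices} goes through as you describe.
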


\begin{remark}
    Note that when $\td\aa = 0$, i.e., $\aa = 3\bb-1$, we have the intersection of the $1$-degenerate surface and the $-1$-degenerate.
\end{remark}

By Proposition \ref{thm:nonintersec.-1.1}, the $-1$- and $1$-degenerate surfaces in $\cR_{EH}$ cannot intersect with each other. Therefore, we can order the $-1$ and $1$-degenerate surface, i.e., $\Sigma_{n}^{\pm}$ and $\Ga_{n}$ respectively, in the region $\cR_{EH}$ of $(\td\aa,\td\bb,e)$ from left to right as
\be
\Ga_0, \Sigma_1^-, \Sigma_1^+,\Ga_1,\Sigma_2^-, \Sigma_2^+, \Ga_2,\dots, \Sigma_n^-, \Sigma_n^+,  \Ga_n,\dots.
\ee
where $T^{-1}\Ga_n|_{e=0} = \cR_{3,n}^*$ and $T^{-1}\Sigma_n^-|_{e = 0} =T^{-1}\Sigma_n^+|_{e = 0}=\cR_{3,n+\frac{1}{2}}^*$ for $n\in\N_0$.
Note that $\td\bb = 0$ is equivalent to  $\Ga_0 =\{(\td\aa,\td\bb,e)|\td\aa>0, \td\bb = 0, e\in[0,1)\}$. According to Proposition \ref{prop:bnd.ind} and (\ref{eqn:-1.index.tildeA}), the $1$- and $-1$-degenerate do intersect. Then we have that $\Ga_0$, $\Sigma_1^-$ and  $\Sigma_1^+$ intersect.

For the given $e\in[0,1)$ and $\tilde\aa\in (0, \infty)$, we have the $1$-degenerate points satisfy $\tilde\bb_{2n}(\tilde\aa,1,e)= \tilde\bb_{2n+1}(\tilde\aa,1,e)$ by Theorem \ref{Th:multiplicity} and $\tilde\bb_{2n}(\tilde\aa,-1,e)$ are the $-1$-degenerate points. In $\cR_{EH}$,
\bea
&&0<\tilde\bb_{0}(\tilde\aa,1,e) \leq\tilde\bb_{1}(\tilde\aa,-1,e) \leq \tilde\bb_{2}(\tilde\aa, -1,e) < \tilde\bb_{1}(\tilde\aa,1,e) = \tilde\bb_{2}(\tilde\aa,1,e)\nn\\
&& \quad <\tilde\bb_{3}(\tilde\aa,-1,e) \leq \tilde\bb_{4}(\tilde\aa,-1,e)< \tilde\bb_{3}(\tilde\aa,1,e) = \tilde\bb_{4}(\tilde\aa,1,e) < \dots \nn\\
&& \quad <\tilde\bb_{2n}(\tilde\aa,-1,e) \leq \tilde\bb_{2n}(\tilde\aa,-1,e) <\tilde\bb_{2n}(\tilde\aa, 1,e) = \tilde\bb_{2n}(\tilde\aa,-1,e) \leq \tilde\bb_{2n}(\tilde\aa,-1,e) \dots.
\eea

\begin{remark}\label{Remark:multiplicity.2}
    If there are some points $(\td\aa_0,\td\bb_0,e_0)$ such that $\nu_{1}(\td\xi_{\aa_0,\bb_0,e_0})\ge4$.
    Then there must exist two different $1$-degenerate curves which intersect at $(\aa_0,\bb_0,e_0)$.
    This contradicts Proposition \ref{thm:nonintersec.-1.1}.
    Thus every $1$-degenerate curve has exact geometric multiplicity $2$.
\end{remark}

\begin{proof}[Proof of Theorem \ref{Th:stability.of.EH}]
	In the proof, we omit all the subscript of $\td\aa_0$ and $\td\bb_0$ for simplicity.

	(i) Suppose that $M = \td\xi_{\td\aa, \td\bb,e_0}(2\pi)$.
	Note that for $n \geq 0$, by the definition of $\td\bb_{2n}(\td\aa,1,e_0)$ and $\td\bb_{2n}(\td\aa,-1,e_0)$,
	we obtain that
	\bea
	&&i_1(\td\xi_{\td\aa,\td\bb,e_0})=2n+1, \quad  \nu_1(\td\xi_{\td\aa,\td\bb,e_0})=0,\\
	&&i_{-1}(\td\xi_{\td\aa, \td\bb,e_0}(2\pi))=2n, \quad
	\nu_{-1}(\td\xi_{\td\aa, \td\bb,e_0}(2\pi))=0.\lb{eqn:reh.-1}
	\eea
	By (\ref{eqn:split}),  we have that
	\be
	i_{-1}(\td\xi_{\td\aa,\td\bb,e})=i_1(\td\xi_{\td\aa,\td\bb,e})+S_M^+(1)+\sum_{\om}(-S_{M}^-(e^{\sqrt{-1}\theta})
	+S_{M}^+(e^{\sqrt{-1}\theta}) )-S_M^-(-1),
	\ee
	or
	\be
	i_{-1}(\td\xi_{\td\aa,\td\bb,e})=i_1(\td\xi_{\td\aa,\td\bb,e})+S_M^+(1)+\sum_{\om}(-S_{M}^-(e^{\sqrt{-1}(2\pi-\theta)})
	+-S_{M}^+(e^{\sqrt{-1}(2\pi-\theta)}))-S_M^-(-1).\lb{eqn:reh.-1.split}
	\ee
	Then we have that
	\be
	\sum_{\om}(-S_{M}^-(e^{\sqrt{-1}\theta})
	+S_{M}^+(e^{\sqrt{-1}\theta})) =1
	\mbox{ or } \sum_{\om}(-S_{M}^-(e^{\sqrt{-1}(2\pi-\theta)})
	+-S_{M}^+(e^{\sqrt{-1}(2\pi-\theta)})) =1.
	\ee
	Therefore, it is impossible that $\td\xi_{\td\aa,\td\bb,e}(2\pi)\approx N_2(\om,b)$.
	By the list of splitting number in Section 2, we must have that
	$\td\xi_{\td\aa,\td\bb,e}(2\pi)\approx M_1\diamond M_2$ where $M_1$ and $M_2$ are two basic normal forms in $\Sp(2)$.
	By Lemma 3.2 of \cite{ZL15ARMA} there exist two paths $\td\xi_1$ and
	$\td\xi_2$ in $\P_{2\pi}(2)$ such that $\td\xi_1(2\pi)=M_1$, $\td\xi_2(2\pi)=M_2$, $\td\xi_{\td\aa,\td\bb,e}\sim_1\td\xi_1\dm\td\xi_2$,
	and $i_1(\td\xi_{\td\aa,\td\bb,e})=i_1(\td\xi_1)+i_1(\td\xi_2)$ hold.
	Notice that $\nu_1(\td\xi_{\td\aa,\td\bb,e})=0$, then $\nu_1(\td\xi_2)=0$ and hence $1\not\in \sg(M_2)$.
	Therefore we must have $\sg(M_2)\cap\U=\emptyset$ and $\alpha(M_2)=0$.
	Thus, $M_2=D(2)$.

	Similarly, we have $\pm 1\not\in\sg(M_1)$.
	Then by Lemma 5.6 in Appendix 5.2 of \cite{ZL15ARMA}, we have $M_1=D(-2)$ or $M_1=R(\theta)$ for some $\theta\in(0,\pi)\cup(\pi,2\pi)$.
	If $M_1=D(-2)$, by the properties of splitting numbers in Chapter 9 of \cite{Lon4}, specially (9.3.3) on p.204,
	we 	obtain $i_{-1}(\td\xi_{\td\aa,\td\bb,e})=i_1(\td\xi_{\td\aa,\td\bb,e})$, which contradicts (\ref{eqn:reh.-1}) and (\ref{eqn:reh.-1.split}).
	Therefore, we must have $M_1=R(\theta)$.

	If $\theta\in(0,\pi)$, we have
	$i_{-1}(\td\xi_{\td\aa,\td\bb,e})=i_1(\td\xi_{\td\aa,\td\bb,e})-S_{R(\theta)}^-(e^{\sqrt{-1}\theta})+S_{R(\theta)}^+(e^{\sqrt{-1}\theta})=2n$.
	When $\theta\in(\pi,2\pi)$, we obtain
	$i_{-1}(\td\xi_{\td\aa,\td\bb,e})=i_1(\td\xi_{\td\aa,\td\bb,e})-S_{R(\theta)}^-(e^{\sqrt{-1}(2\pi-\theta)})+S_{R(\theta)}^+(e^{\sqrt{-1}(2\pi-\theta)})=2n+1$ contradicting (\ref{eqn:reh.-1}).
	Therefore, we have $\theta\in(0,\pi)$,
	and then $\td\xi_{\td\aa,\td\bb,e}(2\pi)\approx R(\theta)\diamond D(2)$. Thus (i) is proved.

	For items (ii)-(iii) and (v)-(vii), we first prove that $\td\xi_{\td\aa,\td\bb,e}(2\pi)\approx N_2(\om,b)$ is impossible by a method similar to that in
	the proof of Theorem 1.5 (ii) or (v). By Lemma 5.3, $\td\xi_{\td\aa,\td\bb,e}(2\pi)\approx M_1\dm M_2$ must hold.
	Then we use the information of $\pm1$-indices, null $\pm1$-indices, Lemma 5.6 and (\ref{eqn:split})
	to determine the basic normal forms of $M_1$ and $M_2$. Here the details are omitted.

	For (iv), if $\td\bb_{2n+1}(\td\aa,-1, e_0) \neq \td\bb_{2n+2}(\td\aa,-1,e_0)$ and $\td\bb_{2n+1}(\td\aa,-1, e_0) < \td\bb < \td\bb_{2n+2}(\td\aa,-1,e_0)$, by the definition of the degenerate curves, the index and nullity
	\bea
	i_1(\td\xi_{\td\aa,\td\bb,e_0}(2\pi))=2n+1, \quad \nu_1(\td\xi_{\td\aa,\td\bb,e_0}(2\pi))=0,\\
	i_{-1}(\td\xi_{\td\aa,\td\bb,e_0}(2\pi))=2n+1,\quad
	\nu_{-1}(\td\xi_{\td\aa,\td\bb,e_0}(2\pi))=0.
	\eea
	Therefore, by (\ref{eqn:split}) and the list of splitting number in Section 2, we have that
	$\td\xi_{\td\aa_0,\td\bb_0,e_0}(2\pi) \approx N_2(\om,b)$ for some $\om = e^{\sqrt{-1}\th}$ with $\th \in (0, \pi) \cup (\pi, 2\pi)$ or $\td\xi_{\td\aa_0,\td\bb_0,e_0}(2\pi) \approx D(\lm)\diamond D(\lm)$ with $\lm_1$, $\lm_2 \in \R\bs\{\pm 1\}$.

	If $\td\xi_{\td\aa_0,\td\bb_0,e_0}(2\pi) \approx N_2(\om,b)$, then we must have that $\nu_{\om}(\td\xi_{\td\aa_0,\td\bb_0,e_0}(2\pi)) \neq 0$.
	Then we can find one path $\ga(t) = (\td\aa,\td\bb(t), te_0)$ for $t\in [0,1]$ linking $\ga(1) = (\td\aa_0,\td\bb(0),e_0)$ with $\ga(0) = (\td\aa_0,\td\bb(1),0)$ such that  $\nu_{\om}(\td\xi_{\td\aa_0,\td\bb(t),e_0}(2\pi)) \neq 0$ by the continuity of $\td\ga(t)$.
  By Theorem \ref{thm:-1.om.nointersect}, we have  $\td\bb_{2n+1}(\td\aa,-1,te) < \td\bb(t) < \td\bb_{2n+2}(\td\aa,-1,te_0)$.
  However, we have that $\td\bb_{2n+1}(\td\aa,-1, 0) = \td\bb_{2n+2}(\td\aa,-1,0)$.
  Then there must exits a $t_0$ such that $\td\bb_{2n+1}(\td\aa,-1, t_0e_0) = \td\bb(t)$ or $\td\bb(t) = \td\bb_{2n+2}(\td\aa,-1,t_0e_0)$.
  Then this contradicts with Theorem \ref{thm:-1.om.nointersect}.

	This yields that $\td\xi_{\td\aa_0,\td\bb_0,e_0}(2\pi) \approx D(\lm_1)\diamond D(\lm_2)$ with $\lm_1$, $\lm_2 \in \R\bs\{\pm 1\}$.
  Following a similar argument, we have that $\td\xi_{\td\aa_0,\td\bb_0,e_0}(2\pi) \approx D(2)\diamond D(-2)$.
\end{proof}

Now we can give the proof of Theorem \ref{Th:stability.of.Pi_N}.

\begin{proof}[Proof of Theorem \ref{Th:stability.of.Pi_N}]
  (i) When $m_1=m_2$, by Proposition 6.2 of \cite{Leandro},
  the relative equilibria in $I_1'\subset\Pi_N$ are spectrally unstable.
  On the other hand, by (2.2) of \cite{HO},
  $R:=I_{2n-4}+\mathcal{D}$ can be considered as
  the regularized Hessian of the central configurations.
  In fact, for $a_0\in\mathscr{E}$ which is a central configurations,
  then $I(a_0) = {1\over2}$. With respect to the mass matrix $M$ inner product,
  the Hessian of the restriction of the potential to the inertia ellipsoid,
  is given by
  \be
  D^2U|_{\mathscr{E}}(a_0)=M^{-1}D^2U(a_0)+U(a_0),
  \ee
  and thus
  \be
  R = {1\over U(a_0)}P^{-1}D^2U|_{\mathscr{E}}(a_0)P|_{w\in\R^{2n-4}}
  \ee
  with $n=4$.
  By (\ref{linearized.system.sep_1}),
  (\ref{linearized.system.main}) and $\bb_{12,0}=0$, we have
  $\sg(R)=\left\{{3+\sqrt{9-\bb}\over2},{3+\sqrt{9-\bb}\over2},\lambda_3,\lambda_4\right\}$.
  Thus the eigenvalues of the Hessian at the central configuration
  \be\label{eig.of.Hessian}
  \sg(D^2U|_{\mathscr{E}}(a_0))=\left\{{3+\sqrt{9-\bb}\over2}\mu_0,{3-\sqrt{9-\bb}\over2}\mu_0,\lambda_3\mu_0,\lambda_4\mu_0\right\},
  \ee
  where $\mu_0>0$ is given by (\ref{mu0}).
  Since $\lambda_3$ is always positive, we must have $\lambda_4= -\tilde\bb < 0$
  when $m_4$ locate on $I_1'$.

  By Proposition 4.2 of \cite{Leandro},
  all central configurations for $m_4$ in $\Pi_N$ are non-degenerate.
  Thus $\lambda_4$ does not change its signature
  when $m_4$ change its position on $\Pi_N$,
  and hence $\lambda_4=-\tilde\bb<0$ there.
  Then we have $(\td\aa_0,\td\bb_0)\in\cR_{EH}$.
  By Theorem \ref{Th:stability.of.EH},
  when $m_4$ locates in $\Pi_N$,
  the ERE is always linearly unstable.

  (ii) When $m_1 = m_2 =m_4 = 0$ and $m_3 = 1$, the central configuration is given by $q_i$ are given by $q_1 = 0$, $q_2 = 1$, $q_3  = \frac{1}{2}+ \ii(\frac{\sqrt{3}}{2}+1) \in \pt\Pi_N$.
  In this case, we have that $\aa =\bb = \frac{1}{2}$, i.e., $\td{\aa} = \td{\bb} = 0$.
  By \ref{thm:bb=0.norm.aa=0} of Theorem \ref{thm:bb=0.norm}, for all $e\in[0,1)$, $\xi(2\pi) =\td\xi(2\pi) \approx I_2 \diamond N_1(1,1)$ and it is  spectrally stable but linearly unstable.
\end{proof}

\subsection{The $\om = \pm 1$ degenerate surfaces}
In this section, we will discuss that the bifurcation of degenerate curves at $e=0$.
Therefore, we will use the notation $\aa$ and $\bb$ instead of the $\td\aa$ and $\td\bb$ and suppose that $\aa\geq \bb>0$.

Recall when $\aa > 0$, $\cA(\aa,0,e)$ is positive definite on $\overline{D}(1,2\pi)$.
Now we set
\be
B(\aa,\om,e)=\cA(\aa, 0,e)^{-\frac{1}{2}}
\left(\frac{3S(t)}{2(1+e\cos t)}\right)
\cA(\aa, 0,e)^{-\frac{1}{2}}.\lb{eqn:B}
\ee
Because $\cA(\aa, 0,e)$ and $\frac{3S(t)}{2(1+e\cos t)}$ are self-adjoint, $B(\bb,e)$ is also self-adjoint.
Also, $\cA(\aa, 0,e)^{-\frac{1}{2}}$ is a compact operator
and $\frac{3S(t)}{2(1+e\cos t)}$ is a bounded operator,
hence one can apply Theorem 4.8 in p.158 of \cite{Ka} and conclude that $B(\aa,\om,e)$ is a compact operator.
Then we have
\begin{lemma}\label{L4.6}
	For $0\le e<1$, $\cA(\aa, \bb,e)$ is $\om$-degenerate if and only if $-\frac{1}{\bb}$ is an eigenvalue of $B(\aa,\om,e)$.
\end{lemma}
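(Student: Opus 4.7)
The proof will be a short operator-theoretic factorization, exploiting positive definiteness of $\cA(\aa,0,e)$ to take its square root and then reducing the degeneracy question for $\cA(\aa,\bb,e)$ to an eigenvalue problem for the compact operator $B(\aa,\om,e)$.

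\textbf{Step 1: set up the square root.} First I would invoke Proposition \ref{prop:posiv.def}: for $\aa>\bb=0$ and any $\om\in\U$, the self-adjoint operator $\cA(\aa,0,e)$ is strictly positive definite on $\overline{D}(\om,2\pi)$. Hence by the functional calculus, $\cA(\aa,0,e)^{1/2}$ is a well-defined positive definite self-adjoint operator with trivial kernel, and $\cA(\aa,0,e)^{-1/2}$ is a bounded self-adjoint operator on $L^2([0,2\pi],\C^2)$; in fact it is compact, as the square root of the compact inverse $\cA(\aa,0,e)^{-1}$ (the inverse is compact because $\cA(\aa,0,e)$ is a bounded perturbation of $-\frac{d^2}{dt^2}I_2$ with compact resolvent on $\overline{D}(\om,2\pi)$, cf.\ the discussion preceding \eqref{eqn:B}). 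This makes $B(\aa,\om,e)$ a well-defined self-adjoint compact operator on $L^2$.

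\textbf{Step 2: factorization.} From the formula
\be
\cA(\aa,\bb,e)=\cA(\aa,0,e)+\frac{3\bb\,S(t)}{1+e\cos t},
\ee
I would factor $\cA(\aa,0,e)^{1/2}$ from both sides to obtain the identity
\be
\cA(\aa,\bb,e)\;=\;\cA(\aa,0,e)^{1/2}\bigl(I+\bb\,B(\aa,\om,e)\bigr)\cA(\aa,0,e)^{1/2}
\ee
(absorbing the factor in the definition \eqref{eqn:B} of $B(\aa,\om,e)$ into the constant in front). This identity is the heart of the argument and is checked by simply sandwiching the definition of $B(\aa,\om,e)$.

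\textbf{Step 3: translate kernels.} Since $\cA(\aa,0,e)^{1/2}$ is a bijection from $\overline{D}(\om,2\pi)$ to its image, the substitution $y=\cA(\aa,0,e)^{1/2}x$ is an isomorphism. Then $x\in\overline{D}(\om,2\pi)$ is a nonzero element of $\ker\cA(\aa,\bb,e)$ if and only if $y$ is a nonzero solution of $(I+\bb\,B(\aa,\om,e))y=0$, i.e.\ $B(\aa,\om,e)y=-\tfrac{1}{\bb}y$. Equivalently, $\cA(\aa,\bb,e)$ is $\om$-degenerate iff $-1/\bb$ is an eigenvalue of $B(\aa,\om,e)$ in $L^2([0,2\pi],\C^2)$ with the appropriate $\om$-boundary structure inherited from the image of $\cA(\aa,0,e)^{1/2}$.

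\textbf{Main obstacle.} There is no conceptual obstacle; the only subtlety is bookkeeping of domains. One must verify that $\cA(\aa,0,e)^{1/2}$ really maps $\overline{D}(\om,2\pi)$ bijectively onto a dense subspace on which $I+\bb B(\aa,\om,e)$ is realized, so that eigenvectors of $B(\aa,\om,e)$ with eigenvalue $-1/\bb$ pull back to genuine elements of $\overline{D}(\om,2\pi)$ in $\ker\cA(\aa,\bb,e)$. This is standard given the compact resolvent of $\cA(\aa,0,e)$ and its elliptic regularity, and need only be noted in passing.
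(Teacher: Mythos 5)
Your proposal is correct and follows essentially the same route as the paper: both conjugate by $\cA(\aa,0,e)^{1/2}$ (whose positivity comes from Proposition \ref{prop:posiv.def}) to rewrite $\cA(\aa,\bb,e)$ as $\cA(\aa,0,e)^{1/2}(I+\bb B(\aa,\om,e))\cA(\aa,0,e)^{1/2}$ and then identify kernels via the substitution $y=\cA(\aa,0,e)^{1/2}x$. Your extra remarks on compactness, domains, and the normalization constant in the definition of $B(\aa,\om,e)$ only make explicit what the paper leaves implicit.
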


\begin{proof}
	Suppose $\cA(\aa,\bb,e)x=0$ holds for some $x\in\overline{D}(1,2\pi)$.
	Let $y=\cA(\aa,0,e)^{\frac{1}{2}}x$. Then by (\ref{eqn:B}) we obtain
	\bea
	\cA(\aa,0,e)^{\frac{1}{2}}\left(\frac{1}{\bb}+B(\aa,\bb,e)\right)y
	=\left(\frac{\cA(\aa,0,e)}{\bb}+\frac{3S(t)}{1+e\cos t}\right)x
	=\frac{1}{\bb}\cA(\aa,\bb,e)x =0  \label{A.B}
	\eea

	Conversely, if $(\frac{1}{\bb}+B(\aa,\om,e))y=0$,
	then $x=\cA(\aa,0,e)^{-\frac{1}{2}}y$ is an eigenfunction of $\cA(\aa,\bb,e)$ belonging to the eigenvalue $0$
	by our computations (\ref{A.B}).
\end{proof}

Although $e<0$ does not have physical meaning, we can extend the fundamental solution to the case $e\in(-1,1)$
mathematically and all the above results which hold for $e>0$ and also hold for $e<0$. By (\ref{eqn:td.bb.n}), the degenerate surface in $(\aa,\bb,e)$ can be given by $(\aa,\bb_n(\aa,1,e))^T = T^{-1}(\td\aa,\td\bb_n(\aa,1,e))$.
Then we have
\begin{theorem}\label{Th:analytic}
	For $\om\in\U$, $n \geq 0$ and $e\in(-1,1)$, there exist two analytic $\om$-degenerate surfaces $(\aa,h_i(\aa,\om,e),e)$ in  with $i = 1,2$ such that $(\aa,h_i(\aa,\om,e),e)$ is between $T^{-1}\Pi_{2n}(1,e)$ and $T^{-1}\Pi_{2n+1}(1,e)$. Specially, each $h_i(\aa,\om,e)$ is a really analytic function in $e\in(-1,1)$ and $\bb_{2n}(\aa,1,e) \leq h_i(\aa,\om,e) \leq \bb_{2n+1}(\aa,1,e)$ for given $\aa$ and $e$. Moreover, $\xi_{\aa,h_i(\aa,\om,e),e}(2\pi)$ is $\om$-degenerate for $i = 1,2$.
\end{theorem}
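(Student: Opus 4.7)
The plan is to deduce Theorem~\ref{Th:analytic} from Lemma~\ref{L4.6} by applying Kato--Rellich perturbation theory to the family of compact self-adjoint operators $B(\aa,\om,e)$ defined in \eqref{eqn:B}, using the $e=0$ analysis from Section~\ref{sec:e=0} as the starting point. Note that by Lemma~\ref{L4.6}, for each fixed $\aa>0$ the $\om$-degenerate values $\bb$ of $\cA(\aa,\bb,e)$ correspond bijectively (via $\bb\mapsto -1/\bb$) to the nonzero eigenvalues of $B(\aa,\om,e)$; so it suffices to track the eigenvalues of $B(\aa,\om,e)$.

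First I would verify that $(\aa,e)\mapsto B(\aa,\om,e)$ is a real analytic family of compact self-adjoint operators on $\ol{D}(\om,2\pi)$ for $e\in(-1,1)$. Since $\cA(\aa,0,e)$ is a positive definite self-adjoint operator with discrete spectrum and depends analytically on $(\aa,e)$ (the coefficient $(1+\aa)/(1+e\cos t)$ is real analytic in $(\aa,e)$ uniformly in $t$), its inverse square root $\cA(\aa,0,e)^{-1/2}$ is analytic in the bounded operator norm by the Dunford--Riesz functional calculus. Composing with the bounded multiplication operator $3S(t)/[2(1+e\cos t)]$, which is itself analytic in $e$ as a bounded operator, yields an analytic family of compact self-adjoint operators. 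Hence $B(\aa,\om,e)$ is a self-adjoint holomorphic family of type (A) in the sense of Kato \cite{Ka}.

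Next I would identify the spectrum of $B(\aa,\om,0)$. At $e=0$ the operator simplifies to $B(\aa,\om,0)=\cA(\aa,0,0)^{-1/2}\,\tfrac{3}{2}S(t)\,\cA(\aa,0,0)^{-1/2}$, and the explicit Fourier analysis carried out in Section~\ref{sec:e0.index} (applied after the affine change $T$) shows that for each $n\ge 0$ there is exactly one eigenvalue of $B(\aa,\om,0)$, namely $-1/\bb_n(\aa,\om,0)$, whose corresponding $\cA$-null eigenfunctions sit in the two-dimensional block $\span\{f_{n,1},f_{n,4}\}\oplus\span\{f_{n,2},f_{n,3}\}$ (and analogously for $\bar f_{n,i}$ in the $\om=-1$ case); by Theorem~\ref{Th:multiplicity} (and Remark~\ref{Remark:multiplicity.2}) the geometric multiplicity is exactly~$2$. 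Then I apply Kato's theorem (Theorem VII.3.9 of \cite{Ka}, i.e.\ Rellich's theorem on analytic perturbation of isolated eigenvalues of finite multiplicity for self-adjoint holomorphic families) to conclude that near $e=0$ the doubled eigenvalue splits into two real analytic branches $\lambda_1(\aa,\om,e),\lambda_2(\aa,\om,e)$, together with a choice of analytic eigenfunctions. Setting $h_i(\aa,\om,e):=-1/\lambda_i(\aa,\om,e)$ then gives two analytic surfaces of $\om$-degeneracy of $\cA(\aa,\bb,e)$ by Lemma~\ref{L4.6}.

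To extend the real analyticity to all $e\in(-1,1)$ and to insert the sandwich inequality $\bb_{2n}(\aa,1,e)\le h_i(\aa,\om,e)\le \bb_{2n+1}(\aa,1,e)$, I would argue as follows: by the ordering (\ref{eqn:order}) in Theorem~\ref{thm:REH.norm} together with Proposition~\ref{thm:nonintersec.-1.1} and Theorem~\ref{thm:-1.om.nointersect}, the $\om$-degenerate surfaces of distinct indices cannot cross; therefore along the unique pair of analytic branches emanating from the double eigenvalue at $e=0$, no collision with eigenvalues corresponding to other indices can occur as $e$ varies. The analytic branches can thus be continued by Kato's theorem across the whole interval $(-1,1)$, remaining confined to the band $[\bb_{2n}(\aa,1,e),\bb_{2n+1}(\aa,1,e)]$. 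The main obstacle I anticipate is the possibility that $\lambda_1(\aa,\om,e)=\lambda_2(\aa,\om,e)$ at some $e\ne 0$, which would a priori only yield a piecewise-analytic continuation; but this coincidence would force $\nu_\om(\xi_{\aa,h_i,e})\ge 4$, contradicting Remark~\ref{Remark:multiplicity.2} (where geometric multiplicity is shown to equal $2$). This closes the argument and gives the two real analytic surfaces claimed.
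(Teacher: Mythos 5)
Your proposal shares the paper's final ingredient --- passing to the compact self-adjoint analytic family $B(\aa,\om,e)$ via Lemma \ref{L4.6} and citing Kato's analytic perturbation theorem --- but the route by which you produce, count, and confine the two branches has genuine gaps. The paper does not perturb from $e=0$ at all: for every fixed $e$ it computes that $i_\om$ jumps from $2n$ at $\bb=\bb_{2n}(\aa,1,e)$ to $2n+2$ at $\bb=\bb_{2n+1}(\aa,1,e)$ (using the normal form $I_2\diamond D(2)$ on the $1$-degenerate surfaces and the splitting numbers), so by the monotonicity of Lemma \ref{lem:A.mono} there are \emph{exactly} two $\om$-degenerate values of $\bb$, counted with multiplicity, strictly inside the band; analyticity in $e$ then follows from Kato. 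Your substitute --- starting from the $e=0$ Fourier analysis and invoking non-crossing results --- does not close: Proposition \ref{thm:nonintersec.-1.1} and Theorem \ref{thm:-1.om.nointersect} concern $1$ versus $-1$ and $\om$ versus $-1$ degenerate surfaces, and give no control on whether an $\om$-degenerate branch crosses a $1$-degenerate surface (which is what confinement to the band requires), nor do they exclude additional $\om$-degenerate values entering the band at $e\ne0$; so neither the sandwich inequality nor the count of exactly two is established by your argument.

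Two further steps are simply incorrect. First, for $\om\in\U\setminus\{\pm1\}$ the $e=0$ picture you describe is wrong: in $\cR_3$ the monodromy at $e=0$ is $D(\lm)\diamond R(\th)$, so $\nu_\om=1$ at each of the two \emph{distinct} values of $\bb$ where $e^{\sqrt{-1}\th}=\om$ and $e^{\sqrt{-1}\th}=\bar\om$ respectively; there is no double eigenvalue to split, and Theorem \ref{Th:multiplicity} together with Remark \ref{Remark:multiplicity.2} concerns only $1$-degeneracy and cannot be invoked for general $\om$. Second, your resolution of a possible collision $\lambda_1(e)=\lambda_2(e)$ at $e\ne0$ fails on both counts: such a collision yields $\nu_\om=2$, not $\ge4$, so there is no contradiction --- indeed the paper explicitly notes that $h_1=h_2$ may occur at some $(\aa,e)$ --- and no contradiction is needed, since for a one-parameter self-adjoint analytic family Rellich's theorem already provides a global analytic labelling of eigenvalues through crossings. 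The analyticity conclusion therefore survives, but the existence, count, and confinement of the two surfaces require the paper's index-jump argument (or an equivalent), and your multiplicity and collision claims as written are false.
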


\begin{proof}
	For $(\aa,\bb)$ satisfies $T(\aa,\bb) =(\td\aa,\td\bb)$, we have
	\be
	i_1(\xi_{\aa,\bb,e})=2n+1,\quad \nu_1(\xi_{\aa,\bb,e})=0.
	\ee
	Moreover, from (i) of Theorem \ref{thm:REH.norm}, we have
	\be\label{boundary}
	\xi_{\aa,\bb,e}\approx I_2\diamond D(2),\quad \bb=\bb_{2n}(\aa,1,e)\;\;{\rm or}\;\;\bb_{2n+1}(\aa,1,e).
	\ee
	Then for $\om\in\U\backslash\{1\}$, we have
	\bea
	i_\om(\xi_{\aa, \bb_{2n}(\aa,1,e),e})=i_1(\xi_{\aa,\bb_{2n}(\aa,1,e),e})+S_{\xi_{\aa,\bb_{2n}(\aa,1,e),e}(2\pi)}^+(1)
	=2n-1+S_{I_2}^+(1)
	=2n. \label{left.om.index}
	\eea
	Similarly, we have
	\be
	i_\om(\xi_{\aa,\bb_{2n+1}(\aa,1,e),e})=2n+2.\label{right.om.index}
	\ee
	Therefore, by Lemma \ref{lem:A.mono}, it shows that, for fixed $e\in(-1,1)$,
	exactly two values $\bb=h_1(\aa,\om,e)$ and $h_2(\aa,\om,e)$ are in the interval $[\bb_{2n}(\aa,1,e),\bb_{2n+1}(\aa,1,e)]$
	at which (\ref{A.B}) is satisfied. Then $\cA(\aa,\bb,e)$ at these two values is $\om$-degenerate.
	Note that these two $\bb$ values are possibly equal to each other at some $\aa$ and $e$.
	Moreover, (\ref{boundary}) implies that $h_i(\aa,\om,e)\ne \bb_{2n}(\aa,1,e)$ or $\bb_{2n+1}(\aa,1,e)$ for $i=1, 2$.

	By (i) of Lemma \ref{lem:A.mono}, $-\frac{1}{h_i(\aa,\om,e)}$ is an eigenvalue of $B(\aa,e,\om)$.
	Note that $B(\aa,e,\om)$ is a compact operator and self-adjoint when $e$ is real.
	Moreover, it depends analytically on $\aa$ and $e$. By Theorem 3.9 of \cite{Ka},
	$-\frac{1}{h_i(\aa,\om,e)}$ is analytic in $e$ for each $i\in\N$. This implies that
	both $h_1(\aa,\om,e)$ and $h_2(\aa,\om,e)$ are real analytic functions in $\aa$ and $e$.
\end{proof}

\begin{theorem}\label{Th:analytic.om}
	For $\om \in \U$, every $\om$-degenerate curve $(\aa,\bb_n(\aa,\om,e),e)$ in $e\in (-1,1)$ is a piecewise analytic function.
	The set of $e\in(-1,1)$ such that $\bb_{2n+1}(\aa,\om,e)=\bb_{2n+2}(\aa, \om,e)$ is discrete or equal to the whole interval
	$(-1,1)$. In the first case, the functions $(\aa,e)\mapsto\bb_i(\aa,\om,e)$ with $i=2n+1$ and $2n+2$ are analytic
	for those $e$ when $\bb_{2n+1}(\aa, \om,e)<\bb_{2n+2}(\aa,\om,e)$. In the second case,
	$(\aa,e)\mapsto\bb_{2n+1}(\aa,\om,e)=\bb_{2n+2}(\aa,\om,e)$ are analytic everywhere.
\end{theorem}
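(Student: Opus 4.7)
The plan is to exploit the analyticity already secured in Theorem~\ref{Th:analytic}, which produces two analytic eigenvalue branches $h_1(\aa,\om,e)$ and $h_2(\aa,\om,e)$ lying in the interval $[\bb_{2n}(\aa,1,e),\bb_{2n+1}(\aa,1,e)]$ and at which $\td\xi_{\td\aa,\td\bb,e}(2\pi)$ is $\om$-degenerate. The passage from the analytic $h_i$ to the ordered labels $\bb_{2n+1}(\aa,\om,e)\leq \bb_{2n+2}(\aa,\om,e)$ in (\ref{bb_s}) is where the possible loss of analyticity occurs, because taking the minimum/maximum of two analytic functions is only piecewise analytic in general.

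First, I would set $D(\aa,\om,e):= h_1(\aa,\om,e)-h_2(\aa,\om,e)$. Since $B(\aa,\om,e)$ depends analytically on $e\in(-1,1)$ and is a self-adjoint compact analytic family in the sense of Kato (Theorem~3.9 of \cite{Ka}), the branches $h_1$ and $h_2$ are real analytic, and hence so is $D$. For each fixed $\aa$, the map $e\mapsto D(\aa,\om,e)$ is a real analytic function on $(-1,1)$. By the identity principle for real analytic functions, either $D(\aa,\om,\cdot)$ is identically zero on $(-1,1)$ or its zero set is discrete in $(-1,1)$. This is exactly the stated dichotomy.

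In the first case ($D\equiv 0$), the two branches coincide and define a single analytic function $\bb_{2n+1}(\aa,\om,e)=\bb_{2n+2}(\aa,\om,e)=h_1(\aa,\om,e)$, which is then analytic jointly in $(\aa,e)$ by the joint analyticity of $h_1$. In the second case, the set $Z\subset(-1,1)$ on which $D$ vanishes is discrete. On each connected component $I$ of $(-1,1)\setminus Z$ the sign of $D$ is constant, so exactly one of the identifications $\bb_{2n+1}=h_1$, $\bb_{2n+2}=h_2$ or $\bb_{2n+1}=h_2$, $\bb_{2n+2}=h_1$ holds throughout $I$, giving analyticity of each $\bb_i$ on $I$. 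Piecewise analyticity as a function of $e$ on $(-1,1)$ follows immediately, and at points of $I$ the joint analyticity in $(\aa,e)$ is inherited from the $h_i$.

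The main potential obstacle is ensuring that the labeling $\bb_{2n+1},\bb_{2n+2}$ from Definition (\ref{bb_s}) genuinely corresponds to $\{h_1,h_2\}$ as an unordered pair. For this I would use the index/nullity count from (\ref{left.om.index}) and (\ref{right.om.index}) in the proof of Theorem~\ref{Th:analytic}: between $\bb_{2n}(\aa,1,e)$ and $\bb_{2n+1}(\aa,1,e)$ the $\om$-Maslov index must jump by exactly $2$, forcing precisely two $\om$-degeneracies there (with multiplicity), and Lemma~\ref{L4.6} identifies these with the eigenvalues $-1/h_i(\aa,\om,e)$ of the compact self-adjoint operator $B(\aa,\om,e)$. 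This matching, together with the monotonicity in Lemma~\ref{lemma.index.tdphi}, justifies the identification of the ordered pair $(\bb_{2n+1},\bb_{2n+2})$ with $(\min\{h_1,h_2\},\max\{h_1,h_2\})$ and completes the argument.
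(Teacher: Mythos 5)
The paper states Theorem \ref{Th:analytic.om} without supplying a proof, and your argument correctly fills that gap by the standard route: apply the identity principle to the real analytic difference $h_1(\aa,\om,\cdot)-h_2(\aa,\om,\cdot)$ of the two Kato-analytic branches from Theorem \ref{Th:analytic}, so the coincidence set is discrete or all of $(-1,1)$, and on each component of the complement the ordered labels $\bb_{2n+1}\le\bb_{2n+2}$ agree with a fixed assignment of $\{h_1,h_2\}$, giving piecewise analyticity. Your identification of the ordered pair with $(\min\{h_1,h_2\},\max\{h_1,h_2\})$ via the index jump of $2$ between consecutive $1$-degenerate surfaces and Lemma \ref{L4.6} is exactly the justification needed, and matches the analogous proofs in \cite{HLS} and \cite{ZL15ARMA} on which this section is modeled.
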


By direct computations, we have Lemma \ref{lem:para.orth}.
\begin{lemma}\lb{lem:para.orth}
	By the definition (\ref{cA}) of $\cA(\aa,\bb,e)$, we have following results.
	\bea
	\left.\frac{\pt}{\pt\bb}\cA(\aa,\bb,e)\right|_{(\aa,\bb,e)=(\aa_0,\bb_0,0)}
	&=& \left.R(t)\frac{\pt}{\pt\bb}K_{\aa,\bb,e}(t)\right|_{(\aa,\bb,e)=(\aa_0,\bb_0,0)}R(t)^T,  \lb{7.22a}\\
	\left.\frac{\pt}{\pt e}\cA(\aa,\bb,e)\right|_{(\aa,\bb,e)=(\aa_0,\bb_0,0)}
	&=& \left.R(t)\frac{\pt}{\pt e}K_{\aa,\bb,e}(t)\right|_{(\aa,\bb,e)=(\aa_0,\bb_0,0)}R(t)^T,  \lb{7.23a}\eea
	where $R(t)$ is given in (\ref{cA}). If $e = 0$, we further have that
	\bea
	&& \frac{\pt}{\pt\bb}K_{\aa,\bb,e}(t)\left|_{(\aa,\bb,e)=(\aa_0,\bb_0,0)}
	= \begin{pmatrix}
		3 & 0\\ 0 &  -3
	\end{pmatrix},\right.   \lb{7.24a}\\
	&& \frac{\pt}{\pt e}K_{\aa,\bb,e}(t)\left|_{(\aa,\bb,e)=(\hat\bb_n,0)}
	= {-\cos t}
	\begin{pmatrix}
		1+\aa+3\bb& 0\\ 0 &  1+\aa-3\bb
	\end{pmatrix}.\right.   \lb{7.25a}\eea
\end{lemma}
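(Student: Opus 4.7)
The lemma is a direct verification of the first-order derivatives of the explicit operator $\cA(\aa,\bb,e)$ defined in (\ref{cA}). The plan is to rewrite $\cA$ in the form
\[
\cA(\aa,\bb,e)=-\frac{\d^2}{\d t^2}I_2-I_2+R(t)\,K_{\aa,\bb,e}(t)\,R(t)^T,
\]
where $K_{\aa,\bb,e}(t):=\frac{1}{1+e\cos t}\operatorname{diag}(1+\aa+3\bb,\,1+\aa-3\bb)$; this is literally the expression in (\ref{cA}) with the scalar prefactor absorbed into the matrix $K_{\aa,\bb,e}(t)$. I will first observe that the differential piece $-\frac{\d^2}{\d t^2}I_2-I_2$ and the rotation $R(t)$ both depend only on $t$, not on $(\aa,\bb,e)$. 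Therefore $\partial_\bb$ and $\partial_e$ annihilate the first two summands and commute with the conjugation by $R(t),R(t)^T$. Differentiating the remaining summand immediately yields (\ref{7.22a}) and (\ref{7.23a}).

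Next I will compute the two explicit matrix derivatives at $e=0$. For $\partial_\bb K_{\aa,\bb,e}(t)$ the factor $\frac{1}{1+e\cos t}$ evaluates to $1$ when $e=0$, so only the diagonal $\operatorname{diag}(1+\aa+3\bb,1+\aa-3\bb)$ is differentiated in $\bb$, giving $\operatorname{diag}(3,-3)$, which is (\ref{7.24a}). For $\partial_e K_{\aa,\bb,e}(t)$, I differentiate $\frac{1}{1+e\cos t}$ in $e$ to get $-\cos t/(1+e\cos t)^2$, then set $e=0$; the remaining diagonal matrix is unchanged, producing the formula (\ref{7.25a}) with the overall factor $-\cos t$.

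There is no genuine obstacle here: the lemma is a bookkeeping statement collecting the ingredients needed for the Kato-type first-order perturbation arguments behind Theorem \ref{Th:analytic} and Theorem \ref{Th:analytic.om} (where the analyticity and the directional derivatives of the $\om$-degenerate surfaces in $e$ are established). The only points requiring any care are (a) noting that $K_{\bb,e}$ as defined just below (\ref{cA}) does not depend on $e$, so the $e$-dependence of $K_{\aa,\bb,e}(t)$ comes solely from the prefactor $1/(1+e\cos t)$, and (b) keeping the order of the rotations $R(t),R(t)^T$ straight so that the identity on the operator side really is the $R$-conjugate of the identity on the matrix side.
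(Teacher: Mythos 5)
Your proposal is correct and is exactly the ``direct computation'' the paper invokes without writing out: the only $(\aa,\bb,e)$-dependence of $\cA$ sits in the conjugated matrix $K_{\aa,\bb,e}(t)=\frac{1}{1+e\cos t}\,{\rm diag}(1+\aa+3\bb,\,1+\aa-3\bb)$, so differentiation commutes with conjugation by $R(t)$, and evaluating $\partial_\bb$ and $\partial_e$ of that matrix at $e=0$ gives (\ref{7.24a}) and (\ref{7.25a}). Your observation that the $e$-dependence comes entirely from the prefactor $1/(1+e\cos t)$ (since the paper's $K_{\bb,e}$ is constant in $e$ despite its subscript) is the right reading and resolves the only notational ambiguity.
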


\begin{theorem}\label{Th:orth}
	\begin{enumerate}[label=(\roman*)]
		\item For $n\geq 0$, every $1$-degenerate surface starts from the curve $\cR_{3,n}^*$ and satisfies
		\be
		T_p\Ga_n = \span\l\{\frac{\pt \bb_n(\aa,1,0)}{\pt \aa},0\r\},
		\ee
		where $p\in\cR_{3,n}^*$ and $\frac{\pt \bb(\aa,-1,0)}{\pt \aa}|_{(\aa,\bb)\in\cR_{3,n}^*} = \frac{\aa+n^2+1}{9\sqrt{(\aa+n^2+1)^2-1}}$.

		\item If $ n = 0$, the two degenerated surfaces $\Sigma_1$ and $\Sigma_2$ bifurcate from the curve $\cR_{3,\frac{1}{2}}^*$, with two different tangent spaces satisfy
		\be
		T_p\Sigma_1  = \span\l\{\l.\frac{\pt \bb_1(\aa,-1,0)}{\pt \aa}\r|_{e=0},  \frac{1}{24},
		\r\}\; \mbox{ and } \;
		T_p\Sigma_2  =\span \l\{\l.\frac{\pt \bb_1(\aa,-1,0)}{\pt \aa}\r|_{e=0},-\frac{1}{24}, \r\},
		\ee
		where $p\in \cR_{3,\frac{1}{2}}^*$ and $\frac{\pt \bb_1(\aa,-1,0)}{\pt \aa}|_{e=0} =\frac{\aa+5/4}{9\sqrt{(\aa+5/4)^2-1}}$.

		\item if $n \geq 1$, the two degenerated surfaces $\Sigma_n^+$ and $\Sigma_n^-$ start from $\cR_{3,n+\frac{1}{2}}^*$ and have the same degenerated tangent space
		\be
		T_p\Sigma_{2n} = T_p\Sigma_{2n+1}= \span\l\{\frac{\pt \bb_n(\aa,-1,0)}{\pt \aa},0\r\},
		\ee
		where $p\in\cR_{n+\frac{1}{2}}^*$ and $\frac{\pt \bb_n(\aa,-1,0)}{\pt \aa}|_{e=0}=\frac{\aa+(n+1/2)^2+1}{9\sqrt{(\aa+(n+1/2)^2+1)^2-4(n+1/2)^2}}$.
	\end{enumerate}
\end{theorem}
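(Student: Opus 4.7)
The plan is to apply analytic perturbation theory for self-adjoint operators (Kato, Chapter VII) to the family $\cA(\aa,\bb,e)$ at $e=0$, using that on each degenerate curve in $\{e=0\}$ we have an explicit description of the kernel from Section 3.3. The $1$- and $-1$-degenerate surfaces are real analytic by Theorem \ref{Th:analytic} and Theorem \ref{Th:analytic.om}, so writing $\bb=\bb_n(\aa,\om,e)$ we may compute tangent directions by implicit differentiation of the equation ``$0 \in \sigma(\cA(\aa,\bb,e))$''. Concretely, if $v$ is a unit kernel element of $\cA(\aa_0,\bb_0,0)$ at a point $p=(\aa_0,\bb_0,0)$ of the degenerate curve, the first-order perturbation formula gives
\begin{equation*}
\frac{\pt\bb}{\pt\aa}\Bigr|_{p} = -\frac{\<\pt_\aa\cA\,v,v\>}{\<\pt_\bb\cA\,v,v\>},\qquad
\frac{\pt\bb}{\pt e}\Bigr|_{p} = -\frac{\<\pt_e\cA\,v,v\>}{\<\pt_\bb\cA\,v,v\>},
\end{equation*}
with the derivatives evaluated via Lemma \ref{lem:para.orth}. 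The denominator is a positive number computed by a Fourier-mode integration using \eqref{7.24a}. The numerators reduce to integrals of the form $\int_0^{2\pi}\cos t\,(\cos^2 mt\text{ or }\sin^2 mt)\,\d t$, so parity and orthogonality of the trigonometric basis will decide whether the $e$-derivative vanishes.

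First I would carry out (i). At $p\in\cR_{3,n}^*$, the analysis leading to \eqref{index.e=0}--\eqref{eqn:null.e=0} together with \eqref{A14}--\eqref{A23} shows that $\ker\cA(\aa_0,\bb_0,0)$ on $\ol D(1,2\pi)$ is two dimensional and spanned by explicit combinations of $f_{n,1},f_{n,4}$ and $f_{n,2},f_{n,3}$ corresponding to the zero eigenvalue of $B_n$ and $\bar B_n$. Plugging these vectors into the formulas above, $\<\pt_e\cA\,v,v\>$ reduces to integrals $\int_0^{2\pi}\cos t\cos^2(nt)\,\d t=0$ and $\int_0^{2\pi}\cos t\sin^2(nt)\,\d t=0$, so $\pt\bb/\pt e=0$ at $p$; this forces $\Ga_n$ to be perpendicular to the $\aa\bb$-plane. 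Differentiating the defining relation $\aa=-n^2-1+\sqrt{9\bb^2+4n^2}$ of $\cR_{3,n}^*$ implicitly in $\aa$ produces the claimed value of $\pt\bb_n/\pt\aa$, and the real analyticity (Theorem \ref{Th:analytic}) guarantees that $\Ga_n$ is determined by these first-order data up to higher-order correction.

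For (ii) at $\cR_{3,\frac12}^*$, the nullity is $\nu_{-1}=2$ by \eqref{null.-1.index}, and the basis of the kernel on $\ol D(-1,2\pi)$ consists of the half-frequency analogues $\bar f_{0,1},\bar f_{0,4}$ and $\bar f_{0,2},\bar f_{0,3}$ (obtained by putting $n=0$ in the $-1$ boundary-value version of \eqref{A14}--\eqref{A23}). Because two analytic branches bifurcate, the correct first-order object is the $2\times2$ reduced matrix
\begin{equation*}
\Hh_e := \bigl(\<\pt_e\cA\,v_i,v_j\>\bigr)_{i,j}, \qquad \Hh_\bb := \bigl(\<\pt_\bb\cA\,v_i,v_j\>\bigr)_{i,j},
\end{equation*}
and the two tangent slopes $\pt\bb/\pt e$ are the eigenvalues of $-\Hh_\bb^{-1}\Hh_e$. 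With the explicit half-frequency basis, $\Hh_e$ is a nonzero symmetric matrix with zero diagonal (again by the cosine-orthogonality); the simple eigenvalues $\pm c$ of $-\Hh_\bb^{-1}\Hh_e$ give the two distinct slopes, and the arithmetic should produce $\pm 1/24$. For (iii) with $n\geq1$, the same reduction is performed using the basis $\bar f_{n,1},\bar f_{n,4},\bar f_{n,2},\bar f_{n,3}$; now $\Hh_e$ vanishes identically because $\cos t\cdot\cos^2((n+\tfrac12)t)$ and $\cos t\cdot\sin^2((n+\tfrac12)t)$ integrate to $0$ against the relevant pairs, so both slopes in the $e$-direction are $0$ and both surfaces are perpendicular to the $\aa\bb$-plane.

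The main obstacle is the bifurcation step in (ii) and (iii): one must verify that the reduced Hessian $\Hh_e$ has precisely the structure that produces two distinct eigenvalues for $n=0$ and a zero matrix for $n\geq1$, which requires computing several trigonometric integrals carefully and checking that cross-terms between the ``$\cos$-block'' and ``$\sin$-block'' (coming from the $R(t)$-conjugation in \eqref{7.23a}) do not spoil the expected symmetry. Once $\Hh_e$ is pinned down, the rest is Kato-type analytic perturbation plus the explicit parametrization of $\cR_{3,n}^*$ and $\cR_{3,n+\frac12}^*$.
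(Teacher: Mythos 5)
Your overall strategy is the same as the paper's: restrict to the (two-dimensional) kernel at a degenerate point of the $e=0$ slice, use first-order analytic perturbation theory with the derivatives from Lemma \ref{lem:para.orth}, and read off the tangent slopes from the reduced matrices $\Hh_\bb$ and $\Hh_e$; parts (i) and (iii) as you describe them match the paper's computation (all entries of $\Hh_e$ vanish, $\Hh_\bb=3\pi(a_n^2-1)I_2$, hence $\pt\bb/\pt e=0$), and the $\pt\bb/\pt\aa$ values indeed come from implicitly differentiating the explicit equations of $\cR_{3,n}^*$ and $\cR_{3,n+\frac12}^*$.

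There is, however, a concrete error in your treatment of (ii). You claim that at $\cR_{3,\frac12}^*$ the reduced matrix $\Hh_e$ is ``a nonzero symmetric matrix with zero diagonal (again by the cosine-orthogonality).'' The structure is exactly the opposite. With the kernel basis $R(t)\xi_1$, $R(t)\xi_2$, $\xi_1=(\tilde a_0\sin\frac t2,\cos\frac t2)^T$, $\xi_2=(\tilde a_0\cos\frac t2,-\sin\frac t2)^T$, the off-diagonal entries involve $\int_0^{2\pi}\cos t\,\sin\frac t2\cos\frac t2\,\d t=\frac12\int_0^{2\pi}\cos t\sin t\,\d t=0$, so they vanish for every $n\ge0$; the diagonal entries involve $\int_0^{2\pi}\cos t\,\cos^2\frac t2\,\d t=\frac\pi2\ne0$ and $\int_0^{2\pi}\cos t\,\sin^2\frac t2\,\d t=-\frac\pi2$, because at $n=0$ the doubled frequency $2(n+\frac12)=1$ resonates with $\cos t$ — this is precisely the failure of the orthogonality you invoke, and it is the whole reason (ii) differs from (iii). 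The correct reduced matrix is $\Hh_e=\diag(c,-c)$ with $c=\frac\pi2\bigl(\tilde a_0^2(1+\aa+3\bb)-(1+\aa-3\bb)\bigr)\ne0$, yielding the two slopes $\mp c/(3\pi(\tilde a_0^2-1))=\mp\frac1{24}$. If you carried out your own prescription literally — diagonal killed by orthogonality, off-diagonal nonzero — you would in fact find every entry of $\Hh_e$ equal to zero and conclude both branches are perpendicular to the $\aa\bb$-plane, contradicting the $\pm\frac1{24}$ in the statement. So the abstract Kato/reduced-matrix framework is fine, but the integral bookkeeping that drives the bifurcation in (ii) must be redone.
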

\begin{proof}
	(i) As (\ref{bb_s}), we have that the degenerate surfaces are given by $(\aa,\bb_n(\aa,1,e),e)$ and especially when $e =0$, $(\aa,\bb_n)$ satisfies $\aa =- (n^2+1) +\sqrt{9\bb_n^2+4n}$ and $\aa>\bb$.

	Let $(\aa, \bb_n(\aa,1,e),e)$ be the surface
	which  intersect with the plane $(\aa,\bb,0)$ with the line $\aa =- (n^2+1) +\sqrt{9\bb^2+4n}$. Let
	$e\in(-\epsilon,\epsilon)$ for some small $\epsilon>0$ and $x_e\in\ol{D}(1,2\pi)$ be the corresponding
	eigenvector, that is,
	\be  \cA(\aa,\bb(\aa,1,e),e)x_e=0.  \ee

	Recalling (\ref{A14}) and (\ref{A23}), in the plane $e = 0$, $\cA(\aa, \bb,0)$
	is degenerate when for $\aa =- (n^2+1) +\sqrt{9\bb^2+4n}$ and
	\begin{equation}
	\ker \cA(\aa,\bb,0)=\span\left\{
	R(t)(a_n\sin nt, \cos nt)^T,
	R(t)(a_n\sin nt, -\cos nt)^T
	\right\},
	\end{equation}
	with $a_n\in\R$.
	By (\ref{A23}), the equation system $\cA(\aa,\bb,0)R(t)(a_n\sin nt, \cos nt)^T=0$ reads
	\be\lb{a_n.equation}
	\left\{
	\begin{array}{cr}
		n^2a_n-2n+(1+\aa+3\bb)a_n&=0, \\
		n^2-2na_n-(1+\aa-3\bb)&=0. \end{array}
	\right.
	\ee
	Then by direct computations, $a_n = \frac{n^2+1+\aa-3\bb}{2n}$ and $\aa =- (n^2+1) +\sqrt{9\bb^2+4n}$.

	By (\ref{A0})-(\ref{A23}), if we set
	\bea
	\xi_1=(a_n\sin nt,\cos nt)^T \quad \text{and} \quad
	\xi_2=(a_n\cos nt,-\sin nt)^T, \nonumber
	\eea
	we can suppose when  $e= 0$, $x_e = x_0$ is given by
	\be
	x_0=\lambda_1R(t)\xi_1+\lambda_2R(t)\xi_2,\lb{7.20a}
	\ee
	where $\lambda_1,\lambda_2\in\R$ satisfy $\lambda_1^2+\lambda_2^2\ne0$.
	There holds
	\be
	\<\cA(\aa,\bb,e)x_e,x_e\>=0.\label{Axx1}
	\ee

	Differentiating both side of (\ref{Axx1}) with respect to $e$ yields
	\be  \frac{\pt \bb}{\pt e}\<\frac{\pt}{\pt \bb}\cA(\aa, \bb,e)x_e,x_e\> + \<\frac{\pt}{\pt e}\cA(\aa,\bb,e)x_e,x_e\>
	+ 2\<\cA(\aa,\bb,e)x_e,x'_e\> = 0,  \lb{eqn:A.diff.e}\ee
	where  $\aa'(e)$, $\bb'(e)$ and $x'_e$ denote the derivatives with respect to $e$. Then evaluating both sides at $e=0$ yields
	\be  \frac{\pt \bb}{\pt e}\<\frac{\pt}{\pt \bb} \cA(\aa, \bb,e)x_0,x_0\> + \<\frac{\pt}{\pt e}\cA(\aa, \bb,e)x_0,x_0\> = 0. \lb{7.21a}\ee

	By Lemma \ref{lem:para.orth}, we have
	\bea  \<\frac{\pt}{\pt\bb}\cA(\aa,\bb,0)R(t)\xi_1,R(t)\xi_1\>
	&=& \<\frac{\pt}{\pt\bb}K_{\aa,\bb,0}\xi_1,\xi_1\>    \nn\\
	&=& 3\int_0^{2\pi} \l( a_n^2\sin^2nt-\cos^2nt \r) \d t  \nn\\
	&=& 3\pi(a_n^2-1),  \lb{eqn:A.part.B,1}\\
	\<\frac{\pt}{\pt\bb}\cA(\aa,\bb,0)R(t)\xi_1,R(t)\xi_2\>
	&=& 3\int_0^{2\pi} \l(a_n^2\sin nt\cos nt
	-\sin nt\cos nt \r) \d t=0.  \lb{eqn:A.part.B,2}\\
	\<\frac{\pt}{\pt\bb}\cA(\aa,\bb,0)R(t)\xi_2,R(t)\xi_2\>
	&=& 3\int_0^{2\pi} \l( a_n^2\cos^2nt
	-\sin^2nt \r)\d t=3\pi(a_n^2-1), \lb{eqn:A.part.B,3}
	\eea
	and hence
	\bea
	\<\frac{\pt}{\pt\bb}\cA(\aa,\bb,0)x_0,R(t)x_0\>&=& \lm_1^2\<\frac{\pt}{\pt\bb}\cA(\aa,\bb,0)R(t)\xi_1,R(t)\xi_1\>
	+2\lm_1\lm_2\<\frac{\pt}{\pt\bb}\cA(\aa,\bb,0)R(t)\xi_1,R(t)\xi_2\>\nn\\
	&&\quad+\lm_2^2\<\frac{\pt}{\pt\bb}\cA(\aa,\bb,0)R(t)\xi_2,R(t)\xi_2\> \nn\\
	&=& 3(\lm_1^2+\lm_2^2)\pi(a_n^2-1). \lb{7.26a}
	\eea
	Similarly, we have
	\bea  \<\frac{\pt}{\pt e}\cA(\aa,\bb,0)R(t)\xi_1,R(t)\xi_1\>
	&=& \<\frac{\pt}{\pt e}K_{\aa,\bb,0}\xi_1,\xi_1\>    \nn\\
	&=& -\int_0^{2\pi} \l(\cos t(a_n^2(1+\aa+3\bb)\sin^2nt+(1+\aa-3\bb)\cos^2nt) \r)\d t  \nn\\
	&=& 0,  \lb{eqn:A.part.e,1}\\
	\<\frac{\pt}{\pt e}\cA(\aa,\bb,0)R(t)\xi_1,R(t)\xi_2\>
	&=& -\int_0^{2\pi}\l(\cos t(a_n^2(1+\aa+3\bb)+(1+\aa-3\bb) )\sin nt\cos nt\r) \d t \nn \\
	&=&0, \lb{eqn:A.part.e,2}\\
	\<\frac{\pt}{\pt e}\cA(\aa,\bb,0)R(t)\xi_2,R(t)\xi_2\>
	&=& -\int_0^{2\pi} \l( \cos t(a_n^2(1+\aa+3\bb)\cos^2nt  +(1+\aa-3\bb)\sin^2nt \r)\d t\nn \\
	&=&0, \lb{eqn:A.part.e,3}
	\eea
	Therefore,
	\bea  \<\frac{\pt}{\pt e}\cA(\aa,\bb,0)x_0,x_0\>
	= 0.  \lb{7.27a}\eea
	Therefore by (\ref{7.21a}) and (\ref{7.26a})-(\ref{7.27a}),
	together with (\ref{7.26a}), we have that
	\be
	3\frac{\pt \bb}{\pt e}(a_n^2+1) = 0.
	\ee
	Then we have that for any $\aa_n =- (n^2+1) +\sqrt{9\bb^2+4n}$,
	\be \l.\frac{\pt \bb(\aa,1,e)}{\pt e}\r|_{e=0} =0.\ee
	Then we have (i) of this theorem holds.

	(ii) As the discussion in (\ref{bb_s}), we have that the degenerate surface is given by $(\aa,\bb,e) =T^{-1}(\td\aa,\td\bb_n(\td\aa,-1,e),e)$ and especially when $e =0$, $(\aa,\bb) \in \cR_{n+\frac{1}{2}}^*$.

	Similar as the (i) of this proof, let
	$e\in(-\epsilon,\epsilon)$ for some small $\epsilon>0$ and $x_e\in\ol{D}(-1,2\pi)$ be the corresponding
	eigenvector, that is,
	\be  \cA(\aa,\bb(\aa,-1,e),e)x_e=0.  \ee

	Recalling (\ref{A14}) and (\ref{A23}), in the plane $e = 0$, $\cA(\aa, \bb,0)$ when $(\aa,\bb) \in \cR_{n+\frac{1}{2}}^*$
	is degenerate and the kernel is given by
	\begin{equation}
	\ker \cA(\aa,\bb,0)=\span\left\{
	R(t)(\tilde{a}_n\sin (n+\frac{1}{2})t, \cos (n+\frac{1}{2})t)^T,
	R(t)(\tilde{a}_n\sin (n+\frac{1}{2})t,
	-\cos (n+\frac{1}{2})t)^T
	\right\},\nn
	\end{equation}
	with $\td a_n\in\R$.
	By (\ref{A23}), the equation system $\cA(\aa,\bb,0)R(t)(a_n\sin (n+\frac{1}{2})t, \cos (n+\frac{1}{2})t)^T=0$ reads
	\be\lb{a_n+1/2.equation}
	\left\{
	\begin{array}{cr}
		(n+\frac{1}{2})^2\tilde{a}_n-2(n+\frac{1}{2})+(1++\aa+3\bb)\tilde{a}_n&=0, \\
		(n+\frac{1}{2})^2-2(n+\frac{1}{2})\tilde{a}_n-(1+\aa-3\bb)&=0. \end{array}
	\right.
	\ee
	Then by direct computations, $\tilde{a}_n = \frac{(n+\frac{1}{2})^2+1+\aa-3\bb}{2(n+\frac{1}{2})}$ and
	$\bb = 	\frac{\sqrt{(\aa+(n+1/2)^2+1)^2-4(n+1/2)^2}}{9}$.

	By (\ref{A0})-(\ref{A23}), we also set
	\bea
	\xi_1 = (\tilde{a}_n\sin (n+\frac{1}{2})t,\cos (n+\frac{1}{2})t)^T,  \quad
	\xi_2 = (\tilde{a}_n\cos (n+\frac{1}{2})t,-\sin (n+\frac{1}{2})t)^T, \nonumber
	\eea
    Therefore, (\ref{7.20a}) and (\ref{Axx1}) hold.
	Differentiating both side of (\ref{Axx1}) with respect to $e$ yields (\ref{eqn:A.diff.e}) and (\ref{7.21a}) hold.
 	Similiar with the calculation (\ref{eqn:A.part.B,1} - \ref{eqn:A.part.B,3}), we have
	\bea
	&&\<\frac{\pt}{\pt\bb}\cA(\aa,\bb,0)R(t)\xi_1,R(t)\xi_1\>
	= 3\pi(\tilde{a}_n^2-1),  \nn\\
 	&&\<\frac{\pt}{\pt\bb}\cA(\aa,\bb,0)R(t)\xi_1,R(t)\xi_2\>
 	=  0,  \nn\\
 	&&\<\frac{\pt}{\pt\bb}\cA(\aa,\bb,0)R(t)\xi_2,R(t)\xi_2\>
 	=  3\pi(\tilde{a}_n^2-1), \nn
	\eea
	and hence
	\be
	\<\frac{\pt}{\pt\bb}\cA(\aa,\bb,0)x_0,R(t)x_0\>= 3(\lm_1^2+\lm_2^2)\pi(\tilde{a}_n^2-1). \lb{eqn:A.part.bb}
	\ee
	When $ n = 0$, similarly (\ref{eqn:A.part.e,1}- \ref{eqn:A.part.e,3}), by direct computations, we have
	\bea  \<\frac{\pt}{\pt e}\cA(\aa,\bb,0)R(t)\xi_1,R(t)\xi_1\>
	&=& -\int_0^{2\pi}\cos t(\tilde{a}_0^2(1+\aa+3\bb)\sin^2\frac{t}{2}+(1+\aa-3\bb)\cos^2\frac{t}{2})\d t  \nn\\
	&=&\frac{\pi}{2}(\tilde{a}_0^2(1+\aa+3\bb) -(1+\aa-3\bb)) ,  \nn\\
	\<\frac{\pt}{\pt e}\cA(\aa,\bb,0)R(t)\xi_1,R(t)\xi_2\>&=& -\int_0^{2\pi}(\tilde{a}_0^2(1+\aa+3\bb)
	+(1+\aa-3\bb))\cos t \sin \frac{t}{2}\cos \frac{t}{2}\d t \nn \\
	&=& 0,  \nn\\
	\<\frac{\pt}{\pt e}\cA(\aa,\bb,0)R(t)\xi_2,R(t)\xi_2\>&=& -\int_0^{2\pi}\cos t(\tilde{a}_n^2(1+\aa+3\bb)\cos^2\frac{t}{2}
	+(1+\aa-3\bb)\sin^2\frac{t}{2})\d t\nn \\
	&=&-\frac{\pi}{2}(\tilde{a}_0^2(1+\aa+3\bb) -(1+\aa-3\bb)),
	\eea
	where $\tilde{a}_0 = \frac{5}{4}+ \aa-3\bb$ and $\bb =
	\frac{\sqrt{(\aa+5/4)^2-1}}{9}$.
	Therefore, we have that for $n = 0$, we have that
	\bea  \<\frac{\pt}{\pt e}\cA(\aa,\bb,0)x_0,x_0\>
	&=&\lambda_1^2\<\frac{\pt}{\pt e}K_{\aa,\bb,0}\xi_1,\xi_1\>
	+2\lambda_1\lambda_2\<\frac{\pt}{\pt e}K_{\aa,\bb,0}\xi_1,\xi_2\>
	+\lambda_2^2\<\frac{\pt}{\pt e}K_{\aa,\bb,0}\xi_2,\xi_2\>\nn\\
	&=&\frac{\pi}{2} (\lm_1^2-\lm_2^2)(\tilde{a}_0^2(1+\aa+3\bb) -(1+\aa-3\bb)).
  \eea
	By (\ref{7.21a}), for any $\lm_1$ and $\lm_2$ satisfying $\lm_1^2+\lm_2^2 \neq 0$,
	\be
	3\frac{\pt \bb}{\pt e}(\lm_1^2+\lm_1^2)(\tilde{a}_0^2-1) +\frac{1}{2}(\lm_1^2-\lm_2^2)(\tilde{a}_0^2(1+\aa+3\bb) -(1+\aa-3\bb)) = 0.
	\ee
	This yields that
	\bea
	3\frac{\pt \bb_1}{\pt e}(0)(\tilde{a}_0^2-1) +\frac{1}{2}(\tilde{a}_0^2(1+\aa+3\bb) -(1+\aa-3\bb)) = 0,\\
	3\frac{\pt \bb_2}{\pt e}(0)(\tilde{a}_0^2-1) -\frac{1}{2}(\tilde{a}_0^2(1+\aa+3\bb) -(1+\aa-3\bb)) = 0.
	\eea
	where for $\bb > 0$, $	\tilde{a}_0^2(1+\aa+3\bb) -(1+\aa-3\bb) = \frac{3\bb}{2}(\sqrt{9\bb^2+1}-3\bb) \neq 0$.

	Therefore, we have  $\frac{\pt \bb_1}{\pt e}$ and $\frac{\pt \bb_2}{\pt e}$ satisfy that
	\bea
	3\frac{\pt \bb_1}{\pt e}(\tilde{a}_0^2-1) +\frac{1}{2}(\tilde{a}_0^2(1+\aa+3\bb) -(1+\aa-3\bb)) = 0,\\
	3\frac{\pt \bb_2}{\pt e}(\tilde{a}_0^2-1) -\frac{1}{2}(\tilde{a}_0^2(1+\aa+3\bb) -(1+\aa-3\bb)) = 0.
	\eea
  When $n = 0$, the $-1$-degenerate curve bifurcates at the line $\cR_{3,\frac{1}{2}}^*$ when $e = 0$.
    By direct computations, when $\aa\geq \frac{1}{2}$, we have that $\tilde{a}_0^2-1 \neq 0$. Therefore, we have that
	\bea
	\l.\frac{\pt \bb_1(\aa,-1,e)}{\pt e}\r|_{\aa= \aa_0, e=0} = \frac{1}{24}, \quad
	\l.\frac{\pt \bb_2(\aa,-1,e)}{\pt e}\r|_{\aa= \aa_0, e=0} = -\frac{1}{24}.
	\eea

	Similarly, for $n\geq 1$,  by direct computations, we have for $i, j\in\{1,2\}$
		\bea
	\<\frac{\pt}{\pt e}\cA(\aa,\bb,0)R(t)\xi_i,R(t)\xi_j\>
=0,  \lb{eqn:6.73}
	\eea
	Therefore, we have that
	\bea  \<\frac{\pt}{\pt e}\cA(\aa,\bb,0)x_0,x_0\>
	= 0.  \lb{6.76}\eea
	By (\ref{7.21a}) and (\ref{6.76}),
	together with (\ref{eqn:A.part.bb}), we have that for $n\geq 1$,
	\be
	3\frac{\pt \bb}{\pt e}(\tilde{a}_n^2+1) = 0.
	\ee
	This yields that
	\be
	\l.\frac{\pt \bb(\aa,-1,e)}{\pt e}\right|_{e= 0} = 0.
	\ee
	Then we have (ii) and (iii) of this theorem.
\end{proof}

\setcounter{equation}{0}
\section{The Special Cases with Two Equal Masses}
In this section, we apply the discussion in Section 1-7 on the
linear stability of restricted 4-body elliptic solutions.
When $m_1= m_2 = m$ and $m_3 = 1-2m$. By the symmetry of the central configuration, we can assume that $q_{4,0} = \frac{1}{2} + \sqrt{-1} y $ and $y$ satisfies following central configuration equation.
\be
\frac{-m_1 y}{(\frac{1}{4}+y^2)^{\frac{3}{2}}} + \frac{-m_2 y}{(\frac{1}{4}+y^2)^{\frac{3}{2}}} + \frac{m_3 (\frac{\sqrt{3}}{2}-y)}{|\frac{\sqrt{3}}{2}-y|^3} + y -\frac{\sqrt{3}}{2}m_3 = 0,
\ee
where $y \in Y_1\cup Y_2 \cup Y_3$ with $ Y_1 \equiv(-\frac{\sqrt{3}}{2},\frac{\sqrt{3}}{2}-1)$, $Y_2 \equiv(0,\frac{\sqrt{3}}{2})$ and $Y_3 \equiv(\frac{\sqrt{3}}{2},\frac{\sqrt{3}}{2}+1)$ by \cite{Leandro}.
By $m_1 = m_2 = m$ and $m_3 =1-2m$, we have that
\be
m= \l(\frac{\sqrt{3}}{2}-y-\frac{\frac{\sqrt{3}}{2}-y}{\left|\frac{\sqrt{3}}{2}-y\right|^3}\r)\l(\sqrt{3}-\frac{2(\frac{\sqrt{3}}{2}-y)}{\left| \frac{\sqrt{3}}{2}-y\right| ^3}-\frac{2 y}{\left(y^2+\frac{1}{4}\right)^{3/2}}\r)^{-1}. \lb{eqn:equ.m.m}
\ee
By the definition of $\aa$ and $\bb$ by (\ref{aa.bb}), we have that in this case $\aa$ and $\bb$ are given by
\be
\aa = \frac{1}{2}\left(\frac{2m}{\left(y^2+\frac{1}{4}\right)^{3/2}}+\frac{1-2m}{|y -\frac{\sqrt{3}}{2}|^{3}}\right) \quad
\mbox{and} \quad
\bb = \frac{1}{2}\left|\frac{m(\frac{1}{2}-2y^2)}{\left(y^2+\frac{1}{4}\right)^{5/2}}-\frac{1-2m}{|y -\frac{\sqrt{3}}{2}|^{3}}\right|.\lb{eqn:equ.m.aa.bb}
\ee
Plugging (\ref{eqn:equ.m.m}) into (\ref{eqn:equ.m.aa.bb}), we have that $\aa(m.y)$ and $\bb(m,y)$ can by given by $\aa =\aa(y)$ and $\bb=\bb(y)$ with $y \in Y_1\cup Y_2 \cup Y_3$. When $y \in  Y_1$, $q_{4,0}\in \Pi_C$; when $y \in Y_2$, $q_{4,0}\in \Pi_I$;  when $y \in Y_3$, $q_{4,0}\in \Pi_N$.
Note that when $y \in Y_3$ has been discussed in Theorem \ref{Th:stability.of.Pi_N}. Then by the assistance of
numerical computations, we have following stability when $y\in Y_1 \cup Y_2\cup Y_3$.

\begin{theorem}\lb{thm:two.equal.numerical}
	(i) If $y \in Y_1$, $\aa>3\bb-1$.
	There exist $y_{1,1}\approx -0.6724$ and $y_{1,2} \approx -0.1590$ such that when $y \in [y_{1,1}, y_{1,2}]$ and $e\in [0,1)$,
	the essential part of the system possesses two pairs of hyperbolic eigenvalues.
	When $y\in Y_1\bs[y_{1,1}, y_{1,2}]$,
	$(\aa,\bb,e) \in \cR_{NH}$.
	Furthermore, when $e = 0$, if $y\in [y_0,\frac{\sqrt{3}}{2}-1)$ where $y_0 \approx-0.1355$,
	i.e. $0< m \leq m_0 \approx 0.00270963$, the essential part possesses two elliptic eigenvalues and $y \in (-\frac{\sqrt{3}}{2},y_0)$,
	we have that $(\aa,\bb)\in \cR_1$ the essential part possesses two hyperbolic eigenvalues.
	Then there exists an $e_*\in(0,1)$ such that for $e\in[0, e_*)$, $y\in Y_1\bs (-\frac{\sqrt{3}}{2},y_{2,2})$,
	the essential part of the system possesses two pairs of hyperbolic eigenvalues.

	(ii) If $y\in Y_2$, we have $y_{2,1} \approx 0.1403$, $y_{2,2} \approx 0.1796$, $y_{2,3} \approx 0.4224$ and $y_{2,4} \approx 0.4937$.
	When $y\in (0, y_{2,1}) \cup(y_{2,4},\frac{\sqrt{3}}{2})$, $(\aa,\bb,e) \in \cR_{EH}$ and the essential part possesses one pair of hyperbolic eigenvalues and one pair of elliptic eigenvalues.
	When $y\in (y_{2,1},y_{2,2}) \cup(y_{2,3},y_{2,4})$, $(\aa,\bb,e) \in \cR_{NH}$.
	When $y\in (y_{2,2},y_{2,3})$, we have $\aa-3\bb > 0$, and the essential part possesses two pairs of hyperbolic eigenvalues.
	Especially, when $e = 0$, we have that $\bar y_{2,1} \approx0.1548 $ and  $\bar y_{2,2} \approx 0.4679$.
	When $y \in (0,y_{2,1})\cup(0,y_{2,4})$, $(\aa,\bb,0) \in \cR_3$ and the essential part possesses one pair of hyperbolic eigenvalues one pair of elliptic eigenvalues.
	When $y \in (y_{2,1},\bar{y}_{2,1})\cup(\bar{y}_{2,2},y_{2,4})$,  $(\aa,\bb,0) \in \cR_{4}$ and the essential part possesses two pairs of hyperbolic eigenvalues.
	When $y \in (\bar{y}_{2,1},\bar{y}_{2,2})$, $(\aa,\bb,0) \in \cR_1$ and the essential part possesses two pairs of hyperbolic eigenvalues.

	(iii) If $y \in Y_3$ and $e\in[0,1)$, the essential part possesses one pair of elliptic eigenvalues and one pair of hyperbolic eigenvalues.
\end{theorem}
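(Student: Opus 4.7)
\medskip

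\noindent\textbf{Proof proposal for Theorem \ref{thm:two.equal.numerical}.}

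The strategy is to reduce the three regimes $Y_1$, $Y_2$, $Y_3$ to the general classification scheme developed in Sections 5--7, where the linear stability is completely determined by the triple $(\aa,\bb,e)$ (or equivalently $(\td\aa,\td\bb,e)$ under the affine transformation $T$ of \eqref{eqn:trans.T}). Substituting the relation \eqref{eqn:equ.m.m} into \eqref{eqn:equ.m.aa.bb}, we obtain real analytic one-parameter curves $\aa=\aa(y)$, $\bb=\bb(y)$ on each of the open intervals $Y_1$, $Y_2$, $Y_3$. The first step is to compute, numerically and with error bounds small enough to locate roots, the three scalar functions $y\mapsto \aa(y)-3\bb(y)$, $y\mapsto \aa(y)-3\bb(y)+1$ and $y\mapsto \aa(y)-\bb(y)$ along these curves; the zeros of the first two mark the transitions between the hyperbolic region of Theorem \ref{prop:pos} ($\td\bb\le -1$), the region $\cR_{NH}$ ($-1<\td\bb<0$) and the region $\cR_{EH}$ ($\td\bb>0$), which are precisely the regimes whose stability was classified in Sections 5, 6 and 7 respectively.

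For part (i), $y\in Y_1$, I would first verify numerically that $\aa(y)>3\bb(y)-1$ holds throughout $Y_1$ (so $(\aa,\bb,e)$ never lies in $\cR_{EH}$), and then locate the two roots $y_{1,1}\approx -0.6724$ and $y_{1,2}\approx -0.1590$ of $\aa(y)-3\bb(y)=0$. On the closed sub-interval $[y_{1,1},y_{1,2}]$, one has $\aa\ge 3\bb$, so Theorem \ref{prop:pos} applies for all $e\in[0,1)$ and yields two pairs of hyperbolic eigenvalues. On the complementary set we have $(\td\aa,\td\bb,e)\in\cR_{NH}$, and the stability is governed by Theorem \ref{thm:RE.norm.form}. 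For the sharpening at $e=0$, I would locate the root $y_0\approx -0.1355$ of $\aa(y)-\frac{9}{4}\bb(y)^2=0$ to separate $\cR_1$ from $\cR_2$ (see the classification at the start of Section \ref{sec:e=0}); then the existence of the uniform interval $[0,e_*)$ on which $y\in Y_1\setminus(-\tfrac{\sqrt 3}{2},y_{1,2})$ remains hyperbolic follows from the continuity of the eigenvalues of $\xi_{\aa,\bb,e}(2\pi)$ in $e$, combined with compactness of the parameter segment and the openness of the hyperbolic set.

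For part (ii), $y\in Y_2$, the plan is the same but now both transition curves are crossed: compute numerically the two roots of $\aa(y)-3\bb(y)+1=0$ (giving $y_{2,1}$ and $y_{2,4}$, which are the endpoints where $(\td\aa,\td\bb,e)$ enters $\cR_{EH}$) and the two roots of $\aa(y)-3\bb(y)=0$ (giving $y_{2,2}$ and $y_{2,3}$, which are the endpoints where one enters the hyperbolic region of Theorem \ref{prop:pos}). On $(y_{2,1},y_{2,2})\cup(y_{2,3},y_{2,4})$ one then applies Theorem \ref{thm:RE.norm.form}, and on the two outer intervals one applies Theorem \ref{Th:stability.of.EH} with $n=0$ to get the elliptic--hyperbolic normal form. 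For the circular case $e=0$ I would locate $\bar y_{2,1}$, $\bar y_{2,2}$ as the roots of $\aa(y)-\tfrac{9}{4}\bb(y)^2=0$ on $Y_2$ and read off the classification of Section \ref{sec:e=0}: points with $\aa<\tfrac{9}{4}\bb^2$ and $\aa>3\bb-1$ lie in $\cR_3$ or $\cR_4$ and the eigenvalue types listed in I--IV of Section \ref{sec:e=0} give the claimed distribution of elliptic/hyperbolic eigenvalues.

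Part (iii) is immediate from Theorem \ref{Th:stability.of.Pi_N}(i), since $y\in Y_3$ corresponds exactly to $q_{4,0}\in\Pi_N$, where we already established that $\td\bb<0$ via the Hessian computation \eqref{eig.of.Hessian}; in fact the numerical step in (iii) reduces to verifying that $\aa(y)-3\bb(y)+1<0<\aa(y)-3\bb(y)$ on all of $Y_3$, which places $(\td\aa,\td\bb,e)$ in $\cR_{EH}$ for all $e$, and then Theorem \ref{Th:stability.of.EH}(i) with $n=0$ yields one elliptic pair plus one hyperbolic pair. The main obstacle is the rigorous location of the critical $y$-values and the certification of the sign conditions on the functions $\aa(y)-3\bb(y)$, $\aa(y)-3\bb(y)+1$ and $\aa(y)-\tfrac{9}{4}\bb(y)^2$ between them; the existence of $e_*$ in part (i) also requires a quantitative continuity estimate on the eigenvalues of $\xi_{\aa,\bb,e}(2\pi)$ rather than just qualitative persistence, but this is routine once the $e=0$ hyperbolicity is established.
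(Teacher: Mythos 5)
Your proposal follows essentially the same route as the paper: substitute \eqref{eqn:equ.m.m} into \eqref{eqn:equ.m.aa.bb} to get $\aa(y)$, $\bb(y)$, numerically locate the sign changes of $\aa-3\bb$, $\aa-3\bb+1$ and $\aa-\tfrac{9}{4}\bb^2$ on each interval $Y_i$, and then invoke Theorem \ref{prop:pos}, Theorem \ref{thm:RE.norm.form}, Theorem \ref{Th:stability.of.EH} and Theorem \ref{Th:stability.of.Pi_N} according to which region $(\td\aa,\td\bb,e)$ falls in, with the $e=0$ refinement read off from the classification I--IV of Section \ref{sec:e=0} and the existence of $e_*$ obtained from continuity/openness of the hyperbolic set. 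This matches the paper's argument; your added remark that the $e_*$ step really needs a quantitative persistence estimate (the paper only appeals to continuity) is a fair point but does not change the approach.
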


\begin{figure}[htbp]
	\centering
	\begin{minipage}[t]{0.48\textwidth}
		\centering
		\includegraphics[scale=0.65]{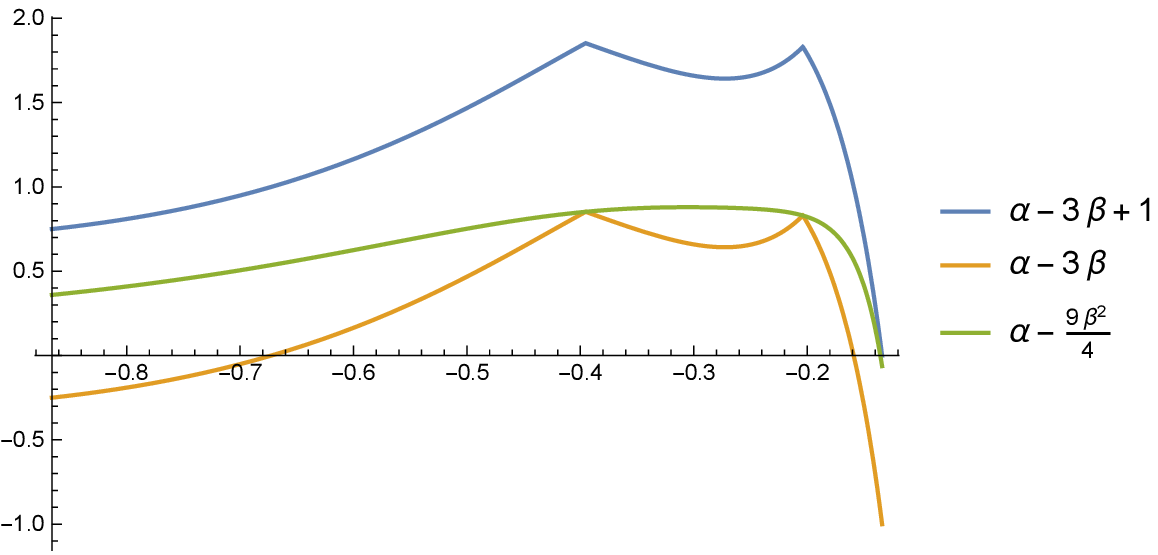}
		\captionof*{figure}{The cases of $y\in Y_1$.}
	\end{minipage}
	\begin{minipage}[t]{0.48\textwidth}
		\centering
		\includegraphics[scale=0.65]{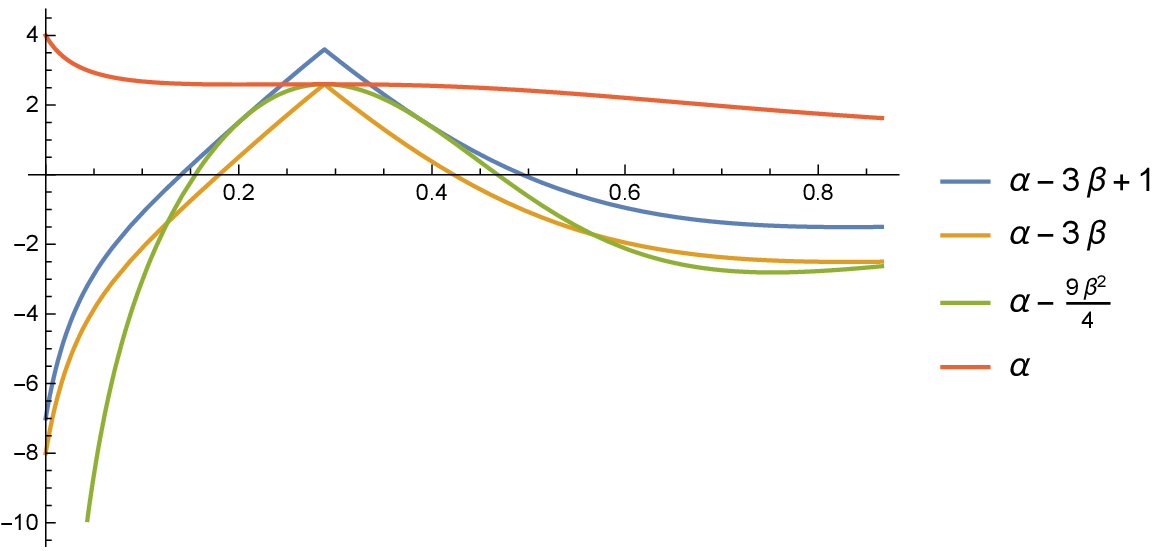}
		\captionof*{figure}{The cases of $y\in Y_2$.}
	\end{minipage}
	\\
\begin{minipage}[t]{0.48\textwidth}
	\centering
	\includegraphics[width=6cm]{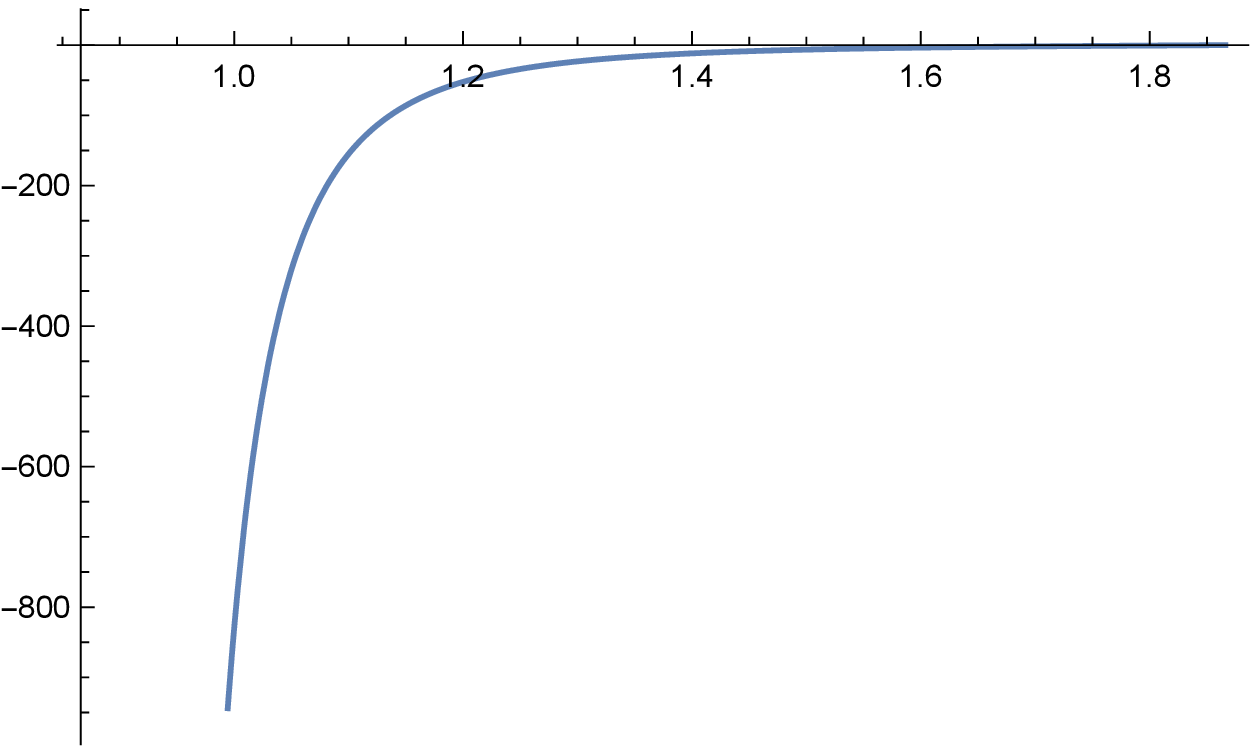}
	\captionof*{figure}{The case of $\aa-3\bb+1 < 0$ when $y\in Y_3$.}
\end{minipage}
\caption{These are the figures of $\aa$, $\aa-3\bb+1$, $\aa-3\bb$ and $\aa-\frac{9}{4}\bb^2$ when $y\in Y$.}
\end{figure}
Note that the case of (iii) of Theorem \ref{thm:two.equal.numerical} holds by Theorem \ref{Th:stability.of.Pi_N}. Then we only prove (i) and (ii) by the assistant of numerical computations
\begin{proof}
	(i)
	If $y \in Y_1$, the numerical computations show that $\aa>3\bb-1$.
	Also $\aa \geq 3\bb$ when $y \in [y_{1,1}, y_{1,2}]$.
	By \ref{thm:hyper.region} of Theorem \ref{thm:gen.ind.nul} , for $e\in [0,1)$, the essential part possesses two pairs of hyperbolic eigenvalues when $y \in [y_{1,1}, y_{1,2}]$.
	When $y\in Y_1\bs[y_{1,1}, y_{1,2}]$, we have $(\aa,\bb,e) \in \cR_{NH}$.

	For $\aa -3\bb+1 = 0$, there exist two roots which are $y_{1,1}\approx -0.6724$ and $y_{1,2} \approx -0.1590$ such that when $y \in [y_{1,1}, y_{1,2}]$, $\aa \geq 3\bb$ then for $e\in [0,1)$,
	the essential part of the system possesses two pairs of hyperbolic eigenvalues. When $y\in Y_1\bs[y_{1,1}, y_{1,2}]$,

	When $e = 0$, the numerical computations show that when $y \in (-\frac{\sqrt{3}}{2},y_0)$ where $y_0 \approx-0.1355$, $\aa >\frac{9}{4}\bb^2$.
	This yields that $(\aa,\bb,0) \in \cR_1$ when $y \in (-\frac{\sqrt{3}}{2},y_0)$.
	When $y \in [y_0,\frac{\sqrt{3}}{2}-1)$, $\aa\leq \frac{9}{4}\bb^2$ and $\aa> 3\bb-1$, this yields that $(\aa,\bb,0) \in \cR_2$.
	Especially, when $y = y_0$, $\aa <1$.
	By the continuous of the degenerate region, we have that there exists an $e_*\in (0,1)$ such that when $y \in (-\frac{\sqrt{3}}{2}, y_{2,2})$,
	the two pairs of the eigenvalues are always hyperbolic when $e\in [0,e_*)$.

	(ii)
	If $y\in Y_2$, there exist $y_{2,1} \approx 0.1403$, $y_{2,2} \approx 0.1796$, $y_{2,3} \approx 0.4224$ and $y_{2,4} \approx 0.4937$ such that
	when $y\in (0, y_{2,1}) \cup(y_{2,4},\frac{\sqrt{3}}{2})$, $\aa-3\bb+1 <0$. Then $(\aa,\bb,e) \in \cR_{EH}$ and the essential part possesses one pair of hyperbolic eigenvalues and one pair of elliptic eigenvalues.
	When $y\in (y_{2,1},y_{2,2}) \cup(y_{2,3},y_{2,4})$, $\aa-3\bb+1 >0$ and $\aa-3\bb < 0$.
	Then $(\aa,\bb,e) \in \cR_{NH}$. When $y\in (y_{2,2},y_{2,3})$, $\aa-3\bb > 0$.
	The essential part possesses two pairs of hyperbolic eigenvalues.
	Especially, when $e = 0$, we have that $\bar y_{2,1} \approx0.1548 $ and  $\bar y_{2,2} \approx 0.4679$.
	When $y \in (0,y_{2,1})\cup(0,y_{2,4})$,
	$\aa-3\bb+1 < 0$ and $\aa-\frac{9}{4}\bb^2< 0$,
	this yields that $(\aa,\bb,0) \in \cR_3$ and the essential part possesses one pair of hyperbolic eigenvalues one pair of elliptic eigenvalues.
	When $y \in (y_{2,1},\bar{y}_{2,1})\cup(\bar{y}_{2,2},y_{2,4})$, $\aa> 1$, $\aa-3\bb+1 > 0$ and $\aa-\frac{9}{4}\bb^2< 0$,
	this yields that $(\aa,\bb,0) \in \cR_4$ and the essential part possesses two pairs of hyperbolic eigenvalues.
	This yields that $(\aa,\bb,0) \in \cR_1$ and the essential part possesses two pairs of hyperbolic eigenvalues.
\end{proof}

\noindent {\bf Acknowledgements.}
Part of this work was done while B. Liu was visiting the Nankai University and Shandong University; he
sincerely thanks Professor Yiming Long and Professor Xijun Hu for their invitations and inspiring discussion.
The second author thank sincerely Professor Yiming Long for his precious help and valuable comments.


\begin{thebibliography}{}

	\bibitem{Dan} J. Danby, The stability of the triangular
	Lagrangian point in the general problem of three bodies.
	{\it Astron. J.} 69. (1964) 294-296.

	\bibitem{Euler} L. Euler, De motu restilineo trium corporum se mutus
	attrahentium. {\it Novi Comm. Acad. Sci. Imp. Petrop.}  11. (1767) 144-151.

	\bibitem{Ga} M. Gascheau,  Examen d'une classe d'\'{e}quations diff\'{e}rentielles
	et application \`{a} un cas particulier du probl\`{e}me des trois corps. {\it Comptes
		Rend. Acad. Sciences.} 16. (1843) 393-394.

	\bibitem{Go} W. B. Gordon, A minimizing property of Kepler
	orbits, {\it American J. of Math.} Vol.99, no.5(1977)961-971.

	\bibitem{HLO} X. Hu, Y. Long, Y. Ou, Linear stability of the elliptic relative equilibrium with $(1+n)$-gon central configurations in planar n-body problem. {\tt arXiv:1903.10270[math.DS]}


	\bibitem{HLS} X. Hu, Y. Long, S. Sun, Linear stability of elliptic Lagrangian solutions of the planar three-body problem via Index Theory.
	{\it Arch. Ration. Mech. Anal.} 213. (2014) 993-1045.

	\bibitem{HO} X. Hu, Y. Ou,  An estimate for the hyperbolic region of elliptic
	Lagrangian solutions in the planar three-body problem.
	{\it Regul. Chaotic. Dyn.} 18(6). (2013) 732-741.

	\bibitem{HO1} X.Hu, Y.Ou, Collision index and stability of elliptic relative equilibria in planar $n$-body problem.
	{\it Commum. Math. Phys.} 348. (2016) 803-845.

	\bibitem{HS1} X. Hu, S. Sun,  Index and stability of symmetric periodic
	orbits in Hamiltonian systems with its application to figure-eight orbit.
	{\it Commun. Math. Phys.} 290. (2009) 737-777.

	\bibitem{HS2} X. Hu, S. Sun, Morse index and stability of elliptic Lagrangian
	solutions in the planar three-body problem. {\it Advances in Math.} 223. (2010) 98-119.

	\bibitem{IM} R. Iturriaga, E. Maderna, Generic uniqueness of the minimal
	Moulton central configuration. {\it Celestial Mech. Dynam. Astronom.}  123 (2015), no. 3, 351–361.

	\bibitem{Jac1} N. Jacobson, Basic Algebra I. W. H. Freeman and Com. 1974.

	\bibitem{Ka} T. Kato, Perturbation Theory for Linear Operators. Second
	edition, Springer-Verlag, Berlin, 1984.

	\bibitem{Lag} J. Lagrange,  Essai sur le probl\`{e}me des
	trois corps. Chapitre II. {\OE}uvres Tome {6}, Gauthier-Villars,
	Paris. (1772) 272-292.

	\bibitem{Leandro} E.S.G. Leandro, On the central configurations of the planar restricted four-body problem.
	{\it J. Diff. Equa.} 226. (2006) 323-351.


	\bibitem{Liu} B. Liu, Linear Stability of Elliptic Rhombus Solutions of the Planar Four-body Problem.
	{\tt arXiv:1806.01641[math.DS]}

	\bibitem{Lon1} Y. Long, The structure of the singular symplectic matrix set.
	{\it Science in China}. Series A. 34. (1991) 897-907. (English Ed.)

	\bibitem{Lon2} Y. Long, Bott formula of the Maslov-type index
	theory. {\it Pacific J. Math.} 187. (1999) 113-149.

	\bibitem{Lon3} Y. Long, Precise iteration formulae of the Maslov-type
	index theory and ellipticity of closed characteristics. {\it Advances in Math.} 154. (2000) 76-131.

	\bibitem{Lon4} Y. Long, Index Theory for Symplectic Paths with
	Applications. Progress in Math. 207, Birkh\"auser. Basel. 2002.

	\bibitem{Lon5} Y. Long, Lectures on Celestial Mechanics and Variational Methods.
	{\it Preprint.} 2012

	\bibitem{LoS} Y. Long, S. Sun, Four-Body Central Configurations with some Equal Masses.
	{\it Arch. Ration. Mech. Anal.} 162. (2002) 25-44.

	\bibitem{MaS1} R. Mart\'{i}nez, A. Sam$\grave{a}$, On the centre mabifold of
	collinear points in the planar three-body problem. {it Cele. Mech. and Dyn. Astro.}
	85. (2003)311-340.

	\bibitem{MSS} R. Mart\'{\i}nez, A. Sam\`{a}, C. Sim\'{o},
	Stability of homograpgic solutions of the planar three-body problem
	with homogeneous potentials. in International conference on
	Differential equations. Hasselt, 2003, eds, Dumortier, Broer, Mawhin,
	Vanderbauwhede and Lunel, World Scientific, (2004) 1005-1010.

	\bibitem{MSS1} R. Mart\'{\i}nez, A. Sam\`{a}, C. Sim\'{o},
	Stability diagram for 4D linear periodic systems with applications
	to homographic solutions. {\it J. Diff. Equa.} 226. (2006) 619-651.

	\bibitem{MSS2} R. Mart\'{\i}nez, A. Sam\`{a}, C. Sim\'{o}, Analysis of
	the stability of a family of singular-limit linear periodic systems in
	$\R^4$. Applications. {\it J. Diff. Equa.} 226. (2006) 652-686.

	\bibitem{MS} K. Meyer, D. Schmidt, Elliptic relative equilibria in
	the N-body problem. {\it J. Diff. Equa.} 214. (2005) 256-298.

	\bibitem{Moe1} R. Moeckel, Celestial Mechanics — especially central configurations, handwritten lecture notes (1994), 124 pp.

	\bibitem{Ou} Y. Ou, Hyperbolicity of elliptic Lagrangian orbits in the planar three body problem. {\it Sci. China Math.} 57 (2014), no. 7, 1539–1544.

	\bibitem{PeS} E. Perez-Chavela, M. Santoprete, Convex Four-Body Central Configurations with some Equal Masses.
	{\it Arch. Ration. Mech. Anal.} 185. (2007) 481-494.

	\bibitem{R1} G. Roberts, Linear stability of the elliptic Lagrangian
	triangle solutions in the three-body problem. {\it J. Diff. Equa.}
	182. (2002) 191-218.

	\bibitem{R2} E. Routh, On Laplace's three particles with
	a supplement on the stability or their motion. {\it Proc. London Math.
		Soc.} 6. (1875) 86-97.

	\bibitem{V1} A. Venturelli, Une caract\'erisation variationelle
	des solutions de Lagrange du probl\'eme plan des trois corps. {\it C. R.
		Acad. Sci. Paris S\'er.} I. 332. (2001) 641-644.

	\bibitem{Win1} A. Wintner, The Analytical Foundations of Celestial Mechanics.
	Princeton Univ. Press, Princeton, NJ. 1941. Second print, Princeton Math. Series
	5, 215. 1947.

	\bibitem{Xia} Z. Xia, Central Configurations with Many Small Masses.
	{\it J. Diff. Equa.} 91. (1991) 168-179.

	\bibitem{ZZ1} S. Zhang, Q. Zhou, A minimizing property of Lagrangian
	solutions. {\it Acta Math. Sin. (Engl. Ser.)} 17. (2001) 497-500.

	\bibitem{ZL0} Q. Zhou, Y. Long, Equivalence of linear stabilities of elliptic triangle solutions of
	the planar charged and classical three-body problems.
	{\it J. Diff. Equa.} 258(11). (2015) 3851-3879.

	\bibitem{ZL15ARMA} Q. Zhou, Y. Long, Maslov-type indices and linear stability
	of elliptic Euler solutions of the three-body problem.
	{\it Arch. Ration. Mech. Anal.} 226 (2017), no. 3, 1249–1301.


	\bibitem{ZL2015CMDA} Q. Zhou, Y. Long, The reduction on the linear stability of elliptic Euler-Moulton solutions of the n-body problem to those of 3-body problems.
	{\it Celestial Mech. Dynam. Astronom.} 127 (2017), no. 4, 397–428.

	\bibitem{Zhou} Q. Zhou, Linear stability of elliptic relative equilibria
	of four-body problem with two small masses.
	{\it Preprint.} 2019.

	\bibitem{ZhuLong} G. Zhu, Y. Long
	Linear stability of some symplectic matrices.
	{\it Front. Math. China.} 5(2010), no. 2, 361–368.



\end{thebibliography}
\end{document}